\declaretheorem{remark}
\declaretheorem{example}
\def\fakebold#1{\leavevmode\setbox0=\hbox{#1}\kern-.025em\copy0 \kern-\wd0
  \kern .05em\copy0 \kern-\wd0
  \kern-.025em\raise.0433em\box0
}
\newcommand{\nest}{navigational\xspace}
\newcommand{\Nest}{Navigational\xspace}
\newcommand{\nesth}{\nest hierarchy\xspace}
\newcommand{\nesths}{\nest hierarchies\xspace}
\newcommand{\Nesths}{\Nest hierarchies\xspace}
\newcommand{\veps}{\ensuremath{\varepsilon}\xspace}
\newcommand{\inv}{^{-1}}
\newcommand{\nat}{\ensuremath{\mathbb{N}}\xspace}
\newcommand{\cbeta}{\ensuremath{\Cs_\beta}\xspace}
\newcommand{\cgamma}{\ensuremath{\Cs_\gamma}\xspace}
\newcommand{\frA}{\ensuremath{\mathbb{A}}\xspace}
\newcommand{\frB}{\ensuremath{\mathbb{B}}\xspace}
\newcommand{\frAd}{\ensuremath{\frA_D}\xspace}
\newcommand{\frAnd}{\ensuremath{\overline{\frA_D^{}}}\xspace}
\newcommand{\Bs}{\ensuremath{\mathcal{B}}\xspace}
\newcommand{\Cs}{\ensuremath{\mathcal{C}}\xspace}
\newcommand{\Ds}{\ensuremath{\mathcal{D}}\xspace}
\newcommand{\Gs}{\ensuremath{\mathcal{G}}\xspace}
\newcommand{\Is}{\ensuremath{\mathcal{I}}\xspace}
\newcommand{\Ps}{\ensuremath{\mathcal{P}}\xspace}
\newcommand{\Hb}{\ensuremath{\mathbf{H}}\xspace}
\newcommand{\Kb}{\ensuremath{\mathbf{K}}\xspace}
\newcommand{\Lb}{\ensuremath{\mathbf{L}}\xspace}
\newcommand{\Ub}{\ensuremath{\mathbf{U}}\xspace}
\newcommand{\Vb}{\ensuremath{\mathbf{V}}\xspace}
\newcommand{\Wsuit}{Well-suited\xspace}
\newcommand{\vari}{prevariety\xspace}
\newcommand{\varis}{prevarieties\xspace}
\newcommand{\pvari}{positive prevariety\xspace}
\newcommand{\pvaris}{positive prevarieties\xspace}
\newcommand{\infsig}[1]{\ensuremath{\mathbb{I}_{#1}}\xspace}
\newcommand{\infsigc}{\infsig{\Cs}}
\newcommand{\dclosp}[1]{\ensuremath{\mathord{\downarrow_{#1}}}\xspace}
\newcommand{\dclosr}{\dclosp{R}}
\newcommand{\imprint}{imprint\xspace}
\newcommand{\imprints}{imprints\xspace}
\newcommand{\tame}{multiplicative\xspace}
\newcommand{\ratms}{rating maps\xspace}
\newcommand{\ratm}{rating map\xspace}
\newcommand{\full}{full\xspace}
\newcommand{\mratm}{multiplicative rating map\xspace}
\newcommand{\Mratms}{Multiplicative rating maps\xspace}
\newcommand{\fratm}{\full rating map\xspace}
\newcommand{\fratms}{\full rating maps\xspace}
\newcommand{\prin}[2]{\ensuremath{\Is[#1](#2)}\xspace}
\newcommand{\opti}[2]{\ensuremath{\Is_{#1}\left[#2\right]}\xspace}
\newcommand{\copti}[1]{\opti{\Cs}{#1}}
\newcommand{\popti}[3]{\ensuremath{\Ps_{#1}[#2,#3]}\xspace}
\newcommand{\typ}[2]{\ensuremath{[#1]_{#2}}\xspace}
\tikzset{every state/.style={draw=blue!50!green,very thick,fill=blue!50!green!20}}
\tikzset{statesub/.style={state,minimum size=1.3cm,inner sep=1pt}}
\tikzset{pattstate/.style={state,draw=red!50!yellow,line width=2pt,fill=red!50!yellow!20}}
\tikzset{pdotstate/.style={state,minimum size=0.75cm,inner sep=0.5pt,draw=red!50!yellow,line
		width=2pt,dashed,fill=red!50!yellow!20}}
\tikzset{lstate/.style={state,minimum size=0.65cm,inner sep=0.5pt}}
\tikzstyle{trans}=[shorten >= 1pt,thick,->]
\tikzstyle{initial by arrow}=   [after node path=
\tikzstyle{accepting by arrow}=   [after node path=
\newcommand{\fo}{\ensuremath{\textup{FO}}\xspace}
\newcommand{\fow}{\mbox{\ensuremath{\fo(<)}}\xspace}
\newcommand{\fod}{\ensuremath{\fo^2}\xspace}
\newcommand{\fodw}{\ensuremath{\fod(<)}\xspace}
\newcommand{\fodws}{\ensuremath{\fod(<,+1)}\xspace}
\newcommand{\dew}[1]{\ensuremath{\Delta_{#1}(<)}\xspace}
\newcommand{\dewd}{\dew{2}}
\newcommand{\dewt}{\dew{3}}
\newcommand{\bsc}[1]{\ensuremath{\Bs\Sigma_{#1}}\xspace}
\newcommand{\at}{\ensuremath{\textup{AT}}\xspace}
\newcommand{\grp}{\ensuremath{\textup{GR}}\xspace}
\newcommand{\abg}{\ensuremath{\textup{AMT}}\xspace}
\newcommand{\md}{\ensuremath{\textup{MOD}}\xspace}
\newcommand{\su}{\ensuremath{\textup{SU}}\xspace}
\newcommand{\sfr}{\ensuremath{\textup{SF}}\xspace}
\newcommand{\lt}{\ensuremath{\textup{LT}}\xspace}
\newcommand{\stzer}{\textup{ST}\xspace}
\newcommand{\dotzer}{\textup{DD}\xspace}
\newcommand{\cocl}[1]{\ensuremath{\mathit{co\textup{-}}#1}\xspace}
\newcommand{\bool}[1]{\ensuremath{\textup{Bool}(#1)}\xspace}
\newcommand{\pol}[1]{\ensuremath{\textup{Pol}(#1)}\xspace}
\newcommand{\bpol}[1]{\ensuremath{\textup{BPol}(#1)}\xspace}
\newcommand{\ipol}[1]{\ensuremath{\textup{IPol}(#1)}\xspace}
\newcommand{\copol}[1]{\ensuremath{\cocl{\textup{Pol}(#1)}}\xspace}
\newcommand{\sfp}[1]{\ensuremath{\mathit{\textup{SF}}(#1)}\xspace}
\newcommand{\bpolo}{\ensuremath{\textup{BPol}}\xspace}
\newcommand{\polp}[2]{\ensuremath{\textup{Pol}_{#1}(#2)}\xspace}
\newcommand{\bpolp}[2]{\ensuremath{\textup{BPol}_{#1}(#2)}\xspace}
\newcommand{\bpoln}[1]{\bpolp{n}{#1}}
\newcommand{\ipolp}[2]{\ensuremath{\textup{IPol}_{#1}(#2)}\xspace}
\newcommand{\tls}{\ensuremath{\textup{TL}}\xspace}
\newcommand{\tlxs}{\ensuremath{\textup{TLX}}\xspace}
\newcommand{\tla}[1]{\ensuremath{\tls[#1]}\xspace}
\newcommand{\tlc}[1]{\ensuremath{\tls(#1)}\xspace}
\newcommand{\tlhp}[2]{\ensuremath{\textup{TL}_{#1}(#2)}\xspace}
\newcommand{\tlhn}[1]{\tlhp{n}{#1}}
\newcommand{\ltl}{\ensuremath{\textup{LTL}}\xspace}
\newcommand{\finally}[1]{\ensuremath{\textup{F}~#1}\xspace}
\newcommand{\nex}[1]{\ensuremath{\textup{X}~#1}\xspace}
\newcommand{\finallym}[1]{\ensuremath{\textup{P}~#1}\xspace}
\newcommand{\finallyop}{\ensuremath{\textup{F}}\xspace}
\newcommand{\finallymop}{\ensuremath{\textup{P}}\xspace}
\newcommand{\nexop}{\ensuremath{\textup{X}}\xspace}
\newcommand{\nexmop}{\ensuremath{\textup{Y}}\xspace}
\newcommand{\nexi}[2]{\ensuremath{\textup{X}^{#2}~#1}\xspace}
\newcommand{\nexmi}[2]{\ensuremath{\textup{Y}^{#2}~#1}\xspace}
\newcommand{\finallyp}[2]{\ensuremath{\textup{F}_{#1}~#2}\xspace}
\newcommand{\finallymp}[2]{\ensuremath{\textup{P}_{#1}~#2}\xspace}
\newcommand{\finallyl}[1]{\finallyp{L}{#1}}
\newcommand{\finallyml}[1]{\finallymp{L}{#1}}
\newcommand{\poschar}{\textup{{Pos}}}
\newcommand{\pos}[1]{\ensuremath{\poschar(#1)\xspace}}
\newcommand{\infix}[3]{\ensuremath{#1(#2,#3)}\xspace}
\newcommand{\wpos}[2]{\ensuremath{#1[#2]}\xspace}
\newcommand{\etaat}{\ensuremath{\eta_\at}\xspace}
\newcommand{\atyp}[1]{\typ{#1}{\alpha}}
\newcommand{\etyp}[1]{\typ{#1}{\eta}}
\newcommand{\btyp}[1]{\typ{#1}{\beta}}
\theoremstyle{plain}
\newtheorem{theorem}{Theorem}
\newtheorem{fact}[theorem]{Fact}
\begin{document}

\title{\Nesths of regular languages}

\author{
  \IEEEauthorblockN{Thomas Place}
  \IEEEauthorblockA{LaBRI, Univ. Bordeaux, France}
  \vspace*{-6ex}
\and
  \IEEEauthorblockN{Marc Zeitoun}
  \IEEEauthorblockA{LaBRI, Univ. Bordeaux, France}
  \vspace*{-6ex}
}
\maketitle

\begin{abstract}
  We study a celebrated class of regular languages: star-free languages. A long-standing goal is to classify them by the complexity of their descriptions. The most influential research effort involves \emph{concatenation hierarchies}, which measure alternations between ``complement'' and ``union plus concatenation''.  

  We explore alternative hierarchies that also stratify star-free languages (and extensions thereof). They are built with an operator $\Cs\hspace*{-.4ex}\mapsto\hspace*{-.4ex}\tlc\Cs$. It takes  a class of languages~\Cs as input, and produces a larger one \tlc{\Cs},  consisting of all languages definable in a variant of unary temporal logic, where the future/past modalities depend on~\Cs. Level~$n$ in the \emph{\nesth of basis~\Cs} is then constructed by applying this operator~$n$ times~to~\Cs. 

  As bases \Gs, we focus on \emph{group languages} and natural extensions thereof, denoted~$\Gs^+$. We prove that the \nesths of bases~\Gs and $\Gs^+$ are strictly intertwined and we conduct a thorough investigation of their relationships with their concatenation hierarchy counterparts. We also look at two standard problems on classes of languages: membership (decide if a language is in the class) and separation (decide, for two languages $L_1,L_2$, if there is a language $K$ in the class with $L_1 \subseteq K$ and $L_2 \cap K = \emptyset$). We prove that if separation is decidable for \Gs, then so is membership for level \emph{two} in the \nesths of bases \Gs and $\Gs^+$.

  We take a closer look at the trivial class $\stzer=\{\emptyset,A^*\}$. For the bases $\stzer$ and $\stzer^+$, the levels \emph{one}  are~the standard variants of unary temporal logic. The levels \emph{two} correspond to variants of two-variable logic, investigated recently by Krebs, Lodaya, Pandya and Straubing. We solve one of their conjectures. We also prove that for these two bases, level \emph{two} has decidable \emph{separation}. Combined with earlier results on the operator $\Cs\!\mapsto\!\tlc{\Cs}$, this implies that level \emph{three} has decidable membership.

\end{abstract}

\begin{IEEEkeywords}
Hierarchies of regular languages, Star-free languages, Membership, Separation, Covering, Generalized unary temporal logic, Generalized two-variable first-order logic.
\end{IEEEkeywords}

\section{Introduction}
\label{sec:intro}
\textbf{Context.}
This paper addresses a fundamental question in automata theory: understanding specific formalisms used to describe regular languages. A prime example is the class of \emph{star-free} languages, which are defined by regular expressions that exclude Kleene star but allow complement. This class is remarkable for its robustness. McNaughton and Papert~\cite{mnpfosf} showed that star-free languages are precisely those expressible in first-order logic \fow, and Kamp~\cite{kltl} proved that \fow matches the expressiveness of linear time temporal logic \ltl. Additionally, Schützenberger~\cite{schutzsf} showed that \emph{membership} in this class is decidable: given a regular language, one can decide whether it is star-free. This led Brzozowski and Cohen~\cite{BrzoDot} to classify star-free languages according to the difficulty of expressing them, resulting in the \emph{dot-depth hierarchy}.

The dot-depth is a \emph{concatenation hierarchy}. Such a hierarchy consists of an increasing sequence of classes of languages, called its \emph{levels}. It is fully determined by its initial level, called its \emph{basis}. Once the basis is fixed, the construction process is  uniform, and made through the  operator \bpolo. For a class \Ds, the class~$\pol\Ds$ consists of all languages $L_0a_1L_1\cdots a_nL_n$, where the $L_i$'s are in $\Ds$ and the $a_i$'s are letters. Then, the class $\bpol\Ds$ is the least Boolean algebra containing~\pol\Ds.

Level \emph{zero} in the concatenation hierarchy of basis \Cs is $\bpolp{0}{\Cs}=\Cs$. Level~$n+1$ is $\bpolp{n+1}{\Cs}=\bpol{\bpoln{\Cs}}$, \emph{i.e.}, it is obtained by applying the ``\bpolo'' operator to~level~$n$. Therefore, the union of all levels, called the \emph{star-free closure} \sfp\Cs of~\Cs, is stratified by such a concatenation hierarchy. The dot-depth starts from the basis $\dotzer=\{\emptyset, \{\varepsilon\}, A^+, A^*\}$, where~$A$ is the alphabet. Brzozowski and Knast~\cite{BroKnaStrict} showed that it is strict: $\bpoln{\dotzer}\subsetneq\bpolp{n+1}{\dotzer}$. Other standard hierarchies include the one of Straubing and Thérien~\cite{StrauConcat,TheConcat} of basis $\stzer=\{\emptyset, A^*\}$, the group hierarchy~\cite{MargolisP85} whose basis \grp consists of languages recognized by a permutation automaton~\cite{permauto}, and the modulo hierarchy~\cite{ChaubardPS06} whose basis $\md\subseteq\grp$ consists of languages for which membership of a word is determined by its length modulo an integer.

Individual levels of concatenation hierarchies are themselves robust: level~$n$ corresponds to the class \bsc{n} of all languages expressible in first-order logic with $n$ blocks of quantifiers $\exists^{*}/\forall^{*}$, under a tailored signature constructed from the basis~\cite{PZ:generic18}. This was first discovered by Thomas~\cite{ThomEqu} for the dot-depth. In this context, group languages are relevant bases, as they lead to natural signatures. For example, the class  \md yields predicates testing the value of a position modulo a fixed integer. Interesting bases are also obtained by extending such classes \Gs into larger ones, denoted~$\Gs^+$, obtained by adding or deleting the empty word to the languages of~\Gs. On the logical side, this adds the successor relation~\cite{pmixed,pzupol2}.

This motivates to study membership for \emph{individual levels} of concatenation hierarchies. However, this is challenging. For instance, in the dot-depth and Straubing-Thérien hierarchies, decades of research have shown decidable membership only for levels \emph{one}~\cite{simonthm,knast83}, \emph{two}~\cite{pzjacm19} and \emph{three}~\cite{PZ-Level3}. Most of the recent advances were obtained by introducing new decision problems, such as \emph{\Cs-separation} (given two regular languages, decide if there exists a third one in \Cs containing the first and disjoint from the second). Typically, solving such problems is harder than solving \Cs-membership. However, they lead to \emph{transfer results}. For instance, the decidability of \Cs-separation implies the decidability of \pol\Cs-membership~\cite{PZ:generic18}. While a similar result exists for \bpol\Cs~\cite{PZ-Level3}, it requires going beyond separation, and is, above all, significantly more~complex.

\vspace{-0.01cm}

\smallskip\noindent{\bfseries\Nesths.} In this paper, we adopt a different point of view: to get insight on concatenation hierarchies, we study an alternate, yet closely related classification of the star-free closure \sfp\Cs of a class~\Cs. This new kind of hierarchy is called the \emph{\nesth}. It draws inspiration from another renowned type of class of languages: those based on two-variable first-order logic and/or temporal logic~\cite{iwfo2alt,ksfo2alt,kwfo2alt3,evwutl,twfo2,small_fragments08,DartoisP13,betweenlics,betweenconf,between}. Its construction is similar to concatenation hierarchies: one replaces the operator $\Cs\mapsto\bpol\Cs$ by an operator $\Cs\mapsto\tlc\Cs$ based on unary temporal logic. The class \tlc{\Cs} consists of all languages definable in a variant of \emph{unary temporal logic} with future/past~modalities built from~\Cs. It is defined as follows: each language in \Cs gives rise to a future modality $\finallyop_{L}$ and a past one $\finallymop_{L}$ generalizing the classic ones \finallyop and~\finallymop. For instance, $\finallyp{L}{\varphi}$ holds at position~$i$ in a word $w$ if there is a position $j>i$ such that $\varphi$ holds at~$j$ in~$w$, and the infix of $w$ between $i$ and~$j$ belongs to~$L$. The operator $\Cs\mapsto\tlc\Cs$ seems to have more robust properties than \bpolo. In particular, we have a transfer result: $\tlc{\Cs}$-membership reduces to \Cs-separation~\cite{pztl}.

One denotes level~$n$ in the \nesth of \mbox{basis}~\Cs by \tlhp{n}{\Cs}. The union of all levels remains the star-free closure of~\Cs, so that the \nest and concatenation hierarchies stratify the same class: \sfp\Cs.
Level~one in the \nesth of basis $\stzer=\{\emptyset,A^*\}$ is the standard class of languages definable in \emph{unary temporal logic}~\tls with the \finallyop and \finallymop modalities. Results of~\cite{schul,pwdelta2,twfo2,evwutl} underscore its multiple characterizations: it consists of languages definable in two-variable first-order logic~\fodw, corresponds also to the ``intermediate level'' \dew2, is the class of unambiguous languages and has a nice decidable algebraic characterization. Level~one in the \nesth of basis $\dotzer=\stzer^+$ is also standard: it is the class of languages definable in \emph{unary temporal logic}~\tlxs with the \finallyop, \finallymop, \nexop (``next'') and \nexmop (``yesterday'') modalities. Finally, level~two in the \nesth of basis \stzer corresponds to a class recently investigated in~\cite{between,betweenconf,betweenlics}: \emph{two-variable first-order logic with between predicates}, where the link with temporal logic is also observed and considered.

\vspace{-0.01cm}

\smallskip
\noindent\textbf{{Contributions.}}
We carry out a general study of \nesths. We first briefly compare them with concatenation hierarchies: we show that $\bpoln\Cs\subseteq\tlhn\Cs$ for all $n\in\nat$ for any class~\Cs. Moreover, as announced above, we prove that both kinds of hierarchies classify the star-free closure of \Cs.

The remainder of the paper focuses on specific bases of the form \Gs or $\Gs^+$, where \Gs represents a class of group languages (using the basis $\Gs^+$ instead of \Gs boils down to adding the standard ``next'' and ``yesterday'' modalities). Group languages are those recognized by a complete, deterministic and co-deterministic automaton~\cite{permauto}. Bases made of group languages are natural in our context. Indeed, if the basis includes  no group language other than $\emptyset$ and $A^*$, then each level of the concatenation or \nesth of basis~\Cs is a subclass of \sfr. Thus, extending beyond star-free languages requires bases containing nontrivial group~languages.

We first prove that for \emph{any} class of group languages~\Gs, the \nesths of bases~\Gs and $\Gs^+$ are both strict and that they are strictly intertwined (if the alphabet has at least two letters). As a byproduct, we reprove the same property for concatenation hierarchies. We next sharpen the comparison between both types of hierarchies: we show that if $\Cs=\Gs$ or $\Cs=\Gs^+$, the class~$\tlhp{2}\Cs$ is incomparable with any level above three of the concatenation hierarchy of basis~$\Cs$. The situation is depicted in~\figurename~\ref{fig:hiera}.

\begin{remark}
  Separation is a key tool here: to prove that a language $L$ is not in level $n$ of a hierarchy, it suffices to exhibit a disjoint language $K$ and to show that $L$ and $K$ cannot be separated by this level. This approach significantly simplifies the inductive construction of languages outside each level. \end{remark}

\vspace*{-2ex}

\begin{figure}[!h]
  \centering
  \includegraphics[width=8.5cm]{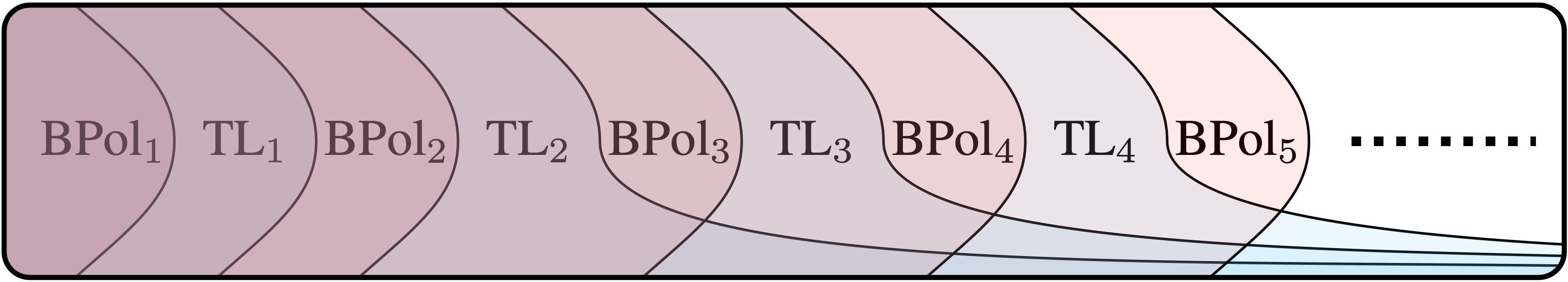}
  \caption{Comparison of~the two hierarchies (bases \Gs or~$\Gs^{+}$)}
  \label{fig:hiera}
  \vspace{-2ex}
\end{figure}

\vspace{-1ex}
These results generalize those of~\cite{betweenlics,between}, which already implied that \tlhp{2}{\stzer} is incomparable with any level above three in the concatenation hierarchy of basis \stzer. Interestingly, this particular basis requires arbitrarily large alphabets: it is shown in~\cite{betweenlics,between} that $\tlhp{2}{\stzer} \subseteq \bpolp{3}{\stzer}$ when the alphabet is binary. We strengthen this result by showing that $\tlhp{2}{\stzer} \subseteq \bpolp{n+1}{\stzer}$ when the alphabet is of size $n$. Moreover, we prove a surprising result: when the alphabet is \emph{binary}, \tlhp{2}{\stzer} is exactly the class of languages in \polp{3}{\stzer} whose complement also belong to \polp{3}{\stzer} (also written \dewt). This result is reminiscent of a famous theorem of Thérien and Wilke, which states that $\tlhp{1}{\stzer} = \dewd$.

While we need arbitrarily large alphabets when the basis is \stzer, we show that a binary alphabet suffices for bases of the form $\Gs^+$, or when \Gs includes \md. In these cases, the classes $\tlhp{k}\Cs$ and $\bpoln\Cs$ are incomparable if $2\leq k<n$.

Finally, we prove that membership for both $\tlhp{2}{\Gs}$ and $\tlhp{2}{\Gs^+}$ reduces to~\Gs-separation, by combining two known transfer results of~\cite{pzconcagroup,pztl}. This theorem applies for $\Gs=\stzer$, $\md$ and $\grp$ as they have decidable separation~\cite{Ash91,pzgr}.

In the second part of the paper, we focus on the class \tlhp{2}{\stzer}, proving that it has decidable \emph{separation}. This result has several interesting implications. First, \tlhp{2}{\stzer} is notable because it corresponds to \emph{two-variable first-order logic with between predicates}~\cite{between,betweenconf,betweenlics}. Second, by combining this result with the theorem of~\cite{pztl}, we show that level \emph{three} (\tlhp{3}{\stzer}) has decidable membership. Third, using transfer theorems~\cite{pzsucc,StrauVD}, we lift these results to the basis $\dotzer = \stzer^+$, so that \tlhp{2}{\dotzer} has decidable separation and  \tlhp{3}{\dotzer} has decidable membership. Fourth, the separation algorithm itself is significant as it solves a more general problem: \tlhp{2}{\stzer}-\emph{covering}, which generalizes separation to arbitrarily many input languages~\cite{pzcovering2}. This generalization is \emph{necessary}, as no specialized separation algorithm exists. Indeed, the procedure relies on a fixpoint computation and carrying it out requires more information that what is needed to decide separation alone. Notably, we use a covering algorithm for $\tlc{\stzer^+}$~\cite{pzsucc} as a subprocedure.

\vspace{.2mm}
\noindent \textbf{Organization.} Section~\ref{sec:prelims} introduces the notations and definitions, followed by a recall of concatenation hierarchies in Section~\ref{sec:concat}. \Nesths are defined in Section~\ref{sec:utl}. Section~\ref{sec:groups} focus on group bases, proving the strictness of their \nesths and comparing them to concatenation hierarchies. The remaining sections address separation and covering: Section~\ref{sec:ratms} presents the framework, and Section~\ref{sec:carav} shows the decidability of covering and separation for $\tlhp2\stzer$.

\section{Preliminaries}
\label{sec:prelims}
We fix a \emph{finite alphabet} $A$. As usual, $A^*$ is the set of all finite words over~$A$, including the empty one \veps. A \emph{language} is a subset of $A^*$. We let $A^+ = A^* \setminus \{\veps\}$. For $u,v \in A^*$, we let $uv$ be the word obtained by concatenating $u$ and $v$. For $K,L \subseteq A^*$, we let \mbox{$KL = \{uv \mid u \in K \text{ and } v \in L\}$}. If $w \in A^*$, we write $|w| \in \nat$ for its length. We view a word \mbox{$w =a_1 \cdots a_{|w|} \in A^*$} as an \emph{ordered set $\pos{w} = \{0,\dots,|w|+1\}$ of $|w|+2$ positions}. Position $i \in \{1,\dots,|w|\}$ carries label $a_i \in A$. We write $\wpos{w}{i} = a_i$. Positions $0$ and $|w|+1$ carry \emph{no label}, and we write $\wpos{w}{0} = min$ and $\wpos{w}{|w|+1} = max$. For $w = a_1\cdots a_{|w|} \in A^*$ and $i,j \in \pos{w}$ such that $i < j$, we write $\infix{w}{i}{j} = a_{i+1} \cdots a_{j-1} \in A^*$. In particular, $\infix{w}{0}{|w|+1} = w$ and $\infix{w}{i}{i+1} = \veps$ for $0\leq i\leq|w|$.

\vspace{-0.02cm}
\smallskip
\noindent
{\bf {Classes}.} A class of languages is a set of languages. A class~\Cs is a \emph{lattice} if $\emptyset\in\Cs$, $A^* \in \Cs$ and~\Cs is closed under union and intersection: for all $K,L \in \Cs$, we have $K \cup L \in \Cs$ and $K \cap L \in \Cs$. A \emph{Boolean algebra} is a lattice \Cs which is also closed under complement: for all $L \in \Cs$, we have $A^* \setminus L \in \Cs$. Finally,  \Cs is \emph{closed under quotients} when for all $L \in \Cs$ and $u \in A^*$, the languages $u\inv L\!=\!\{v\in A^*\mid uv\in L\}$ and $Lu\inv\!=\! \{v\in A^*\mid vu\in L\}$ both belong to \Cs. A \emph{\vari} (resp.~\emph{\pvari}) is a Boolean algebra (resp.~lattice) closed under quotients and containing only \emph{regular languages}. Regular languages are those that can be equivalently defined by finite automata or finite monoids. Here, we use the latter.

\vspace{-0.02cm}

\smallskip
\noindent
{\bf {Monoids.}} A \emph{monoid} is a set $M$  endowed with an associative multiplication $(s,t)\mapsto st$ having an identity element $1_M$, \emph{i.e.}, such that ${1_M} s=s {1_M}=s$ for every~$s \in M$.  A \emph{group} is a monoid $G$ such that for all $g\in G$, there exists $g^{-1}$ such that $gg^{-1}=g^{-1}g = 1_G$. An \emph{idempotent} of a monoid $M$ is an element $e \in M$ such that $ee = e$. We write $E(M) \subseteq M$ for the set of all idempotents in $M$. It is folklore that for every \emph{finite} monoid~$M$, there is a number $\omega(M)$ (written $\omega$ when~$M$ is understood) such that $s^\omega$ is idempotent for all $s \in M$.

Note that $A^*$ is a monoid whose multiplication is concatenation \mbox{(its identity is~\veps)}. A morphism is a map $\alpha: A^* \to M$, where $M$ is a monoid, such that $\alpha(uv)=\alpha(u)\alpha(v)$ and $\alpha(\veps)=1_M$ for all $u,v\in A^*$.  A language $L \subseteq A^*$ is \emph{recognized} by a \mbox{morphism $\alpha:A^*\to M$} if there exists a set $F \subseteq M$ such that $L= \alpha\inv(F)$. Every language~$L$ is recognized by a canonical morphism, which we recall. One associates to $L$ an equivalence relation $\equiv_L$ over $A^*$. Given  $u,v \in A^*$, we let $u \equiv_L v$ if and only if $xuy \in L \Leftrightarrow xvy \in L$ for all $x,y \in A^*$. One can check that $\equiv_L$ is a congruence on $A^*$, \emph{i.e.}, an equivalence relation compatible with concatenation. Thus, the set of equivalence classes $M_L = {A^*}/{\equiv_L}$ is a monoid, called the \emph{syntactic monoid}~of~$L$. The map $\alpha_L : A^* \to M_L$ that maps every word to its equivalence class is a morphism recognizing~$L$, called the \emph{syntactic morphism} of $L$. A language $L$ is regular if and only if $M_L$ is finite: this is the Myhill-Nerode theorem. In this case, one can compute the syntactic morphism $\alpha_L : A^* \to M_L$ from any representation of $L$.

\begin{remark} \label{rem:finmono}
  The only \emph{infinite monoids} that we consider are of the form $A^*$. From now on, we implicitly assume that every other monoid $M,N,\dots$ appearing in the paper is \emph{finite}.
\end{remark}

\noindent
{\bf {The class \at.}} The class \at of \emph{alphabet testable languages} is central in this paper.  It consists of all finite Boolean combinations of languages $A^*aA^*$ where $a \in A$. One can check that \at is a \vari. We associate a morphism to \at. Consider the monoid~$2^A$ (the set of all subalphabets of~$A$), with union as multiplication. We let $\etaat: A^* \to 2^A$ be the morphism defined by $\etaat(a) = \{a\}$. That is, $\etaat(w) = \{a \in A \mid w \in A^*aA^*\}$ for all $w \in A^*$. One can check that \etaat recognizes exactly the languages in \at. We call \etaat the \emph{canonical \at-morphism}.

We shall also consider the class \lt of \emph{locally testable languages}~\cite{bslt73,zalclt72}. It consists of all finite Boolean combinations of languages $A^*wA^*$, $wA^*$ and $A^*w$ where $w \in A^*$ is a word. 

\smallskip
\noindent
{\bf {Decision problems}}. We are interested in three problems that depend on a parameter class \Cs. We use them as tools for analyzing \Cs: intuitively, proving their decidability requires a solid understanding of~\Cs. The simplest one is \emph{\Cs-membership}. Its input is a regular language~$L$. It asks whether $L\in \Cs$.

The next problem, \emph{\Cs-separation}, takes \emph{two} regular languages $L_1,L_2$ as input and asks whether $L_1$ is \emph{\Cs-separable} from $L_2$: does there exist a third language $K \in \Cs$ such that $L_1 \subseteq K$ and $L_2 \cap K = \emptyset$? This generalizes membership: $L \in \Cs$ if and only if it is \Cs-separable from its complement (which is also regular). We shall need to following lemma (proved in Appendix~\ref{app:appdefs}): language concatenation preserves non-separability.

\begin{restatable}{lemma}{nsepconcat} \label{lem:nsepconcat}
	Let $L_1,L_2,H_1,H_2 \subseteq A^*$ and let \Cs be a \pvari. If $L_1$ is \textbf{not} \Cs-separable from $L_2$ and $H_1$ is \textbf{not} \Cs-separable from $H_2$, then $L_1H_1$ is \textbf{not} \Cs-separable from $L_2H_2$.
\end{restatable}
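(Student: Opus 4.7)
My plan is to prove the contrapositive: given a separator $K \in \Cs$ with $L_1 H_1 \subseteq K$ and $K \cap L_2 H_2 = \emptyset$, I would manufacture either a $\Cs$-separator of $L_1$ from $L_2$ or a $\Cs$-separator of $H_1$ from $H_2$. The engine is quotient closure of a prevariety: for every $u,h \in A^*$, the left quotient $u\inv K$ and the right quotient $Kh\inv$ both belong to $\Cs$, and because $K$ is regular Myhill--Nerode guarantees that only finitely many distinct quotients of either kind arise.

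I would then split on whether some left quotient of $K$ already separates $H_1$ from $H_2$. In Case~1, there exists $u_0 \in A^*$ with $H_1 \subseteq u_0\inv K$ and $u_0\inv K \cap H_2 = \emptyset$; the language $u_0\inv K$ lies in $\Cs$ by quotient closure and directly witnesses $\Cs$-separation of $H_1$ from $H_2$. In Case~2, every left quotient $u\inv K$ containing $H_1$ also meets $H_2$, and I would use as candidate separator of $L_1$ from $L_2$ the language
\[
  K' \;=\; \{u \in A^* : H_1 \subseteq u\inv K\} \;=\; \bigcap_{h \in H_1} Kh\inv .
\]
By regularity of $K$, this a priori infinite intersection collapses to finitely many distinct terms (one per syntactic class of $h$), each in $\Cs$ by quotient closure, so $K' \in \Cs$. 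The inclusion $L_1 \subseteq K'$ is immediate from $L_1 H_1 \subseteq K$. For disjointness $L_2 \cap K' = \emptyset$, any putative $u \in L_2 \cap K'$ would satisfy $H_1 \subseteq u\inv K$, and the Case~2 hypothesis would then supply $v \in H_2$ with $uv \in K$; but $uv \in L_2 H_2$ would contradict $K \cap L_2 H_2 = \emptyset$.

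The main -- and mild -- obstacle is the bookkeeping of the dichotomy: one has to recognize that the two roles played by quotients of $K$ (containing $H_1$ versus missing $H_2$) can always be decoupled to yield a clean separator on one side or the other. Apart from this, nothing is needed beyond Boolean closure of $\Cs$ and the finiteness of distinct quotients of the regular language $K$; in particular, the regularity of $L_1, L_2, H_1, H_2$ themselves is never used, so the statement is really a general fact about any Boolean algebra of regular languages closed under quotients.
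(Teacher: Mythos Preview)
Your argument is correct and takes a genuinely different route from the paper. The paper invokes its $\Cs$-morphism machinery: from $K\in\Cs$ it extracts a $\Cs$-morphism $\alpha:A^*\to(M,\leq)$ into a finite ordered monoid recognizing $K$; since no language recognized by $\alpha$ can separate $L_1$ from $L_2$ or $H_1$ from $H_2$, one obtains $u_1\in L_1,u_2\in L_2$ with $\alpha(u_1)\leq\alpha(u_2)$ and $v_1\in H_1,v_2\in H_2$ with $\alpha(v_1)\leq\alpha(v_2)$, hence $\alpha(u_1v_1)\leq\alpha(u_2v_2)$, and upward closure of the accepting set forces $u_2v_2\in K$, a contradiction. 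Your proof instead stays entirely at the language level, using only quotient closure and closure under finite intersection; this is more elementary and self-contained, whereas the paper's version is shorter once the ordered-monoid framework (which they reuse extensively) is in place. One small remark: your closing sentence speaks of ``Boolean closure of $\Cs$'', but your argument actually uses only finite intersections, so it already works for positive prevarieties as stated --- no Boolean closure is needed.
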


Finally, \Cs-covering is defined in~\cite{pzcovering2}. For a language~$L$, a \emph{\Cs-cover of $L$} is a \emph{finite} set of languages \Kb such that all $K \in \Kb$ belong to \Cs and~$L \subseteq \bigcup_{K \in \Kb} K$. The \emph{\Cs-covering problem} takes as input a pair $(L_1,\Lb_2)$ where \emph{$L_1$ is a regular language} and $\Lb_2$ is a \emph{finite set of regular languages}. It asks whether $(L_1,\Lb_2)$ is \emph{\Cs-coverable}: does there exist a \Cs-cover \Kb of~$L_1$ such that for every $K\in\Kb$, there exists $L \in \Lb_2$ satisfying $K \cap L = \emptyset$? This generalizes separation (if \Cs is closed under union): $L_1$ is \Cs-separable from $L_2$ if and only if $(L_1,\{L_2\})$ is \Cs-coverable.

\smallskip
\noindent
{\bf {The \Cs-pair relation.}} In this paper, we characterize certain classes using a binary relation, which we now define. Let~\Cs be a class and let $\alpha: A^* \to M$ be a morphism. We say that a pair $(s,t) \in M^2$ is a \emph{\Cs-pair} when $\alpha\inv(s)$ is \emph{not} \Cs-separable from $\alpha\inv(t)$. Clearly, \Cs-pairs can be computed from the morphism~$\alpha$ if \Cs-separation is~decidable. Typically, \Cs-pairs are used to characterize a class constructed on top of~\Cs.

\section{Concatenation hierarchies}
\label{sec:concat}
We briefly recall the definition of standard concatenation hierarchies, which is based on certain \emph{operators}.

\smallskip
\noindent
{\bf Operators.} The \emph{polynomial closure} \pol\Cs of a class~\Cs consists of all \emph{finite unions} of \emph{marked products} $L_0a_1L_1 \cdots a_n L_n$ with $a_1,\dots,a_n \in A$ and $L_0,\dots,L_n \in \Cs$. It is known~\cite{arfi87,jep-intersectPOL,PZ:generic18} that if \Cs is \pvari, then so is \pol{\Cs}. However, \pol{\Cs} may \emph{not} be closed under complement. It is therefore combined with two other operators. First, for every class \Ds, we write $\cocl{\Ds} = \{A^* \setminus L \mid L \in \Ds\}$ for the class consisting of all complements of languages in \Ds. Moreover, the \emph{Boolean closure} of a class \Ds is the least Boolean algebra \bool{\Ds} containing~\Ds. We write~\bpol{\Cs} for \bool{\pol{\Cs}}.

\smallskip
\noindent
{\bf Hierarchies.} A \emph{concatenation hierarchy} is built as follows. One starts from an arbitrary class \Cs called its \emph{basis}. We let $\bpolp{0}{\Cs}= \Cs$ and for $n>0$, $\bpolp{n}{\Cs} = \bpol{\bpolp{n-1}{\Cs}}$. Clearly, $\bpolp{n}{\Cs} \subseteq \bpolp{n+1}{\Cs}$ for all $n \in \nat$. Level~$n$ in the concatenation hierarchy of basis \Cs is then $\bpolp{n}{\Cs}$.

It is also standard to consider ``intermediary'' levels. First, we let $\polp{0}{\Cs} = \Cs$ and $\polp{n}{\Cs} = \pol{\bpolp{n-1}{\Cs}}$ if $n \geq 1$. Next, we let $\ipolp{n}{\Cs} = \polp{n}{\Cs} \cap \cocl{\polp{n}{\Cs}}$ for $n \in \nat$ (it is therefore the greatest Boolean algebra contained in \polp{n}{\Cs}).

\begin{figure}[H]
	\centering
	\includegraphics[width=8.5cm]{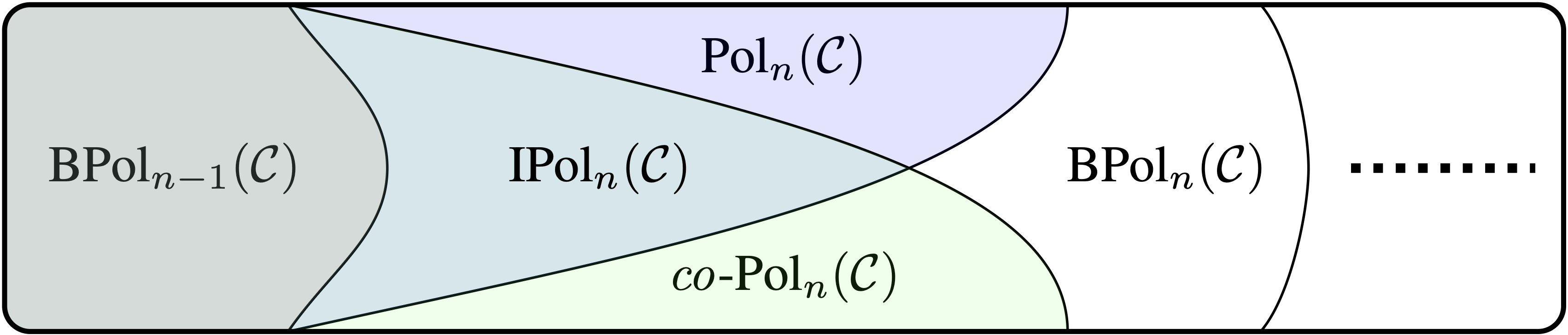}

			
			


			
			


		
	\caption{The four kinds of levels in a concatenation hierarchy.}
	\label{fig:chiera}
	\vspace{-2ex}
\end{figure}

The following standard statement was originally proved in~\cite{arfi91} (see also~\cite{jep-intersectPOL,PZ:generic18} for recent proofs).

\begin{restatable}{proposition}{bpvar} \label{prop:bpvar}
	Let \Cs be a \pvari and $n \in \nat$. Then \polp{n}{\Cs} is a \pvari. Moreover, \bpolp{n}{\Cs}  and \ipolp{n}{\Cs} are \varis.
\end{restatable}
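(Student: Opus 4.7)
The plan is a single induction on $n$, resting almost entirely on the quoted fact that $\pol{\Ds}$ is a \pvari whenever $\Ds$ is.

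For the inductive step at $n \geq 1$, I would assume inductively that $\bpolp{n-1}{\Cs}$ is a \vari (the case $n=1$ reducing to $\bpolp{0}{\Cs}=\Cs$ being a \pvari, which is all the quoted fact actually needs). Applying that fact to $\Ds := \bpolp{n-1}{\Cs}$, I get that $\polp{n}{\Cs} = \pol{\bpolp{n-1}{\Cs}}$ is a \pvari, settling the first assertion.

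For $\bpolp{n}{\Cs} = \bool{\polp{n}{\Cs}}$, I would verify the three requirements of a \vari. The Boolean algebra structure is automatic; containment in the regular languages follows since $\polp{n}{\Cs}$ consists of regular languages and these form a Boolean algebra. Closure under quotients would then reduce to the standard observation that left and right quotients commute with unions, intersections and complement, e.g.\ $u\inv(K \cup L) = u\inv K \cup u\inv L$ and $u\inv(A^*\setminus K) = A^*\setminus u\inv K$, so any quotient of a Boolean combination of elements of $\polp{n}{\Cs}$ is itself a Boolean combination of quotients, which lie in $\polp{n}{\Cs}$ by the first assertion.

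For $\ipolp{n}{\Cs} = \polp{n}{\Cs}\cap\cocl{\polp{n}{\Cs}}$, I would first note that $\cocl{\polp{n}{\Cs}}$ is also a \pvari: closure under union and intersection follows from De Morgan, while $u\inv(A^*\setminus L) = A^*\setminus u\inv L$ yields quotient-closure. Thus $\ipolp{n}{\Cs}$ is the intersection of two \pvaris, hence itself a \pvari. Closure under complement is immediate from the symmetric description ``$L\in\ipolp{n}{\Cs}$ iff both $L$ and $A^*\setminus L$ lie in $\polp{n}{\Cs}$''.

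The only substantive content is the quoted theorem that $\polo$ preserves the \pvari property; once that is granted, everything else is lattice-theoretic bookkeeping. The main pitfall I would guard against is conflating \pvari-closure with \vari-closure in intermediate steps, since $\pol$ notoriously fails to preserve complements; so at each step I would be careful to invoke only the closure properties actually established at that stage.
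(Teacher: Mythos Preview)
Your proposal is correct. The paper does not give its own proof of this proposition; it simply states that the result is standard and cites \cite{arfi91,jep-intersectPOL,PZ:generic18}, so there is nothing to compare against beyond noting that your argument is exactly the kind of routine verification those references carry out (the only nontrivial ingredient being the cited fact that $\polo$ preserves the \pvari property).
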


For each class \Cs, we define $\sfp{\Cs} = \bigcup_{n \in \nat} \bpoln{\Cs}$. This union of all levels in the concatenation hierarchy of basis~\Cs is the \emph{star-free closure} of \Cs. This is the least class containing \Cs and closed under Boolean operations and marked product.

The first concatenation hierarchy to be introduced was the dot-depth hierarchy of basis $\dotzer =  \{\emptyset,\{\veps\}, A^+,A^*\}$. It was defined by Brzozowski and Cohen~\cite{BrzoDot} to classify star-free languages~\cite{schutzsf}. Most bases in the literature are \emph{classes of group languages}. These bases will be revisited in Section~\ref{sec:groups}.
\smallskip

We complete the presentation with some useful properties. The purpose of the next two lemmas is to prove that a particular language does \textbf{not} belong to a given level. The first is simple and proved in~\cite[Lemma~33]{PZ:generic18} or \cite[Lemma~4.8]{pzupol2}.

\begin{restatable}{lemma}{polcopol} \label{lem:polcopol}
	If \Cs is a \vari, $\polp{2}{\Cs} = \pol{\copol{\Cs}}$.
\end{restatable}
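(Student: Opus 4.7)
The plan is to prove the two inclusions separately after rewriting the goal as $\pol{\bpol{\Cs}} = \pol{\copol{\Cs}}$, using $\polp{2}{\Cs} = \pol{\bpolp{1}{\Cs}} = \pol{\bpol{\Cs}}$. The right-to-left inclusion is immediate: since $\bpol{\Cs}$ is a Boolean algebra containing $\pol{\Cs}$, it also contains $\copol{\Cs}$, so $\copol{\Cs} \subseteq \bpol{\Cs}$, and monotonicity of \polo yields $\pol{\copol{\Cs}} \subseteq \pol{\bpol{\Cs}}$.

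For the converse, the key intermediate claim is a normal form: every $L \in \bpol{\Cs}$ is a finite union of intersections $P \cap Q$ with $P \in \pol{\Cs}$ and $Q \in \copol{\Cs}$. I would start from a disjunctive normal form expressing $L$ as a finite union of finite intersections of languages drawn from $\pol{\Cs} \cup \copol{\Cs}$. Proposition~\ref{prop:bpvar} tells us that $\pol{\Cs}$ is a \pvari, hence is closed under intersection, so every $\pol{\Cs}$-factor in one conjunct can be collapsed into a single $P$. Dually, since $\pol{\Cs}$ is closed under union, De~Morgan's laws imply that $\copol{\Cs}$ is closed under intersection, so the $\copol{\Cs}$-factors can be collapsed into a single $Q$.

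To exploit this, I would then observe that $\pol{\Cs} \subseteq \pol{\copol{\Cs}}$: since $\Cs$ is a \vari it is closed under complement, so every $L \in \Cs$ equals the complement of $A^* \setminus L \in \Cs \subseteq \pol{\Cs}$, which gives $\Cs \subseteq \copol{\Cs}$ and hence the inclusion by monotonicity of \polo. A short verification via De~Morgan shows that $\copol{\Cs}$ is itself a \pvari, so Proposition~\ref{prop:bpvar} again applies and $\pol{\copol{\Cs}}$ is a \pvari, in particular closed under intersection. Thus both $P$ and $Q$ from the normal form lie in $\pol{\copol{\Cs}}$, whence $P \cap Q \in \pol{\copol{\Cs}}$. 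Taking a finite union gives $\bpol{\Cs} \subseteq \pol{\copol{\Cs}}$, and finally $\pol{\bpol{\Cs}} \subseteq \pol{\pol{\copol{\Cs}}} = \pol{\copol{\Cs}}$, invoking the standard identity $\pol{\pol{\Ds}} = \pol{\Ds}$ for any \pvari~$\Ds$ (which follows from distributing concatenation over unions inside marked products).

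The main obstacle is cleanly assembling the normal form together with the auxiliary observation that $\copol{\Cs}$ is a \pvari, since both rely on carefully threading the lattice closure properties of $\pol{\Cs}$ through De~Morgan's laws. Once these ingredients are in place, the remaining steps are routine distributivity and monotonicity arguments for the \polo operator.
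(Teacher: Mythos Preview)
Your argument is correct. The paper itself does not prove Lemma~\ref{lem:polcopol}; it simply cites \cite[Lemma~33]{PZ:generic18} and \cite[Lemma~4.8]{pzupol2}, so there is no in-paper proof to compare against. Your route---DNF for $\bpol{\Cs}$ as unions of $P\cap Q$ with $P\in\pol{\Cs}$ and $Q\in\copol{\Cs}$, the inclusion $\Cs\subseteq\copol{\Cs}$ coming from closure under complement, the fact that $\copol{\Cs}$ is a \pvari so that $\pol{\copol{\Cs}}$ is a lattice, and the idempotence $\pol{\pol{\Ds}}=\pol{\Ds}$---is exactly the standard argument one finds in those references, and every step you list goes through as stated.
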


The second lemma is proved in Appendix~\ref{app:concat} as a corollary of a characterization of \pol{\Cs} given in~\cite[Theorem~54]{PZ:generic18}.

\begin{restatable}{lemma}{polinduc} \label{lem:polinduc}
	Let \Cs be a \pvari. Let $H_1,H_2 \subseteq A^*$ be languages such that $H_1$ is \textbf{not} \Cs-separable from $H_2$. Then, $H_1^*$ is \textbf{not} \pol{\Cs}-separable from $H_1^*H_2H_1^*$.
\end{restatable}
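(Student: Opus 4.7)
The plan is a proof by contradiction, combining Lemma~\ref{lem:nsepconcat} with the characterization of $\pol{\Cs}$ from Theorem~54 of~\cite{PZ:generic18}. Suppose some $K \in \pol{\Cs}$ separates $H_1^*$ from $H_1^* H_2 H_1^*$. Write $K = \bigcup_i M_i$ as a finite union of marked products $M_i = L_{i,0} a_{i,1} L_{i,1} \cdots a_{i,n_i} L_{i,n_i}$ with $L_{i,k}\in\Cs$, and let $n = \max_i n_i$; the goal is to exhibit a word in $K \cap H_1^* H_2 H_1^*$.

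First, I iterate Lemma~\ref{lem:nsepconcat} using the hypothesis (that $H_1$ is not $\Cs$-separable from $H_2$) together with the trivial non-$\Cs$-separability of $H_1$ from itself (assuming $H_1 \neq \emptyset$, else the lemma is vacuous). This yields the basic ingredient: for every $a, b \geq 0$, $H_1^{a+1+b}$ is not $\Cs$-separable from $H_1^a H_2 H_1^b$.

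Next, I apply a pigeonhole argument on the marked product structure. Fix any $u \in H_1$ and consider $u^N$ for large $N$; since $u^N \in H_1^* \subseteq K$, there is some $i$ with $u^N \in M_i$, and at most $n_i \leq n$ of the $N$ copies of $u$ contain a marked position. Consequently, the remaining unmarked copies form at most $n+1$ contiguous blocks, each lying inside a single piece $L_{i,k}$. For $N$ sufficiently large, some such block has length $r$, with a context $(p, q)$ built from the surrounding copies of $u$, satisfying $p\, u^r\, q \in L_{i,k}$. By quotient closure of $\Cs$, the language $p^{-1} L_{i,k} q^{-1}$ belongs to $\Cs$ and contains $u^r$.

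The final, hardest, step invokes Theorem~54 to promote the local containment $u^r \in p^{-1} L_{i,k} q^{-1}$ to the saturation property $H_1^r \subseteq p^{-1} L_{i,k} q^{-1}$. The characterization of $\pol{\Cs}$ supplied by that theorem, for $N$ and $r$ large enough relative to a finite monoid recognizing all the $L_{i,k}$, guarantees the existence of such a saturated piece. Once saturation is secured, the basic non-$\Cs$-separability of $H_1^r$ from $H_1^a H_2 H_1^{r-1-a}$ produces a word $w' \in H_1^a H_2 H_1^{r-1-a}$ with $p\, w'\, q \in L_{i,k}$. Substituting $w'$ for the block $u^r$ in the marked product decomposition yields a word in $K \cap H_1^* H_2 H_1^*$, contradicting the assumed separation. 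The saturation step is the main obstacle, and it is precisely what Theorem~54 is tailored to deliver.
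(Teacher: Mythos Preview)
Your proposal has a genuine gap at the ``saturation'' step, and this is not a detail that Theorem~54 of~\cite{PZ:generic18} supplies in the way you describe. The quotient $p^{-1}L_{i,k}\,q^{-1}$ is merely a language in~\Cs (by closure under quotients), and all you know is that it contains the single word~$u^r$. There is no mechanism that upgrades this to $H_1^r \subseteq p^{-1}L_{i,k}\,q^{-1}$: a \Cs-language containing one word of $H_1^r$ has no reason to contain them all. Consequently, the non-\Cs-separability of $H_1^r$ from $H_1^aH_2H_1^{r-1-a}$ is inapplicable, since that only tells you something about \Cs-languages that contain \emph{all} of $H_1^r$. Theorem~54 is an algebraic characterization of \pol{\Cs} via an inequality in an ordered monoid recognizing the \pol{\Cs}-language; it says nothing about saturating individual \Cs-pieces of a marked product.

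The paper's proof avoids this obstacle by never decomposing $K$ into marked products. Instead, it picks a single \pol{\Cs}-morphism $\alpha:A^*\to(M,\leq)$ recognizing $K$ (so $K$ is an upper set), and a \Cs-morphism $\eta$ witnessing the \Cs-pairs for $\alpha$. Non-\Cs-separability of $H_1$ from $H_2$ then produces $u\in H_1$, $v\in H_2$ with $\eta(u)\leq\eta(v)$; hence $(s,t)=(\alpha(u),\alpha(v))$ is a \Cs-pair, and Theorem~54 gives $s^{\omega+1}\leq s^\omega t s^\omega$. With $p=\omega(M)$ this yields $\alpha(u^{p+1})\leq\alpha(u^pvu^p)$, so $u^{p+1}\in K$ forces $u^pvu^p\in K$, contradicting $K\cap H_1^*H_2H_1^*=\emptyset$. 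The crucial difference is the order of choices: the paper selects $u,v$ \emph{after} fixing the recognizing morphisms, so that $u$ and $v$ are algebraically indistinguishable at the \Cs-level; your argument fixes $u$ first and then hopes the pieces accommodate an arbitrary swap, which they need not.
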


Finally, we will require the following algebraic characterization of \ipolp{2}{\Cs}. It is proved in~\cite[Theorem~6.7]{pzupol2}.

\begin{restatable}{theorem}{capolbp} \label{thm:capolbp}
	Let \Cs be a \pvari.
Let $L$ be a regular language and $\alpha: A^* \to M$ be its syntactic morphism. Then, $L \in  \ipolp{2}{\Cs}$ if and only if $\alpha$ satisfies the following property:
	\begin{equation}\label{eq:capolbp}
		\begin{array}{c}
			\hspace*{-1.5ex}(esete)^{\omega+1} =		(esete)^{\omega} ete	(esete)^{\omega} \\
			\text{\hspace*{-1.5ex}for all \Cs-pairs $(e,s)\! \in\! E(M)\! \times\! M$ for $\alpha$ and all $t \in M$}
		\end{array}
	\end{equation}
\end{restatable}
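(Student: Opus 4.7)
The plan is to prove the two implications separately, after a common setup. First, I would invoke Lemma~\ref{lem:polcopol} to rewrite $\polp{2}{\Cs} = \pol{\copol{\Cs}}$, so that $\ipolp{2}{\Cs} = \pol{\copol{\Cs}} \cap \cocl{\pol{\copol{\Cs}}}$. By Proposition~\ref{prop:bpvar}, $\ipolp{2}{\Cs}$ is a \vari, hence closed under Boolean operations and quotients; $\polp{2}{\Cs}$ itself is a \pvari, so it is at least closed under quotients, union and intersection. These closure properties will underlie both directions.

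For necessity, I would argue by contraposition. Suppose that $e_1 := (esete)^{\omega+1}$ and $e_2 := (esete)^{\omega}ete(esete)^{\omega}$ differ for some \Cs-pair $(e,s) \in E(M) \times M$ and some $t \in M$. Since $\alpha$ is the syntactic morphism of $L$, there exist $x,y \in A^*$ such that exactly one of $x\alpha^{-1}(e_1)y$, $x\alpha^{-1}(e_2)y$ lies in $L$; say $x\alpha^{-1}(e_1)y \subseteq L$ while $x\alpha^{-1}(e_2)y \cap L = \emptyset$ (the reverse case is symmetric, via $L \in \cocl{\polp{2}{\Cs}}$). The key step is to exhibit subsets $X_1 \subseteq \alpha^{-1}(e_1)$ and $X_2 \subseteq \alpha^{-1}(e_2)$ that are not $\polp{2}{\Cs}$-separable. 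Since $\polp{2}{\Cs}$ is a \pvari, quotienting by $x$ and $y$ then yields non-$\polp{2}{\Cs}$-separability of $xX_1y$ from $xX_2y$, contradicting $L \in \polp{2}{\Cs}$. To build $X_1, X_2$, one uses the \Cs-pair hypothesis on each inner $(esete)$-block, invokes Lemma~\ref{lem:polinduc} at the level of the large idempotent $(esete)^{\omega}$ to promote non-\Cs-separability to non-$\pol{\copol{\Cs}}$-separability of appropriately assembled blocks, and finally uses Lemma~\ref{lem:nsepconcat} to concatenate the blocks into words realizing $e_1$ and $e_2$.

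For sufficiency, I would show that the equation forces each $\alpha^{-1}(m)$ with $m \in M$ to lie in $\polp{2}{\Cs}$. Since $A^* \setminus \alpha^{-1}(m) = \bigcup_{m' \neq m} \alpha^{-1}(m')$ and $\polp{2}{\Cs}$ is closed under finite union, this immediately yields $\alpha^{-1}(m) \in \ipolp{2}{\Cs}$, and hence $L = \bigcup_{m \in \alpha(L)} \alpha^{-1}(m) \in \ipolp{2}{\Cs}$. To prove each $\alpha^{-1}(m) \in \polp{2}{\Cs}$, I would induct over Green's ordering on $M$, expressing $\alpha^{-1}(m)$ as a finite union of marked products $K_0 a_1 K_1 \cdots a_n K_n$ with $K_i \in \copol{\Cs}$. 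The equation plays the role of a saturation condition: it certifies that such a finite union is insensitive to the insertion of an extra middle $ete$ block around an idempotent $(esete)^\omega$, which is exactly what is required for the image under $\alpha$ of the union to remain the singleton $\{m\}$.

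The main obstacle I anticipate is the necessity direction, specifically the gap between non-\Cs-separability (what a \Cs-pair provides) and non-$\pol{\copol{\Cs}}$-separability (what is needed to contradict $L \in \polp{2}{\Cs}$). A direct application of Lemma~\ref{lem:polinduc} only yields non-$\pol{\Cs}$-separability, which is strictly weaker. Bridging the gap must exploit the symmetry of the equation: both sides feature two copies of the idempotent $(esete)^{\omega}$ flanking a central block, and this symmetry should be harnessed to defeat separators built from complements of \Cs-languages. Formalizing this intuition likely requires a refined inductive construction that carefully tracks where the central $ete$ block sits relative to the surrounding idempotents, and will constitute the technical heart of the proof.
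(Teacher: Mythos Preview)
The paper does not prove Theorem~\ref{thm:capolbp}; it is quoted from~\cite[Theorem~6.7]{pzupol2}, so there is no in-paper argument to compare against. I can only assess your plan on its own merits and against what the cited proof most likely does.

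There is a concrete technical slip in your setup: Lemma~\ref{lem:polcopol} requires \Cs to be a \emph{\vari}, but Theorem~\ref{thm:capolbp} only assumes a \emph{\pvari}. So the rewriting $\polp{2}{\Cs}=\pol{\copol{\Cs}}$ is not available in general. The correct starting point is simply $\polp{2}{\Cs}=\pol{\bpol{\Cs}}$, with $\bpol{\Cs}$ a genuine \vari. This matters for your sufficiency sketch, where you aim for factors $K_i\in\copol{\Cs}$; you should instead target $K_i\in\bpol{\Cs}$.

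On necessity, you have correctly isolated the crux: a \Cs-pair gives only non-\Cs-separability, while what you need is non-$\pol{\bpol{\Cs}}$-separability. Your proposed route via Lemmas~\ref{lem:polinduc} and~\ref{lem:nsepconcat} does not close this gap as stated, and your own final paragraph concedes this. The standard resolution (and almost certainly the one in~\cite{pzupol2}) does not go through separability of explicit languages at all: it uses the algebraic characterization of $\pol{\Ds}$-morphisms referenced in the proof of Lemma~\ref{lem:polinduc} (namely $s^{\omega+1}\leq s^{\omega}ts^{\omega}$ for \Ds-pairs, with the dual inequality from $\cocl{\pol{\Ds}}$), instantiated at $\Ds=\bpol{\Cs}$. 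The role of the specific shape $(esete)^{\omega}$ is precisely to manufacture a $\bpol{\Cs}$-pair out of a \Cs-pair $(e,s)$ and an arbitrary $t$: one shows that $(esete,\,ete)$, with $e$ idempotent and $(e,s)$ a \Cs-pair, is a $\bpol{\Cs}$-pair for~$\alpha$. Plugging this into the $\capol{\Ds}$ equation for $\Ds=\bpol{\Cs}$ yields~\eqref{eq:capolbp} directly. Your ``symmetry'' intuition is pointing at the right phenomenon, but the mechanism is the lift from \Cs-pairs to $\bpol{\Cs}$-pairs, not a bespoke separability argument.

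On sufficiency, the Green-induction outline is the right genre, but as written it is too thin to evaluate: you do not say which factorizations you use, how the equation is invoked at the inductive step, or why the resulting union stays within $\alpha^{-1}(m)$. This direction is where most of the work in~\cite{pzupol2} lies, and it is not recoverable from the one sentence you give.
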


\section{\Nesths}
\label{sec:utl}
In this section, we introduce the \emph{operator} $\Cs \mapsto \tlc{\Cs}$ and use it to define \nesths. Next, we compare these hierarchies with concatenation hierarchies.

\subsection{Definition}

\noindent
\textbf{The operator.} For each class \Cs, we define a set  \tla{\Cs} of temporal formulas. They are built inductively from the atomic formulas $\top$, $\bot$, $min$, $max$ and ``$a$'' for all letters $a \in A$. All Boolean connectives are allowed: if $\psi_1,\psi_2 \in \tla{\Cs}$, then $(\psi_{1} \vee \psi_{2}) \in \tla{\Cs}$, $(\psi_{1} \wedge \psi_{2})\in \tla{\Cs}$ and $(\neg \psi_1)\in \tla{\Cs}$. There are \emph{two temporal modalities} $\textup{F}_L$ and $\textup{P}_L$ for each $L \in \Cs$: if $\psi \in \tla{\Cs}$, then $(\finallyl{\psi})\in \tla{\Cs}$ and $(\finallyml{\psi})\in \tla{\Cs}$. We omit parentheses when there is no ambiguity.

Evaluating a formula $\varphi \in \tla{\Cs}$ requires a word $w \in A^*$ and a position $i \in \pos{w}$. We define by induction what it means for  \emph{$(w,i)$ to satisfy $\varphi$}, which one denotes by $w,i \models \varphi$:
\begin{itemize}
  \item {\bf Atomic formulas:} $w,i \models \top$ always holds, $w,i \models \bot$ never holds and for every symbol $\ell \in A \cup \{min,max\}$, $w,i \models \ell$ holds when $\ell = \wpos{w}{i}$.
   \item {\bf Boolean connectives:} $w,i \models \psi_1 \vee \psi_2$ holds if $w,i \models \psi_1$ \emph{or} $w,i \models \psi_2$, $w,i \models \psi_1 \wedge \psi_2$ holds if $w,i \models \psi_1$ \emph{and} $w,i \models \psi_2$, and $w,i \models \neg \psi$ holds if $w,i \models \psi$ \emph{does not}.
\item {\bf Finally:} for $L \in \Cs$,  $w,i \models \finallyl{\psi}$ holds if there exists $j \in \pos{w}$ such that $i < j$,   $\infix{w}{i}{j} \in L$ and $w,j \models \psi$.
  \item {\bf Previously:} for $L \in \Cs$,  $w,i \models \finallyml{\psi}$ holds if there exists $j \in \pos{w}$ such that $j < i$,   $\infix{w}{j}{i} \in L$ and $w,j \models \psi$.
\end{itemize}
If no distinguished position is specified, we evaluate formulas at the \emph{leftmost} unlabeled position. For a formula $\varphi\in\tla{\Cs}$, the language defined by $\varphi$ is $L(\varphi)=\{w \in A^*\mid w,0\models\varphi\}$.

Finally, given a class \Cs, we write \tlc{\Cs} for the class consisting of all languages $L(\varphi)$ where $\varphi\in \tla{\Cs}$. Clearly, $\Cs \subseteq \tlc{\Cs}$ since $L \in \Cs$ is defined by $\finallyl{max} \in \tla{\Cs}$.

\begin{remark} \label{rem:fo2}
	It is well-known~\cite{evwutl} that the standard variants \tls and \tlxs of unary temporal logic (see Section~\ref{sec:groups}) correspond to the variants \fodw  and \fodws of two-variable first-order logic. This correspondence can be generalized: with a class~\Cs, we associate a set \infsigc of first-order predicates. For each $L \in \Cs$, it contains a \emph{binary predicate} $I_L(x,y)$, interpreted as follows: if $w \in A^*$ and $i,j \in \pos{w}$, then $w \models I_L(i,j)$ if and only if $i < j$ and $\infix{w}{i}{j} \in L$. It is shown in~\cite{pzupol2} that for every \emph{Boolean algebra} \Cs, we have $\tlc{\Cs} = \fod(\infsigc)$.
\end{remark}

We turn to a generic algebraic characterization of the classes \tlc{\Cs} from~\cite[Theorem~12]{pztl}. It will serve as a central tool.

\begin{restatable}{theorem}{tlcar} \label{thm:tlcar}
  Let \Cs be a \vari, let $L\subseteq A^{*}$ be a regular language and let $\alpha: A^* \to M$ be its~syntactic morphism. Then, $L \in \tlc{\Cs}$ if and only if $\alpha$ satisfies:
  \begin{equation} \label{eq:tlcar}
    \begin{array}{c}
      \hspace*{-.5ex}(esete)^\omega = (esete)^\omega ete (esete)^\omega \quad \text{for all $e \in E(M)$}\\
      \hspace*{-.5ex}\text{and $s,t \in M$ such that $(e,s),(e,t)$ are \Cs-pairs for $\alpha$.}
    \end{array}
  \end{equation}
\end{restatable}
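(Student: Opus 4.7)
The plan is to establish the two implications of the characterization separately.

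For the soundness direction ($\Rightarrow$), I assume $L \in \tlc{\Cs}$ and fix a defining formula $\varphi \in \tla{\Cs}$. Given $e \in E(M)$ and $s, t \in M$ such that $(e,s)$ and $(e,t)$ are $\Cs$-pairs, I pick representatives $u, v, w \in A^*$ with $\alpha(u) = e$, $\alpha(v) = s$, $\alpha(w) = t$, and take $N$ large enough that $\alpha\bigl((uvuwu)^N\bigr) = (esete)^\omega$. Since $\alpha$ is syntactic, the desired equality reduces to showing that for every pair of contexts $x, y \in A^*$, the words $x(uvuwu)^N y$ and $x(uvuwu)^N uwu (uvuwu)^N y$ are either both in $L$ or both outside $L$. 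I would prove this by a back-and-forth induction on $\varphi$: the Boolean cases are immediate, while for each modality $\finallyp{K}{\psi}$ or $\finallymp{K}{\psi}$ with $K \in \Cs$, the $\Cs$-inseparability of the pairs $(e,s)$ and $(e,t)$, reinforced by Lemma~\ref{lem:nsepconcat}, allows one to swap blocks mapped to $e$, $s$ and $t$ without changing the truth value of the guard ``infix belongs to $K$''. The $\omega$-power supplies enough redundancy so that every nested modality finds a matching witness on the opposite side.

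For the completeness direction ($\Leftarrow$), I assume $\alpha$ satisfies (\ref{eq:tlcar}) and aim to show that each language $\alpha^{-1}(m)$ with $m \in M$ belongs to $\tlc{\Cs}$; taking a finite union over the accepting set then yields $L \in \tlc{\Cs}$. The construction proceeds by induction on the $\Jord$-depth of $m$ in $M$. For the base case, where $m$ lies in a minimal $\Jrel$-class, the identity (\ref{eq:tlcar}) forces iterated factors indexed by $\Cs$-pairs to collapse, so that a formula of bounded depth suffices. For the inductive step, I locate a pivot position in a representative word via a modality $\finallyp{K}{\psi}$ (or its past counterpart) whose guard $K \in \Cs$ is assembled from $\Cs$-pairs of $\alpha$; the two halves of the word then project to elements strictly above $m$ in the $\Jord$ order, so the inductive hypothesis applies.

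The main obstacle is the completeness direction, and specifically the faithful choice of the guard $K \in \Cs$ in the inductive step. When a candidate pivot lies at the boundary of two $\Cs$-inseparable $\alpha$-classes, no $K \in \Cs$ can distinguish them, and one must invoke the identity (\ref{eq:tlcar}) to argue that the resulting decomposition yields the same element of $M$ regardless of which pivot is chosen. Turning this intuition into a single finite formula --- rather than a parameterized family indexed by some auxiliary data --- is the combinatorial heart of the argument, and mirrors the algebraic identity appearing in Theorem~\ref{thm:capolbp} for the characterization of $\ipolp{2}{\Cs}$.
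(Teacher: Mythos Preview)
The paper does not prove Theorem~\ref{thm:tlcar}; it is quoted verbatim from~\cite[Theorem~12]{pztl} and used as a black box. There is therefore no ``paper's own proof'' to compare against, only your proposal to assess on its merits.

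Your soundness sketch has a genuine gap in how you pick the representatives $u,v,w$. Knowing that $(e,s)$ is a \Cs-pair tells you only that $\alpha^{-1}(e)$ is not \Cs-separable from $\alpha^{-1}(s)$; it does not hand you specific words that behave identically with respect to the particular guards $K \in \Cs$ occurring in $\varphi$. The right move (and the one the paper uses whenever it \emph{applies} Theorem~\ref{thm:tlcar}, e.g.\ in the proofs of Lemma~\ref{lem:tlcinduc} and Proposition~\ref{prop:tlcmor}) is: collect the finitely many languages of \Cs appearing in $\varphi$, build a single \Cs-morphism $\eta$ recognizing all of them via Proposition~\ref{prop:genocm}, and then use Lemma~\ref{lem:cpairs} to obtain $u_e,u_s$ with $\eta(u_e)=\eta(u_s)$, $\alpha(u_e)=e$, $\alpha(u_s)=s$ (and likewise $v_e,v_t$ for the pair $(e,t)$). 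Only with representatives agreeing under $\eta$ does the back-and-forth argument go through, because every guard $K$ is $\eta$-recognized. Your appeal to Lemma~\ref{lem:nsepconcat} is not the right tool here: that lemma propagates non-separability through concatenation, whereas what you need is \emph{indistinguishability} of the chosen witnesses by all guards simultaneously.

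Your completeness sketch is too schematic to judge. A \Jord-induction is a reasonable shape, but ``locate a pivot position via a modality whose guard is assembled from \Cs-pairs'' hides the entire difficulty: \Cs-pairs are pairs of \emph{monoid elements}, not languages in \Cs, and it is not at all clear how one manufactures a guard $K\in\Cs$ from them. You correctly flag this as the crux, but the proposal does not contain an idea for resolving it.
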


In view of Theorem~\ref{thm:tlcar}, \tlc{\Cs}-\emph{membership} reduces to \Cs-separation for every \vari~\Cs. Indeed, deciding \Cs-separation suffices to compute the \Cs-pairs, by definition.

\begin{restatable}{corollary}{tlcor} \label{cor:tlcar}
	Let \Cs be a \vari. If \Cs-separation is decidable, then so is \tlc{\Cs}-membership.
\end{restatable}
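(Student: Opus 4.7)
The plan is to directly invoke the algebraic characterization provided by Theorem~\ref{thm:tlcar} and turn it into an effective procedure. Given a regular language $L$, we first compute its syntactic morphism $\alpha : A^* \to M$ (this is effective from any representation of $L$, by the Myhill-Nerode theorem together with standard constructions). Since $M$ is finite, we can enumerate every idempotent $e \in E(M)$ and every element $s,t \in M$, and we can compute, for each $r \in M$, the regular language $\alpha^{-1}(r)$ (it is recognized by $\alpha$ itself with $\{r\}$ as accepting set).

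Next, the key step is to compute the set of \Cs-pairs for $\alpha$. By definition, $(r_1,r_2) \in M^2$ is a \Cs-pair if and only if $\alpha^{-1}(r_1)$ is \textbf{not} \Cs-separable from $\alpha^{-1}(r_2)$. Since \Cs-separation is assumed decidable and the pre-images $\alpha^{-1}(r_1),\alpha^{-1}(r_2)$ are regular and computable, we can decide \Cs-pair membership for every pair in $M^2$ by finitely many calls to the \Cs-separation algorithm.

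Once the set of \Cs-pairs is in hand, it remains to check the equation in~\eqref{eq:tlcar} for every triple $(e,s,t) \in E(M) \times M \times M$ such that both $(e,s)$ and $(e,t)$ are \Cs-pairs. This is a finite verification, since $M$ is finite and the required products and $\omega$-powers are computable inside~$M$. By Theorem~\ref{thm:tlcar}, $L \in \tlc{\Cs}$ if and only if all these equalities hold, which yields the decision procedure.

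There is no serious obstacle: the corollary is essentially a direct reading of Theorem~\ref{thm:tlcar} as an algorithm. The only points to be careful about are the computability of the syntactic morphism from any representation of~$L$ (which is folklore, and recalled in Section~\ref{sec:prelims}) and the effectiveness of the reduction from ``being a \Cs-pair'' to \Cs-separation, which is immediate from the definition.
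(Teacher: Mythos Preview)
Your proof is correct and follows exactly the approach the paper takes: the paper simply observes (in the sentence preceding the corollary) that deciding \Cs-separation suffices to compute the \Cs-pairs, after which Theorem~\ref{thm:tlcar} becomes an effective criterion. Your write-up is just a more detailed unpacking of that one-line justification.
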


We will use the next corollary of Theorem~\ref{thm:tlcar} (proved in Appendix~\ref{app:utl}) to show that languages do \textbf{not} belong to \tlc{\Cs}.

\begin{restatable}{lemma}{tlcinduc} \label{lem:tlcinduc}
	Let $L,U,V\!\subseteq\! A^*$, $P\!=\! L^*(UL^*VL^*)^*$ and \Cs be a \vari. If $L$ is \textbf{not} \Cs-separable from~either $U$ or $V$, then~$P$ is  \textbf{not} \tlc{\Cs}-separable from either $P U P$ or~$P V P$.
\end{restatable}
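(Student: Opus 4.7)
The plan is a proof by contradiction leaning on Theorem~\ref{thm:tlcar}. Suppose some $K \in \tlc{\Cs}$ separates $P$ from one of $PUP$, $PVP$; let $\alpha: A^* \to M$ be its syntactic morphism, which therefore satisfies equation~\eqref{eq:tlcar}. First I extract the \Cs-pairs needed to instantiate that equation. Pick any $l \in L$, let $w_e = l^{\omega(M)}$, and set $e = \alpha(w_e)$, an idempotent lying in $\alpha(L^+)$. Iterating Lemma~\ref{lem:nsepconcat} (the singleton $\{l^\omega\}$ is trivially non-\Cs-separable from itself) gives that $l^\omega L l^\omega$ is not \Cs-separable from $l^\omega U l^\omega$, and similarly not \Cs-separable from $l^\omega V l^\omega$. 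A standard lattice-decomposition of these non-separabilities through the $\alpha$-fibers (using that $\Cs$ is closed under finite unions and intersections) then yields \Cs-pairs $(e, eu'e)$ and $(e, ev'e)$ for some $u' \in \alpha(U)$ and $v' \in \alpha(V)$.

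For the $PVP$ case, apply~\eqref{eq:tlcar} with $s = eu'e$ and $t = ev'e$: using $e^2 = e$, the identity simplifies to $(eu'ev'e)^\omega = (eu'ev'e)^\omega \cdot ev'e \cdot (eu'ev'e)^\omega$. The left-hand side is realized by the word $(w_e u w_e v w_e)^k \in L^*(UL^*VL^*)^k \subseteq P$ for large enough $k$, while the right-hand side is realized by that same word with a central $w_e v w_e$ block inserted, placing it in $P \cdot L^*VL^* \cdot P \subseteq PVP$ (using the elementary identities $PL^* = L^*P = P$, immediate from the definition of $P$). The equality of these $\alpha$-images contradicts that $K$ separates $P$ from $PVP$.

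The $PUP$ case follows the same template but requires dealing with the asymmetric structure of $P$: a naive role-swap sets $s = ev'e, t = eu'e$ and produces a preimage $(w_e v w_e u w_e)^k$ that begins with $L^*V$ and is not in $P$. My plan is to apply Theorem~\ref{thm:tlcar} a second time, on the enhanced idempotent $f = (eu'ev'e)^\omega$ which, by the $PVP$ analysis, already lies in $\alpha(P)$. Using that \Cs-pairs form a submonoid of $M \times M$ (an immediate consequence of Lemma~\ref{lem:nsepconcat}) and the symmetry of separation in a \vari, one transfers the original pairs to derived pairs involving $f$, such as $(f, eu'e)$ and $(f, ev'e)$. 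A careful choice of $s, t$ in this second application produces an equation $h = h \cdot y \cdot h$ with $h$ realizable by a word in $P$ and $y$ realizable in $L^*UL^*$, so that the sandwiched side is realized in $P \cdot L^*UL^* \cdot P \subseteq PUP$, yielding the desired common $\alpha$-image in $\alpha(P) \cap \alpha(PUP)$.

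The main obstacle is precisely the $PUP$ case: the asymmetry of $P = L^*(UL^*VL^*)^*$ (whose blocks begin with $U$) forces the indirect route through the enhanced idempotent $f$, and one must verify that the chosen derived \Cs-pairs, when substituted into~\eqref{eq:tlcar}, really place the two sides' preimages in $P$ and $PUP$ respectively. The $PVP$ direction, by contrast, matches the block structure of $P$ and follows directly from a single application of Theorem~\ref{thm:tlcar}.
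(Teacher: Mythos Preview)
Your extraction of the \Cs-pairs has a real gap. You fix the idempotent $e=\alpha(l^{\omega(M)})$ \emph{first} and then argue that the non-separability of $l^\omega L l^\omega$ from $l^\omega U l^\omega$ decomposes through the $\alpha$-fibers into a \Cs-pair $(e,eu'e)$. But the lattice decomposition only yields a pair $(m_1,m_2)$ with $m_1\in e\,\alpha(L)\,e$ and $m_2\in e\,\alpha(U)\,e$; nothing forces $m_1=e$, and without that you have no idempotent to feed into Theorem~\ref{thm:tlcar}. The paper avoids this by reversing the order of choices: it first invokes Lemma~\ref{lem:cpairs} to obtain a \Cs-morphism $\eta$, uses the two non-separabilities to pick $x,x'\in L$, $u\in U$, $v\in V$ with $\eta(x)=\eta(u)$ and $\eta(x')=\eta(v)$, and \emph{only then} sets $y=(xx')^{\omega(M)}$, $w_1=ux'(xx')^{\omega-1}\in UL^*$, $w_2=(xx')^{\omega-1}xv\in L^*V$. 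Because $y,w_1,w_2$ share the same $\eta$-image by construction, $(e,s):=(\alpha(y),\alpha(w_1))$ and $(e,t):=(\alpha(y),\alpha(w_2))$ are automatically \Cs-pairs with the desired idempotent first coordinate.

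On the $PUP$ case: your worry about the asymmetry of $P$ is legitimate (the paper's ``symmetrically'' is terse), but your proposed detour through the enhanced idempotent $f=(esete)^\omega$ remains a sketch, and the ``careful choice of $s,t$'' is not specified. The key leverage the paper's setup provides, and which you are missing, is that \emph{every} product of $y,w_1,w_2$ has the same $\eta$-image as the same-length power of $y$; hence $(e,m)$ is a \Cs-pair for \emph{every} $m$ in the subsemigroup generated by $e,s,t$. This makes many more instantiations of~\eqref{eq:tlcar} available (in effect, that subsemigroup is in $\mathbf{DA}$), and from these the needed identity $(esete)^\omega ese(esete)^\omega=(esete)^\omega$ does follow, so that $zw_1z\in PUP$ and $z\in P$ share the same $\alpha$-image. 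Your enhanced-idempotent idea could be made to work along similar lines, but only after repairing the pair-extraction step above.
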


\noindent
\textbf{\Nesths.}  Each such hierarchy is determined by an arbitrary initial class \Cs called its \emph{basis}. For $n \in \nat$, level $n$ is written \tlhn{\Cs}. Level zero is the basis $\tlhp{0}{\Cs}= \Cs$. Then, for all $n \geq 1$, we let $\tlhn{\Cs} = \tlc{\tlhp{n-1}{\Cs}}$. Clearly, we have $\tlhp{n}{\Cs} \subseteq \tlhp{n+1}{\Cs}$ for all $n \in \nat$. The next proposition can be verified using Theorem~\ref{thm:tlcar} (see Appendix~\ref{app:utl} for details).

\begin{restatable}{proposition}{tlvar} \label{prop:tlvar}
	Let \Cs be a \vari. Then, \tlhn{\Cs} is a \vari as well for all $n \in \nat$.
\end{restatable}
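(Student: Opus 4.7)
The plan is to induct on $n$. The base case $n=0$ is immediate, since $\tlhp{0}{\Cs} = \Cs$ is a \vari by hypothesis. For the inductive step, letting $\Ds := \tlhp{n-1}{\Cs}$, the statement reduces to showing that $\tlc{\Ds}$ is a \vari whenever \Ds is. The real content is therefore this generic claim, which must be established in three parts: Boolean closure, regularity of the languages involved, and closure under quotients.

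Boolean closure is the easy part and follows directly from the syntax of $\tla{\Ds}$: disjunction, conjunction and negation are built into the formula grammar, while the atomic formulas $\top$ and $\bot$ define $A^*$ and $\emptyset$. Regularity is then shown by structural induction on a formula $\varphi \in \tla{\Ds}$. The atomic and Boolean cases are routine; for the modal cases $\finallyl\psi$ and $\finallyml\psi$, one assumes inductively that $L(\psi)$ is regular and combines a recognizing device for $L(\psi)$ with one for $L \in \Ds$ (for example by a cascade construction that tracks, at each position, the current state of an automaton for $\psi$ together with the set of positions to the left from which the infix up to the current position belongs to $L$). This yields a regular language, and an explicit syntactic monoid could in principle be extracted if needed.

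The genuine difficulty is closure under quotients, and the plan is to leverage Theorem~\ref{thm:tlcar} rather than attempting to transform temporal formulas directly. Given a regular $L \in \tlc{\Ds}$ with syntactic morphism $\alpha: A^* \to M$ satisfying equation~\eqref{eq:tlcar}, the quotient $u^{-1}L$ is recognized by $\alpha$ itself (with accepting set $\{s \in M \mid \alpha(u)s \in \alpha(L)\}$), so its syntactic morphism $\beta: A^* \to N$ divides $\alpha$ via a surjective morphism $\pi$ on the relevant submonoid. To conclude, I will show that equation~\eqref{eq:tlcar} is preserved along such divisions. The main obstacle here is the lifting of \Ds-pairs: given $e' \in E(N)$ and $s',t' \in N$ such that $(e',s')$ and $(e',t')$ are \Ds-pairs for $\beta$, I must produce a single $e \in E(M)$ with $\pi(e)=e'$ together with $s \in \pi^{-1}(s')$ and $t \in \pi^{-1}(t')$ for which $(e,s)$ and $(e,t)$ are \Ds-pairs for $\alpha$. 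The idempotent preimage is obtained as $\tilde e^\omega$ for any $\tilde e \in \pi^{-1}(e')$, and the key lifting uses that \Ds is a Boolean algebra: since $\beta^{-1}(x')$ is the finite union of the $\alpha^{-1}(x)$ with $\pi(x)=x'$, non–\Ds-separability of $\beta^{-1}(s')$ from $\beta^{-1}(t')$ forces non–\Ds-separability of some $\alpha^{-1}(s)$ from some $\alpha^{-1}(t)$, by combining candidate separators with finite unions and intersections. The same argument applied around the chosen idempotent $e$ produces the required common preimage. Once the lifting is in place, applying~\eqref{eq:tlcar} for $\alpha$ to the lifted triple $(e,s,t)$ and pushing the resulting equation forward through $\pi$ yields the corresponding equation for $\beta$; by Theorem~\ref{thm:tlcar} this gives $u^{-1}L \in \tlc{\Ds}$, and the case $Lu^{-1}$ is symmetric.
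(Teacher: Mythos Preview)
Your overall strategy matches the paper's: reduce to the claim that $\tlc{\Ds}$ is a \vari whenever $\Ds$ is, get Boolean closure for free, and use Theorem~\ref{thm:tlcar} to handle quotients by showing that the relevant equation transfers from the syntactic morphism $\alpha$ of $L$ to that of $u^{-1}L$. However, your lifting step has a genuine gap.

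The problem is the ``common idempotent'' issue. From the \Ds-pairs $(e',s')$ and $(e',t')$ for $\beta$, your finite-union argument correctly produces \Ds-pairs $(e_1,s)$ and $(e_2,t)$ for $\alpha$ with $\pi(e_i)=e'$, but there is no reason why $e_1=e_2$, nor why either is idempotent, nor why replacing them by some $\tilde e^{\,\omega}$ preserves the \Ds-pair property. The sentence ``the same argument applied around the chosen idempotent $e$'' does not solve this: non-\Ds-separability of $\beta^{-1}(e')$ from $\beta^{-1}(s')$ may be witnessed entirely by a fibre $\alpha^{-1}(e_1)$ with $e_1\neq e$, so $\alpha^{-1}(e)$ could well be \Ds-separable from every $\alpha^{-1}(s)$ above $s'$. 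Equation~\eqref{eq:tlcar} needs a \emph{single} idempotent $e$ paired with both $s$ and $t$, and your argument does not deliver one.

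The paper resolves this by first proving Proposition~\ref{prop:tlcmor}: a surjective morphism satisfies~\eqref{eq:tlcar} if and only if it is a $\tlc{\Cs}$-morphism (all recognized languages lie in $\tlc{\Cs}$). With that in hand, the proof of Proposition~\ref{prop:tlvar} is two lines: $u^{-1}L$ is recognized by $\alpha$, and $\alpha$ is a $\tlc{\Cs}$-morphism. The hard direction of Proposition~\ref{prop:tlcmor} faces exactly your lifting problem, and solves it not by algebraic lifting through $\pi$ but by working at the \emph{word level} via Lemma~\ref{lem:cpairs}: pick a \Ds-morphism $\eta$ so that \Ds-pairs coincide with $\eta$-pairs, obtain word witnesses $u_f,u_q,v_f,v_r$ with $\eta(u_f)=\eta(u_q)$, $\eta(v_f)=\eta(v_r)$, $\beta(u_f)=\beta(v_f)=e'$, $\beta(u_q)=s'$, $\beta(v_r)=t'$, and then set $w_e=(u_fv_f)^k$, $w_s=u_qv_f(u_fv_f)^{k-1}$, $w_t=(u_fv_f)^{k-1}u_fv_r$ for a suitable $k$. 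Now $e=\alpha(w_e)$ is idempotent and $\eta(w_e)=\eta(w_s)=\eta(w_t)$ forces $(e,\alpha(w_s))$ and $(e,\alpha(w_t))$ to be \Ds-pairs for $\alpha$ simultaneously. Applying~\eqref{eq:tlcar} in $M$ and pushing through $\pi$ then yields the equation for $(e',s',t')$ in $N$. Your outline needs this mechanism (or an equivalent one) to close the gap.
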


\subsection{Connection with concatenation hierarchies}
We now compare the two kinds of hierarchies. We start with a preliminary easy result.

\begin{restatable}{proposition}{tlpol} \label{prop:tlpol}
  For every class \Cs, $\tlc{\Cs} = \tlc{\pol{\Cs}}$.
\end{restatable}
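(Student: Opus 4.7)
My plan is to prove the two inclusions separately. The inclusion $\tlc{\Cs} \subseteq \tlc{\pol{\Cs}}$ is immediate: since $\Cs \subseteq \pol{\Cs}$, every formula in $\tla{\Cs}$ is syntactically also a formula in $\tla{\pol{\Cs}}$, with the same semantics. So the entire content of the statement lies in the reverse inclusion.

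For $\tlc{\pol{\Cs}} \subseteq \tlc{\Cs}$, I would proceed by structural induction on formulas $\varphi \in \tla{\pol{\Cs}}$, producing an equivalent $\varphi' \in \tla{\Cs}$ (\emph{i.e.}, such that $w,i \models \varphi \iff w,i \models \varphi'$ for every word $w$ and every position $i \in \pos{w}$). Atomic formulas already live in $\tla{\Cs}$, and the Boolean connectives follow immediately from the induction hypothesis. Therefore the crux of the proof is to rewrite the modalities $\finallyp{L}{\psi}$ and $\finallymp{L}{\psi}$ with $L \in \pol{\Cs}$ into equivalent $\tla{\Cs}$ formulas.

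By definition of $\pol{\Cs}$, such an $L$ is a finite union of marked products $L_0^{(j)} a_1^{(j)} L_1^{(j)} \cdots a_{n_j}^{(j)} L_{n_j}^{(j)}$ with all $L_i^{(j)} \in \Cs$. Using the obvious equivalence $\finallyp{L_1 \cup L_2}{\psi} \equiv \finallyp{L_1}{\psi} \vee \finallyp{L_2}{\psi}$, I would reduce to a single marked product $L = L_0 a_1 L_1 \cdots a_n L_n$. Unwinding the semantics of the $\textup{F}_L$ modality shows that $w,i \models \finallyp{L}{\psi}$ holds exactly when there exist positions $i < p_1 < \cdots < p_n < j$ with $\wpos{w}{p_k} = a_k$ for every $k$, $\infix{w}{i}{p_1} \in L_0$, $\infix{w}{p_k}{p_{k+1}} \in L_k$ for $1 \le k < n$, $\infix{w}{p_n}{j} \in L_n$, and $w,j \models \psi$. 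This is exactly what the nested $\tla{\Cs}$ formula
\[
\finallyp{L_0}{\Bigl( a_1 \wedge \finallyp{L_1}{\bigl( a_2 \wedge \cdots \wedge \finallyp{L_{n-1}}{(a_n \wedge \finallyp{L_n}{\psi})} \bigr)} \Bigr)}
\]
asserts, when $\psi$ is first replaced by its inductively provided translation in $\tla{\Cs}$. The past modality $\finallymp{L}{\psi}$ is handled symmetrically, by nesting $\finallymop$ modalities in the reverse order $L_n, L_{n-1}, \ldots, L_0$.

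I do not anticipate any real obstacle in carrying this out: the whole argument is a syntactic encoding of the semantic decomposition of a marked product into its $n+1$ \Cs-pieces separated by the $n$ mandatory letters $a_1, \ldots, a_n$. The only mild delicacy is to correctly anchor each letter $a_k$ at the intermediate witness position $p_k$ by conjoining the atomic formula $a_k$ inside the next modality, which forces the nesting to take precisely the shape displayed above.
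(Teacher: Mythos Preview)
Your proposal is correct and follows essentially the same approach as the paper: both simulate $\finallyp{L}{\psi}$ for $L \in \pol{\Cs}$ by nesting $\textup{F}_{L_i}$-modalities with the letter tests $a_k$ conjoined at the intermediate positions. The only cosmetic difference is that the paper inducts on the construction of $L$ (handling one binary marked product $L_0 a L_1$ at a time), whereas you unfold the full marked product $L_0 a_1 L_1 \cdots a_n L_n$ in one shot; the resulting nested formula is the same.
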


\begin{proof}
  Clearly, $\tlc{\Cs} \subseteq \tlc{\pol{\Cs}}$ as $\Cs \subseteq \pol{\Cs}$. For the converse inclusion, we show that the modalities $\textup{F}_L$ and $\textup{P}_L$ with $L \in \pol{\Cs}$ can be simulated using only modalities $\textup{F}_K$ and $\textup{P}_K$ with $K \in \Cs$ . By symmetry, we only consider the former. For all $\varphi \in \tla{\Cs}$, we build $\langle \varphi \rangle_L \in \tla{\Cs}$ such that for all $w \in A^*$ and $i \in \pos{w}$, $w,i \models \langle \varphi \rangle_L \Leftrightarrow w,i \models \finallyl{\varphi}$.

  By definition of \pol{\Cs}, $L \in \pol{\Cs}$ is built from the languages in \Cs using union and marked product. We perform an induction on this construction. If $L \in \Cs$, it suffices to define $\langle \varphi \rangle_L= \finallyl{\varphi}$, which is a \tla{\Cs} formula. If $L= L_0 \cup L_1$ with $L_0,L_1 \in \pol{\Cs}$, we let $\langle\varphi\rangle_L=\langle\varphi\rangle_{L_0} \vee  \langle\varphi\rangle_{L_1}$. Finally, if $L= L_0aL_1$ with $L_0,L_1 \in \pol{\Cs}$ and $a \in A$, we let $\langle\varphi\rangle_L = \langle a \wedge \langle \psi\rangle_{L_1} \rangle_{L_0}$. This completes the proof.
\end{proof}

We complete Proposition~\ref{prop:tlpol} with the following corollary.

\begin{restatable}{corollary}{ctlpol} \label{cor:tlpol}
  For any class \Cs and $n\!\in\!\nat$, $\bpoln{\Cs}\!\subseteq\! \tlhn{\Cs}$.
\end{restatable}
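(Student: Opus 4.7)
The plan is to proceed by induction on $n$, with the base case $\bpolp{0}{\Cs}=\Cs=\tlhp{0}{\Cs}$ being immediate. For the inductive step, I will first establish the following auxiliary claim: for every class $\Ds$, one has $\bpol{\Ds}\subseteq\tlc{\Ds}$. Once this is known, taking $\Ds=\tlhn{\Cs}$ and using the inductive hypothesis $\bpoln{\Cs}\subseteq\tlhn{\Cs}$ gives
\[
\bpolp{n+1}{\Cs}=\bpol{\bpoln{\Cs}}\subseteq\bpol{\tlhn{\Cs}}\subseteq\tlc{\tlhn{\Cs}}=\tlhp{n+1}{\Cs},
\]
which is what we want.

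To prove the auxiliary claim, I would argue in two steps. First, for any class $\Ds$, the class $\tlc{\Ds}$ is closed under Boolean operations: this follows directly from the syntax of $\tla{\Ds}$, which allows the connectives $\vee$, $\wedge$, $\neg$ (and has $\top$, $\bot$ as atoms), together with the obvious semantic translations $L(\varphi_1\vee\varphi_2)=L(\varphi_1)\cup L(\varphi_2)$, $L(\varphi_1\wedge\varphi_2)=L(\varphi_1)\cap L(\varphi_2)$, and $L(\neg\varphi)=A^*\setminus L(\varphi)$. Second, Proposition~\ref{prop:tlpol} yields $\tlc{\Ds}=\tlc{\pol{\Ds}}\supseteq\pol{\Ds}$, where the last inclusion uses only that every class embeds into its temporal closure via the formula $\finallyp{L}{\max}$. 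Combining these two observations, $\tlc{\Ds}$ is a Boolean algebra containing $\pol{\Ds}$, hence it contains the least such algebra $\bpol{\Ds}$.

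There is essentially no hard obstacle here: the entire argument is a short induction whose inductive step packages together Proposition~\ref{prop:tlpol} with the trivial closure of $\tla{\Ds}$ under Boolean connectives. The only subtle point to keep in mind is that the statement is asserted for an \emph{arbitrary} class $\Cs$, so I cannot invoke Proposition~\ref{prop:tlvar} (which requires $\Cs$ to be a \vari); fortunately, the Boolean closure of $\tlc{\Ds}$ is visible directly from the definition of $\tla{\Ds}$, independently of any regularity or variety hypothesis on $\Ds$.
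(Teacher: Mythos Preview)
Your proof is correct and follows essentially the same approach as the paper: both reduce to the auxiliary claim $\bpol{\Ds}\subseteq\tlc{\Ds}$ for arbitrary $\Ds$, obtain $\pol{\Ds}\subseteq\tlc{\Ds}$ from Proposition~\ref{prop:tlpol}, and conclude using that $\tlc{\Ds}$ is a Boolean algebra by definition. The only cosmetic difference is the order in which you apply the auxiliary claim and the induction hypothesis in the inductive step.
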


\begin{proof}
  We show the inclusion  $\bpol{\Ds}\!\subseteq\! \tlc{\Ds}$ for every class~\Ds. The statement will then follow by induction.  Since $\tlc{\Ds} = \tlc{\pol{\Ds}}$ by Proposition~\ref{prop:tlpol}, it is immediate that $\pol{\Ds} \subseteq \tlc{\Ds}$. Therefore, $\bpol{\Ds} \subseteq \bool{\tlc{\Ds}}$. Finally, since \tlc{\Ds} is a Boolean algebra by definition, we have $\bool{\tlc{\Ds}}=\tlc{\Ds}$, completing the~proof.
\end{proof}

\begin{remark}\label{rem:tlvsipol}
  If \Cs is a \pvari, one can strengthen Corollary~\ref{cor:tlpol}: one can show $\ipolp{n+1}\Cs\subseteq\tlhn\Cs$ from the characterizations of $\ipolp2\Cs$ and $\tlc{\Cs}$ (Theorems~\ref{thm:capolbp} and \ref{thm:tlcar}).
\end{remark}

In general, the converse inclusion $\tlhn\Cs\subseteq \bpoln{\Cs}$ fails. For instance, it is shown in~\cite{between} that for the basis \at, level one, \tlc{\at}, already intersects \emph{all} levels \bpolp{n}{\at}. However, the union of all levels is the same: the star-free closure. Indeed, since $\sfp{\sfp{\Cs}} = \sfp{\Cs}$ for every \Cs, it suffices to show that $\tlc{\Cs} \subseteq \sfp{\Cs}$. This follows from known results. As mentioned in Remark~\ref{rem:fo2}, it is shown in~\cite[Theorem~9.9]{pzupol2} that for every Boolean algebra \Cs, we have $\tlc{\Cs} = \fod(\infsigc)$ where $\fod(\infsigc)$ is two-variable first-order logic equipped with a set of predicates built from \Cs. Clearly, $ \fod(\infsigc) \subseteq \fo(\infsigc)$ (with \fo as full first-order logic). Finally, it is shown in~\cite[Theorem~73]{PZ:generic18} that $\sfp{\Cs} = \fo(\infsigc)$ when \Cs is a \vari.
\begin{restatable}{proposition}{sfunion} \label{prop:sfunion}
	If \Cs is a \vari, $\sfp{\Cs}\!=\! \bigcup_{n \in \nat} \tlhn{\Cs}$.
\end{restatable}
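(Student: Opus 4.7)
The plan is to prove the two inclusions of the claimed equality separately. The forward inclusion $\sfp{\Cs} \subseteq \bigcup_{n\in\nat} \tlhn{\Cs}$ will follow immediately from Corollary~\ref{cor:tlpol}, which already supplies the levelwise containment $\bpoln{\Cs} \subseteq \tlhn{\Cs}$; taking the union over $n$ on both sides and recalling that $\sfp{\Cs} = \bigcup_n \bpoln{\Cs}$ gives what we want.

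For the reverse inclusion $\bigcup_{n\in\nat} \tlhn{\Cs} \subseteq \sfp{\Cs}$, the key lemma I would isolate is that $\tlc{\Ds} \subseteq \sfp{\Ds}$ for every \vari~$\Ds$. Granted this lemma, the inclusion follows by induction on $n$: the base case $n=0$ is trivial since $\tlhp{0}{\Cs} = \Cs \subseteq \sfp{\Cs}$, and for the inductive step I would invoke Proposition~\ref{prop:tlvar} to note that $\tlhp{n-1}{\Cs}$ is itself a \vari, so applying the lemma to $\Ds = \tlhp{n-1}{\Cs}$ gives
\[
  \tlhn{\Cs} \;=\; \tlc{\tlhp{n-1}{\Cs}} \;\subseteq\; \sfp{\tlhp{n-1}{\Cs}} \;\subseteq\; \sfp{\sfp{\Cs}} \;=\; \sfp{\Cs},
\]
where the middle inclusion uses the obvious monotonicity of $\sfp{}$ combined with the induction hypothesis $\tlhp{n-1}{\Cs} \subseteq \sfp{\Cs}$, and the final equality is the idempotence of star-free closure.

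To establish the lemma $\tlc{\Ds} \subseteq \sfp{\Ds}$ for any \vari~$\Ds$, I would chain together three results already mentioned in the paragraph preceding the proposition. First, Remark~\ref{rem:fo2} gives $\tlc{\Ds} = \fod(\infsig{\Ds})$, where $\infsig{\Ds}$ denotes the binary first-order signature built from $\Ds$. Second, one trivially has $\fod(\infsig{\Ds}) \subseteq \fo(\infsig{\Ds})$, since restricting to two variables can only decrease expressive power. Third, the generic characterization of star-free closures yields $\fo(\infsig{\Ds}) = \sfp{\Ds}$ whenever $\Ds$ is a \vari. Concatenating these three facts delivers the lemma.

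The ``hard part'' of the proof is essentially outsourced to the two external characterizations $\tlc{\Ds} = \fod(\infsig{\Ds})$ and $\sfp{\Ds} = \fo(\infsig{\Ds})$; once these are invoked, the argument reduces to a short bookkeeping exercise and an easy induction. The only point that requires attention is ensuring that the intermediate classes $\tlhp{n-1}{\Cs}$ qualify as \varis\ so that both cited characterizations apply at each inductive level, and this is precisely what Proposition~\ref{prop:tlvar} guarantees.
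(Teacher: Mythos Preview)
Your proof is correct and follows essentially the same approach as the paper: both directions use Corollary~\ref{cor:tlpol} for $\sfp{\Cs}\subseteq\bigcup_n\tlhn{\Cs}$, and the chain $\tlc{\Ds}=\fod(\infsig{\Ds})\subseteq\fo(\infsig{\Ds})=\sfp{\Ds}$ together with the idempotence $\sfp{\sfp{\Cs}}=\sfp{\Cs}$ for the reverse inclusion. You are simply more explicit than the paper about the induction on $n$ and about invoking Proposition~\ref{prop:tlvar} to ensure the intermediate levels are \varis, which is exactly the right bookkeeping.
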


\section{Group bases}
\label{sec:groups}
In the literature, concatenation hierarchies use a few specific bases consisting solely of group languages, which we also adopt for \nesths. In this section, we introduce and motivate such ``group bases''. We prove that for the resulting hierarchies, membership at level \emph{two} reduces to separation for the basis, show that the \nesths are strict, and compare them to concatenation hierarchies.

\subsection{Definitions}

\emph{Group languages} are those that can be recognized by a morphism into a finite \emph{group}. A \emph{group \vari} is a \vari \Gs consisting of group languages only. We also look at a slight extension: for each class \Gs, we let $\Gs^+$ be the class consisting of all languages $\{\veps\} \cup L$ and $A^+ \cap L$ for $L \in \Gs$. One can verify that if \Gs is a \vari, then so is $\Gs^+$. 

The simplest group \vari is $\stzer = \{\emptyset,A^*\}$. We let $\dotzer = \stzer^+ = \{\emptyset,\{\veps\}, A^+,A^*\}$. These are the bases of the two most common concatenation hierarchies: the Straubing-Thérien hierarchy~\cite{StrauConcat,TheConcat} and the dot-depth~\cite{BrzoDot}. They both classify the original class \sfr of star-free languages~\cite{schutzsf}. By results of McNaughton-Papert~\cite{mnpfosf} and  Kamp~\cite{kltl}, \sfr also consists of languages that can be defined in first-order logic \fow, or in full linear temporal logic \ltl. Thus, the \nesths of bases \stzer and \dotzer classify the languages that can be defined in \ltl. The modalities $\textup{F}_{A^*}$ and $\textup{P}_{A^*}$ have the semantics of the classic ones $\textup{F}$ and $\textup{P}$ of unary temporal logic (for instance $w,i \models \finally{\varphi}$ when there exists $j \in \pos{w}$ such that $i < j$ and $w,j \models \varphi$). Similarly, $\textup{F}_{\{\veps\}}$ and $\textup{P}_{\{\veps\}}$ have the semantics of the classic modalities $\textup{X}$ and $\textup{Y}$ (for instance, $w,i \models \nex{\varphi}$ when $i+1 \in \pos{w}$ and $w,i+1 \models \varphi$). This means that $\tlc{\stzer}$ and $\tlc{\dotzer}$ are exactly the standard variants of unary temporal logic, which we denote by \tls and \tlxs (see \emph{e.g.},~\cite{evwutl}).  The levels \emph{two} (\emph{i.e.}, \tlhp{2}{\stzer} and \tlhp{2}{\dotzer}) have also been investigated~\cite{between}. Indeed, these are also the levels \emph{one} for the bases \at and~\lt.

\begin{restatable}{proposition}{leveltwo} \label{prop:leveltwo}
	\!\!$\tlhp{2}{\stzer} = \tlc{\at}$\! and $\tlhp{2}{\dotzer} = \tlc{\lt}$.
\end{restatable}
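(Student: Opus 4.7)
The plan is to treat both identities in parallel, detailing $\tlhp{2}{\stzer}=\tlc{\at}$ in full; the case $\tlhp{2}{\dotzer}=\tlc{\lt}$ will follow the same pattern. Since $\tlhp{2}{\stzer}=\tlc{\tlc{\stzer}}$, my plan is to establish the two inclusions $\at\subseteq\tlc{\stzer}$ and $\tlc{\stzer}\subseteq\pol{\at}$, then combine them with monotonicity of the $\tlc{\cdot}$ operator and Proposition~\ref{prop:tlpol} (which asserts $\tlc{\pol{\at}}=\tlc{\at}$) to obtain the sandwich $\tlc{\at}\subseteq\tlc{\tlc{\stzer}}\subseteq\tlc{\pol{\at}}=\tlc{\at}$, forcing equality throughout.

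The first inclusion is straightforward: each generator $A^*aA^*$ of $\at$ equals $L(\finallyp{A^*}{a})$ (unfolding the semantics at the leftmost unlabeled position), and $\tlc{\stzer}$ is a Boolean algebra by Proposition~\ref{prop:tlvar}, so it absorbs all Boolean combinations. For the $\lt\subseteq\tlc{\dotzer}$ analog, I would define the generators $A^*wA^*$, $wA^*$ and $A^*w$ (for $w\in A^*$) by nesting the ``next'' modality $\finallyp{\{\veps\}}{\cdot}$ and ``yesterday'' modality $\finallymp{\{\veps\}}{\cdot}$ available in $\tla{\dotzer}$ (since $\{\veps\}\in\dotzer$): one walks along consecutive positions to verify a prefix, suffix, or factor, using $\finallyp{A^*}{max\wedge\ldots}$ to reach the right end when checking a suffix.

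The main content is the second inclusion. My plan is to invoke the classical identification $\tlc{\stzer}=\fodw$ of~\cite{kltl,evwutl} together with the structural theorem that every language in $\fodw$ (equivalently, the variety $\davar$ on the algebraic side) is a finite union of unambiguous marked products of the form $A_0^*a_1A_1^*a_2\cdots a_nA_n^*$ with subalphabets $A_i\subseteq A$ and letters $a_i\in A$. Each factor $A_i^*$ belongs to $\at$, since $A_i^*=A^*\setminus\bigcup_{b\notin A_i}A^*bA^*$, so every such marked product witnesses membership in $\pol{\at}$; closure of $\pol{\at}$ under finite unions (it is a \pvari, being built from the \vari $\at$) then yields $\tlc{\stzer}\subseteq\pol{\at}$. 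For the dotted case, the analogous chain will use $\tlc{\dotzer}=\fodws$ together with the corresponding characterization of its languages as finite unions of unambiguous marked products whose factors lie in $\lt$, giving $\tlc{\dotzer}\subseteq\pol{\lt}$ and closing the argument as before.

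The main obstacle will be this second inclusion: it rests on the classical characterizations of two-variable first-order logic (over $<$, and over $<,+1$) via unary temporal logic and via unambiguous polynomials, which lie outside the machinery developed in the paper so far. A naive induction on $\tla{\stzer}$ formulas is blocked by negations inside subformulas, since $\pol{\at}$ is not closed under complement; leveraging these classical results as black boxes is what bypasses the obstacle.
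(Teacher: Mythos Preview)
Your proposal is correct and follows essentially the same route as the paper: both establish $\at\subseteq\tlc{\stzer}$ (resp.\ $\lt\subseteq\tlc{\dotzer}$) directly, then invoke the classical characterization $\tlc{\stzer}=\ipol{\at}\subseteq\pol{\at}$ (resp.\ $\tlc{\dotzer}=\ipol{\lt}\subseteq\pol{\lt}$) from~\cite{evwutl,twfo2,pwdelta2}, and finish via Proposition~\ref{prop:tlpol}. Your phrasing in terms of unambiguous marked products $A_0^*a_1\cdots a_nA_n^*$ is precisely the content of $\tlc{\stzer}=\ipol{\at}$ that the paper cites, so the two arguments coincide.
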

\begin{proof}
  One may verify that $\at \!\subseteq\! \tlc{\stzer}$ and $\lt \!\subseteq\! \tlc{\dotzer}$ which yields the right to left inclusions.  For the \mbox{converse}, results of \mbox{Etessami, Vardi, Thérien, Wilke, Pin and Weil} yield $\tlc{\stzer} \!= \!\ipol{\at}$ and $\tlc{\dotzer} \!= \!\ipol{\lt}$~\cite{evwutl,twfo2,pwdelta2}. Thus, \mbox{$\tlhp{2}{\stzer}\!\subseteq\! \tlc{\pol{\at}}$ and $\tlhp{2}{\dotzer}\!\subseteq\! \tlc{\pol{\lt}}$}. This completes the proof since $\tlc{\at}\!=\!\tlc{\pol{\at}}$ and $\tlc{\lt} \!=\! \tlc{\pol{\lt}}$ by Proposition~\ref{prop:tlpol}.
\end{proof}

Proposition~\ref{prop:leveltwo} establishes a connection with the work of Krebs, Lodaya, Pandya and Straubing~\cite{between}, who investigated \tlc{\at} and \tlc{\lt}. They mostly work with  definitions in terms of two-variable first-order logic: $\fod(\infsig{\at})$ and $\fod(\infsig{\lt})$ ({strictly speaking, they use specialized sets of predicates, not the generic ones \infsig{\at} and \infsig{\lt}. However, it is immediate that they yield the same classes of languages}). They prove that both classes have decidable \emph{membership}. In the paper, we prove decidability for separation and covering~as~well.

\smallskip

The bases \stzer and $\dotzer = \stzer^+$ give rise to hierarchies that fall within the class \sfr of star-free languages. More generally, if the only group languages in \Cs are $\emptyset$ and $A^*$, then all~levels of the concatenation or \nesth of basis \Cs are subclasses of~\sfr. In other words, to go beyond star-free languages, we need bases that include nontrivial group languages. This makes group \varis particularly appealing choices as bases for both types of hierarchies.

\smallskip Besides the \vari \stzer, there are three other standard examples of group \varis: the classes \md of \emph{modulo languages} (membership of a word depends only on its length modulo a fixed integer), \abg of \emph{alphabet modulo testable languages} (membership of a word depends only on the number of occurrences of each letter modulo a fixed~integer), and \grp of all group languages. The concatenation hierarchies of bases \md and $\md^+$ are important~\cite{MACIEL2000135,ChaubardPS06,KufleitnerW15}. They classify first-order logic with \emph{modular predicates}, $\fo(<,MOD)$. Level one in the \nesth of basis \md (\emph{i.e.}, \tlc{\md}) has also been investigated~\cite{DartoisP13}. The concatenation hierarchies of bases \abg and $\abg^+$ are also natural: they classify variants of first-order logic equipped with predicates counting the number of occurrences of a specific letter occurring before a position, modulo an integer.
Finally, the class \grp of \emph{all} group languages defines a concatenation hierarchy introduced by Margolis and Pin~\cite{MargolisP85}: the \emph{group hierarchy}. For these three classes, separation is decidable~\cite{Ash91,abelianp,pzgr}.

\subsection{Membership for levels of \nesths}

Membership for the levels \emph{one} was handled in~\cite{pzupol2}: if \Gs is a group \vari \Gs with decidable \emph{separation}, then \tlc{\Gs} and \tlc{\Gs^+} have decidable \emph{membership} (this is also immediate from Theorem~\ref{thm:tlcar}). We prove the same for level \emph{two} using concatenation hierarchies. Since we are dealing with group based hierarchies, we have the following proposition.

\begin{restatable}{proposition}{tl2bpol} \label{prop:tl2bpol}
	Let \Gs be a group \vari and $\Cs\! \in\! \{\Gs,\Gs^+\}$. Then, $\tlhp{2}{\Cs}\!=\! \tlc{\bpol{\Cs}}$.
\end{restatable}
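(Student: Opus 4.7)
The plan rests on Proposition~\ref{prop:tlpol}, which gives $\tlc{\Ds} = \tlc{\pol{\Ds}}$ for every class $\Ds$. I will prove the two inclusions separately.

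For the easy direction $\tlc{\bpol{\Cs}} \subseteq \tlhp{2}{\Cs}$, I would first invoke Corollary~\ref{cor:tlpol} at $n = 1$, which yields $\bpol{\Cs} \subseteq \tlhp{1}{\Cs} = \tlc{\Cs}$. The operator $\Ds \mapsto \tlc{\Ds}$ is monotone in its basis (any \tla{\Ds_1}-formula is also a \tla{\Ds_2}-formula when $\Ds_1 \subseteq \Ds_2$), so applying it to both sides gives $\tlc{\bpol{\Cs}} \subseteq \tlc{\tlc{\Cs}} = \tlhp{2}{\Cs}$.

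For the reverse inclusion $\tlhp{2}{\Cs} \subseteq \tlc{\bpol{\Cs}}$, I would use Proposition~\ref{prop:tlpol} to rewrite both sides as $\tlhp{2}{\Cs} = \tlc{\pol{\tlc{\Cs}}}$ and $\tlc{\bpol{\Cs}} = \tlc{\pol{\bpol{\Cs}}}$; it then suffices to prove $\pol{\tlc{\Cs}} = \pol{\bpol{\Cs}}$. The containment $\pol{\bpol{\Cs}} \subseteq \pol{\tlc{\Cs}}$ is immediate from $\bpol{\Cs} \subseteq \tlc{\Cs}$ by monotonicity of \polo. The reverse containment combines the folklore identity $\pol{\pol{\Ds}} = \pol{\Ds}$ (a marked product of finite unions of marked products redistributes to a finite union of marked products) with the inclusion $\tlc{\Cs} \subseteq \polp{2}{\Cs}$: one would then obtain $\pol{\tlc{\Cs}} \subseteq \pol{\polp{2}{\Cs}} = \pol{\pol{\bpol{\Cs}}} = \pol{\bpol{\Cs}}$.

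The main obstacle is precisely this last inclusion $\tlc{\Cs} \subseteq \polp{2}{\Cs}$ for $\Cs \in \{\Gs, \Gs^+\}$, and it is where the group hypothesis becomes essential: the excerpt already notes that for non-group bases such as \at, the class $\tlc{\Cs}$ intersects every level $\bpoln{\Cs}$. For $\Cs = \stzer$, the inclusion is witnessed by the classical Th\'erien--Wilke equality $\tlc{\stzer} = \dewd$, which sits inside $\siw{2} = \polp{2}{\stzer}$. For a general group \vari $\Gs$ and its extension $\Gs^+$, I would invoke the characterization of $\tlc{\Gs}$ established in earlier work on group languages (see~\cite{pzupol2}), which places $\tlc{\Gs}$ (resp.\ $\tlc{\Gs^+}$) inside the unambiguous polynomial closure $\ipolp{2}{\Gs}$ (resp.\ $\ipolp{2}{\Gs^+}$), and hence inside $\polp{2}{\Gs}$ (resp.\ $\polp{2}{\Gs^+}$).
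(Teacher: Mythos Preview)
Your proposal is correct and follows essentially the same approach as the paper: both directions hinge on the identity $\tlc{\Cs} = \ipolp{2}{\Cs}$ from~\cite{pzupol2} (Theorems~10.1 and~10.4 there), which gives $\tlc{\Cs} \subseteq \pol{\bpol{\Cs}}$, combined with Proposition~\ref{prop:tlpol}. The paper's write-up is marginally more direct---it applies the monotone operator $\tlc{\cdot}$ straight to the inclusion $\tlc{\Cs} \subseteq \pol{\bpol{\Cs}}$ rather than detouring through the equality $\pol{\tlc{\Cs}} = \pol{\bpol{\Cs}}$ and the idempotence of \polo---but the content is the same.
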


\begin{proof}
	We have $\tlc{\bpol{\Cs}} \subseteq \tlhp{2}{\Cs}$ as $\bpol{\Cs} \subseteq \tlc{\Cs}$ by  Corollary~\ref{cor:tlpol}. For the converse inclusion, it is shown in~\cite[Theorems~10.1 and 10.4]{pzupol2} that $\tlc{\Cs} = \ipol{\bpol{\Cs}}$ (this requires the hypothesis that \Gs consists of group languages). In particular, $\tlc{\Cs}\subseteq \pol{\bpol{\Cs}}$ which yields $\tlhp{2}{\Cs} \subseteq \tlc{\pol{\bpol{\Cs}}}$. Finally, Proposition~\ref{prop:tlpol} yields $\tlc{\pol{\bpol{\Cs}}} = \tlc{\bpol{\Cs}}$, completing the proof.
\end{proof}

Proposition~\ref{prop:tl2bpol} can be combined with Corollary~\ref{cor:tlcar}. We obtain that for every group \vari \Gs, if $\Cs \in \{\Gs,\Gs^+\}$, then \tlhp{2}{\Cs}-\emph{membership} reduces to \bpol{\Cs}-\emph{separation}. The latter problem has been extensively studied for group classes. If $\Cs \in \{\Gs,\Gs^+\}$, it is shown in~\cite{pzconcagroup,PlaceZ22} that \bpol{\Cs}-\emph{separation} reduces to \Cs-\emph{separation}. This yields the following result.

\begin{restatable}{corollary}{tl2group} \label{cor:nth:tl2group}
  Let \Gs be a group \vari with decidable separation, \tlhp{2}{\Gs} and \tlhp{2}{\Gs^+} have decidable membership.
\end{restatable}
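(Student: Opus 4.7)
The plan is to chain together the ingredients already assembled in the section, with essentially no new technical work. By Proposition~\ref{prop:tl2bpol}, for $\Cs \in \{\Gs,\Gs^+\}$ we have $\tlhp{2}{\Cs} = \tlc{\bpol{\Cs}}$, so it suffices to prove that $\tlc{\bpol{\Cs}}$-membership is decidable for both choices of $\Cs$. Note that \Gs is a \vari by hypothesis and $\Gs^+$ is a \vari as well (as observed in Section~\ref{sec:groups}). Hence, by Proposition~\ref{prop:bpvar}, the class $\bpol{\Cs}$ is itself a \vari. Corollary~\ref{cor:tlcar} therefore reduces $\tlc{\bpol{\Cs}}$-membership to $\bpol{\Cs}$-separation, so the problem boils down to deciding $\bpol{\Cs}$-separation for $\Cs \in \{\Gs,\Gs^+\}$.

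Next, I would invoke the transfer result of~\cite{pzconcagroup,PlaceZ22} recalled just before the corollary statement: for $\Cs \in \{\Gs,\Gs^+\}$, $\bpol{\Cs}$-separation reduces to $\Cs$-separation. For $\Cs = \Gs$, this immediately closes the argument, since $\Gs$-separation is decidable by assumption, yielding decidability of $\bpol{\Gs}$-separation, hence of $\tlc{\bpol{\Gs}}$-membership, hence of $\tlhp{2}{\Gs}$-membership.

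For $\Cs = \Gs^+$, one additional small reduction is needed: from $\Gs$-separation to $\Gs^+$-separation. This is routine. Every $L \in \Gs^+$ has the form $\{\veps\} \cup L'$ or $A^+ \cap L'$ with $L' \in \Gs$, so whether a putative separator contains \veps or not is determined by the two inputs. A straightforward case analysis on the membership of~\veps in $L_1$ and $L_2$ reduces a $\Gs^+$-separation instance to finitely many $\Gs$-separation instances on $L_1 \cap A^+$ and $L_2 \cap A^+$. Combining this with the transfer of~\cite{pzconcagroup,PlaceZ22} and Corollary~\ref{cor:tlcar} gives decidability of $\tlhp{2}{\Gs^+}$-membership.

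No step in this plan poses a serious obstacle: the heavy lifting is done by Theorem~\ref{thm:tlcar} (which underlies Corollary~\ref{cor:tlcar}), Proposition~\ref{prop:tl2bpol} (which identifies level two of the \nesth with $\tlc{\bpol{\Cs}}$), and the external transfer result from \bpol{\Cs}-separation to $\Cs$-separation. The only part one writes by hand is the case analysis on~\veps needed to pass between $\Gs$-separation and $\Gs^+$-separation, which is a standard manipulation.
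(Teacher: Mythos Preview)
Your proposal is correct and follows essentially the same chain as the paper: Proposition~\ref{prop:tl2bpol} to identify $\tlhp{2}{\Cs}$ with $\tlc{\bpol{\Cs}}$, Corollary~\ref{cor:tlcar} to reduce membership to $\bpol{\Cs}$-separation, and the transfer result of~\cite{pzconcagroup,PlaceZ22} to reduce further to $\Cs$-separation. You are in fact slightly more explicit than the paper, which leaves the passage from decidable $\Gs$-separation to decidable $\Gs^+$-separation implicit (or absorbed into the cited references), whereas you spell out the routine $\veps$-case analysis.
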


Corollary~\ref{cor:nth:tl2group} applies to the bases \stzer, \md, \abg and~\grp, as they have decidable separation~\cite{Ash91,abelianp,pzgr}. We push~this for \stzer and $\dotzer = \stzer^+$ in Section~\ref{sec:carav}: we prove that \tlhp{2}{\stzer} and  \tlhp{2}{\dotzer} have decidable \emph{separation}. Thus, by Corollary~\ref{cor:tlcar}, \tlhp{3}{\stzer} and \tlhp{3}{\dotzer} have decidable membership.

\subsection{Strictness and strict intertwining of hierarchies}

We prove that for \emph{every group \vari \Gs}, the \nesths of bases \Gs and $\Gs^+$ are \emph{strict} and strictly intertwined if the alphabet contains at least two letters. Thus, we assume that the alphabet $A$ contains at least \emph{two} letters, written $a$ and~$b$.

We start from languages introduced by Brzozowski and Knast~\cite{BroKnaStrict} to prove that the \emph{dot-depth} hierarchy (\emph{i.e.}, the concatenation hierarchy of basis \dotzer) is strict. We first recall these languages, before adapting them to the generic setting.~Let,
\[
H_0 = \{\veps\} \quad \text{and} \quad H_n = (aH_{n-1}b)^* \text{ for $n \geq 1$.}
\]
Brzozowski and Knast proved that $H_n \in \bpolp{n}{\dotzer}$ for all $n \in \nat$ and $H_{n}\not\in\bpolp{n-1}{\dotzer}$ for all $n \geq 1$. Actually, these languages imply that the \emph{\nesth} of basis \dotzer is strict, as one can show that $H_{n}\not\in\tlhp{n-1}{\dotzer}$. We generalize them to handle \emph{all bases} \Gs and $\Gs^+$ \emph{simultaneously}.

\begin{remark}
	The languages $H_n$ themselves cannot be used in the general case. Indeed, one can show that for all $n \in \nat$, $H_n$ belongs to level \textbf{one} of the group hierarchy (\emph{i.e.}, \bpol{\grp}).
\end{remark}

We prove that \emph{for every \vari \Gs}, the \nesths of bases \Gs and $\Gs^+$ are \emph{strict} and \emph{strictly intertwined}.
Since~$\{\veps\}, A^+ \in \bpol{\Gs}$, we have $\Gs^+ \subseteq \bpol{\Gs}$. Hence, $\Gs \subseteq \Gs^+ \subseteq \bpol{\Gs} \subseteq \tlc{\Gs}$. Therefore, a simple induction yields the following inclusions:
\begin{equation}\label{eq:inclusions}
  \begin{array}{lclcl}
	\tlhp{n}{\Gs} &\subseteq& \tlhp{n}{\Gs^+} &\subseteq &\tlhp{n+1}{\Gs}, \\
	\bpolp{n}{\Gs} &\subseteq &\bpolp{n}{\Gs^+} &\subseteq &\bpolp{n+1}{\Gs}.
  \end{array}
\end{equation}
Our goal is to show that the four inclusions of~\eqref{eq:inclusions} are \emph{strict}. Since $\bpoln{\Cs} \subseteq \tlhn{\Cs}$ for all classes \Cs by Corollary~\ref{cor:tlpol}, it suffices to exhibit two languages $K_n,L_n$ such that,
\begin{itemize}
  \item $K_n\in\bpolp{n}{\Gs^+}$, but $K_n\notin\tlhp{n}\Gs$.
  \item $L_n\in\bpolp{n+1}{\Gs}$, but $L_n\notin\tlhp{n}{\Gs^{+}}$.
\end{itemize}

We modify the languages $H_n$ by replacing the letters ``$a$'' and ``$b$'' by more involved languages.

Consider the word $x_i = ab^i$ and the language $Y_i =a^+b^i $ for all $i \geq 1$. Moreover, let $Q =  x_1^+x_2^+x^{}_1$, $R = x_3^+x_4^+x^{}_3$, $S = Y_1^+Y_2^+Y^{}_1$ and $T = Y_3^+Y_4^+Y^{}_3$. Finally, we define,
\[
\begin{array}{lcl}
	K_0  = \{\veps\} & \text{and}& K_{n}  = (QK_{n-1}R)^*  \text{ for $n \geq 1$},\\
	L_0  =   a^* &\text{and} & L_{n}  =  (a + SL_{n-1}T)^*   \text{ for $n \geq 1$}.
\end{array}
\]

The strictness proof is based on two lemmas. The first one considers the operator \bpolo. We prove it in Appendix~\ref{app:groups}.

\begin{restatable}{lemma}{strictpos} \label{lem:strictpos}
	For every $n\in \nat$, we have $K_n \in \bpolp{n}{\dotzer}$ and $L_n \in  \bpolp{n+1}{\stzer}$.
\end{restatable}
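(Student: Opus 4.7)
The plan is to prove both claims by simultaneous induction on $n$, since $K_n$ and $L_n$ have parallel recursive definitions. The base cases are direct: $K_0=\{\veps\}$ belongs to $\dotzer=\bpolp{0}{\dotzer}$ by definition, and $L_0=a^*$ can be written as $A^*\setminus\bigcup_{c\in A\setminus\{a\}}A^*cA^*$, a Boolean combination of marked products of $\stzer$-languages, placing it in $\bpol{\stzer}=\bpolp{1}{\stzer}$.

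Before iterating, I would establish a preliminary lemma: the scaffolding sits at level $1$, i.e., $Q,R\in\bpolp{1}{\dotzer}$ and $S,T\in\bpolp{1}{\stzer}$. Using that $\{\veps\}\in\dotzer$, each singleton $\{x_i\}$ lies in $\pol{\dotzer}$ as the marked product $\{\veps\}a\{\veps\}b\{\veps\}\cdots\{\veps\}$. Each $x_i^+$ is then characterized by starting letter, ending letter, alphabet restriction, and forbidden-infix conditions (no $aa$, no $b^{i+1}$, every $a$ immediately followed by $b^i$), all Boolean combinations of $\pol{\dotzer}$-marked products. One then assembles $Q=x_1^+x_2^+x_1$ and $R=x_3^+x_4^+x_3$ by exploiting locally detectable unique patterns that witness the transitions between successive $x_i^+$ pieces. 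The same strategy over $\stzer$, with $\{\veps\}$ expressed as the complement $A^*\setminus\bigcup_{c\in A}A^*cA^*$, yields $S,T\in\bpolp{1}{\stzer}$.

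For the inductive step on $K_n=(QK_{n-1}R)^*$, assume $K_{n-1}\in\bpolp{n-1}{\dotzer}$. Expanding $Q$ and $R$ into marked-product form places $QK_{n-1}R$ into $\polp{n}{\dotzer}\subseteq\bpolp{n}{\dotzer}$, since $K_{n-1}$ is surrounded by factors that sit at level $1\leq n-1$ (when $n\geq 2$; the case $n=1$ is handled directly via the preliminary lemma). The remaining task is to control the outer Kleene star, and here I would exploit the rigidity of the scaffolding: every inter-block boundary contains the deterministic pattern $x_3 x_1=abbbab$, and every intra-block transition between successive $x_i^+$-runs is similarly uniquely witnessed by a local pattern, so the decomposition of a word of $K_n$ into blocks is unambiguous. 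Encoding these markers as marked products lets us express $(QK_{n-1}R)^*$ as a Boolean combination of marked products of $\bpolp{n-1}{\dotzer}$-languages, placing $K_n$ in $\bpolp{n}{\dotzer}$. The argument for $L_n=(a+SL_{n-1}T)^*$ is parallel; the level shift to $\bpolp{n+1}{\stzer}$ stems solely from $L_0=a^*$ sitting at level $1$ rather than $0$.

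The main obstacle is the outer Kleene star, since $\bpolp{n}{\Cs}$ is not closed under $*$ in general. The four distinct exponents $b,b^2,b^3,b^4$ chosen for $x_1,\ldots,x_4$ are tuned precisely so that every inter- and intra-block boundary admits a locally detectable, unique marker, enabling the iteration to be captured at the intended level. Verifying that the resulting Boolean combinations do not drift one level upward is the central technical point, and the construction mirrors the classical Brzozowski--Knast strictness argument for the dot-depth hierarchy.
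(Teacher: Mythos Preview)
Your overall plan---induction on $n$, with the outer Kleene star as the acknowledged obstacle and local patterns as the tool---matches the paper's. But the mechanism you sketch for the star does not work as stated, and this is where the real content lies. You propose to recover the top-level block decomposition of a word in $K_n=(QK_{n-1}R)^*$ via the marker $x_3x_1$ at inter-block boundaries. The problem is that $K_{n-1}=(QK_{n-2}R)^*$ contains the same $R\,|\,Q$ junctions, so the infix $x_3x_1$ occurs at \emph{every} nesting depth, not just the outermost one. Unambiguity of the $Q/R$ factorization is true and easy, but locating the top-level blocks from local data would require recovering nesting depth, which is exactly what a level-$n$ Boolean combination cannot do directly.

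The paper sidesteps this by passing to complements and proving the stronger statement $K_n\in\cocl{\polp{n}{\dotzer}}$ (and $L_n\in\cocl{\polp{n+1}{\stzer}}$). It writes $A^*\setminus K_n$ as a finite union of explicit languages: one piece is $A^*\setminus(Q+R)^*$, handled once and for all by a list of forbidden local patterns establishing $(Q+R)^*\in\cocl{\pol{\dotzer}}$; the remaining pieces are Dyck-style mismatch witnesses of the form $K_{n-1}x_3H$, $A^*x_2x_1K_{n-1}$, etc.\ (with $H=aA^*\cup\{\veps\}$), each of which is a single marked product involving $K_{n-1}$ and hence lands in $\polp{n}{\dotzer}$ by the induction hypothesis $K_{n-1}\in\cocl{\polp{n-1}{\dotzer}}$. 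The point is that a \emph{violation} of the nested $Q/R$ balance is always witnessed by a prefix or suffix condition involving $K_{n-1}$, so one never needs to locate top-level boundaries. The analogous decomposition for $L_n$ uses $(a+S+T)^*\in\cocl{\polp{2}{\stzer}}$ as its base lemma. Your proposed preliminary lemma ($Q,R\in\bpolp{1}{\dotzer}$, $S,T\in\bpolp{1}{\stzer}$) is not what the paper proves or needs; the relevant facts are about $(Q+R)^*$ and $(a+S+T)^*$.
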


Since $\stzer \subseteq \Gs$ and $\dotzer = \stzer^+ \subseteq \Gs^+$,  Lemma~\ref{lem:strictpos} implies that $K_n \in \bpolp{n}{\Gs^+}$ and $L_n \in  \bpolp{n+1}{\Gs}$ for every group \vari~\Gs. Since $\Gs \subseteq \grp$ and $\Gs^+ \subseteq \grp^+$,  the next lemma implies that $K_n \not\in \tlhp{n}{\Gs}$ and $L_n \not\in \tlhp{n}{\Gs^+}$. Do note that in the proof of this lemma, separation plays a central role to conduct the induction, as explained in the introduction.

\begin{restatable}{lemma}{strictneg} \label{lem:strictneg}
	For every $n \in \nat$, we have $K_n \not\in \tlhn{\grp}$ and $L_n \not\in \tlhp{n}{\grp^+}$.
\end{restatable}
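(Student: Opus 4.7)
The plan is to proceed by induction on $n$, handling the two claims in parallel — the argument for $L_n$ versus $\tlhp{n}{\grp^+}$ is completely symmetric to that for $K_n$ versus $\tlhn{\grp}$ — so I focus on $K_n$. I strengthen the inductive claim to: \emph{for each $n \geq 0$, there is a language $W_n$ disjoint from $K_n$ such that $K_n$ is not $\tlhn{\grp}$-separable from $W_n$}. This implies the statement, since $K_n \in \tlhn{\grp}$ would make $K_n$ itself a separator of $K_n$ from any disjoint $W_n$. For the base case $n = 0$, I take $W_0 = A^+$: any $\grp$-language containing $\veps$ is recognized by a morphism into some finite group $H$, and for every $a \in A$ the element $\alpha(a)$ has finite order in $H$, so some nonempty power $a^{N}$ also lies in $G$; hence no $\grp$-language separates $\{\veps\} = K_0$ from $A^+$.

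For the inductive step, assume the strengthened hypothesis at level $n-1$ with witness $W_{n-1}$. Set $L = QK_{n-1}R$ and $U = V = QW_{n-1}R$. By Lemma~\ref{lem:nsepconcat} and the inductive hypothesis, $L$ is not $\tlhp{n-1}{\grp}$-separable from $U$, so Lemma~\ref{lem:tlcinduc} applied with $\Cs = \tlhp{n-1}{\grp}$ yields that $P = L^*(UL^*UL^*)^*$ is not $\tlhn{\grp}$-separable from $PUP$. The combinatorial rigidity of $Q = x_1^{+}x_2^{+}x_1$ and $R = x_3^{+}x_4^{+}x_3$ — since $x_i = ab^i$ have pairwise distinct $b$-run lengths and $Q$ uses $\{x_1,x_2\}$ while $R$ uses $\{x_3,x_4\}$ — forces every word in $P \cup PUP$ to admit a \emph{unique} factorization into $L$-blocks and $U$-blocks. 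The number of $U$-blocks is therefore well defined and is even on $P$ but odd on $PUP$; in particular $K_n = L^*$ sits inside $P$ with zero $U$-blocks, and $K_n \cap PUP = \emptyset$.

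Define $W_n = L^*UL^* \subseteq PUP$ (the "one $U$-block" slice); clearly $W_n \cap K_n = \emptyset$. What remains is to upgrade the non-separability of $(P, PUP)$ to that of $(K_n, W_n)$: assuming a separator $Z \in \tlhn{\grp}$ of $K_n$ from $W_n$, one uses the algebraic characterization of Theorem~\ref{thm:tlcar} on the syntactic morphism of $Z$ together with the $\tlhp{n-1}{\grp}$-pair $(\alpha(L), \alpha(U))$ supplied by the inductive hypothesis to argue that such a $Z$ must collapse the distinction between even and odd $U$-block counts, so that $Z$ extends to a separator of $P$ from $PUP$ — contradicting the conclusion of Lemma~\ref{lem:tlcinduc}. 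The $L_n$-case follows the same blueprint with $\grp^+$, $S$, $T$ replacing $\grp$, $Q$, $R$, and the languages $Y_i = a^+b^i$ replacing the words $x_i$; the same unique-parsing property holds. The base case uses $L_0 = a^*$ and $W'_0 = A^*bA^*$, noting that any $\grp^+$-language containing $a^*$ contains some $b^{|H|}$ via the same idempotence argument.

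The \textbf{main obstacle} is the final propagation step. Lemma~\ref{lem:tlcinduc} inherently produces a $P$ strictly larger than $K_n$, so the non-separability it supplies is not directly inherited by the smaller pair $(K_n, W_n)$. The crux is an algebraic argument, built on Theorem~\ref{thm:tlcar} and the uniqueness-of-parsing observation, showing that a hypothetical $\tlhn{\grp}$-separator of the zero-$U$ and one-$U$ slices must respect — and hence fail to distinguish — all parities of $U$-blocks, thereby contradicting Lemma~\ref{lem:tlcinduc}.
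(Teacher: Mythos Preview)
Your inductive setup contains a real gap, and you have correctly identified where it is: the ``propagation step'' from non-separability of $(P,PUP)$ to non-separability of $(K_n,W_n)$. Since $K_n = L^* \subseteq P$ and $W_n = L^*UL^* \subseteq PUP$, the implication runs the \emph{wrong} way: a separator of the larger pair $(P,PUP)$ would restrict to one for the smaller pair $(K_n,W_n)$, hence non-separability of the \emph{smaller} pair implies non-separability of the larger --- not vice versa. Your sketch (``$Z$ must collapse the distinction between even and odd $U$-block counts'') is not an argument: nothing in Theorem~\ref{thm:tlcar} lets you promote a separator of the zero-$U$/one-$U$ slices to a separator of all even/odd slices. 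The equation $(esete)^\omega = (esete)^\omega ete (esete)^\omega$ controls what happens \emph{inside} a language of $\tlc{\Cs}$, but it does not let you enlarge a given separator to cover $P \supsetneq K_n$.

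The paper sidesteps this entirely by a different choice of $L,U,V$ that makes $P$ equal to $K_n$ on the nose. It carries \emph{two} witnesses through the induction: with $Q_0 = Q$, $R_0 = R$ and $Q_n = K_nQ_{n-1}K_n$, $R_n = K_nR_{n-1}K_n$, it shows inductively that $K_{n-1}$ is not $\tlhp{n-1}{\grp}$-separable from either $Q_{n-1}$ or $R_{n-1}$. The crucial combinatorial identity is then
\[
K_n \;=\; K_{n-1}^*\bigl(Q_{n-1}\,K_{n-1}^*\,R_{n-1}\,K_{n-1}^*\bigr)^*,
\]
which holds because $K_{n-1}^* = K_{n-1} \subseteq K_n$ and $K_n = (QK_{n-1}R)^*$. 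Applying Lemma~\ref{lem:tlcinduc} with $L = K_{n-1}$, $U = Q_{n-1}$, $V = R_{n-1}$ gives $P = K_n$ exactly, so the lemma directly yields that $K_n$ is not $\tlhn{\grp}$-separable from $K_nQ_{n-1}K_n = Q_n$ and from $K_nR_{n-1}K_n = R_n$. No propagation step is needed. Your choice $L = QK_{n-1}R$ (so $K_n = L^*$) is the natural first guess, but it forces $P \supsetneq K_n$; the two-witness setup with $L = K_{n-1}$ is what makes the induction close. The argument for $L_n$ is analogous with $S_n,T_n$ in place of $Q_n,R_n$.
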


\begin{proof}
  We sketch the proof that $K_n\! \not\in\! \tlhn{\grp}$ for all $n$ using Lemma~\ref{lem:tlcinduc} (for details and the similar proof that $L_n\! \not\in\! \tlhp{n}{\grp^+}$, see Appendix~\ref{app:groups}). Let $Q_0 = Q$ and $R_0 = R$. For $n \geq 1$, let $Q_n = K_nQ_{n-1}K_n$ and $R_n = K_n R_{n-1} K_n$. We use induction on $n$ to show that $K_n$ is \emph{not} \tlhn{\grp}-separable from either $Q_n$ and $R_n$. This yields $K_n \not\in \tlhn{\grp}$, since $K_n$ does not intersect $Q_n$ nor $R_n$.
  
  When $ n = 0$, we have to prove that $\{\veps\}$ is  \emph{not} \grp-separable from either $Q = x_1^+x_2^+x^{}_1$ and $R = x_3^+x_4^+x^{}_3$. By symmetry, we only prove the former. Let $U \in \grp$ such that $\veps \in U$. We show that $U\cap x_1^+ x_2^+ x^{}_1 \neq \emptyset$. By hypothesis, $U$ is recognized by a morphism $\alpha: A^*\to G$ into a finite group $G$. Let $p=\omega(G)$ and $w=x_1^{p-1} x_2^p x_1 \in x_1^+x_2^+x^{}_1$. Since $G$ is a group, $\alpha(w) = 1_G = \alpha(\veps)$. Therefore, $w \in U\cap x_1^+ x_2^+ x^{}_1$.

  Assume that $n \geq 1$. By induction, $K_{n-1}$ is  \emph{not} \tlhp{n-1}{\grp}-separable from either $Q_{n-1}$ or $R_{n-1}$.  One can check from the~definition that $K_{n-1}^* = K_{n-1}  \subseteq K_n$. By definition of $K_n$, this yields $K_n =  K_{n-1}^*(Q^{}_{n-1}K_{n-1}^*R^{}_{n-1}K_{n-1}^*)^*$. Thus, Lemma~\ref{lem:tlcinduc} implies that $K_n$ is not \tlhn{\grp}-separable from either $Q_n = K_nQ_{n-1}K_n$ or $R_n = K_nR_{n-1}K_n$.
\end{proof}

Altogether, Lemmas~\ref{lem:strictpos} and~\ref{lem:strictneg} yield the desired theorem.
\begin{restatable}{theorem}{strict} \label{thm:thmstrict}
	Let \Gs a be group \vari. If $|A| \geq 2$, then,
	\vspace{-0.1cm}
	\begin{itemize}
		\item The \nesths of bases \Gs and $\Gs^+$ are both strict and strictly intertwined.
		\item The concatenation hierarchies of bases \Gs and $\Gs^+$ are both strict and strictly intertwined.
	\end{itemize}
\end{restatable}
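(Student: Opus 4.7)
The plan is to assemble the theorem purely from the two preparatory lemmas \ref{lem:strictpos} and \ref{lem:strictneg}, together with the generic inclusion $\bpoln{\Cs}\subseteq\tlhn{\Cs}$ given by Corollary~\ref{cor:tlpol} and the monotonicity of the constructions in the basis. The target is the four inclusions displayed in~\eqref{eq:inclusions}; showing each of them is strict simultaneously yields strictness and strict intertwining of \emph{both} hierarchies.

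First, I would record two monotonicity facts that are immediate from the definitions: if $\Cs_1\subseteq\Cs_2$, then $\bpoln{\Cs_1}\subseteq\bpoln{\Cs_2}$ and $\tlhn{\Cs_1}\subseteq\tlhn{\Cs_2}$ for every $n\in\nat$ (in the latter case, every modality $\textup{F}_L,\textup{P}_L$ with $L\in\Cs_1$ is also a modality of $\tla{\Cs_2}$, so the inductive definition of $\tlc{\cdot}$ propagates the inclusion through all levels). Since $\stzer\subseteq\Gs$ and $\dotzer=\stzer^+\subseteq\Gs^+\subseteq\grp^+$, and $\Gs\subseteq\grp$, this immediately lifts Lemma~\ref{lem:strictpos} and Lemma~\ref{lem:strictneg} to the generic basis $\Gs$: we obtain
\[
K_n\in\bpolp{n}{\Gs^+},\quad L_n\in\bpolp{n+1}{\Gs},\quad K_n\notin\tlhn{\Gs},\quad L_n\notin\tlhp{n}{\Gs^+}.
\]

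Next, I combine these with Corollary~\ref{cor:tlpol}, which gives $\bpoln{\Cs}\subseteq\tlhn{\Cs}$ for \emph{every} class~$\Cs$. For the first inclusion of~\eqref{eq:inclusions}, $K_n$ lies in $\bpolp{n}{\Gs^+}\subseteq\tlhp{n}{\Gs^+}$ but not in $\tlhn{\Gs}$, so $\tlhn{\Gs}\subsetneq\tlhp{n}{\Gs^+}$. For the second, $L_n$ lies in $\bpolp{n+1}{\Gs}\subseteq\tlhp{n+1}{\Gs}$ but not in $\tlhp{n}{\Gs^+}$, so $\tlhp{n}{\Gs^+}\subsetneq\tlhp{n+1}{\Gs}$. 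For the concatenation side, the very same two witnesses work: $K_n\in\bpolp{n}{\Gs^+}$ while $K_n\notin\tlhn{\Gs}\supseteq\bpolp{n}{\Gs}$, giving $\bpolp{n}{\Gs}\subsetneq\bpolp{n}{\Gs^+}$; and $L_n\in\bpolp{n+1}{\Gs}$ while $L_n\notin\tlhp{n}{\Gs^+}\supseteq\bpolp{n}{\Gs^+}$, giving $\bpolp{n}{\Gs^+}\subsetneq\bpolp{n+1}{\Gs}$. The strictness of each of the four inclusions in~\eqref{eq:inclusions} for every $n\in\nat$ is exactly the combined statement of strictness and strict intertwining for both hierarchies.

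There is no genuine obstacle left at this stage, since all heavy lifting is carried out in Lemmas~\ref{lem:strictpos} and~\ref{lem:strictneg}; the only care needed is to invoke monotonicity in the correct direction (we compare against $\grp$ and $\grp^+$ when we need a \emph{non-membership} conclusion, because a larger basis only makes non-membership harder to obtain, so the strongest available statements are the ones against the largest group basis). Once that observation is in place, the theorem follows by the short chain of inclusions described above, uniformly in the choice of group prevariety $\Gs$ and using only that $\stzer$ and $\dotzer$ are, respectively, contained in $\Gs$ and $\Gs^+$.
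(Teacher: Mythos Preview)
Your proposal is correct and follows exactly the same approach as the paper: the paper also derives the theorem by combining Lemmas~\ref{lem:strictpos} and~\ref{lem:strictneg} via monotonicity in the basis (using $\stzer\subseteq\Gs$, $\dotzer\subseteq\Gs^+$ for the positive side and $\Gs\subseteq\grp$, $\Gs^+\subseteq\grp^+$ for the negative side) together with Corollary~\ref{cor:tlpol}, to make all four inclusions of~\eqref{eq:inclusions} strict. Your write-up is in fact slightly more explicit than the paper's, which simply states that ``Altogether, Lemmas~\ref{lem:strictpos} and~\ref{lem:strictneg} yield the desired theorem.''
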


\subsection{Comparing concatenation and \nesths}

We compare the two kinds of hierarchies. If $\Cs \in \{\Gs,\Gs^+\}$ for a group \vari \Gs, Lemma~\ref{lem:strictpos} and Lemma~\ref{lem:strictneg} yield that for all $n \in \nat$, \bpolp{n+1}{\Cs} contains a language outside of \tlhp{n}{\Cs}. Here, we prove that for all $n \in \nat$, \tlhp{2}{\Cs} contains a language outside of \bpolp{n}{\Cs}. Altogether, it follows that for all $m\geq 2$ and all $k >m$,  \tlhp{m}{\Cs} and \bpolp{k}{\Cs} are \emph{incomparable}. We depict the situation for $m=2$ in Figure~\ref{fig:hiera2}.

\begin{figure}[H]
  \centering
  \includegraphics[width=8.5cm]{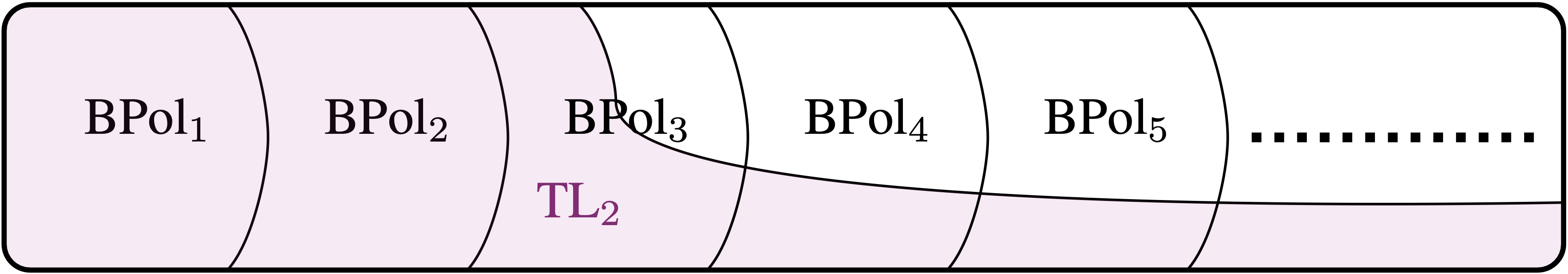}












  \caption{Level 2 (filled) in a \nesth (bases \Gs/$\Gs^{+}$)}
  \label{fig:hiera2}
  \vspace{-2ex}
\end{figure}

The precise statement differs slightly depending on the basis. We start with \stzer which requires arbitrarily large alphabets.

\medskip
\noindent
\textbf{The basis \stzer.} It is shown in~\cite{between} that \tlc{\at} (\emph{i.e.}, \tlhp{2}{\stzer} by Proposition~\ref{prop:leveltwo}) contains a language outside of $\bpolp{n}{\stzer}$ for all $n \in \nat$. We reprove this result using \emph{distinct languages}, as we shall reuse them to handle other bases.

For each $k \in \nat$, we define an alphabet $A_k = \{\ell_0,\dots,\ell_k\}$ of size $k+1$ and two disjoint languages $U_k,V_k \subseteq A_k^*$. We let $U_0 = \{\veps\}$ and $V_0 = A_0^+$.  For $k \geq 1$, we let,
\[
U_k = (\ell_kV_{k-1})^* \quad \text{and} \quad V_k = (\ell_kV_{k-1})^* \ell_kU_{k-1} (\ell_kV_{k-1})^*.
\]

\begin{restatable}{lemma}{langtl2st}\label{lem:langtl2st}
	For every $k \in \nat$, we have $U_k,V_k \in  \tlhp{2}{\stzer}$.
\end{restatable}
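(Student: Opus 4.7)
The plan is to invoke Proposition~\ref{prop:leveltwo}, which identifies $\tlhp{2}{\stzer}$ with $\tlc{\at}$, so that it suffices to exhibit formulas of $\tla{\at}$ defining $U_k$ and $V_k$. I would argue by induction on $k$. The base case $k=0$ is immediate: over $A_0=\{\ell_0\}$ one has $V_0 = A_0^* \ell_0 A_0^* \in \at$ and $U_0 = A_0^* \setminus V_0 \in \at$, hence both are in $\tlc{\at}$.

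For the inductive step, assume $\varphi^U_{k-1}, \varphi^V_{k-1} \in \tla{\at}$ define $U_{k-1}, V_{k-1}$ over $A_{k-1}$. The letter $\ell_k$ plays the role of a block-separator in the definitions: a word lies in $U_k$ iff it is empty or starts with $\ell_k$ and each maximal $\ell_k$-free infix following an $\ell_k$-position belongs to $V_{k-1}$, while $V_k$ has the same shape but with one distinguished such infix belonging to $U_{k-1}$. The central tool I would introduce is a relativization operator $\psi \mapsto \psi^\flat$ sending $\tla{\at}$-formulas over $A_{k-1}$ to $\tla{\at}$-formulas over $A_k$, so that at any \emph{block-start position} $p$ (i.e., $p=0$ or $\wpos{w}{p}=\ell_k$), $w, p \models \psi^\flat$ iff $\psi$ is satisfied at the leftmost position of $\infix{w}{p}{q}$, where $q$ is the next $\ell_k$-position after $p$ or $|w|+1$. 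It is defined by structural induction on $\psi$: each modality $\textup{F}_L$ (resp.\ $\textup{P}_L$) is replaced by $\textup{F}_{L \cap A_{k-1}^*}$ (resp.\ $\textup{P}_{L \cap A_{k-1}^*}$), which remains in $\at$ over $A_k$ because $A_{k-1}^* \in \at$ over $A_k$, and crucially these relativized modalities cannot cross $\ell_k$-positions.

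Given the relativized formulas $(\varphi^U_{k-1})^\flat$ and $(\varphi^V_{k-1})^\flat$, the defining formula for $U_k$ is
\[
  \neg \textup{F}_{A_k^+}(\top) \;\vee\; \bigl(\textup{F}_{\{\veps\}}(\ell_k) \,\wedge\, \neg \textup{F}_{A_k^*}(\ell_k \wedge \neg (\varphi^V_{k-1})^\flat)\bigr),
\]
which reads ``$w$ is empty, or $w$ starts with $\ell_k$ and no $\ell_k$-position bounds a block outside $V_{k-1}$''. An analogous formula, using $\textup{F}_{A_k^*}$ to existentially pick a distinguished $\ell_k$-position satisfying $(\varphi^U_{k-1})^\flat$ while requiring all other $\ell_k$-positions (reached via both $\textup{F}_{A_k^*}$ and $\textup{P}_{A_k^*}$) to satisfy $(\varphi^V_{k-1})^\flat$, captures $V_k$.

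The hard part is the correct treatment of the atomic symbols $\min$ and $\max$ under relativization, due to the dual role of $\ell_k$-positions (as right boundary of the preceding block and left boundary of the next). I would resolve this by a mutual recursion, defining $\psi^\flat$ in two flavors according to whether the evaluation context lies at the block-start position or at a later position inside the block, so that $\min$, $\max$, and the modalities are interpreted correctly in each case (essentially: $\min^\flat$ becomes $\top$ in the ``start'' flavor and $\bot$ in the ``rest'' flavor, $\max^\flat$ becomes $\bot$ and $\max \vee \ell_k$ respectively, and modalities are guarded by the conjunctions $\neg(\max \vee \ell_k)$ or $\neg(\min \vee \ell_k)$ to prevent leaking into neighboring blocks). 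Once this technicality is settled, the correctness of the formulas $\varphi^U_k$ and $\varphi^V_k$ follows by a direct induction tracing the evaluation through block-start, block-interior, and block-boundary positions.
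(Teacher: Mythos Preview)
Your proposal is correct and follows essentially the same route as the paper: induction on $k$, with the inductive step carried out by relativizing the formulas for $U_{k-1},V_{k-1}$ to the maximal $\ell_k$-free blocks via the substitution $\textup{F}_L \mapsto \textup{F}_{L\cap A_{k-1}^*}$ (and likewise for $\textup{P}_L$), then expressing $U_k,V_k$ by quantifying over $\ell_k$-positions. The only notable difference is that the paper avoids your two-flavor relativization for $\min/\max$: it simply replaces $\min$ by $\min\vee\ell_k$ and $\max$ by $\max\vee\ell_k$, which suffices because the inductively constructed formulas happen to use only future modalities, so the ambiguity you worry about (an $\ell_k$-position looking like both a left and a right boundary) never bites. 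Your more careful treatment is not wrong, just unnecessary here.
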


\newcommand{\modif}[1]{\ensuremath{\left\langle#1\right\rangle}\xspace}
\begin{proof}
	We use induction on $k$. If $k=0$, then $U_0 = \{\veps\}$ is defined by $\neg \finally{\neg max}$ and $V_0 = \ell_0^+$ by $\finally{\ell_0} \wedge \neg \finally{(\neg (\ell_0 \vee max))}$. Hence, $U_0,V_0 \in \tlhp{2}{\stzer}$. Assume now that $k \geq 1$. Induction yields $U_{k-1},V_{k-1} \in  \tlhp{2}{\stzer}$. We get two formulas $\varphi_U,\varphi_V \in   \tlhp{2}{\stzer}$ defining $U_{k-1}$ and $V_{k-1}$, respectively. We use them to build new formulas defining $U_k$ and $V_k$. First, we modify $\varphi_U,\varphi_V$ to constrain their evaluation to the infixes in $A_{k-1}^*$. Let \modif{\varphi_U} and \modif{\varphi_V} be the formulas obtained from $\varphi_U$ and $\varphi_V$ by applying the two following kinds of modification:
	\begin{enumerate}
		\item We replace every atomic formula ``$min$'' by ``$min \vee \ell_k$'' and every atomic formula ``$max$'' by ``$max \vee \ell_k$''.
		\item We recursively replace every subformula of the form \finallyp{L}{\varphi'} by \finallyp{L \cap A_{k-1}^*}{\varphi'}, and every sub-formula of the form \finallymp{L}{\varphi'} by \finallymp{L \cap A_{k-1}^*}{\varphi'}.
	\end{enumerate}
	These remain \tlhp{2}{\stzer} formulas since $A_k^* \in \tlc{\stzer}$. Consider the two following formulas of  \tlhp{2}{\stzer}:
	\[
	\begin{array}{lll}
		\psi_U & = & \finallyp{\{\veps\}}{(\ell_k \vee max)} \wedge \neg \finally{\left(\ell_k \wedge \neg \modif{\varphi_V}\right)}, \\
		\psi_V & = & \finallyp{\{\veps\}}{(\ell_k)} \wedge \finally{(\ell_k \wedge \modif{\varphi_U})}\\
		&& \wedge \neg \finally{\bigr(\ell_k \wedge \neg \modif{\varphi_V} \wedge \finally{\left(\ell_k \wedge \neg \modif{\varphi_V}\right)}\bigl)}. \\
	\end{array}
	\]
	One can verify that $\psi_U$ and $\psi_V$ define $U_k$ and $V_k$, respectively. Hence, $U_{k},V_{k} \in \tlhp{2}{\stzer}$, which completes the proof.
\end{proof}

Moreover, we have the following lemma. As for Lemma~\ref{lem:strictneg}, separation plays a central role to conduct its inductive proof.

\begin{restatable}{lemma}{langtl2stlow}\label{lem:langtl2stlow}
	For every $k \in \nat$, we have $U_k\notin \polp{k+1}{\stzer}$.
\end{restatable}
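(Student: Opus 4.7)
The plan is to prove by induction on $k$ the stronger statement that $U_k$ is \emph{not} $\polp{k+1}{\stzer}$-separable from $V_k$. This implies $U_k\notin\polp{k+1}{\stzer}$: a preliminary induction gives $U_k\cap V_k=\emptyset$ (any word in $V_k$ carries, in its canonical decomposition, a middle factor $\ell_kU_{k-1}$ that no parse of a word in $U_k$ admits), so if $U_k$ belonged to $\polp{k+1}{\stzer}$ it would separate itself from $V_k$. The base case $k=0$ is immediate: over the one-letter alphabet $A_0=\{\ell_0\}$, $\polp{1}{\stzer}=\pol{\stzer}$ consists of finite unions of shuffle ideals $A_0^*\ell_0\cdots\ell_0A_0^*$, and the only such language containing $\veps$ is the trivial one $A_0^*$, which meets $V_0=A_0^+$.

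For the inductive step, I would write $U_k=H_1^*$ and $V_k=H_1^*H_2H_1^*$ with $H_1=\ell_kV_{k-1}$ and $H_2=\ell_kU_{k-1}$, and apply Lemma~\ref{lem:polinduc} to the \vari $\Cs=\bpolp{k}{\stzer}$ (a \pvari whose polynomial closure is $\polp{k+1}{\stzer}$ by definition). The desired conclusion then follows once I establish that $H_1$ is not $\bpolp{k}{\stzer}$-separable from $H_2$. Using Lemma~\ref{lem:nsepconcat}, combined with the trivial observation that $\{\ell_k\}$ is not separable from itself (they intersect), this reduces to the non-separability of $V_{k-1}$ from $U_{k-1}$ by $\bpolp{k}{\stzer}$; since $\bpolp{k}{\stzer}$ is a Boolean algebra, this is equivalent to $U_{k-1}$ not being $\bpolp{k}{\stzer}$-separable from $V_{k-1}$.

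The main obstacle is this last reduction: the induction hypothesis, as stated, yields only $\polp{k}{\stzer}$-non-separability of $U_{k-1}$ from $V_{k-1}$, strictly weaker than the $\bpolp{k}{\stzer}$-non-separability required. I plan to overcome this by strengthening the inductive assertion, carrying simultaneously two symmetric non-separability statements for the ordered pair $(U_k,V_k)$ and its mirror $(V_k,U_k)$, and arguing that their conjunction at level $k$ yields the $\bpolp{k}{\stzer}$-non-separability needed to drive the step at level $k+1$. Since the obvious ``all-Boolean'' version of the strengthened claim fails at $k=0$ (because $\{\veps\}$ is already piecewise-testable), the strengthened induction must begin at a shifted base, which I would verify directly for small $k$ using the explicit description $\polp{2}{\stzer}=\pol{\copol{\stzer}}$ from Lemma~\ref{lem:polcopol}; once that lifted base is in place, repeated joint application of Lemmas~\ref{lem:polinduc} and~\ref{lem:nsepconcat} propagates both non-separability statements through all higher levels.
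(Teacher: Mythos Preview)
Your inductive setup and base case match the paper's, and you correctly locate the obstacle: the induction hypothesis gives only $\polp{k}{\stzer}$-non-separability of $U_{k-1}$ from $V_{k-1}$, while applying Lemma~\ref{lem:polinduc} with $\Cs=\bpolp{k}{\stzer}$ would require $\bpolp{k}{\stzer}$-non-separability. However, your proposed workaround has a genuine gap. Carrying both ``$U_{k-1}$ not $\polp{k}{\stzer}$-separable from $V_{k-1}$'' and ``$V_{k-1}$ not $\polp{k}{\stzer}$-separable from $U_{k-1}$'' does \emph{not} yield $\bpolp{k}{\stzer}$-non-separability: the two statements together only say that no separator lies in $\polp{k}{\stzer}\cup\cocl{\polp{k}{\stzer}}$, which is strictly smaller than $\bpolp{k}{\stzer}$ in general. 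So the assertion ``their conjunction at level $k$ yields the $\bpolp{k}{\stzer}$-non-separability needed'' is unjustified, and the shifted-base plan built on it does not close the gap.

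The paper avoids this entirely by a one-line trick you overlooked: do \emph{not} apply Lemma~\ref{lem:polinduc} with $\Cs=\bpolp{k}{\stzer}$, but with $\Cs=\cocl{\polp{k}{\stzer}}$. By Lemma~\ref{lem:polcopol} (applied to the \vari $\bpolp{k-1}{\stzer}$), one has $\pol{\cocl{\polp{k}{\stzer}}}=\polp{k+1}{\stzer}$, so the conclusion of Lemma~\ref{lem:polinduc} is exactly what is wanted. And the hypothesis ``$H_1=\ell_kV_{k-1}$ is not $\cocl{\polp{k}{\stzer}}$-separable from $H_2=\ell_kU_{k-1}$'' is \emph{equivalent} to ``$\ell_kU_{k-1}$ is not $\polp{k}{\stzer}$-separable from $\ell_kV_{k-1}$'', which follows directly from the induction hypothesis via Lemma~\ref{lem:nsepconcat}. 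No strengthened induction, no shifted base, no second direction is needed.
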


\begin{proof}
  We show by induction on $k$ that $U_k$ is not \polp{k+1}\stzer-separable from~$V_k$, for all $k\in\nat$. Since we have $U_k \cap V_k = \emptyset$, this yields $U_k \not\in \polp{k+1}\stzer$, as desired. For $k = 0$, $U_0 = \{\veps\}$ is not \pol{\stzer}-separable from $V_0 = \ell_0^+$, since the only language in \pol{\stzer} containing $\veps$ is $A^*$. Assume now that $k \geq 1$. By induction, $U_{k-1}$ is not \polp{k}\stzer-separable from $V_{k-1}$. By Lemma~\ref{lem:nsepconcat}, $\ell_kU_{k-1}$ is not \polp{k}\stzer-separable from $\ell_kV_{k-1}$, which means that $\ell_kV_{k-1}$ is not co-\polp{k}\stzer-separable from $\ell_kU_{k-1}$. By Lemmas~\ref{lem:polinduc} and~\ref{lem:polcopol}, it follows that $U_k=(\ell_kV_{k-1})^*$ is not  \polp{k+1}\stzer-separable from $V_k=(\ell_kV_{k-1})^* \ell_kU_{k-1} (\ell_kV_{k-1})^*$.
\end{proof}

From Lemmas~\ref{lem:strictpos},~\ref{lem:strictneg},~\ref{lem:langtl2st} and~\ref{lem:langtl2stlow}, we get the next theorem.

\begin{restatable}{theorem}{thmtlstcomp}\label{thm:tlstcomp}
	Let $m,k \in \nat$ such that $2 \leq m < k < |A|$. Then, the classes \tlhp{m}{\stzer} and $\bpolp{k}{\stzer}$ are incomparable.
\end{restatable}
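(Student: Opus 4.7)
The plan is to prove incomparability by exhibiting one language in each direction, using the two families of witness languages already constructed in the paper: the family $U_k$ from Lemmas~\ref{lem:langtl2st}--\ref{lem:langtl2stlow}, and the family $L_n$ from Lemmas~\ref{lem:strictpos}--\ref{lem:strictneg}. The key observation that glues both directions together is the inclusion $\bpolp{k}{\stzer} \subseteq \polp{k+1}{\stzer}$, which follows from the definition $\polp{k+1}{\stzer} = \pol{\bpolp{k}{\stzer}}$ and the fact that any class is contained in its polynomial closure.

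For the direction ``$\tlhp{m}{\stzer} \not\subseteq \bpolp{k}{\stzer}$'', I take the language $U_k$. Since $k < |A|$, the alphabet $A_k$ of size $k+1$ embeds into~$A$. Lemma~\ref{lem:langtl2st} gives $U_k \in \tlhp{2}{\stzer}$, and since $m \geq 2$, this places $U_k$ in $\tlhp{m}{\stzer}$. Lemma~\ref{lem:langtl2stlow} gives $U_k \notin \polp{k+1}{\stzer}$, which by the inclusion above yields $U_k \notin \bpolp{k}{\stzer}$. For the reverse direction, I take $L_{k-1}$ (well-defined since $k \geq 3$ by the hypothesis $2 \leq m < k$). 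Lemma~\ref{lem:strictpos} applied with $n = k-1$ gives $L_{k-1} \in \bpolp{k}{\stzer}$. Lemma~\ref{lem:strictneg} applied with the same $n$ gives $L_{k-1} \notin \tlhp{k-1}{\grp^+}$. Since $\stzer \subseteq \grp^+$ and the operator $\Cs \mapsto \tlhp{n}{\Cs}$ is monotone in its basis, $\tlhp{m}{\stzer} \subseteq \tlhp{k-1}{\stzer} \subseteq \tlhp{k-1}{\grp^+}$ whenever $m \leq k-1$, which holds because $m < k$. Thus $L_{k-1} \notin \tlhp{m}{\stzer}$, completing the argument.

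The main point requiring care is an alphabet-bookkeeping step: strictly speaking, $U_k$ lives over $A_k \subseteq A$, while the theorem is phrased relative to~$A$. I expect this to be the only non-mechanical step. It is handled by noting that $A_k^{*}$ is alphabet-testable, hence lies in $\tlhp{1}{\stzer}$, so the formulas exhibited in the proof of Lemma~\ref{lem:langtl2st} can be conjoined with a formula asserting that no letter outside $A_k$ occurs, keeping $U_k$ in $\tlhp{2}{\stzer}$ over~$A^{*}$. Conversely, $A_k^{*} \in \polp{k+1}{\stzer}$ for $k \geq 1$ (since $\at \subseteq \bpolp{1}{\stzer} \subseteq \polp{k+1}{\stzer}$), so closure of $\polp{k+1}{\stzer}$ under intersection lifts the non-membership of $U_k$ from $A_k^{*}$ to~$A^{*}$. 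Once this bookkeeping is in place, the theorem is a direct consequence of the four cited lemmas.
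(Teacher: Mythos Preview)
Your proof is correct and follows exactly the approach of the paper, which simply states that the theorem follows from Lemmas~\ref{lem:strictpos}, \ref{lem:strictneg}, \ref{lem:langtl2st} and~\ref{lem:langtl2stlow} without spelling out the details. Your write-up is in fact more careful than the paper's one-line derivation: you make explicit the inclusion $\bpolp{k}{\stzer}\subseteq\polp{k+1}{\stzer}$, the monotonicity of $\Cs\mapsto\tlhn{\Cs}$ in the basis (used to pass from $\stzer$ to $\grp^+$), and the alphabet-embedding bookkeeping for $U_k$, all of which the paper leaves implicit.
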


Theorem~\ref{thm:tlstcomp} involves arbitrarily large alphabets. It is shown in~\cite{between} that when $|A| = \mathbf{2}$, we have $\tlhp{2}{\stzer}  \subseteq \bpolp{3}{\stzer}$. The authors conjecture that for a \emph{fixed} alphabet, \tlhp{2}{\stzer} is included in \emph{some} level \bpolp{n}{\stzer}. We prove a strengthening of this conjecture. We consider the operator ``$\textup{IPol}$''. Recall that $\ipolp{n}{\Cs}\!=\! \polp{n}{\Cs}\! \cap \!\cocl{\polp{n}{\Cs}}\! \subseteq\! \bpolp{n}{\Cs}$ for every class \Cs.

\begin{restatable}{theorem}{thmtlst} \label{thm:tlstinc}
	We have $\ipolp{3}{\stzer} \!\subseteq\! \tlhp{2}{\stzer} \!\subseteq\!  \ipolp{|A|+1}{\stzer}$.
\end{restatable}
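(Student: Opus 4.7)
For the easy inclusion $\ipolp{3}{\stzer} \subseteq \tlhp{2}{\stzer}$, I would simply invoke Remark~\ref{rem:tlvsipol} with $\Cs = \stzer$ and $n = 2$. The remark itself is justified by matching the equations given by Theorem~\ref{thm:capolbp} applied to $\bpol{\stzer}$ and Theorem~\ref{thm:tlcar} applied to $\stzer$: both have the shape $(esete)^{\omega+1} = (esete)^\omega ete (esete)^\omega$ (resp.\ $(esete)^\omega = (esete)^\omega ete (esete)^\omega$), and the hypotheses of the former are subsumed by those of the latter once one notes that every \stzer-pair is a \bpol{\stzer}-pair.

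For the substantive inclusion $\tlhp{2}{\stzer} \subseteq \ipolp{|A|+1}{\stzer}$, my plan is to work at the level of syntactic morphisms. By Proposition~\ref{prop:leveltwo}, $\tlhp{2}{\stzer} = \tlc{\at}$, so Theorem~\ref{thm:tlcar} applied to the \vari \at characterizes membership of a language $L$ (with syntactic morphism $\alpha : A^* \to M$) by the equation $(esete)^\omega = (esete)^\omega ete (esete)^\omega$ holding whenever both $(e,s)$ and $(e,t)$ are \at-pairs. Rewriting $\ipolp{|A|+1}{\stzer}$ as $\ipolp{2}{\bpolp{|A|-1}{\stzer}}$ and applying Theorem~\ref{thm:capolbp} to the basis $\bpolp{|A|-1}{\stzer}$ yields the dual characterization: $L \in \ipolp{|A|+1}{\stzer}$ iff the equation $(esete)^{\omega+1} = (esete)^\omega ete (esete)^\omega$ holds for every $\bpolp{|A|-1}{\stzer}$-pair $(e,s) \in E(M) \times M$ and every $t \in M$. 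The inclusion thus reduces to deriving the second family of equations from the first.

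This last step is the main obstacle. Given a $\bpolp{|A|-1}{\stzer}$-pair $(e,s)$ with $e$ idempotent and an arbitrary $t \in M$, one cannot apply the \tlc{\at} equation directly because $(e,t)$ need not be an \at-pair. The plan is to exploit the alphabet bound: level $\bpolp{|A|-1}{\stzer}$ is ``just deep enough'' that every word $w \in \alpha^{-1}(t)$ admits a distinguished factorization along its chain of prefix alphabets $\emptyset = B_0 \subsetneq B_1 \subsetneq \cdots \subsetneq B_r \subseteq A$, which has length $r \leq |A|$. I would proceed by induction on $|A|$, introducing intermediate elements $s_1, \dots, s_r \in M$ obtained from the factors of $w$ of constant alphabet; each consecutive pair $(e, s_i)$ would be verified to be an \at-pair, using the existence for every idempotent $e$ of a word $w_e \in \alpha^{-1}(e)$ realising the maximal alphabet content attainable within $\alpha^{-1}(e)$. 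A single application of the \tlc{\at} equation then advances the computation from $s_{i-1}$ to $s_i$, and after at most $|A|+1$ such steps the iterates collapse to yield the $(esete)^{\omega+1}$ identity.

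The hardest part will be making this chaining precise: defining the $s_i$ and showing that each pair $(e, s_i)$ is an \at-pair, which amounts to carefully analyzing how $\bpolp{|A|-1}{\stzer}$-inseparability witnesses can be ``aligned'' with the alphabet content of $w_e$. The tight matching lower bound exhibited by Lemmas~\ref{lem:langtl2st} and~\ref{lem:langtl2stlow} (the languages $U_k, V_k$ live in $\tlhp{2}{\stzer}$ but escape $\polp{k+1}{\stzer}$ over an alphabet of size $k+1$) gives me confidence that the induction on alphabet size is the right framework, since it mirrors the construction of the $U_k$'s one letter at a time.
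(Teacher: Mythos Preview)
Your plan for the first inclusion is essentially the paper's: both sides are instances of Theorems~\ref{thm:capolbp} and~\ref{thm:tlcar}, and one compares the equations. Two small corrections: you should apply Theorem~\ref{thm:capolbp} with $\Cs = \at$ (using $\ipolp{3}{\stzer} = \ipolp{2}{\at}$), not with $\Cs = \bpol{\stzer}$, so that the pair hypothesis is directly an \at-pair; and you omit the aperiodicity argument needed to pass from $(esete)^{\omega+1}$ to $(esete)^{\omega}$.

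For the second inclusion, however, your plan is misdirected. You propose to factor a witness $w \in \alpha^{-1}(t)$ along its chain of prefix alphabets and chain applications of the $\tlc{\at}$ equation. But $t$ is the \emph{free} element in~\eqref{eq:capolbp}: nothing is assumed about it, so factoring $t$ gives you no control over whether the resulting pieces pair with $e$ as \at-pairs. The real leverage lies entirely in the hypothesis that $(e,s)$ is a $\Ds$-pair for $\Ds$ at level roughly $|A|$, and your outline never explains how this hypothesis is consumed.

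The paper's argument works on $(e,s)$, not on $t$. Writing $H = \bigcap_{a \in A} A^*aA^*$ and $A_i = A \setminus \{a_i\}$, one has $A^* = H \cup \bigcup_i A_i^*$, and since $H, A_i^* \in \Ds$, the non-\Ds-separability of $\alpha^{-1}(e)$ from $\alpha^{-1}(s)$ must localize to one of these pieces. If it localizes to $H$, pick $u \in H \cap \alpha^{-1}(e)$; then for \emph{any} $x \in \alpha^{-1}(t)$ the word $uxu$ has full alphabet, so $(e,s)$ and $(e,ete)$ are both \at-pairs and a single use of~\eqref{eq:tlcar} plus aperiodicity finishes. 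If instead it localizes to some $A_i^*$, restrict to the smaller alphabet $A_i$: by induction on $|A|$ one has $\tlc{\at} \subseteq \Ds$ over $A_i$, so $A_i^* \cap \alpha^{-1}(e)$ and $A_i^* \cap \alpha^{-1}(s)$ lie in $\Ds$ yet are $\Ds$-inseparable, hence intersect, giving $e = ese$ (with a separate easy argument for the one-letter base case). This case split, and in particular the wrapping trick $uxu$ that handles $t$ in one shot, is the idea your proposal is missing.
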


\begin{proof}
  The inclusion $\ipolp{3}{\stzer} \subseteq \tlhp{2}{\stzer}$ is shown in~\cite{between} using a result of Weil~\cite{Weil_1993} (we also provide a simple proof in Appendix~\ref{app:groups}).  In the main text, we only prove that $\tlhp{2}{\stzer} \subseteq \ipolp{|A|+1}{\stzer}$ in the special case when $|A| = 2$. The full proof is available in Appendix~\ref{app:groups}.

  Let us consider an alphabet of size two, say $A = \{a,b\}$. Let $L\in A^*$ with $L \in \tlhp{2}{\stzer}$. We prove that $L \in \ipolp{3}{\stzer}$. Let $\Ds = \ipolp{2}\stzer$. One can check that $\bpol{\Ds} = \bpolp{2}{\stzer}$, so that $\ipolp{2}{\Ds} = \ipolp{3}{\stzer}$. Therefore, it suffices to prove that $L \in \ipolp{2}{\Ds}$.
  Let $\alpha: A^* \to M$ be the syntactic morphism of $L$.  By Theorem~\ref{thm:capolbp}, we have to prove that if $(e,s)\in E(M) \times M$ is a \Ds-pair and $t\in M$, then $(esete)^{\omega+1} = (esete)^{\omega} et e(esete)^{\omega}$.

  Let $H = A^*aA^* \cap A^*bA^*$. Clearly, $A^* = H \cup a^* \cup b^*$. By definition of \Ds-pairs, $\alpha\inv(e)$ is not \Ds-separable from $\alpha\inv(s)$. Hence, as $H,a^*,b^* \in \Ds$ (since $\bpol{\stzer} \subseteq \Ds$ by definition), one can verify that one of the three following conditions holds:
  \begin{enumerate}
	\item $H \cap \alpha\inv(e)$ is not \Ds-separable from $H \cap \alpha\inv(s)$,
	\item $a^* \cap \alpha\inv(e)$ is not \Ds-separable from $a^* \cap \alpha\inv(s)$,
	\item $b^* \cap \alpha\inv(e)$ is not \Ds-separable from $b^* \cap \alpha\inv(s)$.
  \end{enumerate}
  We consider three cases depending on which property holds. Assume first that $H \cap \alpha\inv(e)$ is not \Ds-separable from $H \cap \alpha\inv(s)$. In particular, both languages are nonempty. We get $u \in H \cap \alpha\inv(e)$ and $v \in H \cap \alpha\inv(s)$. Let $x \in \alpha\inv(t)$ and $w = uxu$. By definition of $H$, the two letters $a$ and $b$ of $A$ occur in $u,v$ and $w$. One can verify that this implies that $\{u\}$ is neither \at-separable from $\{v\}$ nor $\{w\}$. Thus, $(\alpha(u),\alpha(v)) = (e,s)$ and $(\alpha(u),\alpha(w)) = (e,ete)$ are \at-pairs. Since $L \in \tlhp{2}{\stzer} = \tlc{\at}$, it then follows from Theorem~\ref{thm:tlcar} that $(esete)^\omega = (esete)^\omega ete (esete)^\omega$. Finally, since $L \in \tlc{\at}$, we know that $L$ is \emph{star-free}. Thus, its syntactic monoid $M$ is aperiodic by Schützenberger's theorem~\cite{schutzsf}. In particular, $(esete)^{\omega} = (esete)^{\omega+1}$ and we obtain $(esete)^{\omega+1} = (esete)^{\omega} et e(esete)^{\omega}$ as desired.

  It remains to handle the second case (the third one is symmetrical). Assume that  $a^* \cap \alpha\inv(e)$ is not \Ds-separable from $a^* \cap \alpha\inv(s)$. In particular, both languages are nonempty. We prove that $e = ese$, which yields the desired equality. There are two sub-cases depending on whether $a^{*} \cap \alpha\inv(e) = \{\veps\}$ or $a^+ \cap \alpha\inv(e) \neq \emptyset$. In the first case, since $\{\veps\} \in \Ds$ (this is because $\bpol{\stzer} \subseteq \Ds$), the hypothesis that $a^* \cap \alpha\inv(e)$ is not \Ds-separable from $a^* \cap \alpha\inv(s)$ yields $\veps \in a^* \cap \alpha\inv(s)$ and we get $e = s = 1_M$, whence $ese = e$, as desired. Otherwise, $a^+ \cap \alpha\inv(e) \neq \emptyset$. This yields $k \geq 1$ such that $\alpha(a^k) = e$. Moreover, since $a^* \cap \alpha\inv(s) \neq \emptyset$, we have $\ell \in \nat$ such that $\alpha(a^\ell) = s$. Let $r = \alpha(a)$. We have $e = r^k$ and since $e$ is idempotent, it follows that $e = r^{\omega}$. Moreover, $ese = r^{\omega+\ell}$. Thus, since $M$ is aperiodic, we obtain again~$e = ese$.
\end{proof}

\begin{remark}
  Theorem~\ref{thm:tlstinc} is specific to the level \textbf{two}. Indeed, the level \tlhp{3}{\stzer} contains a language outside of \bpolp{n}{\stzer} for every $n \in \nat$. This is implied by Theorem~\ref{thm:gpualt} below, as one can verify that $\tlhp{2}{\dotzer} \subseteq \tlhp{3}{\stzer}$.
\end{remark}

Theorem~\ref{thm:tlstinc} has a surprising corollary for binary alphabets.

\begin{restatable}{corollary}{delta3} \label{cor:delta3}
	When $|A| = 2$, we have $\tlhp{2}{\stzer} = \ipolp{3}{\stzer}$.
\end{restatable}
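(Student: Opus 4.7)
The plan is essentially a direct substitution into Theorem~\ref{thm:tlstinc}. That theorem asserts, for an arbitrary alphabet $A$, the chain of inclusions $\ipolp{3}{\stzer} \subseteq \tlhp{2}{\stzer} \subseteq \ipolp{|A|+1}{\stzer}$. Specializing to $|A| = 2$ makes the rightmost class collapse onto the leftmost: $\ipolp{|A|+1}{\stzer} = \ipolp{3}{\stzer}$. The two inclusions therefore pinch $\tlhp{2}{\stzer}$ between $\ipolp{3}{\stzer}$ and itself, and we conclude $\tlhp{2}{\stzer} = \ipolp{3}{\stzer}$ by antisymmetry of set inclusion.

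Concretely, the proof I would write is one line: \emph{This is immediate from Theorem~\ref{thm:tlstinc}, since $|A|+1 = 3$ when $|A| = 2$.}

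There is no real obstacle here; the substantive work was done in establishing Theorem~\ref{thm:tlstinc}, in particular the upper-bound inclusion $\tlhp{2}{\stzer} \subseteq \ipolp{|A|+1}{\stzer}$ whose $|A|=2$ case was sketched in the excerpt via the case analysis on whether a \Ds-pair is witnessed by words in $H = A^*aA^* \cap A^*bA^*$, in $a^*$, or in $b^*$. The only thing worth pointing out in an accompanying remark is that this equality is rather striking: it shows that over a binary alphabet, the very expressive fragment $\tlhp{2}{\stzer}$ (the level two of the \nesth of basis \stzer, equivalently $\tlc{\at}$ by Proposition~\ref{prop:leveltwo}, equivalently two-variable first-order logic with between predicates) coincides exactly with the deterministic intermediate level $\ipolp{3}{\stzer} = \dewt$ of the Straubing--Thérien hierarchy. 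This is reminiscent of the Thérien--Wilke theorem $\tlhp{1}{\stzer} = \dewd$ mentioned earlier in the introduction, and the parallel between the two results is worth emphasizing explicitly.
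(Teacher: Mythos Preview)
Your proposal is correct and matches the paper's approach exactly: the paper presents Corollary~\ref{cor:delta3} as an immediate consequence of Theorem~\ref{thm:tlstinc} without giving a separate proof, and your one-line argument (substitute $|A|=2$ so that $|A|+1=3$) is precisely the intended justification.
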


Corollary~\ref{cor:delta3} is reminiscent of a famous theorem by Thérien and Wilke~\cite{twfo2} for the previous levels: $\tlc{\stzer}=\ipolp{2}{\stzer}$ (for every alphabet). Naturally, Corollary~\ref{cor:delta3} fails in the general case (see Lemma~\ref{lem:langtl2stlow}). Nonetheless, \tlc{\at} over binary alphabets is the first special case considered by Krebs, Lodaya, Pandya and Straubing in their original paper~\cite{betweenlics}.

\medskip
\noindent
\textbf{Groups bases.} Consider an arbitrary group basis \Gs containing the modulo languages (\md). In this case, two letters suffice. We reuse the above languages $U_k,V_k$, which we encode over a \emph{binary alphabet}. Fix such an alphabet $A = \{a,b\}$. For every $k \in \nat$, we define a morphism $\beta_k: A_k^* \to A^*$ as follows:
\[
\begin{array}{llll}
	\beta_k: & A_k^* & \to     & A^*, \\
	& \ell_i & \mapsto & a^i b a^{k-i}.
\end{array}
\]
The following proposition (proved in Appendix~\ref{app:groups}) shows that this encoding maps \tlhp2\stzer to \tlhp2\md. Combined with Lemma~\ref{lem:langtl2st}, it implies that $\beta_k(U_k) \in \tlhp{2}{\md}$ for all $k \in \nat$.

\begin{restatable}{proposition}{modproj}\label{prop:modproj}
	Let $k \in \nat$ and $L \subseteq A_k^*$. If $L \in \tlhp2{\stzer}$, then $\beta_k(L) \in \tlhp{2}{\md}$.
\end{restatable}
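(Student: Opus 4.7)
The plan is to prove the proposition by structural induction on a formula $\varphi \in \tla{\tlc\stzer}$ defining $L$, producing a translated formula $\widehat\varphi \in \tla{\tlc\md}$ with $L(\widehat\varphi) = \beta_k(L)$. The organizing device is a bijection $\iota$ between positions of $w$ and ``marker positions'' of $v = \beta_k(w)$, namely $\iota(j) = j(k+1)$ for $0 \le j \le |w|$ and $\iota(|w|+1) = |w|(k+1)+1$. The marker predicate ``$p$ is a marker of $v$'' is definable in $\tlc\md$: it amounts to $p = \min$, $p = \max$, or $\finallymp{L_0}{\min}$ with $L_0 = \{u \in A^* : |u| \equiv k \pmod{k+1}\} \in \md$. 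Moreover $\beta_k(A_k^*) \in \tlhp{2}{\md}$, since one additionally needs to assert that $\max$ is a marker and that each block of $k+1$ positions starting right after a marker contains exactly one $b$, both of which are $\tlc\md$-expressible local conditions.

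I would then carry out the induction as follows. Boolean connectives, $\min$ and $\max$ translate directly. The atom $\ell_i$ evaluated at a marker $\iota(j)$ with $1 \le j \le |w|$ asserts that the unique $b$ of block $j$ lies at distance exactly $k-i$ behind: for $i = k$ this is the label assertion $b$, and for $i < k$ it is $a \wedge \finallymp{A^{k-i-1}}{b}$, where $A^{k-i-1}$ is star-free and hence lies in $\tlc\stzer \subseteq \tlc\md$. The critical case is the modality $\finallyp{K}{\psi}$ with $K \in \tlc\stzer$: I translate it to $\finallyp{\widehat K}{(\mathit{marker} \wedge \widehat\psi)}$, where $\widehat K \in \tlc\md$ is constructed by a \emph{parallel inner induction} on the $\tla\stzer$ formula defining $K$. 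Inner atoms and Booleans translate as above, and the inner modality $\finallyp{A_k^*}{\chi}$ becomes $\finallyp{A^*}{(\mathit{marker} \wedge \widehat\chi)}$ (past modalities symmetrically); the key invariant is that for $0 \le j_1 < j_2 \le |w|+1$, the $w$-infix $\infix{w}{j_1}{j_2}$ belongs to $K$ iff $\infix{v}{\iota(j_1)}{\iota(j_2)}$, viewed as a standalone word, satisfies $\widehat\chi$.

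The main obstacle is that $\infix{v}{\iota(j_1)}{\iota(j_2)}$ is \emph{not} literally $\beta_k(\infix{w}{j_1}{j_2})$: when $j_2 \le |w|$, it equals $\beta_k(\infix{w}{j_1}{j_2})$ concatenated with a length-$k$ trailing prefix of block $j_2$, which may itself contain an extra $b$; when $j_2 = |w|+1$, there is no trailing padding. This off-by-one discrepancy must be absorbed into the inner translation, either by explicitly case-splitting $\widehat K$ according to $|u| \bmod (k+1)$, or by correcting the marker-selection formula so that the inner $\widehat\chi$ ignores positions lying in the trailing partial block. Verifying that this handling preserves $\tlc\md$-definability for $\widehat K$ is the technical heart of the proof: it hinges on the fact that markers of the infix, identified modulo $k+1$ from the infix's own $\min$, still correctly correspond to positions of $\infix{w}{j_1}{j_2}$ because $\iota(j_1)$ lies at a position $\equiv 0 \pmod{k+1}$ in $v$, so the modular bookkeeping passes cleanly between $v$-coordinates and infix-coordinates.
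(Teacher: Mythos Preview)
Your approach has a genuine hierarchy blowup in the inner translation. You claim that the inner induction produces $\widehat K \in \tlc{\md}$, but your atom translation $\ell_i \mapsto a \wedge \finallymp{A^{k-i-1}}{b}$ uses the modality $\textup{P}_{A^{k-i-1}}$, and $A^{k-i-1}$ is \emph{not} in \md (membership in a modulo language depends only on length modulo an integer, whereas $A^{k-i-1}$ fixes the length exactly). You correctly note that $A^{k-i-1} \in \tlc{\stzer} \subseteq \tlc{\md}$, but that only means your inner formula $\widehat\chi$ lies in $\tla{\tlc{\md}}$, not in $\tla{\md}$. Hence $\widehat K \in \tlhp{2}{\md}$, the outer modality $\textup{F}_{\widehat K}$ sits in $\tla{\tlhp{2}{\md}}$, and you only obtain $\beta_k(L) \in \tlhp{3}{\md}$. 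It is not clear how to express ``the letter $b$ sits exactly $k-i$ positions behind me'' using only $\textup{P}_L$ with $L \in \md$: modular length constraints on the infix cannot by themselves isolate a single distance, and $\{\veps\}\notin\md$ so the ``yesterday'' modality is unavailable at this level.

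The paper sidesteps this entirely by first invoking Proposition~\ref{prop:leveltwo}, which gives $\tlhp{2}{\stzer} = \tlc{\at}$. Thus $L$ is defined by a formula in $\tla{\at}$, so each inner language $K$ lies in \at rather than in $\tlc{\stzer}$. An \at language over $A_k$ is a Boolean combination of $A_k^*\ell_iA_k^*$, and each of these is translated \emph{directly} (no inner induction) to $P_i = (A^{k+1})^*A^i b A^*$, which is defined by the single $\tla{\md}$ formula $\finallyp{(A^{k+1})^*A^i}{b}$ because $(A^{k+1})^*A^i = \{u : |u|\equiv i \pmod{k+1}\}\in\md$. Hence $Q_K \in \tlc{\md}$ immediately, and the outer formula lands in $\tla{\tlc{\md}}$ as required. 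Your ``off-by-one'' discussion about trailing partial blocks is a genuine secondary issue, but the hierarchy problem above is the primary obstruction: even granting a perfect handling of the trailing block, your $\widehat K$ would still live one level too high.
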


The next statement serves as the counterpart of Lemma~\ref{lem:langtl2stlow}. Its proof is similar and given in Appendix~\ref{app:groups}.

\begin{restatable}{lemma}{tlmodunc}\label{lem:tlmodunc}
	Let \Gs be a group \vari. For all $k \in \nat$, we have $\beta_k(U_k) \notin \polp{k+1}{\Gs}$.
\end{restatable}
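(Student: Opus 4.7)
The plan is to mimic the structure of the proof of Lemma~\ref{lem:langtl2stlow}, but with a strengthened inductive statement that accommodates the dependence of the encoding $\beta_k$ on $k$. Specifically, I will prove by induction on $k$ the following claim: for all integers $0 \leq k \leq n$, the language $\beta_{n,k}(U_k)$ is \textbf{not} \polp{k+1}{\Gs}-separable from $\beta_{n,k}(V_k)$, where $\beta_{n,k}\colon A_k^* \to A^*$ denotes the restriction of $\beta_n$ to $A_k^*$. Decoupling the parameter $n$ from $k$ is crucial, because $\beta_{n,k}$ restricted to $A_{k-1}^*$ coincides with $\beta_{n,k-1}$ (the trailing $a$-block length is governed by $n$, not by the current level $k$), so the induction hypothesis at level $k-1$ can be invoked for the same encoding that appears at level $k$. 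The lemma then follows by taking $n = k$: $\beta_k$ is injective (the $b$-positions together with the known value of $k$ uniquely determine the preimage), hence $\beta_k(U_k) \cap \beta_k(V_k) = \emptyset$, and non-separability forces $\beta_k(U_k) \notin \polp{k+1}{\Gs}$.

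For the base case $k = 0$, I need to show that $\{\veps\} = \beta_{n,0}(U_0)$ is non-\pol{\Gs}-separable from $\beta_{n,0}(V_0) = (ba^n)^+$. Any candidate separator $L \in \pol{\Gs}$ containing $\veps$ must, by the shape of marked products, include some $L_0 \in \Gs$ with $\veps \in L_0$. Since $L_0$ is recognized by a morphism $\alpha$ into a finite group $G$, the word $(ba^n)^{\omega(G)}$ is mapped to $1_G = \alpha(\veps)$ and therefore lies in $L_0 \cap (ba^n)^+$, ruling out separation.

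For the inductive step $k \geq 1$, I fix $n \geq k$ and set $h = a^k b a^{n-k} = \beta_{n,k}(\ell_k)$. Applying the induction hypothesis at level $k-1$ with the same $n$, I obtain that $\beta_{n,k}(U_{k-1})$ is not \polp{k}{\Gs}-separable from $\beta_{n,k}(V_{k-1})$. Concatenating on the left with $\{h\}$ via Lemma~\ref{lem:nsepconcat} and using the duality between non-separability and its complement-class counterpart, this rewrites as: $h\beta_{n,k}(V_{k-1})$ is not \cocl{\polp{k}{\Gs}}-separable from $h\beta_{n,k}(U_{k-1})$. Lemma~\ref{lem:polinduc} applied with $\Cs = \cocl{\polp{k}{\Gs}}$ (which is a \pvari because \polp{k}{\Gs} is, by Proposition~\ref{prop:bpvar}) then shows that $\beta_{n,k}(U_k) = (h\beta_{n,k}(V_{k-1}))^*$ is not \pol{\cocl{\polp{k}{\Gs}}}-separable from $\beta_{n,k}(V_k)$. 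The argument closes by identifying $\pol{\cocl{\polp{k}{\Gs}}}$ with $\polp{k+1}{\Gs}$ via Lemma~\ref{lem:polcopol} applied to the \vari $\bpolp{k-1}{\Gs}$: one computes $\pol{\cocl{\polp{k}{\Gs}}} = \polp{2}{\bpolp{k-1}{\Gs}} = \pol{\bpolp{k}{\Gs}} = \polp{k+1}{\Gs}$. The main delicate point is pinning down this strengthened statement so that the induction descends cleanly through the mismatched encodings; once that is set up, the remainder is a routine adaptation of the proof of Lemma~\ref{lem:langtl2stlow}.
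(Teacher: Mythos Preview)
Your proposal is correct and takes essentially the same approach as the paper: the paper also fixes the outer encoding index and inducts on the inner level (it writes $\beta_k(U_n)$ for $n \leq k$, which is exactly your $\beta_{n,k}(U_k)$ with the roles of $n$ and $k$ swapped), and the inductive step is identical. The only cosmetic difference is in the base case, where the paper obtains non-\pol{\Gs}-separability of $\{\veps\}$ from $(ba^k)^+$ by first noting non-\Gs-separability and then invoking Lemma~\ref{lem:polinduc} with $H_1=\{\veps\}$, whereas you argue directly that any $L\in\pol{\Gs}$ containing $\veps$ must contain some $L_0\in\Gs$ with $\veps\in L_0$.
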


Altogether, we get the following theorem.

\begin{restatable}{theorem}{gualt} \label{thm:gualt}
	\mbox{Let \Gs be a group \vari with $\md\! \subseteq\! \Gs$.} Assume that $|A|\! \geq\! 2$. For all $m,k\! \in\! \nat$ such that $2\! \leq\! m\! <\! k$, the classes \tlhp{m}{\Gs} and \bpolp{k}{\Gs} are incomparable.
\end{restatable}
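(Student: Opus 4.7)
The plan is to prove incomparability by exhibiting, for each pair $(m,k)$ with $2 \leq m < k$, two witness languages: one in $\tlhp{m}{\Gs} \setminus \bpolp{k}{\Gs}$ and one in $\bpolp{k}{\Gs} \setminus \tlhp{m}{\Gs}$. Both witnesses have already been built in the preceding lemmas, so the argument reduces to assembly and parameter checking. Throughout, I would use that the operators $\Cs \mapsto \tlhn{\Cs}$ and $\Cs \mapsto \bpoln{\Cs}$ are monotone in their basis, which is immediate from the definitions: a larger basis makes more modalities, respectively more atomic languages for marked products, available.

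For $\tlhp{m}{\Gs} \not\subseteq \bpolp{k}{\Gs}$, I would take the witness $\beta_k(U_k)$. Lemma~\ref{lem:langtl2st} gives $U_k \in \tlhp{2}{\stzer}$, and Proposition~\ref{prop:modproj} lifts this to $\beta_k(U_k) \in \tlhp{2}{\md}$. The hypothesis $\md \subseteq \Gs$, combined with monotonicity and $m \geq 2$, yields $\tlhp{2}{\md} \subseteq \tlhp{2}{\Gs} \subseteq \tlhp{m}{\Gs}$. For the lower bound, Lemma~\ref{lem:tlmodunc} gives $\beta_k(U_k) \notin \polp{k+1}{\Gs}$; combined with the trivial inclusion $\bpolp{k}{\Gs} \subseteq \pol{\bpolp{k}{\Gs}} = \polp{k+1}{\Gs}$, this forces $\beta_k(U_k) \notin \bpolp{k}{\Gs}$. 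For the converse inclusion $\bpolp{k}{\Gs} \not\subseteq \tlhp{m}{\Gs}$, I would take $L_m$. Lemma~\ref{lem:strictpos} gives $L_m \in \bpolp{m+1}{\stzer} \subseteq \bpolp{k}{\Gs}$ since $\stzer \subseteq \Gs$ and $m + 1 \leq k$. For the exclusion, Lemma~\ref{lem:strictneg} gives $L_m \notin \tlhp{m}{\grp^+}$, and since $\Gs$ is a group \vari we have $\Gs \subseteq \grp \subseteq \grp^+$, so monotonicity yields $\tlhp{m}{\Gs} \subseteq \tlhp{m}{\grp^+}$ and hence $L_m \notin \tlhp{m}{\Gs}$.

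The main difficulty has already been absorbed into the preceding machinery: the construction of the $U_k$ family witnessing a $\tlhp{2}{\stzer}$ language of arbitrarily high concatenation complexity (Lemmas~\ref{lem:langtl2st} and~\ref{lem:langtl2stlow}), the binary encoding $\beta_k$ that transfers both the upper bound to $\tlhp{2}{\md}$ (Proposition~\ref{prop:modproj}) and the lower bound to an arbitrary group \vari~$\Gs$ (Lemma~\ref{lem:tlmodunc}), and the strictness witnesses $L_n$ for the other direction (Lemmas~\ref{lem:strictpos} and~\ref{lem:strictneg}). Given these ingredients, the theorem itself is a routine exercise in parameter matching, with the only nonobvious point being that the hypothesis $\md \subseteq \Gs$ is exactly what one needs to lift the $\stzer$-based upper bound through the $\beta_k$ encoding into $\Gs$.
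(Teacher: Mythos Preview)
Your proposal is correct and matches the paper's intended argument. The paper presents Theorem~\ref{thm:gualt} as an immediate consequence of the preceding lemmas (``Altogether, we get the following theorem''), and your assembly---$\beta_k(U_k)$ via Lemma~\ref{lem:langtl2st}, Proposition~\ref{prop:modproj}, and Lemma~\ref{lem:tlmodunc} for one direction, $L_m$ via Lemmas~\ref{lem:strictpos} and~\ref{lem:strictneg} for the other---is exactly the combination the paper has in mind, with all the monotonicity and inclusion checks handled correctly.
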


\medskip
\noindent
\textbf{\Wsuit extensions of groups bases.}  Consider the basis $\Gs^+$ for an arbitrary group \vari \Gs. We again reuse the languages $U_k,V_k$ and encode them over a \emph{binary alphabet} $A = \{a,b\}$. For every $k \in \nat$, let $B_k =  A_k \cup \{b\}$. We define two morphisms $\gamma_k: B_k^* \to A_k^*$ and $\delta_k: B_k^* \to A^*$ as follows:
\[
\begin{array}{llll}
	\gamma_k: & B_k^* & \to     & A_k^* \\
	& b                  & \mapsto & \veps \\
	& \ell_i & \mapsto & \ell_i 
\end{array} \quad \text{and}\quad  \begin{array}{llll}
	\delta_k: & B_k^* & \to     & A^* \\
	& b                  & \mapsto & b \\
	& \ell_i & \mapsto & ba^{i+1} 
\end{array}
\]
By definition, if $L \subseteq A_k^*$, then $\gamma_k\inv(L)$ is the language consisting of all words built from a word in $L$ by inserting arbitrarily many ``$b$'''s at any position. The language $\delta^{}_k(\gamma_k\inv(L))$ consists of all words built from those in $\gamma_k\inv(L)$ by replacing each letter $\ell_i$ with the word $ba^{i+1}$. The following proposition  (proved in Appendix~\ref{app:groups}) is analogous to Proposition~\ref{prop:modproj}: it shows that this encoding maps $\tlhp2{\stzer}$ to $\tlhp{2}{\dotzer}$. Combined with Lemma~\ref{lem:langtl2st}, it yields $\delta^{}_k(\gamma_k\inv(U_k)) \in \tlhp{2}{\dotzer}$ for all $k \in \nat$.

\begin{restatable}{proposition}{ddproj} \label{prop:ddproj}
	Let $k \in \nat$ and $L \subseteq A_k^*$. If $L \in \tlhp{2}{\stzer}$, then $\delta^{}_k(\gamma_k\inv(L)) \in \tlhp{2}{\dotzer}$.
\end{restatable}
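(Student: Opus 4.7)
The plan is to reduce the statement to a formula translation. By Proposition~\ref{prop:leveltwo}, $\tlhp{2}{\stzer}=\tlc{\at}$ and $\tlhp{2}{\dotzer}=\tlc{\lt}$, so it suffices to prove that whenever $L\in\tlc{\at}$ over $A_k$, we have $\delta_k(\gamma_k\inv(L))\in\tlc{\lt}$ over $A=\{a,b\}$. Starting from a formula $\varphi\in\tla{\at}$ defining $L$, I will construct a formula $\varphi'\in\tla{\lt}$ defining $\delta_k(\gamma_k\inv(L))$.

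First, a setup. Call a word $w\in A^*$ \emph{valid} if either $w=\veps$, or $w$ begins with $b$ and every maximal block of consecutive $a$'s has length between $1$ and $k+1$; validity is locally testable and therefore definable by a sentence $\textsf{valid}\in\tla{\lt}$. For a valid $w$, call $p\in\pos{w}$ \emph{designated} when $\wpos{w}{p}=a$ and $\wpos{w}{p-1}=b$, i.e.,\ when $p$ is the first $a$ of a maximal $a$-block. The designated positions $p_1<\cdots<p_n$ of $w$ correspond bijectively with the labelled positions of the decoded word $u=\ell_{i_1}\cdots\ell_{i_n}\in A_k^*$, where $i_s+1$ is the length of the $s$-th $a$-block; the boundary positions of $w$ correspond to those of $u$. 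One checks directly from the definitions of $\gamma_k,\delta_k$ that $w\in\delta_k(\gamma_k\inv(L))$ iff $w$ is valid and $u\in L$.

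By induction on $\psi\in\tla{\at}$, associate a formula $[\psi]\in\tla{\lt}$ such that for every valid $w$ and every pair of corresponding positions $(p,q)$, we have $w,p\models[\psi]$ iff $u,q\models\psi$. Atoms $\top,\bot,\min,\max$ translate to themselves. The label atom $\ell_i$ translates into a formula asserting that $p$ is designated and the $a$-block at $p$ has length exactly $i+1$, a locally testable property expressible via iterations of $\finallyp{\{\veps\}}{\cdot}$ and $\finallymp{\{\veps\}}{\cdot}$. Boolean connectives translate homomorphically. The core case is the modality $\finallyp{L_0}{\psi}$ with $L_0\in\at$. Since \at-languages depend only on the set of letters present, write $L_0=\bigcup_{S\in\mathcal{S}_0}\{u'\in A_k^*\mid \etaat(u')=S\}$ for some family $\mathcal{S}_0$ of subalphabets. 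The letter set of $u$ between corresponding endpoints $(q,q')$ equals the set of indices $\ell_i$ such that some designated position of type $\ell_i$ lies strictly between $p$ and $p'$, and the presence of such a position is detectable on the infix $\infix{w}{p}{p'}$ through occurrences of factors of the form $ba^{i+1}b$. Hence for each $S\subseteq A_k$ one builds a language $L_S'\in\lt$ such that $\infix{w}{p}{p'}\in L_S'$ iff the corresponding infix of $u$ has letter set $S$; the translation then becomes $\bigvee_{S\in\mathcal{S}_0}\finallyp{L_S'}{[\psi]}$. The past modality $\finallymp{L_0}{\psi}$ is symmetric. Setting $\varphi'=\textsf{valid}\wedge[\varphi]$, evaluation at position $0$ yields $w\models\varphi'$ iff $w$ is valid and $u,0\models\varphi$, iff $w\in\delta_k(\gamma_k\inv(L))$.

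The main obstacle is the boundary bookkeeping in the modality translation: because the endpoints $p,p'$ can themselves be designated ($a$-labelled) positions, the infix $\infix{w}{p}{p'}$ may truncate in the middle of an $a$-block, so the flanking $b$'s of the first or last interior designated position may lie outside the infix. The remedy is to split the translation of $\finallyp{L_0}{\psi}$ into sub-cases according to the types of $p$ and $p'$ (designated vs $\min$/$\max$), adjusting $L_S'$ in each case to compensate for the missing boundary letters. The type at $p'$ is forced by $[\psi]$ (which already encodes the label at $p'$), while the type at $p$ is fixed by the context in which $[\cdot]$ is invoked. Once this bookkeeping is completed, the construction yields the desired $\tla{\lt}$ formula $\varphi'$.
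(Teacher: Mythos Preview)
Your approach is correct and essentially the same as the paper's: both proofs proceed by structural induction on a $\tla{\at}$ formula over $A_k$, translating it to a formula over $A$ that simulates evaluation at the ``encoded'' positions, and conjoining with a formula expressing membership in $\delta_k(\gamma_k^{-1}(A_k^*))$.

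Two cosmetic differences are worth noting. First, you target $\tla{\lt}$ via the equality $\tlhp{2}{\dotzer}=\tlc{\lt}$, whereas the paper writes the translated formula directly in $\tla{\tlc{\dotzer}}$; both describe the same class. Second, your designated positions are the first $a$ of each maximal $a$-block, while the paper's distinguished positions are the $b$ immediately preceding such a block. This shift by one is precisely what forces you into the boundary case analysis you flag: with the paper's convention, the infix between two distinguished positions automatically strips the leading $b$ of the source block and stops just before the $b$ of the target block, so a single language $P_h = A^*ba^{h+1}bA^* \cup A^*ba^{h+1}$ detects the intermediate $\ell_h$'s uniformly (the second disjunct handles the ``no padding before the target'' edge). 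The same device works under your convention with trivial adjustments, so the case-split you anticipate is lighter than you suggest; choosing the $b$ rather than the first $a$ as anchor simply makes it disappear almost entirely.
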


The following statement serves as the counterpart of Lemma~\ref{lem:langtl2stlow}. Its proof is similar and given in Appendix~\ref{app:groups}.

\begin{restatable}{lemma}{tlddunc}\label{lem:tlddunc}
Let \Gs be a group \vari. For all $k \in \nat$, we have $\delta^{}_k(\gamma_k\inv(U_k)) \notin \polp{k+1}{\Gs^+}$.
\end{restatable}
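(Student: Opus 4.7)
I would prove the lemma by adapting the proof of Lemma~\ref{lem:langtl2stlow} through the encoding $\delta_k\circ\gamma_k^{-1}$. Set $U'_k=\delta_k(\gamma_k^{-1}(U_k))$ and $V'_k=\delta_k(\gamma_k^{-1}(V_k))$. Since $\delta_k$ is injective---a word in its image decodes uniquely by pairing each maximal $a$-run $a^{i+1}$ with the $b$ immediately preceding it into a token $\ell_i$---the disjointness $U_k\cap V_k=\emptyset$ lifts to $U'_k\cap V'_k=\emptyset$. Hence it suffices to prove by induction on $k$ that $U'_k$ is not $\polp{k+1}{\Gs^+}$-separable from $V'_k$. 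Because $\gamma_k^{-1}$ commutes with concatenation (since $\gamma_k$ only erases $b$'s) and $V'_{k-1}$ is closed under right multiplication by $b^*$, one deduces, for $k\geq 1$,
\begin{align*}
  U'_k &= (b^{+}a^{k+1}V'_{k-1})^{*}b^{*},\\
  V'_k &= (b^{+}a^{k+1}V'_{k-1})^{*}b^{+}a^{k+1}b^{*}U'_{k-1}(b^{+}a^{k+1}V'_{k-1})^{*}b^{*},
\end{align*}
with base $U'_0=b^{*}$ and $V'_0=b^{*}(bab^{*})^{+}$.

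For the base case ($k=0$), let $K\in\pol{\Gs^+}$ satisfy $b^{*}\subseteq K$, and decompose $K=\bigcup_{j}K_j$ into marked products with factors in $\Gs^+$. By pigeonhole some $K_j=L_0a_1L_1\cdots a_nL_n$ contains $b^k$ for infinitely many $k$; since $b^k$ is $b$-only, every $a_i$ must equal $b$ and every $L_i$ must meet $b^{*}$. A short group calculation shows that any $L\in\Gs^+$ with $L\subseteq b^{*}$ lies in $\{\emptyset,\{\veps\}\}$: if $L=\mu^{-1}(F)$ for a morphism into a finite group $G$ and some $b^k\in L$, then $\mu(a^{|G|}b^k)=\mu(b^k)\in F$ would force the word $a^{|G|}b^k\notin b^{*}$ into $L$. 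Since $K_j$ is infinite, some $L_{i^{*}}$ must therefore contain a word $v=b^{c_0}ab^{c_1}a\cdots ab^{c_m}$ with $m\geq 1$. Writing $d$ for the order of $\mu_{i^{*}}(b)$ in the recognizing group, the pumped word $w=b^{c_0+d}ab^{c_1+d}a\cdots ab^{c_m+d}$ satisfies $\mu_{i^{*}}(w)=\mu_{i^{*}}(v)$ (each inserted $b^d$ evaluates to~$1$), hence $w\in L_{i^{*}}$, and every $a$ in $w$ is preceded by at least $d\geq 1$ copies of $b$. Choosing $u_i\in L_i\cap b^{*}$ for $i\neq i^{*}$ and $u_{i^{*}}=w$ yields $u_0bu_1b\cdots bu_n\in K_j\cap b^{*}(bab^{*})^{+}$.

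For the inductive step, rewrite $U'_k=P^{*}$ with $P=(b^{+}a^{k+1}V'_{k-1})\cup\{b\}$ (absorbing the trailing $b^{*}$ into the Kleene star) and $V'_k=P^{*}QP^{*}$ with $Q=b^{+}a^{k+1}b^{*}U'_{k-1}$. Combining the induction hypothesis with Lemma~\ref{lem:nsepconcat}---using the trivial non-separability of $b^{+}a^{k+1}b^{*}$ from its subset $b^{+}a^{k+1}$---shows that $Q$ is not $\polp{k}{\Gs^+}$-separable from $b^{+}a^{k+1}V'_{k-1}\subseteq P$, hence not from $P$; equivalently, $P$ is not $\cocl{\polp{k}{\Gs^+}}$-separable from $Q$. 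Lemma~\ref{lem:polinduc} applied to the \pvari $\cocl{\polp{k}{\Gs^+}}$ then yields that $P^{*}$ is not $\pol{\cocl{\polp{k}{\Gs^+}}}$-separable from $P^{*}QP^{*}$, and Lemma~\ref{lem:polcopol} applied to the \vari $\bpolp{k-1}{\Gs^+}$ identifies $\pol{\cocl{\polp{k}{\Gs^+}}}$ with $\polp{k+1}{\Gs^+}$, closing the induction.

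The main obstacle is the base case: concretely, producing the witness $w\in L_{i^{*}}$ whose $a$'s are all preceded by $b$. The key insight is that pumping each $b$-block of $v$ by the period of $\mu_{i^{*}}(b)$ preserves group membership while forcing a separating $b$ before every $a$; the characterization of $\Gs^+$-languages inside $b^{*}$ as trivial is the short group-theoretic prerequisite. Everything else is a mechanical transfer of the proof of Lemma~\ref{lem:langtl2stlow} through the injective encoding $\delta_k\circ\gamma_k^{-1}$.
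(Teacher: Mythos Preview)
Your proof is correct and follows the same inductive strategy as the paper: show that $U'_k$ is not $\polp{k+1}{\Gs^+}$-separable from $V'_k$, with the inductive step driven by Lemmas~\ref{lem:nsepconcat}, \ref{lem:polinduc} and~\ref{lem:polcopol}. Your packaging $P=(b^+a^{k+1}V'_{k-1})\cup\{b\}$ so that $P^*=U'_k$ exactly is a minor cosmetic variant of the paper's choice $H_1=b^+a^{k+1}V'_{k-1}$ (which yields only $H_1^*\subseteq U'_k$, still sufficient).

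The real difference is the base case. You unfold $\pol{\Gs^+}$ into marked products, classify the $\Gs^+$-languages contained in $b^*$, and then pump $b$-blocks to produce a witness in $V'_0$. This is correct, but it reproves by hand a special instance of Lemma~\ref{lem:polinduc}. The paper instead applies that lemma one more time at level~$0$: it observes (a one-line group computation) that $b^*$ is not $\Gs^+$-separable from $(b^+a)^+$, and then Lemma~\ref{lem:polinduc} with $H_1=b^*$ and $H_2=(b^+a)^+$ immediately gives that $X_0=b^*$ is not $\pol{\Gs^+}$-separable from $b^*(b^+a)^+b^*=Y_0$. So what you flag as ``the main obstacle'' is in fact handled uniformly by the same lemma you already use in the inductive step, making the paper's argument noticeably shorter.
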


Altogether, we get the following theorem.

\begin{restatable}{theorem}{gpualt} \label{thm:gpualt}
	Let \Gs be a group \vari and assume that $|A| \geq 2$. For all $m,k \in \nat$ such that $2 \leq m < k$, the classes \tlhp{m}{\Gs^+} and \bpolp{k}{\Gs^+} are incomparable.
\end{restatable}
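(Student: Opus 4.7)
The plan is to mirror, almost verbatim, the structure of the proof of Theorem~\ref{thm:gualt}, but to replace the hypothesis $\md \subseteq \Gs$ (needed in the \md-encoding) by the free inclusion $\dotzer = \stzer^+ \subseteq \Gs^+$, which holds for \emph{every} group \vari \Gs. I must exhibit, for every pair $(m,k)$ with $2 \leq m < k$, two separating witnesses: one language in $\bpolp{k}{\Gs^+} \setminus \tlhp{m}{\Gs^+}$, and one in $\tlhp{m}{\Gs^+} \setminus \bpolp{k}{\Gs^+}$. Fix a two-letter subalphabet $\{a,b\} \subseteq A$ over which all witnesses will be defined; the case of a larger alphabet reduces immediately to this one by restricting to this subalphabet.

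For the first direction, I would reuse the Brzozowski--Knast-style language $L_{k-1}$. By Lemma~\ref{lem:strictpos}, $L_{k-1} \in \bpolp{k}{\stzer} \subseteq \bpolp{k}{\Gs^+}$, since $\stzer \subseteq \Gs^+$. By Lemma~\ref{lem:strictneg}, $L_{k-1} \notin \tlhp{k-1}{\grp^+}$; because $\Gs \subseteq \grp$ entails $\Gs^+ \subseteq \grp^+$ and hence $\tlhp{k-1}{\Gs^+} \subseteq \tlhp{k-1}{\grp^+}$, we obtain $L_{k-1} \notin \tlhp{k-1}{\Gs^+}$, and finally $L_{k-1} \notin \tlhp{m}{\Gs^+}$ since $m \leq k-1$ gives $\tlhp{m}{\Gs^+} \subseteq \tlhp{k-1}{\Gs^+}$.

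For the second direction, I would use the encoded language $\delta_k(\gamma_k^{-1}(U_k))$ over $\{a,b\}^*$. By Lemma~\ref{lem:langtl2st}, $U_k \in \tlhp{2}{\stzer}$; Proposition~\ref{prop:ddproj} then lifts this to $\delta_k(\gamma_k^{-1}(U_k)) \in \tlhp{2}{\dotzer}$. Since $\stzer \subseteq \Gs$ yields $\dotzer = \stzer^+ \subseteq \Gs^+$, and since $m \geq 2$, we conclude $\delta_k(\gamma_k^{-1}(U_k)) \in \tlhp{2}{\dotzer} \subseteq \tlhp{2}{\Gs^+} \subseteq \tlhp{m}{\Gs^+}$. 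On the other hand, Lemma~\ref{lem:tlddunc} states that $\delta_k(\gamma_k^{-1}(U_k)) \notin \polp{k+1}{\Gs^+}$; since by the very definitions $\bpolp{k}{\Gs^+} \subseteq \pol{\bpolp{k}{\Gs^+}} = \polp{k+1}{\Gs^+}$, this rules out $\delta_k(\gamma_k^{-1}(U_k)) \in \bpolp{k}{\Gs^+}$, as required.

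There is no serious obstacle: the entire technical content has already been isolated in Lemmas~\ref{lem:strictpos}, \ref{lem:strictneg}, \ref{lem:langtl2st}, \ref{lem:tlddunc} and Proposition~\ref{prop:ddproj}. The only things to verify are the bookkeeping of indices, that the ``$+$''-extension automatically supplies the basis $\dotzer$ driving the encoding (so the condition $\md \subseteq \Gs$ can be dropped compared to Theorem~\ref{thm:gualt}), and the standard reduction from $|A| \geq 2$ to the case of a binary alphabet.
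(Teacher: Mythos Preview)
Your proposal is correct and follows essentially the same approach as the paper: the paper derives Theorem~\ref{thm:gpualt} by combining exactly the ingredients you name (Lemmas~\ref{lem:strictpos}, \ref{lem:strictneg}, \ref{lem:langtl2st}, \ref{lem:tlddunc} and Proposition~\ref{prop:ddproj}), using $L_{k-1}$ for the witness in $\bpolp{k}{\Gs^+}\setminus\tlhp{m}{\Gs^+}$ and $\delta_k(\gamma_k^{-1}(U_k))$ for the witness in $\tlhp{m}{\Gs^+}\setminus\bpolp{k}{\Gs^+}$. Your bookkeeping of indices and inclusions (in particular $\dotzer=\stzer^+\subseteq\Gs^+$ and $\bpolp{k}{\Gs^+}\subseteq\polp{k+1}{\Gs^+}$) is exactly right.
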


\section{Rating maps}
\label{sec:ratms}
We use the framework of~\cite{pzcovering2} to handle separation and covering. It relies on objects called \emph{\ratms}. They ``rate'' covers. For each lattice \Cs, each \ratm $\rho$ and each language~$L$, we define the optimal \Cs-covers of $L$ for $\rho$. We reduce \Cs-covering to the computation of optimal \Cs-covers.

\subsection{\Mratms}

A \emph{semiring} is a tuple $(R,+,\cdot)$ where $R$ is a set and ``$+$'' and ``$\cdot$'' are two binary operations called addition and multiplication, which satisfy the following axioms:
\begin{itemize}
  \item $(R,{+})$ is a commutative monoid. Its identity is denoted by $0_R$.
  \item $(R,{\cdot})$ is a monoid. Its identity is denoted by $1_R$.
  \item Distributivity: for $r,s,t \in R$, $r \cdot (s + t) = (r \cdot s) + (r \cdot t)$  and $(r + s) \cdot t = (r \cdot t) + (s \cdot t)$.
\item $0_R$ is a zero for $(R,{\cdot})$: $0_R \cdot r = r \cdot 0_R = 0_R$ for all $r \in R$.
\end{itemize}
A semiring $R$ is \emph{idempotent} when $r + r = r$ for all $r \in R$ (there is no additional constraint on the multiplication). If $(R,+,\cdot)$ is an idempotent semiring, we define a relation $\leq$ over $R$: for all $r,s \in R$, we let $r\leq s$ if and only if $r+s=s$. One can check that~$\leq$ is a partial order compatible with addition and multiplication and that every morphism between two idempotent semirings is increasing for this ordering.

\begin{example}\label{ex:bgen:semiring}
  A key example is the set of all languages $(2^{A^*},\cup,\cdot)$, with union as addition and language concatenation as multiplication ($\{\varepsilon\}$ is neutral). The ordering is inclusion. \end{example}

When dealing with subsets of an idempotent semiring $R$, we shall often apply a \emph{downset operation}. Given $S \subseteq R$, we write $\dclosr S = \{r \in R \mid r \leq s \text{ for some $s \in S$}\}$. We extend this notation to Cartesian products of arbitrary sets with $R$. Given some set $X$ and $S \subseteq X \times R$, we write $\dclosr S = \{(x,r) \in X \times R \mid \text{$\exists s \in R$ s.t. $r \leq s$ and $(x,s) \in S$}\}$.

\medskip
\noindent
{\bf\Mratms.} We define a \emph{\mratm} as a semiring morphism $\rho: (2^{A^*},\cup,\cdot) \to (R,+,\cdot)$ where $(R,+,\cdot)$ is a \emph{finite} idempotent semiring. That is,
\begin{enumerate}
  \item\label{itm:bgen:ford} $\rho(\emptyset) = 0_R$ and for every $K_1,K_2 \subseteq A^*$, we have $\rho(K_1\cup K_2)=\rho(K_1)+\rho(K_2)$.
  \item\label{itm:bgen:fmult} $\rho(\{\varepsilon\}) = 1_R$ and for every $K_1,K_2 \subseteq A^*$, we have $\rho(K_1K_2) = \rho(K_1) \cdot \rho(K_2)$.
\end{enumerate}

For the sake of improved readability, for all words $w \in A^*$, we write $\rho(w)$ for $\rho(\{w\})$. Also, we write $\rho_*: A^* \to R$ for the restriction of $\rho$ to $A^*$: for all $w \in A^*$, we have $\rho_*(w) = \rho(w)$ (this notation is useful when referring to the language $\rho_*\inv(r) \subseteq A^*$ consisting of all \emph{words} $w \in A^*$ such that $\rho(w) = r$). It is a morphism into the finite monoid $(R,\cdot)$.

\begin{remark}
  As the adjective ``\tame'' suggests, there exists a more general notion of ``\ratm''. We do not use this notion, see~\cite{pzcovering2} for the general framework.
\end{remark}

Now, a \emph{\fratm} is a \mratm $\rho: 2^{A^*} \to R$ such that for every language $K \subseteq A^*$, there exist finitely many words $w_1,\dots,w_n \in K$ such that $\rho(K) = \rho(w_1) + \cdots + \rho(w_k)$. In this case, $\rho$ is fully defined by the morphism $\rho_*: A^* \to R$. Indeed, for each $K$, $\rho(K)$ is equal to the sum of all elements $\rho(w)$ for $w\in K$ (although this sum may be infinite, it simplifies to a finite one because addition is commutative and idempotent). Hence, \fratms are finitely representable. In particular, we can discuss algorithms that take \fratms as input.

\smallskip
\noindent
\textbf{Optimal \imprints.} Let $\rho: 2^{A^*} \to R$ be a \mratm. Given a finite set of languages~\Kb, the \emph{$\rho$-\imprint of~\Kb} is the subset of~$R$ defined by $\prin{\rho}{\Kb} = \dclosr \big\{\rho(K) \mid K \in\Kb\big\}$. Intuitively, when \Kb is a cover of some language $L$, this object measures the ``quality'' of \Kb (smaller is better).

We now define optimality. Consider a lattice~\Cs. For a language $L$, an \emph{optimal} \Cs-cover of $L$ for $\rho$ is a \Cs-cover \Kb of $L$ with the least possible $\rho$-\imprint, \emph{i.e.}, such that $\prin{\rho}{\Kb} \subseteq \prin{\rho}{\Kb'}$ for every \Cs-cover $\Kb'$ of $L$. In general, there may be infinitely many optimal \Cs-covers for a given \mratm $\rho$. However, at least one always exists if \Cs is a lattice. We state this property in the following lemma (proved in~\cite[Lemma~4.15]{pzcovering2}).

\begin{restatable}{lemma}{lemopt} \label{lem:opt}
  Given a lattice \Cs, a \mratm  $\rho$ and a language  $L$, there exists an optimal \Cs-cover of $L$ for~$\rho$.
\end{restatable}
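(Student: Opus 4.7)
\medskip

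The plan is to exploit two ingredients: the monotonicity of $\rho$ with respect to language inclusion, and the finiteness of the codomain $R$. Combined, they reduce the problem to a pigeonhole argument.

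First, I would observe that $\rho$ is \emph{monotone}: if $K \subseteq K'$, then $K' = K \cup K'$ and thus $\rho(K') = \rho(K) + \rho(K')$, which by definition of $\leq$ in the idempotent semiring $R$ means $\rho(K) \leq \rho(K')$. A direct consequence is that for every subset $S \subseteq R$, if $r \leq s$ with $s = \rho(K)$ for some $K$ in a family $\Kb$, then $r$ belongs to the downset $\prin{\rho}{\Kb}$.

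Second, I would introduce a meet operation on $\Cs$-covers. Given two $\Cs$-covers $\Kb_1,\Kb_2$ of $L$, define
\[
  \Kb_1 \wedge \Kb_2 \;=\; \{K_1 \cap K_2 \mid K_1 \in \Kb_1,\ K_2 \in \Kb_2\}.
\]
Since $\Cs$ is a lattice (hence closed under intersection), every member of $\Kb_1 \wedge \Kb_2$ lies in $\Cs$. Moreover, any $w \in L$ lies in some $K_1 \in \Kb_1$ and some $K_2 \in \Kb_2$, so $w \in K_1 \cap K_2$, showing $\Kb_1 \wedge \Kb_2$ covers $L$. By monotonicity, $\rho(K_1 \cap K_2) \leq \rho(K_1)$ and $\rho(K_1 \cap K_2) \leq \rho(K_2)$, so $\rho(K_1 \cap K_2) \in \prin{\rho}{\Kb_1} \cap \prin{\rho}{\Kb_2}$. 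Taking downsets yields the key inclusion $\prin{\rho}{\Kb_1 \wedge \Kb_2} \subseteq \prin{\rho}{\Kb_1} \cap \prin{\rho}{\Kb_2}$. This operation associates and extends to any finite number of $\Cs$-covers.

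Third, I would close the argument by finiteness. Every $\rho$-imprint is a subset of the finite set $R$, so there are only finitely many distinct imprints among all $\Cs$-covers of $L$; let $\Kb_1,\dots,\Kb_n$ be $\Cs$-covers realizing each of them. The cover $\Kb^* = \Kb_1 \wedge \cdots \wedge \Kb_n$ then satisfies $\prin{\rho}{\Kb^*} \subseteq \prin{\rho}{\Kb_i}$ for every $i$, by iterating the key inclusion. Since any other $\Cs$-cover $\Kb'$ of $L$ has imprint equal to some $\prin{\rho}{\Kb_i}$, we conclude $\prin{\rho}{\Kb^*} \subseteq \prin{\rho}{\Kb'}$, so $\Kb^*$ is optimal.

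The only mildly delicate step is the inclusion $\prin{\rho}{\Kb_1 \wedge \Kb_2} \subseteq \prin{\rho}{\Kb_1} \cap \prin{\rho}{\Kb_2}$, which hinges on monotonicity and on the downset definition of $\prin{\rho}{\cdot}$; the rest is pigeonhole. Note that we never needed $\rho$ to be a \fratm—only the fact that its codomain $R$ is finite and idempotent—so this proof works for any \mratm.
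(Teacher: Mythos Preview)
Your proof is correct. The paper itself does not give a proof of this lemma; it simply cites \cite[Lemma~4.15]{pzcovering2}. Your argument---refining covers by pairwise intersection (using that \Cs is closed under $\cap$), observing that imprints can only shrink under this operation by monotonicity of $\rho$, and then invoking finiteness of $2^R$ to extract an optimal cover as a finite meet---is exactly the standard proof and almost certainly the one in the cited reference. One tiny implicit point: you need at least one \Cs-cover of $L$ to exist so that $n \geq 1$; this is immediate since $A^* \in \Cs$ (as \Cs is a lattice), hence $\{A^*\}$ is always a \Cs-cover.
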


By definition, given a lattice \Cs, a \mratm~$\rho$ and a language $L$, all optimal \Cs-covers of $L$ for $\rho$ have the same $\rho$-\imprint. Hence, this unique $\rho$-\imprint is a \emph{canonical} object for \Cs, $\rho$ and $L$. We call it the \emph{\Cs-optimal $\rho$-\imprint on~$L$} and we denote it by $\opti{\Cs}{L,\rho}$. That is, $\opti{\Cs}{L,\rho} = \prin{\rho}{\Kb}$ for every optimal \Cs-cover \Kb of $L$  for $\rho$.

\subsection{Application to covering}

Deciding \Cs-covering for a fixed class \Cs reduces to finding an algorithm that computes \Cs-optimal \imprints. If \Cs is a \emph{Boolean algebra}, we have the following statement of~\cite{pzcovering2}.

\begin{restatable}{proposition}{propbreduc} \label{prop:breduc}
  Let \Cs be a Boolean algebra. There exists an effective reduction from \Cs-covering to the following problem: ``given as input a \fratm $\rho: 2^{A^*} \to R$ and $F \subseteq R$, decide whether $\opti{\Cs}{A^*,\rho} \cap F = \emptyset$''.
\end{restatable}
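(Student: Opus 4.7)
The plan is to construct, from an input $(L_1, \Lb_2)$ of \Cs-covering with $\Lb_2 = \{L_2^{(1)}, \dots, L_2^{(n)}\}$, an explicit pair $(\rho, F)$ so that $(L_1, \Lb_2)$ is \Cs-coverable if and only if $\opti{\Cs}{A^*, \rho} \cap F = \emptyset$. First, I would compute a morphism $\alpha\colon A^* \to M$ into a finite monoid that simultaneously recognizes $L_1$ and every $L_2^{(i)}$ (possible because all inputs are regular), and set $P = \alpha(L_1)$ and $Q_i = \alpha(L_2^{(i)})$.

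Next, equip $R = 2^M$ with union as addition and with the product $X \cdot Y = \{xy \mid x \in X,\, y \in Y\}$ lifted from $M$; this makes $R$ a finite idempotent semiring. Define $\rho\colon 2^{A^*} \to R$ by $\rho(K) = \alpha(K)$; it is routine to check that $\rho$ is a \fratm. Finally, let
\[
F = \{X \in R \mid X \cap P \neq \emptyset \text{ and } X \cap Q_i \neq \emptyset \text{ for every $i$}\}.
\]
All of $\alpha$, $R$, $\rho$, $F$ are computable from the inputs, so the reduction is effective.

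For the easy direction, I would take an optimal \Cs-cover $\Kb$ of $A^*$ for $\rho$. Since $\alpha(K) \in \opti{\Cs}{A^*,\rho}$ for every $K \in \Kb$, the hypothesis forces $\alpha(K) \notin F$, so either $K \cap L_1 = \emptyset$ (when $\alpha(K) \cap P = \emptyset$) or $K \cap L_2^{(i)} = \emptyset$ for some $i$. Discarding the members of $\Kb$ that miss $L_1$ leaves a \Cs-cover of $L_1$ whose members each avoid some $L_2^{(i)}$, witnessing that $(L_1, \Lb_2)$ is \Cs-coverable.

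The main obstacle lies in the converse direction, where I must turn a \Cs-cover $\Kb_1$ of $L_1$ (each member avoiding some $L_2^{(i)}$) into a \Cs-cover of $A^*$ whose imprint avoids $F$. This is where the Boolean algebra hypothesis on \Cs is crucial: the finite union $U = \bigcup_{K \in \Kb_1} K$ lies in \Cs, and so does $K^* = A^* \setminus U$. Setting $\Kb = \Kb_1 \cup \{K^*\}$ gives a \Cs-cover of $A^*$; each $\alpha(K)$ with $K \in \Kb_1$ misses some $Q_i$, while $L_1 \subseteq U$ forces $K^* \cap L_1 = \emptyset$ and hence $\alpha(K^*) \cap P = \emptyset$. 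In both cases $\alpha(K)$ lies in the complement of $F$, which is downward closed under $\subseteq$, so $\prin{\rho}{\Kb} \cap F = \emptyset$. Since $\opti{\Cs}{A^*,\rho} \subseteq \prin{\rho}{\Kb}$ by optimality, this yields the desired disjointness.
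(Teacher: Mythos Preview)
Your proposal is correct and follows essentially the same construction as the paper's sketch: both build the full rating map $\rho(K)=\alpha(K)$ into $2^M$ from a morphism $\alpha$ recognizing all input languages, and take $F$ to be the subsets of $M$ meeting every accepting set. The paper merely sketches the reduction and defers the verification to~\cite{pzcovering2}, while you spell out both directions of the equivalence explicitly (including the use of closure under complement to extend a cover of $L_1$ to a cover of $A^*$), so your argument is in fact more detailed than what the paper provides.
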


\begin{proof}[Proof]
  We sketch the reduction (see~\cite{pzcovering2} for details). Let $(L_0,\{L_1,\dots,L_n\})$ be an input of the \Cs-covering problem. One can compute a morphism $\alpha: A^*\to M$ recognizing all \emph{regular} languages $L_i$: for each $i\leq n$, there exists $F_i\subseteq M$ such that $L_i=\alpha\inv(F_i)$. We build a \fratm $\rho: 2^{A^*} \to 2^{M}$. Its rating set is the idempotent semiring $(2^{M},\cup,\cdot)$ whose  multiplication is obtained by lifting the one of $M$. For all $K\subseteq A^*$, we define $\rho(K) =\alpha(K)$. Now, let $F \subseteq 2^M$ be the set of all subsets $X \subseteq M$ such \mbox{that $X \cap F_i \neq \emptyset$ for all $i \leq n$}. Using the fact that \Cs is a Boolean algebra, one can check that $(L_0,\{L_1,\dots,L_n\})$ is \Cs-coverable if and only if $\opti{\Cs}{A^*,\rho} \cap F = \emptyset$.
\end{proof}

The converse is also true: if \Cs-covering is decidable, then one can compute the \Cs-optimal \imprints (see Appendix~\ref{app:ratms}).

\begin{restatable}{proposition}{covreduc} \label{prop:covreduc}
	Let \Cs be a Boolean algebra with decidable covering. Given as input a regular language $L$ and a \fratm $\rho: 2^{A^*} \to R$, one can compute the set \copti{L,\rho}.
\end{restatable}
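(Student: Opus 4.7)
The plan is to decide, for each $s \in R$, whether $s \in \copti{L,\rho}$ using a single call to the \Cs-covering oracle; since $R$ is finite, iterating over all $s$ then yields the full set $\copti{L,\rho}$. The key observation is that, because $\rho$ is a \fratm, we have $\rho(K) = \sum_{w \in K} \rho(w)$ for every $K \subseteq A^*$, and as $(R,+,\cdot)$ is finite and idempotent this simplifies to $\rho(K) = \sum_{r \in I(K)} r$, where $I(K) = \{r \in R \mid K \cap \rho_*\inv(r) \neq \emptyset\}$. Hence the relation $s \leq \rho(K)$ depends only on $I(K)$, and the family $\mathcal{T}_s = \{T \subseteq R \mid s \not\leq \sum_{r \in T} r\}$ is a downset of $2^R$.

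Let $\mathcal{M}_s$ be the (finite) set of maximal elements of $\mathcal{T}_s$ and, for each $T \in \mathcal{M}_s$, set $W_T = \bigcup_{r \notin T} \rho_*\inv(r)$. Each $W_T$ is regular since $\rho_*: A^* \to R$ is a morphism into the finite monoid $(R,\cdot)$, so $\{W_T \mid T \in \mathcal{M}_s\}$ is a finite set of regular languages. The core of the proof will be the equivalence: $s \in \copti{L,\rho}$ if and only if $(L, \{W_T \mid T \in \mathcal{M}_s\})$ is \textbf{not} \Cs-coverable. Since $L$ is regular by assumption, this reduces the computation of $\copti{L,\rho}$ to $|R|$ invocations of the \Cs-covering oracle on regular inputs, giving the desired effective procedure.

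To verify this equivalence, unfold definitions: $s \notin \copti{L,\rho}$ holds iff there exists a \Cs-cover $\Kb$ of $L$ with $s \not\leq \rho(K)$ for every $K \in \Kb$. For a fixed $K$, the condition $s \not\leq \rho(K)$ is equivalent to $I(K) \in \mathcal{T}_s$, and by downward closure of $\mathcal{T}_s$, this is equivalent to $I(K) \subseteq T$ for some $T \in \mathcal{M}_s$, which in turn is precisely $K \cap W_T = \emptyset$. Thus the \Cs-covers $\Kb$ of $L$ witnessing $s \notin \copti{L,\rho}$ are exactly those certifying the \Cs-coverability of $(L, \{W_T \mid T \in \mathcal{M}_s\})$.

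The main subtle step is the translation of the semiring inequality $s \leq \rho(K)$ into a finite disjunction of avoidance conditions $K \cap W_T = \emptyset$. The fullness of $\rho$ is essential here: it ensures that $\rho(K)$ is entirely determined by the finite support $I(K) \subseteq R$, converting an \emph{a priori} infinitary condition over $K$ into a combinatorial one over the finite powerset $2^R$. Without fullness, $\rho(K)$ could no longer be recovered from which preimages $\rho_*\inv(r)$ meet $K$, and the reduction would collapse.
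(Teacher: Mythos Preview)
Your proof is correct. Both your argument and the paper's rest on the same key observation---that fullness of $\rho$ makes $\rho(K)$ depend only on the finite ``profile'' $I(K)=\rho_*(K)\subseteq R$---but you package the reduction differently. The paper proves a closed-form identity
\[
\copti{L,\rho}=\dclosr\Bigl\{\textstyle\sum_{q\in Q}q \ \Big|\ Q\subseteq R,\ (L,\{\rho_*\inv(q)\mid q\in Q\})\text{ is not \Cs-coverable}\Bigr\},
\]
iterating over all $Q\subseteq R$ (so $2^{|R|}$ oracle calls, each with $|Q|$ target languages); the left-to-right inclusion is argued via an auxiliary \fratm $\tau:2^{A^*}\to 2^R$, $\tau(K)=\rho_*(K)$, and an optimal \Cs-cover for~$\tau$. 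Your approach instead decides membership of each $s\in R$ directly, by dualizing: you encode the downset $\mathcal{T}_s=\{T\mid s\not\leq\sum_{r\in T}r\}$ by its antichain of maximal elements $\mathcal{M}_s$ and turn ``$s\not\leq\rho(K)$'' into the single avoidance condition ``$K\cap W_T=\emptyset$ for some $T\in\mathcal{M}_s$''. This yields only $|R|$ oracle calls (one per $s$), at the price that each call may involve up to $|\mathcal{M}_s|$ target languages, which can be exponential in~$|R|$. Conceptually, the paper enumerates witnesses $Q$ and pushes down; you fix the target $s$ and compile all obstructions into one covering instance. Both are valid; your route avoids the auxiliary rating map $\tau$ and the appeal to optimal covers for it.
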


\noindent
{\bf Pointed optimal \imprints.} By Proposition~\ref{prop:breduc}, an algorithm computing \opti{\Cs}{A^*,\rho} from a \fratm $\rho$ yields a procedure for \Cs-covering. In Section~\ref{sec:carav}, we consider $\Cs = \tlc{\at}$. The procedure itself involves an object carrying \emph{more information}. Let \Cs be a Boolean algebra, $\eta: A^* \to N$ be a morphism into a finite monoid and $\rho: 2^{A^*} \to R$ be a \mratm. The \emph{$\eta$-pointed \Cs-optimal $\rho$-\imprint} is the following set: 
\[
  \popti{\Cs}{\eta}{\rho} = \bigl\{(t,r) \in N \times R \mid r \in \opti{\Cs}{\eta\inv(t),\rho}\bigl\}.
\]
Note that $\popti{\Cs}{\eta}{\rho}\subseteq N \times R$. It will be convenient to use the a functional notation here: given a subset $S \subseteq N \times R$ and $t \in N$, we write $S(t) \subseteq R$ for the set $S(t) = \{r \in R \mid (t,r) \in S\}$.

The set \popti{\Cs}{\eta}{\rho} encodes the \Cs-optimal $\rho$-\imprint on $A^*$, \emph{i.e.}, the subset $\opti{\Cs}{A^*,\rho}$ of $R$. Indeed, the following easy lemma is an immediate corollary of~\cite[Fact~4.7]{pzcovering2}.

\begin{restatable}{lemma}{pointgen}\label{lem:pointgen}
  Let \Cs be a Boolean algebra, $\eta: A^* \to N$ be a morphism and $\rho: 2^{A^*} \to R$ be a \mratm. Then, $\opti{\Cs}{A^*,\rho} = \bigcup_{t\in N}  \opti{\Cs}{\eta\inv(t),\rho} = \bigcup_{t\in N}  \popti{\Cs}{\eta}{\rho}(t)$.
\end{restatable}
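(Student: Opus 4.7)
I would split the claim into the two equalities and treat them separately. The second equality, $\bigcup_{t\in N}\opti{\Cs}{\eta\inv(t),\rho} = \bigcup_{t\in N}\popti{\Cs}{\eta}{\rho}(t)$, is purely a matter of notation: by definition, $\popti{\Cs}{\eta}{\rho}(t) = \{r \in R \mid (t,r) \in \popti{\Cs}{\eta}{\rho}\} = \opti{\Cs}{\eta\inv(t),\rho}$, so the union over $t \in N$ gives the claimed identity. All the actual content is in the first equality, $\opti{\Cs}{A^*,\rho} = \bigcup_{t\in N}\opti{\Cs}{\eta\inv(t),\rho}$, which I would establish by double inclusion. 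Two observations drive the argument: optimal imprints are monotone with respect to the covered language, and the monoid $N$ is finite (Remark~\ref{rem:finmono}), so one may legitimately take a union of covers indexed by $N$ and still obtain a finite collection of languages.

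For the inclusion $\bigcup_{t\in N}\opti{\Cs}{\eta\inv(t),\rho} \subseteq \opti{\Cs}{A^*,\rho}$, I would fix an optimal \Cs-cover \Kb of $A^*$ for $\rho$, which exists by Lemma~\ref{lem:opt}. For every $t \in N$, $\eta\inv(t) \subseteq A^*$, so \Kb is also a \Cs-cover of $\eta\inv(t)$. By optimality of $\opti{\Cs}{\eta\inv(t),\rho}$, this yields $\opti{\Cs}{\eta\inv(t),\rho} \subseteq \prin{\rho}{\Kb} = \opti{\Cs}{A^*,\rho}$, and taking the union over $t$ gives the desired inclusion.

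For the converse inclusion $\opti{\Cs}{A^*,\rho} \subseteq \bigcup_{t\in N}\opti{\Cs}{\eta\inv(t),\rho}$, I would use Lemma~\ref{lem:opt} once per element of $N$: for each $t \in N$, choose an optimal \Cs-cover $\Kb_t$ of $\eta\inv(t)$ for $\rho$, so that $\prin{\rho}{\Kb_t} = \opti{\Cs}{\eta\inv(t),\rho}$. Since $N$ is finite, $\Kb = \bigcup_{t \in N} \Kb_t$ is a finite set of languages in \Cs, and it covers $A^* = \bigcup_{t \in N}\eta\inv(t)$, hence \Kb is a \Cs-cover of $A^*$. A direct computation from the definition of \imprints (together with the fact that $\dclosr$ distributes over union) gives $\prin{\rho}{\Kb} = \bigcup_{t \in N}\prin{\rho}{\Kb_t} = \bigcup_{t \in N}\opti{\Cs}{\eta\inv(t),\rho}$. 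By optimality of $\opti{\Cs}{A^*,\rho}$ as a lower bound on all imprints of \Cs-covers of $A^*$, I conclude $\opti{\Cs}{A^*,\rho} \subseteq \prin{\rho}{\Kb}$, which is the desired inclusion.

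\textbf{Obstacle.} Honestly, there is no real obstacle: the lemma is a bookkeeping statement, and its only substantive ingredient beyond Lemma~\ref{lem:opt} is the identity $\prin{\rho}{\Kb_1 \cup \Kb_2} = \prin{\rho}{\Kb_1} \cup \prin{\rho}{\Kb_2}$, which follows directly from the definition of \imprints. Notice also that the hypothesis that \Cs is a Boolean algebra is not used; a lattice would already suffice, which is consistent with the fact that the lemma is stated as an immediate corollary of~\cite[Fact~4.7]{pzcovering2}.
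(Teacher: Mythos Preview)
Your proof is correct. The paper does not give its own argument for this lemma: it simply states that the result is ``an immediate corollary of~\cite[Fact~4.7]{pzcovering2}'' and moves on. Your self-contained double-inclusion argument, using Lemma~\ref{lem:opt} and the finiteness of $N$, is exactly the kind of routine verification that the cited fact encapsulates, and your observation that only the lattice hypothesis is needed (not the full Boolean algebra assumption) is accurate.
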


Our goal is to compute $\popti{\tlc{\at}}{\etaat}{\rho} \subseteq 2^A \times R$, in order to decide \tlc\at-covering. Recall that $\etaat: A^* \to 2^A$ is the canonical \at-morphism recognizing \emph{all} languages of~\at.

\section{Optimal \imprints for \tlc{\at}}
\label{sec:carav}
\newcommand{\ptlatopti}{\popti{\tlc{\at}}{\etaat}{\rho}}
\newcommand{\auxrq}{\ensuremath{\pi_{R,Q}}\xspace}

We turn to our main theorem. It characterizes the \tlc{\at}-optimal \imprints. The characterization yields a least fixpoint algorithm for computing the set $\ptlatopti$ from a \fratm $\rho$. By Proposition~\ref{prop:breduc} and Lemma~\ref{lem:pointgen}, this yields the decidability of \tlc{\at}-covering. Note that the algorithm  uses a subprocedure from~\cite{pzupol2,PlaceZ20} deciding \emph{covering} for the simpler class $\tlxs = \tlc{\dotzer}$, which is a nontrivial~question.

\begin{remark}
  Considering \tlc{\at}-covering is mandatory even if one is solely interested in separation, as mentioned in the introduction. There does not appear to be a simpler, specialized algorithm for\/ \tlc{\at}-separation. This is unsurprising, as performing a fixpoint computation often requires extra information (see, for example the case of star-free languages~\cite{pzfoj}). Finally, the framework itself seems indispensable: there is no clear way to formulate \tlc{\at}-operation without \ratms.
\end{remark}

\smallskip
\noindent
{\bf Preliminaries.} We use a construction taking as input an idempotent semiring $R$ and an arbitrary subset $Q \subseteq R$ to build a \fratm $\auxrq: 2^{Q^*} \to R$ on the alphabet $Q$.

\begin{restatable}{remark}{remnot} \label{rem:notation}
  Given $q_1,\dots,q_n \in Q$, the notation $q_1 \cdots q_n$ is ambiguous: it could either denote a word of length $n$ in $Q^*$ or the multiplication of $n$ elements in $Q$. For the sake of avoiding confusion, given an element $q \in Q$, we shall write $(q)$ when we use it as a letter: we denote by $(q_1)(q_2)\cdots (q_n) \in Q^n$ the word made of the $n$ letters $q_1,\dots,q_n \in Q$.
\end{restatable}

Since we are defining a \fratm, it suffices to specify the image of single letters $(q)\in Q$. Since $Q\subseteq R$, we simply define $\auxrq((q)) = q$ for all $q \in Q$. Our characterization uses \tlxs-optimal \auxrq-\imprints (see \tlc{\at}-operation below).

\smallskip
\noindent
{\bf Characterization.} Consider a \fratm $\rho: 2^{A^*} \to R$. We define the \tlc{\at}-saturated subsets of $2^A \times R$ for $\rho$. We shall prove that \ptlatopti is the least such set (with respect to inclusion). Let $S \subseteq 2^A \times R$. We say that $S$ is \emph{\tlc{\at}-saturated} for $\rho$ when the following conditions hold:
\begin{enumerate}
  \item {\bf Trivial elements.} $(\etaat(w),\rho(w)) \in S$ for all $w \in A^*$.
  \item {\bf Downset.} We have $\dclosr S = S$.
  \item {\bf Multiplication.} For every $(B_1,r_1),(B_2,r_2) \in S$, we have $(B_1 \cup B_2,r_1r_2) \in S$.
  \item {\bf \tlc{\at}-operation.} For all $B \in 2^A$ and $Q \subseteq S(B)$, we have $\opti{\tlxs}{Q^+,\auxrq} \subseteq S(B)$.
\end{enumerate}

\begin{restatable}{theorem}{main} \label{thm:main}
  Let $\rho: 2^{A^*} \to R$ be a \fratm,  Then, \ptlatopti is the least \tlc{\at}-saturated subset of $2^A \times R$.
\end{restatable}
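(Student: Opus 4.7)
The plan is to prove both directions of ``least $\tlc\at$-saturated'': soundness, that $\ptlatopti$ itself satisfies the four saturation conditions, and minimality, that $\ptlatopti \subseteq S$ for every $\tlc\at$-saturated $S$. For soundness, three of the four conditions are routine. The trivial element property is immediate: every $\tlc\at$-cover $\Kb$ of $\etaat\inv(\etaat(w))$ must contain some $K$ with $w \in K$, so $\rho(w) \leq \rho(K)$ lies in $\prin\rho\Kb$. Downset closure is built into the definition of imprints. The multiplication closure is a general fact of the covering framework: since $\etaat$ is a monoid morphism into $(2^A,\cup)$, we have $\etaat\inv(B_1)\cdot\etaat\inv(B_2) \subseteq \etaat\inv(B_1 \cup B_2)$, and a coordinate-wise product of optimal $\tlc\at$-covers of $\etaat\inv(B_1)$ and $\etaat\inv(B_2)$ gives a $\tlc\at$-cover of $\etaat\inv(B_1 \cup B_2)$ whose imprint contains $r_1 r_2$.

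The real content of soundness is the $\tlc\at$-operation. I would fix $B \in 2^A$, $Q \subseteq \ptlatopti(B)$, $r \in \opti{\tlxs}{Q^+, \auxrq}$, and an arbitrary $\tlc\at$-cover $\Kb$ of $\etaat\inv(B)$, and produce $K \in \Kb$ with $\rho(K) \geq r$. For each $q \in Q$, the hypothesis $q \in \ptlatopti(B)$ supplies $K_q \in \Kb$ with $q \leq \rho(K_q)$, and intersecting with $\etaat\inv(B) \in \at \subseteq \tlc\at$ lets me assume $K_q \subseteq \etaat\inv(B)$. I would then translate each $K \in \Kb$ into a $\tlxs$-language $K' \subseteq Q^+$, interpreting each letter $(q) \in Q$ as the assertion ``the current block belongs to $K_q$'' and simulating the alphabet-testing $\tlc\at$-modalities on $A$ by the weaker modalities of $\tlxs$ on $Q$. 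This yields a $\tlxs$-cover $\Kb'$ of $Q^+$ whose $\auxrq$-imprint is contained in $\prin\rho\Kb$, so applying $r \in \opti{\tlxs}{Q^+, \auxrq} \subseteq \prin\auxrq{\Kb'}$ provides the required $K \in \Kb$.

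For minimality, I would show that, for each $\tlc\at$-saturated $S$ and each $B \in 2^A$, there is a $\tlc\at$-cover $\Kb_B$ of $\etaat\inv(B)$ with $\prin\rho{\Kb_B} \subseteq S(B)$; optimality then yields $\ptlatopti(B) \subseteq \prin\rho{\Kb_B} \subseteq S(B)$. The cover $\Kb_B$ is built by inverting the soundness translation: the saturation condition guarantees $\opti{\tlxs}{Q^+, \auxrq} \subseteq S(B)$ for suitable $Q \subseteq S(B)$, and the known decidability of $\tlxs$-covering from~\cite{pzupol2,PlaceZ20} provides an explicit optimal $\tlxs$-cover of $Q^+$ which is then pulled back, via a factorization encoding of words in $\etaat\inv(B)$ as blocks indexed by $Q$, to a $\tlc\at$-cover of $\etaat\inv(B)$ whose imprint lies inside $S(B)$. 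The trivial-element, downset and multiplication conditions on $S$ are what allow one to align block products with imprints at each stage.

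The main obstacle will be designing the two-way translation between $\tlc\at$-languages on $A$ and $\tlxs$-languages on $Q$ so that imprints match up. In the soundness direction, only one side of the correspondence is needed, and only an upper bound on the imprint is required; in minimality the inverse translation must be set up compatibly so that the pulled-back $\tlc\at$-cover lands inside $S(B)$ rather than merely approximating it. Carrying this out cleanly requires a factorization calculus describing how words in $\etaat\inv(B)$ split into $Q$-indexed blocks, together with an induction on $\tlxs$-formulas showing that the ``next'' and ``previously'' modalities of $\tlxs$ over $Q$ are faithfully rendered by alphabet-testable modalities of $\tlc\at$ over $A$ on such factorized words.
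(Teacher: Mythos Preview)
Your high-level decomposition into soundness and minimality matches the paper, and the routine parts of soundness are fine. But both the $\tlc\at$-operation clause and the minimality direction have real gaps.

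For soundness of the $\tlc\at$-operation, your plan is a syntactic translation of each $K \in \Kb$ into a $\tlxs$-language on $Q$, via ``the current block belongs to $K_q$''. The problem is that words in $\etaat\inv(B)$ carry no block structure relative to $Q$: there is no factorization map $\etaat\inv(B) \to Q^+$ along which such a translation can be defined, so the phrase has no referent. The paper avoids this entirely by going algebraic: it replaces the arbitrary cover $\Kb$ by a $\tlc\at$-morphism $\alpha:A^*\to M$ (via Lemma~\ref{lem:optdmorph}), chooses a \emph{single} witness $v_q\in\etaat\inv(B)$ for each $q\in Q$, defines $\beta:Q^*\to M$ by $\beta((q))=\alpha(v_q)$, and proves $\beta$ is a $\tlxs$-morphism by plugging the Th\'erien--Wilke equation $(esete)^\omega=(esete)^\omega ete(esete)^\omega$ for $e,s,t\in\beta(Q^+)$ into the $\tlc\at$-characterization~\eqref{eq:tlcar} of $\alpha$ (all such $e,s,t$ have $\at$-equivalent preimages, since every $v_q$ has alphabet exactly $B$). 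No factorization of $A^*$-words is ever invoked.

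For minimality the gap is more serious. Your sketch is a single step: take $Q\subseteq S(B)$, an optimal $\tlxs$-cover of $Q^+$, and ``pull it back'' to a $\tlc\at$-cover of $\etaat\inv(B)$. But again there is no map to pull back along, and crucially the other three saturation conditions are never used. The paper's argument is an \emph{induction}, and the induction is not on $B$ but on an auxiliary parameter: it introduces ``good triples'' $(\frA,\beta,\tau)$ and inducts on $|\{\beta(w)\mid w\in\frA^+\}|$. At each step one picks a \emph{minimal} $D$ in $\{\beta(\ell)\mid\ell\in\frA\}$, uses the $\tlc\at$-operation exactly once to produce a \tlc\cbeta-partition of $\frAd^*$ (Step~1), then packages each factor in $\frAnd\frAd^*$ as a letter of a new, strictly smaller good triple and recurses (Step~2); multiplication closure glues the pieces (Step~3). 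The minimality of $D$ is what makes Fact~\ref{fct:constrain} hold and hence enables the $\tlxs$-to-$\tlc\cbeta$ translation of Lemma~\ref{lem:tlatbas}. None of this inductive scaffolding is present in your sketch, and a single application of the $\tlc\at$-operation cannot produce the required cover.
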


Theorem~\ref{thm:main} provides an algorithm computing the set \ptlatopti from an input \fratm $\rho: 2^{A^*} \to R$.  Indeed, the least \tlc{\at}-saturated subset of $2^A \times R$ for $\rho$ can be computed using a least fixpoint procedure. It starts from the set of trivial elements and saturates it with the three operations in the definition. Clearly, downset and multiplication can be implemented. For \tlc{\at}-operation, one needs to compute the set $\opti{\tlxs}{Q^+,\auxrq}$ from the \fratm \auxrq. In view of Proposition~\ref{prop:covreduc}, this boils down to deciding \tlxs-covering and it is shown in~\cite{pzupol2,PlaceZ20} that this problem is decidable.

In view of Lemma~\ref{lem:pointgen}, this yields a procedure that computes $\opti{\tlc{\at}}{A^*,\rho}\!\subseteq\! R$ from an input \fratm $\rho: 2^{A^*} \to R$. Thus, \tlc{\at} has decidable covering by  Proposition~\ref{prop:breduc}.

\begin{restatable}{corollary}{cmain} \label{cor:main}
  Separation and covering are decidable for the class $\tlhp{2}{\stzer} = \tlc{\at}$.
\end{restatable}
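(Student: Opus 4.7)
The plan is to derive the corollary directly from Theorem~\ref{thm:main} combined with the general framework (Propositions~\ref{prop:breduc} and the known covering algorithm for $\tlxs$). First I would observe that Theorem~\ref{thm:main} describes $\ptlatopti$ as the least fixpoint of the $\tlc{\at}$-saturation operator acting on subsets of the \emph{finite} set $2^A \times R$. Hence, starting from the set of trivial elements and repeatedly applying downset, multiplication, and the $\tlc{\at}$-operation yields an increasing sequence in the finite lattice $2^{2^A \times R}$, which must stabilize in finitely many steps on $\ptlatopti$. I would then use Lemma~\ref{lem:pointgen} to extract $\opti{\tlc{\at}}{A^*,\rho}$ as $\bigcup_{B \in 2^A} \ptlatopti(B)$, and finally invoke Proposition~\ref{prop:breduc} to reduce $\tlc{\at}$-covering to testing whether this set is disjoint from a given $F \subseteq R$.

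Next I would verify that each saturation rule is effective. The \emph{trivial elements} rule produces finitely many pairs $(\etaat(w), \rho(w))$: since $\rho$ is a finitary rating map, the morphism $\rho_*$ ranges over the finite monoid $R$, so only finitely many distinct pairs arise and they can be enumerated by running $\rho_*$ and $\etaat$ over representatives of each equivalence class. The \emph{downset} and \emph{multiplication} rules are manifestly computable in the finite semiring $R$. The only nonobvious rule is the \emph{$\tlc{\at}$-operation}: for each pair $(B, Q)$ with $Q \subseteq S(B)$, we must enumerate $\opti{\tlxs}{Q^+, \auxrq} \subseteq R$. Here $\auxrq \colon 2^{Q^*} \to R$ is a finitary rating map computable from $Q$ and $R$, and by Proposition~\ref{prop:covreduc} applied to the class $\tlxs$, this set is computable \emph{provided} that $\tlxs$-covering is decidable.

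The main obstacle is therefore the call to a $\tlxs$-covering subroutine in the $\tlc{\at}$-operation step, and I would invoke the fact, established in~\cite{pzupol2,PlaceZ20}, that $\tlxs$-covering is decidable. With this ingredient in hand, the whole fixpoint procedure terminates and computes $\ptlatopti$ exactly. Once $\tlc{\at}$-covering is decidable, $\tlc{\at}$-separation is an immediate special case: $L_1$ is $\tlc{\at}$-separable from $L_2$ iff $(L_1, \{L_2\})$ is $\tlc{\at}$-coverable, since $\tlc{\at}$ is a Boolean algebra (in particular closed under union). Finally, by Proposition~\ref{prop:leveltwo} we have $\tlhp{2}{\stzer} = \tlc{\at}$, so the same decidability statements transfer to $\tlhp{2}{\stzer}$. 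The only genuine technical risk is in the termination argument for the fixpoint: since $2^A \times R$ is finite, monotonicity of all four rules guarantees stabilization after at most $|2^A \times R|$ iterations, so this is a routine check rather than a substantial difficulty.
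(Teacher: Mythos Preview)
Your proposal is correct and follows essentially the same approach as the paper: compute \ptlatopti via the least fixpoint procedure guaranteed by Theorem~\ref{thm:main}, use Proposition~\ref{prop:covreduc} together with the decidability of \tlxs-covering (from~\cite{pzupol2,PlaceZ20}) to make the \tlc{\at}-operation effective, then extract $\opti{\tlc{\at}}{A^*,\rho}$ via Lemma~\ref{lem:pointgen} and conclude with Proposition~\ref{prop:breduc}. Your write-up is somewhat more explicit about termination and the finiteness of the trivial-element set, but the argument is the same.
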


Corollary~\ref{cor:main} also implies the decidability of separation and covering for the class \tlc{\lt}. It is shown in~\cite[Theorem 8.1]{between} that the classes \tlc{\at} and \tlc{\lt} are connected by an operator called ``enrichment'' or ``wreath product''. It combines \emph{two} classes into a larger one: $\tlc{\lt} = \tlc{\at} \circ \su$. Here, \su stands for the class of ``suffix languages'': it consists of all Boolean combinations of languages $A^*w$ with $w \in A^*$.

\begin{restatable}{remark}{algebrarghhhh}
  The presentation in~\cite{between} differs from the one above. The authors consider the algebraic varieties characterizing \tlc{\at} and \su rather than the classes themselves. Their theorem states that the variety characterizing \tlc{\lt} is exactly the wreath product of the variety $\textbf{M}_e\textbf{DA}$ (which characterizes \tlc{\at}) with the variety $\textbf{D}$ (which characterizes~\su).
\end{restatable}

It is known that the operator $\Cs \mapsto \Cs \circ \su$ preserves the decidability of both separation and covering~\cite{PlaceZ20}. Therefore, we obtain the following additional corollary.

\begin{restatable}{corollary}{cmainlt} \label{cor:mainlt}
  Separation and covering are decidable for the class $\tlhp{2}{\dotzer} = \tlc{\lt}$.
\end{restatable}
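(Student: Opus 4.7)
The plan is essentially to bootstrap from Corollary~\ref{cor:main} via the wreath product connection between $\tlc{\at}$ and $\tlc{\lt}$ that the paper has just recalled. First, I would invoke Proposition~\ref{prop:leveltwo} to justify replacing $\tlhp{2}{\dotzer}$ by $\tlc{\lt}$, so that it suffices to show that $\tlc{\lt}$ has decidable separation and covering.

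Next, I would cite the decomposition $\tlc{\lt} = \tlc{\at} \circ \su$ established by Krebs, Lodaya, Pandya and Straubing in~\cite{between} (Theorem~8.1 in their paper, stated there in terms of the algebraic varieties $\mathbf{M}_e\mathbf{DA}$ and $\mathbf{D}$, but directly translating to the language-class statement via the standard correspondence between \varis and pseudovarieties). This is the key structural input and the only nontrivial ingredient besides Corollary~\ref{cor:main}; importantly, I am not reproving it here, but using it as a black box.

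Finally, I would invoke the transfer theorem of~\cite{PlaceZ20}: for every class $\Cs$, if $\Cs$ has decidable covering (resp.\ separation), then so does $\Cs \circ \su$. Applying this with $\Cs = \tlc{\at}$ — which has decidable separation and covering by Corollary~\ref{cor:main} — yields that $\tlc{\at} \circ \su = \tlc{\lt}$ has decidable separation and covering. Combined with the first step, this completes the proof.

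The main (and really the only) obstacle is verifying that the external results can be composed without gap: namely that the wreath-product statement of~\cite{between} is indeed a statement about the class $\tlc{\lt}$ in the present paper's notation (and not just about a corresponding variety), and that the transfer theorem of~\cite{PlaceZ20} applies in the generality needed (\emph{i.e.}, to the \vari $\tlc{\at}$, which it does since covering is preserved uniformly in the input class). Both points are routine checks of conventions, so the proof itself will be a short citation-based argument of two or three lines.
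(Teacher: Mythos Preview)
Your proposal is correct and matches the paper's own argument essentially verbatim: the paper derives the corollary by citing the decomposition $\tlc{\lt} = \tlc{\at} \circ \su$ from~\cite{between} and the transfer theorem of~\cite{PlaceZ20} that $\Cs \mapsto \Cs \circ \su$ preserves decidability of separation and covering, then applies Corollary~\ref{cor:main}.
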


Finally, Corollary~\ref{cor:tlcar} yields that level three in both hierarchies has decidable membership.

\begin{restatable}{corollary}{cmemb} \label{cor:level3}
	Membership is decidable for the classes \tlhp{3}{\stzer} and \tlhp{3}{\dotzer}.
\end{restatable}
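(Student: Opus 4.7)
The plan is to obtain Corollary~\ref{cor:level3} as a direct composition of Corollary~\ref{cor:tlcar} with the separation results just established (Corollary~\ref{cor:main} and Corollary~\ref{cor:mainlt}). By definition of a \nesth, we have $\tlhp{3}{\stzer} = \tlc{\tlhp{2}{\stzer}}$ and $\tlhp{3}{\dotzer} = \tlc{\tlhp{2}{\dotzer}}$. Thus, to apply Corollary~\ref{cor:tlcar} in each case, we need two ingredients for the basis $\Cs \in \{\tlhp{2}{\stzer},\tlhp{2}{\dotzer}\}$: first, that $\Cs$ is a \vari, and second, that $\Cs$-separation is decidable.

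The first ingredient follows from Proposition~\ref{prop:tlvar}. Indeed, both $\stzer$ and $\dotzer$ are \varis, and Proposition~\ref{prop:tlvar} states that the operator $\Cs \mapsto \tlc{\Cs}$ preserves the property of being a \vari. Applying this operator twice, we obtain that $\tlhp{2}{\stzer}$ and $\tlhp{2}{\dotzer}$ are both \varis.

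The second ingredient is exactly what the previous two corollaries provide: Corollary~\ref{cor:main} yields the decidability of separation for $\tlhp{2}{\stzer}$, and Corollary~\ref{cor:mainlt} does the same for $\tlhp{2}{\dotzer}$.

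With both ingredients in hand, Corollary~\ref{cor:tlcar} applies and immediately yields the decidability of $\tlc{\tlhp{2}{\stzer}}$-membership and $\tlc{\tlhp{2}{\dotzer}}$-membership, which is the content of Corollary~\ref{cor:level3}. There is no genuine obstacle here: all the real work went into establishing Corollary~\ref{cor:main} via Theorem~\ref{thm:main}, and into the transfer theorem underlying Corollary~\ref{cor:tlcar}; the present statement is merely the combination of these results applied one level higher in the \nesth.
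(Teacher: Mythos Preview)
Your proof is correct and matches the paper's approach exactly: the paper simply states that Corollary~\ref{cor:tlcar} yields decidable membership for level three, relying on Corollaries~\ref{cor:main} and~\ref{cor:mainlt} for decidable separation at level two. Your write-up is in fact more thorough, as you explicitly verify the \vari\ hypothesis via Proposition~\ref{prop:tlvar}, which the paper leaves implicit.
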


\begin{proof}[Proof of Theorem~\ref{thm:main}]
	We fix a \fratm $\rho: 2^{A^*} \to R$. There are two directions corresponding to soundness and completeness in the fixpoint procedure computing \ptlatopti. 

	\begin{restatable}{proposition}{sound} \label{prop:sound}
		$\ptlatopti$ is \tlc{\at}-saturated for $\rho$.
	\end{restatable}

	\vspace{-0.2cm}

	\begin{restatable}{proposition}{complete} \label{prop:comp}
		Let $S \subseteq 2^A \times R$ be \tlc{\at}-saturated for $\rho$. For all $B \subseteq A$, there exists a \tlc{\at}-cover $\Kb_B$ of $\etaat\inv(B)$ such that $\prin{\rho}{\Kb_B} \subseteq S(B)$.
	\end{restatable}

	We first complete  the main proof. By Proposition~\ref{prop:sound}, \ptlatopti is \tlc{\at}-saturated. We need to prove that it the least such set. Let $S\subseteq 2^A \times R$ which is \tlc{\at}-saturated for $\rho$. We prove that $\ptlatopti \subseteq S$. Clearly, it suffices to show that $\ptlatopti(B) \subseteq S(B)$ for all $B \in 2^A$. Proposition~\ref{prop:comp} yields a \tlc{\at}-cover $\Kb_B$ of $\etaat\inv(B)$ such that $\prin{\rho}{\Kb_B}\subseteq S(B)$. This completes the proof as $\ptlatopti(B) = \opti{\tlc{\at}}{\etaat\inv(B),\rho}$ and $\opti{\tlc{\at}}{\etaat\inv(B),\rho} \subseteq \prin{\rho}{\Kb_B}$ by definition.

  We now prove Proposition~\ref{prop:comp}, which is the most interesting. Proposition~\ref{prop:sound} is easier and proved in Appendix~\ref{app:carav}. We fix a \tlc{\at}-saturated set  for $\rho$, say $S \subseteq 2^A \times R$.

  We use induction to prove a more general statement involving auxiliary alphabets. A \emph{good triple} is a tuple $(\frA,\beta,\tau)$ where \frA is an alphabet, $\beta: \frA^* \to 2^A$ is a morphism such that $\beta\inv(\emptyset)=\{\veps\}$ and $\tau: 2^{\frA^*} \to R$ is a \fratm such that for every letter $\ell \in \frA$, we have $(\beta(\ell),\tau(\ell)) \in S$. Moreover, we write \cbeta for the class consisting of all languages over \frA recognized by~$\beta$. A \tlc{\cbeta}-\emph{partition} is a \tlc{\cbeta}-cover that is a partition. Our goal is to build a specific \tlc{\cbeta}-partition of $\frA^*$, which we do in the following lemma.

  \begin{restatable}{lemma}{compmain} \label{lem:compmain}
    Let $(\frA,\beta,\tau)$ be a good triple. There is a \tlc{\cbeta}-partition \Kb of $\frA^*$ such that,
    \begin{equation} \label{eq:compmain}
    	\begin{array}{c}
    		\text{for all $K \in \Kb$, there exists $B \in 2^A$ such that} \\
    		\text{$K \subseteq \beta\inv(B)$ and $(B,\tau(K)) \in S$}.
    	\end{array}
    \end{equation}
  \end{restatable}

  We first use Lemma~\ref{lem:compmain} to complete the main proof. Since $S$ is \tlc{\at}-saturated for $\rho$, we have $(\etaat(a),\rho(a)) \in S$ for all $a \in A$ (this is a trivial element). Thus, $(A,\etaat,\rho)$ is a good triple. Also, $\Cs_{\etaat}=\at$ by definition of \etaat. Thus, Lemma~\ref{lem:compmain} yields a \tlc{\at}-partition \Kb of $A^*$ such that for all $K\in\Kb$, there exists $B \in 2^A$ such that $K \subseteq \etaat\inv(B)$ and $(B,\tau(K)) \in S$. For each $B \in 2^A$, we let $\Kb_B = \{K \in \Kb \mid K \subseteq \etaat\inv(B)\}$. It follows from our hypotheses on \Kb~that $\Kb_B$ is a \tlc{\at}-partition of $\etaat\inv(B)$ such that $\tau(K) \in S(B)$ for all $K \in \Kb_B$. By closure under downset, we get $\prin{\rho}{\Kb_B}\subseteq S(B)$, completing the proof.

  \smallskip

  We now prove  Lemma~\ref{lem:compmain}. We fix a good triple $(\frA,\beta,\tau)$ and use induction on the size of the set $\{\beta(w) \mid w \in \frA^+\} \subseteq 2^A$ to build a \tlc{\cbeta}-partition $\Kb$ of $\frA^*$ satisfying~\eqref{eq:compmain}.

  Assume first that $\frA = \emptyset$. Since $\frA^* = \{\veps\}$, it suffices to define $\Kb = \{\{\veps\}\}$. This is clearly a \tlc{\cbeta}-partition of $\{\veps\}$ (the language $\{\veps\}$ is defined by the \tla{\cbeta} formula ``$\neg \finally{(\neg max)}$''). Moreover, $\{\veps\} \subseteq \beta\inv(\emptyset)$ and $(\emptyset,\tau(\veps)) = (\emptyset,1_R) \in S$ since $S$ is \tlc{\at}-saturated for~$\rho$ (this is the trivial element $(\etaat(\veps),\rho(\veps)) \in S$).  Hence, $\Kb$ satisfies~\eqref{eq:compmain}.

  We assume from now on that $\frA \neq \emptyset$. Consider the sets $D \in 2^A$ such that $D = \beta(\ell)$ for a letter $\ell \in \frA$.  We fix such a set $D \in 2^A$ which is \emph{minimal} with respect to inclusion: there is no  $\ell \in \frA$ such that $\beta(\ell) \subsetneq D$. Let $\frAd = \{\ell \in \frA \mid \beta(\ell) = D\}$ and $\frAnd =  \{\ell \in \frA \mid \beta(\ell) \neq D\}$. Since $\frA = \frAd^{} \cup \frAnd$, we can split each word in $\frA^*$ into two parts (each possibly empty): a prefix in $\frAd^*$ and a suffix in $(\frAnd\frAd^*)^*$ (starting at the first letter in \frAnd). We partition $\frAd^*$ and $(\frAnd\frAd^*)^*$ independently and combine the partitions to build that of $\frA^*$.

  \smallskip
  \noindent
  {\bf Step 1: partitioning \boldmath$\fakebold{\frA}^{\boldsymbol*}_{\boldsymbol D}$.} We construct a \tlc{\at}-partition $\Ub$ of $\frAd^*$ satisfying~\eqref{eq:compmain}.  We rely on \tlc{\at}-operation (we cannot use induction as it may happen that $\frAd= \frA$). Let $Q=S(D)$. We use $Q$ as an alphabet. Let \Hb be an optimal \tlxs-cover of $Q^+$ for the \fratm $\auxrq: 2^{Q^*} \to R$: we have $\prin{\auxrq}{\Hb}=\opti{\tlxs}{Q^+,\auxrq}$. Since \tlxs is a Boolean algebra and $Q^+\in\tlxs$, we may assume without loss of generality that \Hb is a \emph{partition} of~$Q^+$. Moreover, since $S$ is \tlc{\at}-saturated, \tlc{\at}-operation yields $\prin{\auxrq}{\Hb}\subseteq S(D)$. We use \Hb to build the desired \tlc{\cbeta}-partition $\Ub$ of $\frAd^*$. To this aim, we define a morphism $\delta: \frAd^* \to Q^*$. Let $\ell \in \frAd$. By definition of \frAd, we have $\beta(\ell) = D$. Since the triple~$(\frA,\beta,\tau)$ is good, this yields~$(D,\tau(\ell)) \in S$. Thus, $\tau(\ell) \in S(D) = Q$ and we may define $\delta(\ell) = (\tau(\ell)) \in Q$ (here, $(\tau(\ell))$ is viewed as a letter of the alphabet $Q$). The fact is proved in Appendix~\ref{app:carav}.

  \begin{restatable}{fact}{fctq} \label{fct:fctq}
    Let $H \subseteq Q^*$. If $H \in \tlxs$, then $\delta\inv(H) \in \tlc{\cbeta}$.
  \end{restatable}

  Let $\Ub = \{\delta\inv(H) \mid H \in \Hb\} \cup \{\{\veps\}\}$. Since \Hb is a \tlxs-partition of $Q^+$ and $\delta: \frAd^* \to Q^*$ is a morphism, Fact~\ref{fct:fctq} implies that $\Ub$ is $\tlc{\cbeta}$-partition of $\frAd^*$. We prove that it satisfies~\eqref{eq:compmain}. Let $U \in \Ub$. If $U = \{\veps\}$, then $\{\veps\} \subseteq \beta\inv(\emptyset)$ and $(\emptyset,\tau(\veps)) = (\emptyset,1_R) \in S$, as desired. Otherwise, $U = \delta\inv(H)$ for some $H \in \Hb$. Since $H \subseteq Q^+$, this yields $U \subseteq \frAd^+$ and since $\beta(\ell) = D$ for all $\ell \in \frAd$, this implies that $U \subseteq \beta\inv(D)$. It now suffices to prove that $(D,\tau(U)) \in S$. We show that $\tau(U) \leq \auxrq(H)$. By definition, this implies that $\tau(U) \in \prin{\auxrq}{\Hb}$ and since we already know that $\prin{\auxrq}{\Hb}\subseteq S(D)$, this yields $(D,\tau(U)) \in S$ as desired. Since $\tau$ is \full, it suffices to prove that for every $w \in U$, we have $\tau(w) \leq \auxrq(H)$. Since $U = \delta\inv(H)$, we have $\delta(w) \in H$ which yields $\auxrq(\delta(w)) \leq \auxrq(H)$. Moreover, it is immediate from the definition of $\delta$ that $\auxrq(\delta(w)) = \tau(w)$. Hence, we obtain $\tau(w) \leq \auxrq(H)$, as desired.

  \smallskip
  \noindent
  {\bf Step 2: partitioning {$\boldsymbol(\fakebold{\frAnd\frA}^{\boldsymbol*}_{\boldsymbol D}\boldsymbol)^{\boldsymbol*}$}.} We reuse the \tlc{\cbeta}-partition $\Ub$ of $\frAd^*$. We build a new good triple $(\frB,\gamma,\zeta)$ and abstract the words in $(\frAnd\frAd^*)^*$ by words in $\frB^*$. Let $\frB = \frAnd \times \Ub$.  We define a morphism $\gamma: \frB^* \to 2^A$ and a \fratm $\zeta: 2^{\frB^*} \to R$. Consider a letter $(\ell,U) \in \frAnd \times \Ub = \frB$. Since $\Ub$ satisfies~\eqref{eq:compmain} by definition, there exists $B_U \in 2^A$ such that $U \subseteq\beta\inv(B_U)$ and $(B_U,\tau(U)) \in S$. We define $\gamma((\ell,U))=\beta(\ell)\cup B_U$ and $\zeta((\ell,U)) = \tau(\ell)\tau(U)$. Observe that since $(\beta(\ell),\tau(\ell)) \in S$ (recall that $(\frA,\beta,\tau)$ is a good triple) and $(B_U,\tau(U)) \in S$, closure under multiplication for $S$ yields $(\gamma((\ell,U)),\zeta((\ell,U))) \in S$. Hence, $(\frB,\gamma,\zeta)$ is a good triple. The next fact can be verified using the hypothesis that $D$ is a \emph{minimal} set such that $D \in \{\beta(\ell) \mid \ell \in \frA\}$ (the details are available in Appendix~\ref{app:carav}).

  \begin{restatable}{fact}{inducpar} \label{fct:inducpar}
    We have $\{\gamma(w)\! \mid\! w \in \frB^+\}\!\subseteq\! \{\beta(w)\! \mid\! w \in \frA^+\} \setminus \{D\}$.
  \end{restatable}

  Fact~\ref{fct:inducpar} yields $|\{\gamma(w) \mid w \in \frB^+\}| < |\{\beta(w) \mid w \in \frA^+\}|$. Therefore, induction in the proof of Lemma~\ref{lem:compmain} yields a \tlc{\cgamma}-partition $\Kb_\frB$ of $\frB^*$ satisfying~\eqref{eq:compmain}.

  We use $\Kb_\frB$ to build a \tlc{\cbeta}-partition \Vb of $(\frAnd\frAd^*)^*$ satisfying~\eqref{eq:compmain}. We first define a map $\mu: (\frAnd\frAd^*)^*\to\frB^*$. Let $w\in(\frAnd\frAd^*)^*$. We have $w = \ell_1 u_1 \cdots \ell_n u_n$ with $\ell_1,\dots,\ell_n \in \frAnd$ and $u_1,\dots,u_n \in \frAd^*$. Recall that $\Ub$ is a \emph{partition} of $\frAd^*$. Thus, for each $h \leq n$, there exists a \emph{unique} $U_h \in \Ub$ such that $u_h \in U_h$. We let $\mu(w)\! =\! (\ell_1,U_1) \cdots (\ell_n,U_n)\! \in\! \frB^*$. The fact can be verified. 

  \begin{restatable}{fact}{preserve} \label{fct:preserve}
    For every $w \in (\frAnd\frAd^*)^*$, we have $\beta(w) = \gamma(\mu(w))$ and $\tau(w) \leq \zeta(\mu(w))$.
  \end{restatable}

  We say that a position $i \in \pos{w}$ is \emph{distinguished} if $\wpos{w}{i} \in \frAnd \cup \{min,max\}$. There exists a natural bijection between the distinguished positions of $w$ and \emph{all} positions of $\mu(w)$. We abuse terminology and also write $\mu$ for the bijection: we have $\mu(0) = 0$, $\mu(|w|+1) = |w|+1$ and if $i \in \pos{w}$ is the distinguished position of $w$ corresponding to the highlighted letter $\ell_h \in \frAnd$, then $\mu(i) =h$. The following fact can be verified using the hypothesis that $D$ is a \emph{minimal} set such that $D \in \{\beta(\ell) \mid \ell \in \frA\}$ (we present a proof in Appendix~\ref{app:carav}).

  \begin{restatable}{fact}{constrain} \label{fct:constrain}
    Let $w \in (\frAnd\frAd^*)^*$ and $i,j \in \pos{w}$ such that $i < j$. Then, there exists no distinguished position $k \in \pos{w}$ such that $i < k < j$ if and only if $\beta(\infix{w}{i}{j}) \in \{D,\emptyset\}$.
  \end{restatable}

  For every letter $(\ell,U) \in \frB = \frAnd \times \Ub$, Fact~\ref{fct:constrain} can be used to construct a formula $\psi_{\ell,U} \in \tlc{\cbeta}$ such that for every $w \in (\frAnd\frAd^*)^*$ and every distinguished position $i \in \pos{w}$, we have $w,i \models \psi_{\ell,U} \Leftrightarrow \mu(i) \text{ has label } (\ell,U)$. Indeed, $\psi_{\ell,U}$ has to check that $i$ has label $\ell$ (this is achieved by the \tla{\cbeta} formula ``$\ell$'') and that $\infix{w}{i}{j} \in U$ for the next distinguished position $j$ of $w$. By hypothesis on $\Ub$, $U$ is defined by a \tla{\cbeta} formula $\varphi_U$. In view of Fact~\ref{fct:constrain}, we can modify $\varphi_U$ in order to constrain its evaluation so that it stays in the infix $\infix{w}{i}{j} \in \frAd^*$. Roughly speaking, this boils down to modifying all occurrences of $\textup{F}_L$ and $\textup{P}_L$ for $L \in \cbeta$: one replaces $L$ by $L\cap\beta\inv(\{D,\emptyset\}) \in \cbeta$. We detail this construction in Appendix~\ref{app:carav}, where it is used to prove the following lemma.

  \begin{restatable}{lemma}{tlatabs} \label{lem:tlatbas}
    For all $K \in \tlc{\cgamma}$, $\mu\inv(K) \in \tlc{\cbeta}$.
  \end{restatable}

  We let $\Vb = \{\mu\inv(K) \mid K \in \Kb_\frB\}$. Since $\Kb_\frB$ is a \tlc{\cgamma}-partition of $\frB^*$,  \Vb is a \tlc{\cbeta}-partition of $(\frAnd\frAd^*)^*$ by Lemma~\ref{lem:tlatbas}. Let us verify that it satisfies~\eqref{eq:compmain}. Let $V \in \Vb$. By definition, $V = \mu\inv(K)$ for some $K \in \Kb_\frB$. Hence, since $\Kb_\frB$ satisfies~\eqref{eq:compmain}, we get $B \in 2^A$ such that $K \subseteq \gamma\inv(B)$ and $(B,\zeta(K)) \in S$. Since $K \subseteq \gamma\inv(B)$, Fact~\ref{fct:preserve} implies that $V = \mu\inv(K) \subseteq \beta\inv(B)$. Moreover, Fact~\ref{lem:tlatbas} also implies that $\tau(V) = \tau(\mu\inv(K)) \leq \zeta(K)$. Altogether, we obtain  $(B,\tau(V)) \in S$ by closure under downset. Hence, \eqref{eq:compmain} holds.

  \smallskip
  \noindent
  {\bf Step 3: partitioning $\fakebold{\frA}^{\boldsymbol*}$.} We have two \tlc{\cbeta}-partitions $\Ub$ and \Vb of $\frAd^*$ and $(\frAnd \frAd^*)^*$ respectively satisfying~\eqref{eq:compmain}. We combine them using a standard concatenation principle for unary temporal logic (see Appendix~\ref{app:carav} for the proof).

  \begin{restatable}{lemma}{theutlconc} \label{lem:theutlconcat}
    Let $U,V\! \in\! \tlc{\cbeta}$ such that $U\! \subseteq\! \frAd^*$ and $V\! \subseteq\! (\frAnd \frAd^*)^*$. Then, $UV\! \in \!\tlc{\cbeta}$.
  \end{restatable}

  We now build the \tlc{\cbeta}-partition \Kb of~$\frA^*$ described in Lemma~\ref{lem:compmain}. We let $\Kb = \{UV \mid U \in \Ub \text{ and } V \in \Vb\}$. Since $\Ub$ and \Vb are partitions of $\frAd^*$ and $(\frAnd \frAd^*)^*$ and $\frAd \cap \frAnd = \emptyset$, one can verify that \Kb is a partition of $\frA^* = \frAd^*(\frAnd \frAd^*)^*$. Thus, it is a \tlc{\cbeta}-partition of $\frA^*$ by Lemma~\ref{lem:theutlconcat}.

  It remains to prove that \Kb satisfies~\eqref{eq:compmain}. Let $K \in \Kb$. We have $K = UV$ for $U \in \Ub$ and $V \in \Vb$.  Since $\Ub,\Vb$ satisfy~\eqref{eq:compmain}, we get $B,C \in 2^A$ such that $U \subseteq \beta\inv(B)$, $V \subseteq \beta\inv(C)$, $(B,\tau(U)) \in S$ and $(C,\tau(V)) \in S$. Consequently, we obtain $K = UV \subseteq \beta\inv(B \cup C)$ and since $S$ is closed under multiplication, we get $(B \cup C,\tau(UV)) \in S$ as desired.
\end{proof}

\section{Conclusion}
\label{sec:conc}
We defined \nesths using the ``unary temporal logic'' operator $\Cs \mapsto \tlc{\Cs}$. We focused on bases of the form $\Cs=\Gs$ or $\Gs^+$, where \Gs is a group \vari. In this context, we proved that the \nesths of basis~\Cs are \emph{strict} and \emph{strictly intertwined}, and we compared them with the associated concatenation hierarchies. Additionaly, we proved that if \Gs has decidable \emph{separation}, then membership is decidable for levels \emph{one} and \emph{two} in both hierarchies. Our main theorem provides a stronger result for the particular bases $\stzer$ and $\dotzer=\stzer^+$. Specifically, it implies that level \emph{two} in both hierarchies has decidable \emph{covering}, which in turn ensures that level \emph{three} has decidable membership.

Several follow-up questions arise. A natural one is whether our results can be extended to higher levels in \nesths. Addressing this would require studying the operator $\Cs \mapsto \tlc{\Cs}$ in a broader context, as our main theorem heavily relies on the specific properties of the class \at. Another question is to generalize the link existing between the hierarchies of bases \stzer and \dotzer. As seen in Section~\ref{sec:carav}, we have $\tlhp{2}{\dotzer} = \tlhp{2}{\stzer} \circ \su$ and we used this connection to get the decidability of covering for \tlhp{2}{\dotzer}. It turns out that this can be generalized: we have $\tlhn{\dotzer}=\tlhn{\stzer} \circ \su$ for all $n \geq 1$. We leave the proof for further work.

\bibliographystyle{abbrv}
\bibliography{main}

\newpage
\appendices

\section{Additional preliminaries}
\label{app:appdefs}
We present additional terminology used in proof arguments throughout the appendix. First, we introduce ordered monoids, which are used to handle classes that may not be closed under complement (we need them to deal with the classes \pol{\Cs}). Then, we introduce the notion of \emph{\Cs-morphism} for a \pvari \Cs, which is a key mathematical tool. We also prove Lemma~\ref{lem:nsepconcat}, which is a simple application of these notions.

\subsection{Ordered monoids}

An \emph{ordered monoid} is a pair $(M,\leq)$ where $M$ is a monoid and $\leq$ is a partial order on $M$, compatible with multiplication: for all $s,s',t,t' \in M$, if $s \leq t$ and $s' \leq t'$, then $ss'\leq tt'$. A subset $F \subseteq M$ is an \emph{upper set} (for~$\leq$) if it is upward closed for $\leq$: for all $s,t\in M$, if $s \in F$ and $s \leq t$, then $t \in F$.   By convention, we view every \emph{unordered} monoid $M$ as the ordered one $(M,=)$ whose ordering is \emph{equality}. In this case, \emph{all} subsets of $M$ are upper sets. Thus, every definition that we present for ordered monoids makes sense for unordered ones.

Ordered monoids are used to refine the notion of recognizability by a morphism. We may consider morphisms $\alpha: A^* \to (M,\leq)$ into an arbitrary \emph{ordered} monoid~$(M,\leq)$. We say that a language $L \subseteq A^*$ is \emph{recognized} by such a  morphism~$\alpha$ when there is an \emph{upper set} $F \subseteq M$ for $\leq$ such that $L= \alpha\inv(F)$. 

\begin{restatable}{remark}{convequ}
  Since we view any unordered monoid $M$ as the ordered one $(M,=)$, this definition generalizes the one presented for morphisms into unordered monoids in the body of the paper. A language $L \subseteq A^*$ is recognized by a morphism $\alpha: A^* \to M$ if there exists an \emph{arbitrary} set $F \subseteq M$ (as all subsets are upper sets for ``$=$'') such that $L = \alpha\inv(F)$.
\end{restatable}

\subsection{\Cs-morphisms for \pvaris}

In this section, we define \Cs-morphisms for \pvaris, in order to obtain a result (Lemma~\ref{lem:cpairs} below) that we will use as a tool. Given a \pvari \Cs, a morphism $\eta: A^*\to (N,\leq)$ into a finite ordered monoid is a \emph{\Cs-morphism} if it is \emph{surjective} and all languages recognized by $\eta$ belong to \Cs. The next lemma, proved in~\cite{pzupol2}, states that if \Cs is a \vari (\emph{i.e.},~\Cs is also closed under complement), it suffices to consider \emph{unordered} monoids (\emph{i.e.}, viewed as ordered by equality).

\begin{restatable}{lemma}{cmorphbool} \label{lem:cmorphbool}
	Let \Cs be a \vari and let $\eta: A^* \to (N,\leq)$ be a morphism. The two following conditions are equivalent:
	\begin{enumerate}
		\item $\eta: A^* \to (N,\leq)$ is a \Cs-morphism.
		\item $\eta: A^* \to (N,=)$ is a \Cs-morphism.
	\end{enumerate}
\end{restatable}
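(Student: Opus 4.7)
\medskip

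\noindent\textbf{Proof proposal for Lemma~\ref{lem:cmorphbool}.}

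The plan is to show the two implications separately, exploiting closure of $\Cs$ under complement only in the nontrivial direction. In both cases, surjectivity is the same condition (it does not depend on the order on $N$), so only the recognition condition needs attention.

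For the direction $(2) \Rightarrow (1)$, I would observe that any language recognized by $\eta$ viewed as a morphism into $(N,\leq)$ has the form $\eta^{-1}(F)$ for some upper set $F\subseteq N$ with respect to $\leq$. In particular $F$ is just a subset of $N$, so the \emph{same} language is recognized by $\eta$ viewed as a morphism into $(N,=)$, and is therefore in $\Cs$ by hypothesis. This handles the easy direction without using closure under complement.

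For the direction $(1) \Rightarrow (2)$, I would pick an arbitrary $F \subseteq N$ and prove that $\eta^{-1}(F) \in \Cs$. Writing $F = \bigcup_{t \in F} \{t\}$ and using closure of $\Cs$ under finite union, it suffices to treat the case $F = \{t\}$ for some single $t \in N$. The idea is then to decompose the singleton as $\{t\} = (\uclos t) \cap (\dclos t)$, where $\uclos t$ and $\dclos t$ denote respectively the principal upper and lower sets of $t$ with respect to $\leq$. The set $\uclos t$ is an upper set by definition, so $\eta^{-1}(\uclos t) \in \Cs$ directly from hypothesis $(1)$. For $\dclos t$, I would observe that its complement $N \setminus \dclos t = \{s \in N \mid s \not\leq t\}$ is an upper set for $\leq$: if $s \not\leq t$ and $s \leq s'$, then $s' \leq t$ would force $s \leq t$ by transitivity. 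Hence $\eta^{-1}(N \setminus \dclos t) \in \Cs$ by $(1)$, and using closure under complement $\eta^{-1}(\dclos t) \in \Cs$. Closure under intersection then yields $\eta^{-1}(\{t\}) \in \Cs$, as desired.

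The argument is a short set-theoretic manipulation and I do not foresee a real obstacle: the only subtle point is noting that the complement of a principal lower set is an upper set, which is exactly why closure of $\Cs$ under complement is the crucial hypothesis (without it, recognition by ordered and unordered monoids genuinely differ, as is the case for \varis like $\pol{\Cs}$).
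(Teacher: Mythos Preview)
Your proof is correct. The paper does not actually prove this lemma; it merely cites~\cite{pzupol2} for the proof, so there is no in-paper argument to compare against. Your argument is the standard one: in the nontrivial direction you express each singleton $\{t\}$ as the intersection of the upper set $\uclos t$ with the complement of the upper set $N\setminus\dclos t$, using closure under complement and finite Boolean operations to conclude.
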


\begin{restatable}{remark}{cmbool} \label{rem:cmbool}
  Another formulation is that when \Cs is a \vari, whether a morphism $\eta: A^* \to (N,\leq)$ is a \Cs-morphism does \emph{not} depend on the ordering $\leq$. Thus, when dealing~with \varis, we implicitly use Lemma~\ref{lem:cmorphbool} and drop the~ordering.
\end{restatable}

We have the following key proposition (proved in~\cite[Proposition~2.13]{pzupol2}) that we shall rely on to build \Cs-morphisms.

\begin{restatable}{proposition}{genocm2}\label{prop:genocm}
	Let \Cs be a \pvari and let $L_1,\dots,L_k \in \Cs$. There exists a \Cs-morphism $\eta: A^* \to (N,\leq)$ recognizing $L_1,\dots,L_k$.
\end{restatable}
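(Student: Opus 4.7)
The plan is to build $\eta$ from the syntactic preorder induced by the family $L_{1},\dots,L_{k}$. Define $u \preceq v$ to mean that for every $x,y \in A^{*}$ and every $i \in \{1,\dots,k\}$, $xuy \in L_{i}$ implies $xvy \in L_{i}$. This is a preorder on $A^{*}$, compatible with concatenation (given $u \preceq u'$ and $v \preceq v'$, one deduces $uv \preceq u'v'$ by applying the defining implication twice, first inserting $v$ in the right context and then $u'$ in the left context). Since each $L_{i}$ is regular (as \Cs consists of regular languages), it has only finitely many distinct bi-quotients $x\inv L_{i} y\inv$, and $\preceq$ therefore has finite index. Let $N = A^{*}/{\equiv}$ with $u \equiv v$ iff $u \preceq v \preceq u$, equip it with the induced partial order $\leq$, and let $\eta : A^{*} \to N$ be the canonical projection. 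Compatibility of $\preceq$ with concatenation makes $(N,\leq)$ a finite ordered monoid and $\eta$ a surjective morphism into it.

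Taking $x=y=\veps$ in the definition of $\preceq$ shows that each $L_{i}$ is a union of $\equiv$-classes and that $\eta(L_{i})$ is upward closed for $\leq$; hence $L_{i} = \eta\inv(\eta(L_{i}))$ is recognized by $\eta$. The core of the statement is the converse: \emph{every} language recognized by $\eta$ must lie in \Cs. The key identity is, for each $u \in A^{*}$,
\[
\eta\inv({\uparrow}\eta(u)) \;=\; \{v \in A^{*} : u \preceq v\} \;=\; \bigcap_{i,\,x,y\,:\,xuy \in L_{i}} x\inv L_{i} y\inv .
\]
Each factor $x\inv L_{i} y\inv$ belongs to \Cs by closure under quotients. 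The intersection collapses to a finite one, because for each $i$ only finitely many distinct bi-quotients $x\inv L_{i} y\inv$ occur and $k$ is finite; closure under intersection then gives $\eta\inv({\uparrow}\eta(u)) \in \Cs$ (with the empty intersection yielding $A^{*} \in \Cs$). Since $N$ is finite, every upper set $F \subseteq N$ coincides with the finite union $\bigcup_{s \in F} {\uparrow}s$, so $\eta\inv(F)$ is a finite union of languages of \Cs, hence in \Cs by closure under union.

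The only genuinely delicate step is the collapse of the above intersection to a finite one: this rests on the Myhill--Nerode finiteness of $\{x\inv L_{i} y\inv : x,y \in A^{*}\}$ for each regular $L_{i}$, combined with the finiteness of the family $L_{1},\dots,L_{k}$. Everything else — compatibility of $\leq$ with the monoid law on $N$, surjectivity of $\eta$, recognition of the $L_{i}$, and the reduction of upper sets to finite unions of principal ones — is routine bookkeeping directly from the definitions of $\preceq$ and of the quotient.
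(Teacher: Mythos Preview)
Your proof is correct and follows the standard syntactic ordered monoid construction; the paper itself does not prove this proposition but merely cites \cite[Proposition~2.13]{pzupol2}, where essentially the same argument appears.
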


We shall often use \Cs-morphisms to manipulate the \Cs-pair relation introduced in Section~\ref{sec:prelims}. Let $\alpha: A^* \to M$ and $\eta: A^* \to (N,\leq)$ be morphism. For every pair $(s,t) \in M^2$, we say that $(s,t)$ is an $\eta$-pair (for $\alpha$) when there exist $u,v \in A^*$ such that $\eta(u) \leq \eta(v)$, $\alpha(u) = s$ and $\alpha(v) = t$. The following lemma is proved in~\cite[Lemma~5.11]{pzupol2}.

\begin{restatable}{lemma}{cpairs} \label{lem:cpairs}
	Let \Cs be a \pvari and $\alpha: A^* \to M$ be a morphism. The two following properties hold:
	\begin{itemize}
		\item If $\eta: A^* \to (N,\leq)$ is a \Cs-morphism, then every \Cs-pair for $\alpha$ is an $\eta$-pair for $\alpha$.
		\item There exists a \Cs-morphism $\eta: A^* \to (N,\leq)$ such that the \Cs-pairs for $\alpha$ are exactly the $\eta$-pair for $\alpha$.
	\end{itemize}
\end{restatable}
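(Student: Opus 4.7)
\textbf{Proof proposal for Lemma~\ref{lem:cpairs}.}

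The plan is to prove the two items separately, using upper sets of $N$ to build separators for the first and invoking Proposition~\ref{prop:genocm} for the second.

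For the first item, I will argue by contrapositive: if $(s,t) \in M^2$ is \emph{not} an $\eta$-pair for $\alpha$, I want to exhibit a language $K \in \Cs$ separating $\alpha\inv(s)$ from $\alpha\inv(t)$, which shows $(s,t)$ is not a \Cs-pair. The natural candidate is $K = \eta\inv(F)$, where $F \subseteq N$ is the upper set
\[
F = \{n \in N \mid \text{there exists } u \in \alpha\inv(s) \text{ with } \eta(u) \leq n\}.
\]
Then $F$ is an upper set for $\leq$ (by construction) and $K = \eta\inv(F)$ is recognized by $\eta$; since $\eta$ is a \Cs-morphism, $K \in \Cs$. Moreover $\alpha\inv(s) \subseteq K$ by definition of $F$. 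The key point is that $\alpha\inv(t) \cap K = \emptyset$: if some $v \in \alpha\inv(t)$ had $\eta(v) \in F$, then we would get $u \in \alpha\inv(s)$ with $\eta(u) \leq \eta(v)$, witnessing that $(s,t)$ is an $\eta$-pair, contrary to assumption. Hence $K$ is a \Cs-separator, so $(s,t)$ is not a \Cs-pair.

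For the second item, I will construct a single \Cs-morphism whose $\eta$-pairs coincide with the \Cs-pairs for $\alpha$. Since $M$ is finite, there are only finitely many pairs $(s,t) \in M^2$ that are \emph{not} \Cs-pairs. For each such pair, pick a separator $K_{s,t} \in \Cs$ with $\alpha\inv(s) \subseteq K_{s,t}$ and $K_{s,t} \cap \alpha\inv(t) = \emptyset$. Applying Proposition~\ref{prop:genocm} to the finite family $\{K_{s,t}\}$ yields a \Cs-morphism $\eta: A^* \to (N,\leq)$ recognizing every $K_{s,t}$. By the first item, every \Cs-pair for $\alpha$ is an $\eta$-pair, so it remains to show the converse. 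Suppose $(s,t)$ is an $\eta$-pair, witnessed by $u,v \in A^*$ with $\alpha(u)=s$, $\alpha(v)=t$, and $\eta(u) \leq \eta(v)$, and suppose toward a contradiction that $(s,t)$ is \emph{not} a \Cs-pair. Then the chosen separator $K_{s,t}$ exists; write $K_{s,t} = \eta\inv(G)$ for some upper set $G \subseteq N$. Since $u \in \alpha\inv(s) \subseteq K_{s,t}$ we get $\eta(u) \in G$, and then $\eta(v) \geq \eta(u)$ forces $\eta(v) \in G$, so $v \in K_{s,t}$, contradicting $K_{s,t} \cap \alpha\inv(t) = \emptyset$.

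There is no real obstacle here: both items are two-line arguments once one keeps straight the fact that languages recognized by a \Cs-morphism $\eta: A^* \to (N,\leq)$ are precisely the preimages of upper sets of $\leq$, and hence always lie in \Cs by definition. The only thing to be careful about is the direction of the inequality $\eta(u) \leq \eta(v)$ in the definition of an $\eta$-pair and its interaction with upper sets: since $G$ is upward closed and $\eta(u) \in G$ with $\eta(u) \leq \eta(v)$, one necessarily has $\eta(v) \in G$, which is what makes the separator argument collapse in the contradiction step. This compatibility is also what dictates the choice of $F$ in the first item as the upper closure of $\eta(\alpha\inv(s))$ rather than some other subset of $N$.
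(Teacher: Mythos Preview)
Your proof is correct. Note, however, that the paper does not supply its own proof of this lemma: it simply cites \cite[Lemma~5.11]{pzupol2}, so there is no in-paper argument to compare against. Your approach---building the separator as the $\eta$-preimage of the upward closure of $\eta(\alpha\inv(s))$ for the first item, and invoking Proposition~\ref{prop:genocm} on a finite family of chosen separators for the second---is the natural one and is essentially how such lemmas are proved in the cited source.
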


\subsection{Proof of Lemma~\ref{lem:nsepconcat}}

With Proposition~\ref{prop:genocm} in hand, we are now able to prove Lemma~\ref{lem:nsepconcat}. Let us first recall its statement.

\nsepconcat*

\begin{proof}
  We argue by contradiction: assume that there exists a language $K \in \Cs$ such that $L_1H_1 \subseteq K$ and $K \cap L_2H_2=\emptyset$. Proposition~\ref{prop:genocm} yields a \Cs-morphism $\alpha:A^* \to (M,\leq)$ recognizing~$K$. Since $\alpha$ is a \Cs-morphism, our hypothesis implies that no language recognized by $\alpha$ can separate $L_1$ from $L_2$ or $H_1$ from $H_2$. This yields $u_1 \in L_1$, $u_2 \in L_2$, $v_1 \in H_1$ and $v_2 \in H_2$ such that $\alpha(u_1) \leq \alpha(u_2)$ and $\alpha(v_1) \leq \alpha(v_2)$. It follows that $\alpha(u_1v_1) \leq \alpha(u_2v_2)$. Clearly, $u_1v_1\in L_1H_1 \subseteq K$ and since $K$ is recognized by $\alpha$, it follows that $u_2v_2 \in  K$ as well. This is a contradiction since $u_2v_2 \in L_2H_2$ and $K \cap L_2H_2=\emptyset$.
\end{proof}

\section{Missing proof in Section~\ref{sec:concat}}
\label{app:concat}
We prove Lemma~\ref{lem:polinduc}. Let us first recall the statement.

\polinduc*

\begin{proof}
	By contradiction, we assume that $H_1^*$ is \pol{\Cs}-separable from $H_1^*H_2H_1^*$. This yields a language $K \in \pol{\Cs}$ such that $H_1^*\subseteq K$ and $K\cap H_1^*H_2H_1^* = \emptyset$. Proposition~\ref{prop:genocm} yields a \pol{\Cs}-morphism $\alpha: A^* \to (M,\leq)$ recognizing $K$ (note that we need the ordering here since \pol{\Cs} need not be closed under complement). Moreover, Lemma~\ref{lem:cpairs} yields a \Cs-morphism $\eta: A^* \to (N,\leq)$ such that the \Cs-pairs for $\alpha$ are exactly the $\eta$-pairs. Since  $H_1$ is not \Cs-separable from $H_2$, no language recognized by $\eta$ can be a separator. This yields, $u \in H_1$ and $v \in H_2$ such that $\eta(u) \leq \eta(v)$. Let $s = \alpha(u)$ and $t = \alpha(v)$. By definition of $\eta$, we know that $(s,t) \in M^2$ is a \Cs-pair. Hence, since $\alpha$ is a \pol{\Cs}-morphism, it follows from~\cite[Theorem~54]{PZ:generic18} that $s^{\omega+1} \leq s^\omega t s^\omega$. Let $p = \omega(M)$. Since $u \in H_1$ and $v \in H_2$, we have $u^{p+1} \in H_1^* \subseteq K$ and $u^p v u^p \in H_1^*H_2H_1^*$. Moreover, $\alpha(u^p) =  s^{\omega+1} \leq s^\omega t s^\omega = \alpha(u^p v u^p)$. Hence, since $u^p \in K$ and $K$ is recognized by $\alpha$, we get $u^p v u^p \in K$. This is a contradiction since $u^p v u^p \in H_1^*H_2H_1^*$ and $K\cap H_1^*H_2H_1^* = \emptyset$ by hypothesis.
\end{proof} 

\section{Missing proofs in Section~\ref{sec:utl}}
\label{app:utl}
This appendix is devoted to proving  Lemma~\ref{lem:tlcinduc} and Proposition~\ref{prop:tlvar}. We start with the former whose statement is as follows.

\tlcinduc*

\begin{proof}
  We prove that $P$ is not \tlc{\Cs}-separable from $PVP$ (that $P$ is not \tlc{\Cs}-separable from $PUP$ is proved symmetrically). We fix $K \in \tlc{\Cs}$  such that $P \subseteq K$, we have to prove that $K \cap PVP \neq \emptyset$.  Let $\alpha: A^* \to M$ be the syntactic morphism of $K$. Lemma~\ref{lem:cpairs} yields a \Cs-morphism $\eta: A^* \to N$ such that the \Cs-pairs for $\alpha$ are exactly the $\eta$-pairs for $\alpha$. By hypothesis, $L$ is not \Cs-separable from either $U$ and $V$. Since $\eta$ is a \Cs-morphism, it follows that no language recognized by $\eta$ can be a separator. Hence, there exists $x,x' \in L$, $u \in U$ and $v \in V$ such that $\eta(x) = \eta(u)$ and $\eta(x') = \eta(v)$. Let $p = \omega(M)$, $y = (xx')^p$, $w_1 = ux'(xx')^{p-1}$ and $w_2 = (xx')^{p-1}xv$. Moreover, let $e = \alpha(z)$, $s = \alpha(w_1)$ and $t = \alpha(w_2)$. Clearly, $\eta(z) =  \eta(w_1) = \eta(w_2)$. Hence, $(e,s)$ and $(e,t)$ are \Cs-pairs for $\alpha$ by definition of $\eta$. Moreover, $e \in E(M)$ by definition of $p$. Altogether, since $L \in \tlc{\Cs}$ and $\alpha$ is the syntactic morphism of $L$, Theorem~\ref{thm:tlcar} yields,
  \[
  (esete)^\omega = (esete)^\omega ete(esete)^\omega.
  \]
  Let $z = (yw_1yw_2y)^p$. Clearly, $y \in L^*$, $w_1 \in UL^*$ and $w_2 \in L^*V$. Hence, $z \in P$ and $zw_2z \in PVP$ as $P = L^*(U L^*VL^*)^*$. Since $P \subseteq K$, we get $z \in K$. Moreover, $\alpha(z) = (esete)^\omega$ and $\alpha(zw_2z) = (esete)^\omega ete(esete)^\omega$ by definition. Hence, the above yields $\alpha(z) = \alpha(zw_2z)$ and since $z \in K$ which is recognized by $\alpha$, we obtain $zw_2z \in K$. Altogether, it follows that $zw_2z \in K$ and $zw_2z \in PVP$ which yields $K \cap PVP \neq \emptyset$ as desired.
\end{proof}

We turn to Proposition~\ref{prop:tlvar}. In order to carry out the proof, we first need to adapt Theorem~\ref{thm:tlcar} into a characterization of the \tlc{\Cs}-morphisms rather than of the languages in \tlc{\Cs}. We do so in the following proposition.

\begin{restatable}{proposition}{tlcmor} \label{prop:tlcmor}
  Let \Cs be a \vari and $\alpha: A^* \to M$ be a surjective morphism. Then, $\alpha$ is a \tlc{\Cs}-morphism if and only if it satisfies the following property:
  \begin{equation*} \tag{\ref{eq:tlcar}}
  	\begin{array}{c}
  		\hspace*{-.5ex}(esete)^\omega = (esete)^\omega ete (esete)^\omega \quad \text{for all $e \in E(M)$}\\
  		\hspace*{-.5ex}\text{and $s,t \in M$ such that $(e,s),(e,t)$ are \Cs-pairs for $\alpha$.}
  	\end{array}
  \end{equation*}
\end{restatable}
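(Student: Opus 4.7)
The plan is to prove the two directions separately, with the backward implication carrying most of the algebraic work. Both directions proceed by comparing $\alpha$ with the syntactic morphisms of specific languages it recognizes, and then importing or exporting the equation across the quotient map.

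\textbf{Forward direction ($\Rightarrow$).} Suppose $\alpha$ is a \tlc{\Cs}-morphism. For each $m \in M$, the language $L_m = \alpha\inv(m)$ is recognized by~$\alpha$, hence lies in \tlc{\Cs}; let $\alpha_m: A^* \to M_m$ be its syntactic morphism, which factors as $\alpha_m = \pi_m \circ \alpha$ for some surjective $\pi_m: M \to M_m$. A standard verification shows that $s \mapsto (\pi_m(s))_{m \in M}$ embeds $M$ into $\prod_{m \in M} M_m$: indeed $\alpha(w) = \alpha(w')$ holds iff $w$ and $w'$ belong to the same language $L_m$ for every~$m$, iff $\alpha_m(w) = \alpha_m(w')$ for every~$m$. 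Now given \Cs-pairs $(e,s),(e,t)$ for~$\alpha$, the projected pairs $(\pi_m(e),\pi_m(s)),(\pi_m(e),\pi_m(t))$ are \Cs-pairs for~$\alpha_m$, since $\alpha_m\inv(\pi_m(x)) \supseteq \alpha\inv(x)$ and a \Cs-separator of the larger sets would separate the smaller ones. Applying Theorem~\ref{thm:tlcar} to each $L_m$ gives~\eqref{eq:tlcar} in each~$M_m$, and these componentwise identities assemble via the embedding into~\eqref{eq:tlcar} in~$M$.

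\textbf{Backward direction ($\Leftarrow$).} Suppose $\alpha$ satisfies~\eqref{eq:tlcar}. For any language $L = \alpha\inv(F)$ recognized by~$\alpha$, we show $L \in \tlc{\Cs}$ by verifying~\eqref{eq:tlcar} for its syntactic morphism $\alpha_L = \pi_L \circ \alpha$ and invoking Theorem~\ref{thm:tlcar}. Fix $e' \in E(M_L)$ and \Cs-pairs $(e',s'), (e',t')$ for~$\alpha_L$. Because \Cs is a lattice, non-separability of the finite union $\alpha_L\inv(e') = \bigcup_{\pi_L(x)=e'} \alpha\inv(x)$ from $\alpha_L\inv(s')$ yields elements $e_0 \in \pi_L\inv(e')$ and $s_0 \in \pi_L\inv(s')$ such that $(e_0,s_0)$ is a \Cs-pair for~$\alpha$; similarly, we obtain witnesses $e_0', t_0$ for~$(e',t')$. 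Set $e^\star = (e_0^\omega (e_0')^\omega)^\omega \in E(M)$, so that $\pi_L(e^\star) = e'$. Iterating Lemma~\ref{lem:nsepconcat}, each time concatenating with preimages trivially non-separable from themselves, we build $s^\star, t^\star \in M$ with $(e^\star,s^\star), (e^\star,t^\star)$ \Cs-pairs for~$\alpha$ and $\pi_L(s^\star) = e's'e'$, $\pi_L(t^\star) = e't'e'$. Applying~\eqref{eq:tlcar} for~$\alpha$ to $(e^\star,s^\star,t^\star)$ and projecting through $\pi_L$, the adjacent $e'$-factors collapse by idempotency of~$e'$, producing precisely~\eqref{eq:tlcar} for~$\alpha_L$.

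\textbf{Main obstacle.} The delicate point is the backward direction: producing a \emph{single} idempotent $e^\star \in M$ that serves as the ``idempotent leg'' of two \Cs-pairs whose second components project to $e's'e'$ and $e't'e'$ respectively. The two independent lifts of $(e',s')$ and $(e',t')$ give a priori incompatible witnesses $e_0, e_0'$, and merging them requires choosing the combined idempotent power $e^\star = (e_0^\omega (e_0')^\omega)^\omega$ and then applying Lemma~\ref{lem:nsepconcat} carefully enough times so that the resulting \Cs-pairs have $\pi_L$-images exactly matching the $e'$-sandwiched forms demanded by~\eqref{eq:tlcar} for~$\alpha_L$.
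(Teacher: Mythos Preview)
Your proof is correct but follows a genuinely different route from the paper's. The paper handles both directions uniformly through Lemma~\ref{lem:cpairs}: it fixes a \Cs-morphism $\eta$ witnessing the \Cs-pairs of $\alpha$ (respectively of the syntactic morphism $\beta$), then works at the level of words. For the backward direction, given \Cs-pairs $(f,q),(f,r)$ for $\beta$, the $\eta$-pair description yields words $u_f,u_q,v_f,v_r$ with matching $\eta$-images; the paper then sets $w_f=(u_fv_f)^k$, $w_q=u_qv_f(u_fv_f)^{k-1}$, $w_r=(u_fv_f)^{k-1}u_fv_r$ for a suitable $k$, so that $e=\alpha(w_f)$, $s=\alpha(w_q)$, $t=\alpha(w_r)$ automatically form \Cs-pairs $(e,s),(e,t)$ with a \emph{common} idempotent~$e$. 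The forward direction is symmetric. Your approach instead avoids Lemma~\ref{lem:cpairs} entirely: the forward direction goes through the embedding of $M$ into the product of all syntactic monoids $M_m$, and the backward direction refines the \Cs-pairs for $\alpha_L$ down to \Cs-pairs for $\alpha$ via the lattice property, then synthesizes the shared idempotent $e^\star=(e_0^\omega(e_0')^\omega)^\omega$ by hand and pushes the pairs around with Lemma~\ref{lem:nsepconcat}. This is more elementary in its ingredients but makes you pay exactly the price you flag as the ``main obstacle'': merging two independent lifts into a single idempotent requires the explicit $e^\star$ and careful left/right padding, whereas the paper's $\eta$-word construction gives this synchronization for free. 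Both arguments ultimately collapse the extra $e'$-factors by idempotency to recover~\eqref{eq:tlcar} for~$\alpha_L$.
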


\begin{proof}
  First, assume that $\alpha$ satisfies~\eqref{eq:tlcar}. We prove that every language recognized by $\alpha$ belongs to \tlc{\Cs}. Since \tlc{\Cs} is closed under union (by definition), it suffices to show that $\alpha\inv(p) \in \tlc{\Cs}$ for all $p \in M$. We fix $p \in M$ for the~proof. Let $\beta: A^* \to Q$ be the syntactic morphism of $\alpha\inv(p)$. By Theorem~\ref{thm:tlcar}, it suffices to show that $\beta$ satisfies~\eqref{eq:tlcar}. Let $f \in E(Q)$ and $q,r \in Q$ such that $(f,q),(f,r)$ are \Cs-pairs for~$\beta$. We have to show that,
  \[
  (fqfrf)^\omega = (fqfrf)^\omega frf (fqfrf)^\omega.
  \]
  Lemma~\ref{lem:cpairs} yields a \Cs-morphism $\eta: A^*\to N$ such that the \Cs-pairs for $\alpha$ are exactly the $\eta$-pairs for $\alpha$. Since $(f,q)$ and $f,r$ are \Cs-pairs for~$\beta$ and $\beta$ is a \Cs-morphism, we obtain $u_f,u_q,v_f,v_r\in A^*$ such that $\eta(u_f)=\eta(u_q)$, $\eta(v_f) = \eta(v_r)$, $\beta(u_f) = \beta(v_f) = f$, $\beta(u_q) = q$ and $\beta(v_r) = r$. Let $k = \omega(M) \times \omega(Q)$. We also let $w_f = (u_fv_f)^k$, $w_q = u_qv_f(u_fv_f)^{k-1}$ and $w_r = (u_fv_f)^{k-1} u_fv_r$. Clearly, $\beta(w_f) = f$, $\beta(w_q) = qf$ and $\beta(w_r) = fr$. Moreover, $\eta(w_f) = \eta(w_q) = \eta(w_r)$. Let $e = \alpha(w_f)$, $s = \alpha(w_q)$ and $t = \alpha(w_r)$. Since $\eta(w_f) = \eta(w_q) = \eta(w_r)$, the definition of $\eta$ yields that $(e,s)$ and $(e,t)$ are \Cs-pairs for $\alpha$. Hence, since $e \in E(M)$ (by definition of $k$) and $\alpha$ satisfies~\eqref{eq:tlcar}, we get,
  \[
  (esete)^\omega = (esete)^\omega ete (esete)^\omega.
  \]
  Let $z = w_fw_qw_fw_rw_f$. For all $x,y \in A^*$, the above yields,
  \[
	\alpha(x z^k y) = \alpha(x z^k w_fw_rw_fz^k y).
  \]
  In particular, $x z^k y\in\alpha\inv(p)\Leftrightarrow x z^k w_fw_rw_fz^k y\in\alpha\inv(p)$. Therefore, $z^k$  and $z^k w_fw_rw_fz^k$ have the same image under the syntactic morphism $\beta$ of $\alpha\inv(p)$. Since $k$ is a multiple of $\omega(Q)$, this yields $(fqfrf)^\omega = (fqfrf)^\omega frf (fqfrf)^\omega$, as desired.

  \medskip

  We now prove the converse implication. Assume that $\alpha$ is a \tlc{\Cs}-morphism. Fix an idempotent $e \in E(M)$ and  $s,t \in M$ such that $(e,s)$ and $(e,t)$ are \Cs-pairs for~$\alpha$. We prove that $(esete)^\omega = (esete)^\omega ete (esete)^\omega$. Let $p = (esete)^\omega \in E(M)$. By hypothesis, $\alpha\inv(p) \in \tlc{\Cs}$. Let $\beta: A^* \to Q$ be the syntactic morphism of $\alpha\inv(p)$. By Theorem~\ref{thm:tlcar}, we know that $\beta$ satisfies~\eqref{eq:tlcar}. Moreover, Lemma~\ref{lem:cpairs} yields a \Cs-morphism $\eta: A^*\to N$ such that the \Cs-pairs for $\beta$ are exactly the $\eta$-pairs for $\beta$. Since $(e,s)$ and $(e,t)$ are \Cs-pairs for $\alpha$, Lemma~\ref{lem:cpairs} implies that they are also $\eta$-pairs for $\alpha$. We get $u_e,u_s,v_e,v_t\in A^*$ such that $\eta(u_e)=\eta(u_s)$, $\eta(v_e) = \eta(v_t)$, $\alpha(u_e) = \alpha(v_e) = e$, $\alpha(u_s) = s$ and $\alpha(v_t) = t$. Let $h = \omega(Q) \times \omega(M)$. We define $w_e = (u_ev_e)^h$, $w_s=u_sv_e(u_ev_e)^{h-1}$ and $w_t=(u_ev_e)^{k-1}u_ev_t$. Clearly, $\alpha(w_e) = e$, $\alpha(w_s) = se$ and $\alpha(w_t) = et$. Moreover, $\eta(w_e) = \eta(w_s) = \eta(w_t)$. Let $f = \alpha(w_e)$, $q = \alpha(w_s)$ and $r = \alpha(w_t)$. Since $\eta(w_e) = \eta(w_s) = \eta(w_t)$, $(f,q)$ and $(f,r)$ are \Cs-pairs for $\beta$ by definition of $\eta$. Hence, since $f \in E(Q)$ and $\beta$ satisfies~\eqref{eq:tlcar}, we get $(fqfrf)^\omega = (fqfrf)^\omega frf (fqfrf)^\omega$. Thus, we have the following for $z = w_ew_sw_ew_tw_e$,
  \[
	\beta(z^h) = \beta(z^h w_ew_tw_ez^h).
  \]
  Since $h$ is a multiple of $\omega(M)$, $\alpha(z^h) = (esete)^\omega = p$. Hence, $z^h \in \alpha\inv(p)$ and since $\alpha\inv(p)$ is recognized by $\beta$, the above equality yields $z^h w_ew_tw_ez^h \in \alpha\inv(p)$. Thus, $(esete)^\omega ete (esete)^\omega = p = (esete)^\omega$ as desired.
\end{proof}

We may now prove Proposition~\ref{prop:tlvar}. We rely on Theorem~\ref{thm:tlcar} and Proposition~\ref{prop:tlcmor} above. We first recall the statement.

\tlvar*

\begin{proof}
  It suffices to prove that if \Cs is a \vari, then so is \tlc{\Cs}. The result will then follow by induction. We already know from the definition that \tlc{\Cs} is a Boolean algebra. We have to prove closure under quotients. Let $L\in \tlc{\Cs}$ and $u \in L$. We have to prove that $Lu\inv \in \tlc{\Cs}$ and $u\inv L \in\tlc{\Cs}$.  Let $\alpha: A^* \to M$ be the syntactic morphism of $L$ and $F \subseteq M$ be the accepting set such that $L = \alpha\inv(F)$. One can verify from the definitions that,
  \[
  \begin{array}{lll}
  	Lu\inv &= &\alpha\inv\left( \{s \in M \mid s\alpha(u) \in F\}\right), \\
  	u\inv L &=& \alpha\inv\left( \{s \in M \mid \alpha(u)s \in F\}\right).
  \end{array}
  \]
  Hence, $Lu\inv$ and $u\inv L$ are both recognized by $\alpha$. Since $L \in \tlc{\Cs}$, it follows from Theorem~\ref{thm:tlcar} that $\alpha$ satisfies~\eqref{eq:tlcar}. By Proposition~\ref{prop:tlcmor}, this implies that $\alpha$ is a \tlc{\Cs}-morphism. We obtain $Lu\inv \in \tlc{\Cs}$ and $u\inv L \in\tlc{\Cs}$, as desired.
\end{proof}

\section{Missing proofs in Section~\ref{sec:groups}}
\label{app:groups}
We present the proofs of the statements of Section~\ref{sec:groups}. We start with the strictness proofs.

\subsection{Strictness}

Let us first recall the definitions leading up to the statements. We use two families of languages $(K_n)_{n\in \nat}$ and $(L_n)_{n \in \nat}$ to prove that the \nesths of bases \Gs and $\Gs^+$ are \emph{strict} and \emph{strictly intertwined}. Roughly, we replace the letters ``$a$'' and ``$b$'' in the definition of $H_n$ by languages which are not \grp-separable from $\{\veps\}$.

Let $x_i = ab^i$ and $Y_i = a^+b^i$ for all $i \geq 1$.  Moreover, let $Q =  x_1^+x_2^+x^{}_1$, $R = x_3^+x_4^+x^{}_3$, $S = Y_1^+Y_2^+Y^{}_1$ and $T = Y_3^+Y_4^+Y^{}_3$. Finally, we define,
\[
\begin{array}{lcl}
	K_0  = \{\veps\} & \text{and}& K_{n}  = (QK_{n-1}R)^*  \text{ for $n \geq 1$},\\
	L_0  =   a^* &\text{and} & L_{n}  =  (a+ SL_{n-1}T)^*  \text{ for $n \geq 1$}.
\end{array}
\]

We may now prove the two lemmas involved in the strictness proof.

\strictpos*

\begin{proof}
  We actually prove that $K_n \in \cocl{\polp{n}{\dotzer}}$ and $L_n \in \cocl{\polp{n+1}{\stzer}}$ for all $n \in \nat$. This yields the desired result. Both properties are proved by induction on $n$. The arguments are similar. When $n=0$, $K_0 = \{\veps\} \in \dotzer$. Also, $L_0 = a^* = A^* \setminus \bigl(\bigcup_{b \in A \setminus\{a\}} A^*bA^*\bigr) \in \copol{\stzer}$. We now assume that $n \geq 1$.

  By definition of the mapping $\Ds \mapsto \cocl{\Ds}$, it suffices to prove that $A^* \setminus K_n \in \polp{n}{\dotzer}$ and $A^* \setminus L_n \in \polp{n+1}{\stzer}$. We decompose these languages as the union of simpler languages. We write $H = aA^* \cup \{\veps\}$. One can verify that,
  \begin{equation} \label{eq:dyck}
	\begin{array}{llll}
	  A^* \setminus K_n & = & & A^* \setminus(Q +  R)^* \\
						&  & \cup &  A^*x_4x_3 (R K_{n-1})^{n-1} x_3 H \\
						&& \cup &  K_{n-1}x_3 H \\
						& & \cup &  A^* x_2x_1 (K_{n-1} Q)^{n-1} x_1 H \\
						&& \cup &  A^*x_2x_1 K_{n-1}, \\[1ex]
	  A^* \setminus	L_n & = & &A^* \setminus (a+ S +  T)^* \\
						&  & \cup &  A^*Y_4Y_3 (T L_{n-1})^{n-1} Y_3 H \\
						&& \cup &  L_{n-1}Y_3 H \\
						& & \cup &  A^* Y_2Y_1 (L_{n-1} S)^{n-1} Y_1 H \\
						&& \cup &  A^*Y_2Y_1 L_{n-1}.
	\end{array}
  \end{equation}
  By definition,  \polp{n}{\dotzer} and \polp{n+1}{\stzer} are closed under union. Hence, it suffices to prove that all the languages in these unions belong to \polp{n}{\dotzer} and \polp{n+1}{\stzer} respectively.  We first show that $(Q +  R)^* \in \cocl{\pol{\dotzer}}$ and $(a + S +  T)^* \in \cocl{\polp{2}{\stzer}}$.

  Let first $U = (Q +  R)^*$. Since $Q = x_1^+x_2^+x_1$ and $R = x_3^+x_4^+x_3$, one can verify that membership in $U$ is characterized by a finite number of forbidden patterns, namely,
  \[
	\begin{array}{lllll}
	  A^*\setminus U  & = & & A^* \setminus (x_1+x_2+x_3 +x_4)^* \\
		 & & \cup & \{x_1\} \cup \{x_3\}\\
		 & & \cup &  x_2H \cup x_4H \cup A^*x_2 \cup A^* x_4\\
		 & & \cup & x_1x_3H \cup x_3x_1H  \cup  A^*x_1x_3 \cup A^* x_3x_1\\
		 & & \cup & A^* x_1x_4  H \cup A^* x_4x_1H \\
		 & & \cup & A^* x_2x_3  H \cup A^* x_3x_2 H \\
		 & & \cup &	A^* x_4x_2H \cup A^* x_2x_4H \\
		 & & \cup & A^*x_1x_1x_3H \cup A^*x_3x_1x_3H \\
		 & & \cup & A^*x_3x_3x_1H \cup A^*x_1x_3x_1H \\
		 & & \cup & A^*x_2x_1x_2H \cup A^*x_4x_3x_4H.
	\end{array}
  \]
  Observe that by definition of $x_1,x_2,x_3$ and $x_4$, we have,
  \[
	A^* \setminus (x_1\!+\!x_2\!+\!x_3\! +\!x_4)^* \!=\! bA^*\cup A^*a \cup A^*a^2A^* \cup A^*b^5A^*.
  \]
  This is a language in \pol{\dotzer}. Moreover, since it is clear that $x_1,x_2,x_3,x_4,H \in \pol{\dotzer}$, we get $A^*\setminus U  \in \pol{\dotzer}$ by closure under concatenation, and we conclude therefore that $U=(Q+R)^*\in \copol{\dotzer}$, as announced.

  Similarly, let $V=(a+S+T)^*$. Since $S = Y_1^+Y_2^+Y_1$ and $T = Y_3^+Y_4^+Y_3$, one can verify that,
  \[
	\begin{array}{lllll}
	  A^*\setminus V  & = & & A^* \setminus (a + Y_1+Y_2+Y_3 +Y_4)^*  \\
		 & & \cup & Y_1 \cup Y_3\\
		 & & \cup & Y_2H \cup Y_4H \cup A^*Y_2 \cup A^* Y_4 \\
		 & & \cup & Y_1Y_3H \cup Y_3Y_1H \cup A^*Y_1Y_3 \cup A^* Y_3Y_1\\
		 & & \cup & A^* Y_1Y_4  H \cup A^* Y_4Y_1H \\
		 & & \cup & A^* Y_2Y_3  H \cup A^* Y_3Y_2 H \\
		 & & \cup & A^* Y_4Y_2H \cup A^* Y_2Y_4H \\
		 & & \cup & A^*Y_1Y_1Y_3H \cup A^*Y_3Y_1Y_3H \\
		 & & \cup & A^*Y_3Y_3Y_1H \cup A^*Y_1Y_3Y_1H \\
		 & & \cup & A^*Y_2Y_1Y_2H \cup A^*Y_4Y_3Y_4H.
	\end{array}
  \]
  Observe that,
  \[
	A^* \setminus (a+ Y_1+ Y_2+Y_3 +Y_4)^*  = bA^*  \cup A^*b^5A^*,
  \]
  which is clearly a language in \polp{2}{\stzer}.
  Since it is clear that $Y_1,Y_2,Y_3,Y_4 \in \polp{2}{\stzer}$ (this is because $a^* \in \bpol{\stzer}$, as seen above), we get $A^*\setminus V \in \polp{2}{\stzer}$ by closure under concatenation, and we conclude therefore that $V=(S+T+a)^*\in\cocl{\polp{2}{\stzer}}$, as announced.

  It remains to handle the other languages in~\eqref{eq:dyck}. This is where we use induction. Let us start with $A^* \setminus K_n$.  Induction yields $K_{n-1} \in \copol{\polp{n-1}{\dotzer}}$. Hence, we obtain $K_{n-1}x_3 H \in \polp{n}{\dotzer}$ and $A^*x_2x_1 K_{n-1} \in \polp{n}{\dotzer}$. We now consider the languages $A^*x_4x_3 (R K_{n-1})^{n-1} x_3 H$ and $A^* x_2x_1 (K_{n-1} Q)^{n-1} x_1 H$. There are two cases. When $n = 1$, these are the languages $A^*x_4x_3x_3 H$ and $A^* x_2x_1x_1 H$: they clearly belong to \pol{\dotzer}. Assume now that $n \geq 2$. Since $Q = x_1^+x_2^+x_1$ and $R = x_3^+x_4^+x_3$, it suffices to prove that $x_1^+,x_2^+, x_3^+,x_4^+ \in \bpol{\dotzer}$. By closure under concatenation, this will imply that $A^*x_4x_3 (R K_{n-1})^{n-1} x_3 H$ and $A^* x_2x_1 (K_{n-1} Q)^{n-1} x_1 H$ both belong to $\polp{n}{\dotzer}$. By definition, $x_k^+ = (ab^k)^+$ for all $k \geq 1$. Hence,
  \[
	x_k^+  =   A^* \setminus \Bigl(\{\veps\} \cup bA^*\cup A^*b^{k+1}A^* \cup \bigcup_{0 \leq i < k}A^*ab^iH\Bigr).
  \]
  Thus, $x_k^+ \in \copol{\dotzer}$ as desired.

  It remains to handle $A^* \setminus L_n$. By induction, we have $L_{n-1} \in \cocl{\polp{n}{\stzer}}$. Hence, $L_{n-1}Y_3 H \in \polp{n+1}{\stzer}$ and $A^*Y_2Y_1 K_{n-1} \in \polp{n}{\stzer}$. We now consider the languages $A^*Y_4Y_3 (T K_{n-1})^{n-1} Y_3 H$ and $A^* Y_2Y_1 (K_{n-1} S)^{n-1} Y_1 H$. If $n = 1$, these are the languages $A^*Y_4Y_3Y_3 H$ and $A^* Y_2Y_1Y_1 H$ which clearly belong to \polp{2}{\stzer}. Assume now that $n \geq 2$. Since $S = Y_1^+Y_2^+Y_1$ and $T = Y_3^+Y_4^+Y_3$, it suffices to prove that $Y_1^+,Y_2^+, Y_3^+,Y_4^+ \in \bpolp{2}{\stzer}$. By closure under concatenation, this will imply that $A^*Y_4Y_3 (T K_{n-1})^{n-1} Y_3 H$ and $A^* Y_2Y_1 (K_{n-1} S)^{n-1} Y_1 H$ both belong to \polp{n+1}{\stzer}. By definition $Y_k^+ =(a^+b^k)^+$ for all $k \geq 1$. Hence,
  \[
	Y_k^+  = A^* \setminus \Bigl(\{\veps\} \cup bA^* \cup A^*b^{k+1}A^* \cup \bigcup_{1 \leq i < k} (A^*ab^iaA^*)\Bigr).
  \]
  This completes the proof.
\end{proof}

The second lemma is based on Lemma~\ref{lem:tlcinduc}.

\strictneg*

\begin{proof}
  We first show that $K_n \not\in \tlhn{\Gs}$ for all $n \in \nat$. Let $Q_0 = Q$ and $R_0 = R$. For $n \geq 1$, let $Q_n = K_nQ_{n-1}K_n$ and $R_n = K_n R_{n-1} K_n$. We have the following simple fact.
  
  \begin{fact} \label{fct:tlcinduc1}
	For all $n \in \nat$, $K_n \cap Q_n = \emptyset$ and $K_n \cap R_n = \emptyset$.
  \end{fact}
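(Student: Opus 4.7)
The approach is to exhibit a simple numerical invariant separating $K_n$ from $Q_n$ and $R_n$. The key preliminary observation is that every word appearing in $K_n$, $Q_n$, or $R_n$ is a concatenation of the patterns $x_1 = ab$, $x_2 = abb$, $x_3 = abbb$, $x_4 = abbbb$, and that such concatenations admit a \emph{unique} decomposition as a sequence of $x_i$-tokens (each $x_i$ begins with $a$, and the number of subsequent $b$'s determines $i$). We may therefore work at the level of these tokens. For a word $w$ so decomposed, I will write $n_2(w)$ and $n_4(w)$ for the numbers of maximal runs of consecutive $x_2$- and $x_4$-tokens, and set $d(w) = n_2(w) - n_4(w) \in \integ$.

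I will then prove, by induction on $n$, that $d(w) = 0$ for all $w \in K_n$, $d(w) = 1$ for all $w \in Q_n$, and $d(w) = -1$ for all $w \in R_n$; since $d$ takes distinct values on the three sets, the Fact will follow immediately. The base case $n=0$ is direct from the definitions ($Q = x_1^+x_2^+x_1$, $R = x_3^+x_4^+x_3$, $K_0 = \{\veps\}$). For the inductive step, the crux is to guarantee that $n_2$ and $n_4$ behave \emph{additively} under the concatenations in $K_n = (QK_{n-1}R)^*$, $Q_n = K_nQ_{n-1}K_n$ and $R_n = K_nR_{n-1}K_n$, i.e., that no $x_2$-block or $x_4$-block straddles a boundary between consecutive factors. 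This will be ensured by a subsidiary induction showing that every non-empty word in $K_n \cup Q_n \cup R_n$ begins and ends with a token in $\{x_1,x_3\}$. Granted this, each internal boundary meets a $\{x_1,x_3\}$-token on both sides, so the block counts simply add, and one computes $d(QK_{n-1}R) = 1 + 0 + (-1) = 0$, $d(K_nQ_{n-1}K_n) = 0+1+0 = 1$, and $d(K_nR_{n-1}K_n) = 0+(-1)+0 = -1$.

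The main obstacle is the boundary bookkeeping, in particular the corner cases where one of the $K$-factors in the above products collapses to $\veps$ (so that, for instance, a $Q$-piece sits directly against an $R$-piece inside $(QK_{n-1}R)^*$, or $Q_{n-1}$ is exposed at an end of $Q_n$). The auxiliary induction on the first/last tokens is precisely what handles all of these cases uniformly, since $\{x_1,x_3\}$-tokens never create nor extend $x_2$- or $x_4$-blocks when placed adjacently.
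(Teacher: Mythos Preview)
Your proposal is correct and is essentially the same argument as the paper's: both exhibit an integer invariant taking the value $0$ on $K_n$, $+1$ on $Q_n$, and $-1$ on $R_n$. The paper counts the difference $\#Q\text{-factors} - \#R\text{-factors}$ in the unique $(Q\cup R)^*$-decomposition, which coincides with your $n_2 - n_4$ (each $Q$-factor contributes exactly one maximal $x_2$-block and each $R$-factor one maximal $x_4$-block, with no merging thanks to the flanking $x_1$/$x_3$); working at the $Q/R$ level lets the paper skip your auxiliary first/last-token bookkeeping.
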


  \begin{proof}
	By definition, each word $w \in K_n \cup Q_n \cup R_n$ admits a \emph{unique} decomposition $w= w_1 \cdots w_\ell$ such that each factor $w_i$ belongs to either $Q$ or $R$ for each $i \leq \ell$. Moreover, if $w \in K_n$, then one can verify that there are as many factors $w_i \in Q$ as there are factors $w_j \in R$. On the other hand, if $w \in Q_n$, then one can verify that the number of factors $w_i \in Q$ is equal to the number of factors $w_j \in R$ \emph{plus one}. Similarly, if  $w \in R_n$, then one can verify that the number of factors $w_j \in R$ is equal to the number of factors $w_i \in Q$ \emph{plus one}. The fact follows.
  \end{proof}

  We now use induction on $n$ to show that $K_n$ is \emph{not} \tlhn{\grp}-separable from either $Q_n$ and $R_n$. This yields $K_n \not\in \tlhn{\grp}$ by Fact~\ref{fct:tlcinduc1}.

	When $n = 0$, we have to prove that $\{\veps\}$ is  \emph{not} \grp-separable from either $Q = x_1^+x_2^+x^{}_1$ and $R = x_3^+x_4^+x^{}_3$. By symmetry, we only prove the former. Let $U \in \grp$ such that $\veps \in U$. We show that $U \cap x_1^+ x_2^+ x^{}_1 \neq \emptyset$. By hypothesis, $U$ is recognized by a morphism $\alpha: A^*\to G$ into a finite group $G$. Let $p=\omega(G)$ and $w=x_1^{p-1} x_2^p x_1 \in x_1^+x_2^+x^{}_1$. Since $G$ is a group, $\alpha(w) = 1_G = \alpha(\veps)$. Therefore, $w \in U \cap x_1^+ x_2^+ x^{}_1$.

	Assume that $n \geq 1$. By induction, $K_{n-1}$ is  \emph{not} \tlhp{n-1}{\grp}-separable from either $Q_{n-1}$ or $R_{n-1}$.  One can check from the~definition that $K_{n-1}^* = K_{n-1}  \subseteq K_n$. By definition of $K_n$, this yields $K_n =  K_{n-1}^*(Q^{}_{n-1}K_{n-1}^*R^{}_{n-1}K_{n-1}^*)^*$. Thus, Lemma~\ref{lem:tlcinduc} implies that $K_n$ is not \tlhn{\grp}-separable from either $Q_n = K_nQ_{n-1}K_n$ or $R_n = K_nR_{n-1}K_n$.

  \medskip

  We now prove that $L_n \not\in \tlhn{\Gs^+}$ for all $n \in \nat$. Let $S_0 = S$ and $T_0 = T$. For $n \geq 1$, let $S_n = L_nS_{n-1}L_n$ and $T_n = L_n T_{n-1} L_n$. We have the following simple fact.

  \begin{fact} \label{fct:tlcinduc2}
	For all $n \in \nat$, $L_n \cap S_n = \emptyset$ and $L_n \cap T_n = \emptyset$.
  \end{fact}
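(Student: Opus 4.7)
The plan is to mirror the proof of Fact~\ref{fct:tlcinduc1} by attaching a numerical invariant $g : A^* \to \integ$ to each word, taking value $0$ on $L_n$, $+1$ on $S_n$, and $-1$ on $T_n$. Unlike the $K_n/Q_n/R_n$ case, the decomposition of a word in $L_n \cup S_n \cup T_n$ into singletons $a$, $S$-factors, and $T$-factors need not be unique (because the $Y_i = a^+b^i$ blocks can absorb extra $a$'s from the surrounding singletons). Hence I define the invariant directly on the letters of $w$, without reference to any decomposition, so that the same value is obtained regardless of the chosen factorization.

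For a word $w \in A^*$, call a \emph{$b$-block} of $w$ a maximal factor lying in $b^+$, and call two $b$-blocks \emph{adjacent} when no other $b$-block occurs between them (only $a$'s may appear). Define
\[
  g(w) = N_{1,2}(w) - N_{3,4}(w),
\]
where $N_{i,j}(w)$ counts the ordered adjacent pairs of $b$-blocks of sizes $(i,j)$ in $w$. The argument relies on three elementary observations. First, $g(a) = 0$; direct inspection of $S = (a^+b)^+(a^+bb)^+(a^+b)$ shows that every $w_S \in S$ has $b$-block-size sequence $1^p 2^m 1$ with $p,m \geq 1$, whence $g(w_S) = 1$, and similarly $g(w_T) = -1$ for every $w_T \in T$. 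Second, every $S$-word and every $T$-word starts with the letter $a$, so concatenating any two factors chosen among $\{a\}$, $S$-words and $T$-words never merges $b$-blocks across the boundary. Third, the first and last $b$-blocks of an $S$-word both have size~$1$, those of a $T$-word both have size~$3$, and $a$-factors contain no $b$-block; hence any adjacency newly introduced at a factor boundary has type in $\{(1,1),(1,3),(3,1),(3,3)\}$, never $(1,2)$ or $(3,4)$. Consequently, $g$ is \emph{additive} over concatenations of $a$'s, $S$-words and $T$-words.

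With these tools, a routine induction on $n$ establishes that $g(w)=0$ for $w \in L_n$, $g(w)=1$ for $w \in S_n$, and $g(w)=-1$ for $w \in T_n$. The base case $n = 0$ is the observation above. For $n \geq 1$, any $w \in L_n = (a + S L_{n-1} T)^*$ decomposes as a concatenation of $a$-singletons and triples $u_S u_L u_T$ with $u_S \in S$, $u_L \in L_{n-1}$, $u_T \in T$; by additivity and the induction hypothesis, each triple contributes $1 + 0 + (-1) = 0$, so $g(w) = 0$. The cases $S_n = L_n S_{n-1} L_n$ and $T_n = L_n T_{n-1} L_n$ give $g(w) = 0 + 1 + 0 = 1$ and $g(w) = 0 + (-1) + 0 = -1$, respectively. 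Since $g$ takes distinct values on the three sets, $L_n \cap S_n = L_n \cap T_n = \emptyset$.

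The main point needing careful write-up is the additivity claim for $g$: one has to verify that at every concatenation boundary arising in the induction, the two $b$-blocks straddling the boundary cannot form a new $(1,2)$ or $(3,4)$ pair. This reduces, as noted above, to a short case analysis on the possible sizes of the first and last $b$-blocks of the three factor types, with the observation that $Y_1$ and $Y_3$ contribute single $b$-blocks of sizes only $1$ and $3$ at factor boundaries; all other steps are routine once this is in hand.
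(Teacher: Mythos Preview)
Your proof is correct and rests on the same idea as the paper's: distinguish $L_n$, $S_n$, $T_n$ by the difference between the number of $S$-factors and $T$-factors in a word. The paper simply asserts that every word in $L_n \cup S_n \cup T_n$ containing at least one $b$ has a \emph{unique} decomposition $w = w_1 \cdots w_\ell$ with each $w_i \in S \cup T$, and then counts. Your invariant $g = N_{1,2} - N_{3,4}$ realizes precisely this count directly from the $b$-block pattern (each $S$-factor contributes exactly one $(1,2)$ adjacency, each $T$-factor exactly one $(3,4)$, and boundaries contribute neither), without appealing to any decomposition. This is in fact the more careful route: words in $L_n$ may end with trailing $a$'s (for instance $s\,t\,a \in L_1$ with $s \in S$, $t \in T$), so the paper's decomposition into $S \cup T$ factors does not literally exist for all such words, whereas your $g$ is well defined on every word. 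The only point to tighten is that in the inductive step for $S_n = L_n S_{n-1} L_n$ (and similarly $T_n$) you apply additivity to factors that are themselves $L_n$- or $S_{n-1}$-words rather than bare $a$/$S$/$T$-words; this is harmless because all such words lie in $(a+S+T)^*$, hence start with $a$ (or are $\veps$) and have first and last $b$-blocks of size in $\{1,3\}$, but it would be cleanest either to carry this boundary information in the inductive hypothesis or to unfold every factor down to $a$/$S$/$T$ before invoking additivity.
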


  \begin{proof}
	By definition, each word $w \in L_n \cup S_n \cup T_n$ that contains at least one letter ``$b$'' admits a \emph{unique} decomposition $w= w_1 \cdots w_\ell$ such that each factor $w_i$ belongs to either $S$ or $T$ for each $i \leq \ell$. Moreover, if $w \in L_n$, then one can verify that there are as many factors $w_i \in S$ as there are factors $w_j \in T$. On the other hand, if $w \in S_n$, then one can verify that the number of factors $w_i \in S$ is equal to the number of factors $w_j \in T$ \emph{plus one}. Similarly, if  $w \in T_n$, then one can verify that the number of factors $w_j \in T$ is equal to the number of factors $w_i \in S$ \emph{plus one}. The fact follows.
  \end{proof}

  We now use induction on $n$ to show that $L_n$ is \emph{not} \tlhn{\grp^+}-separable from either $S_n$ and $T_n$. This yields $L_n \not\in \tlhn{\grp^+}$ by Fact~\ref{fct:tlcinduc2}.
  
  	When $n = 0$, we have to prove that $a^*$ is  \emph{not} \grp-separable from either $S = Y_1^+Y_2^+Y^{}_1$ and $T = Y_3^+Y_4^+Y^{}_3$. By symmetry, we only prove the former. Let $U \in \grp^+$ such that $a^* \subseteq \in U$. We show that $U \cap Y_1^+ Y_2^+ Y^{}_1 \neq \emptyset$. Since $U \in \grp^+$, there exists $V \in \grp$ such that $U = V \cap A^+$ or $U = V\cup \{\veps\}$. By definition of \grp, $V$ is recognized by a morphism $\alpha: A^*\to G$ into a finite group $G$. Let $p=\omega(G)$ and $w = x_1^{p-1} x_j2^p x_1\in x_2^+x_1^+x^{}_2 \subseteq  Y_1^+ Y_2^+ Y^{}_1$. It is clear that $a^p \in U \cap A^+$. Hence, $a^p \in V$ (since $V$ and~$U$ agree on nonempty words). Moreover, since $G$ is a group, $\alpha(w) = 1_G = \alpha(a^p)$. Since $V$ is recognized by $\alpha$, we get $w \in V$. Since $w \in A^+$, this yields $w \in U$ by definition of $V$, and we get  $U \cap Y_i^+ Y_j^+ Y^{}_i \neq\emptyset$, completing the~proof.
  	
The inductive step is based on Lemma~\ref{lem:tlcinduc}. Assume that $n \geq 1$. By induction hypothesis, $L_{n-1}$ is  \emph{not} \tlhp{n-1}{\grp^+}-separable from either $S_{n-1}$ or $T_{n-1}$.  One can check from the~definition that $L_{n-1}^* = L_{n-1}  \subseteq L_n$. By definition of $L_n$, this implies that $L_n =  L_{n-1}^*(S_{n-1}L_{n-1}^*S_{n-1}L_{n-1}^*)^*$. Hence, it follows from Lemma~\ref{lem:tlcinduc} that $L_n$ is not \tlhn{\grp^+}-separable from either $S_n = L_nS_{n-1}L_n$ or $T_n = L_nT_{n-1}L_n$.
\end{proof}

\subsection{Comparison with concatenation hierarchies}

We first prove Theorem~\ref{thm:tlstinc}. We recall the statement below.

\thmtlst*

\begin{proof}
	Recall that $\tlhp{2}{\stzer} = \tlc{\at}$ by Proposition~\ref{prop:leveltwo}. Moreover, it is standard that
	$\bpolp2\stzer=\bpol\at$ (see~\cite{pin-straubing:upper,PZ:generic18}). Hence, we have to prove that,
	\[
	\ipolp2\at \subseteq \tlc{\at} \subseteq \ipolp{|A|+1}{\stzer}.
	\]
	Let us start with $\ipolp2\at \subseteq \tlc{\at}$. Let $L \in \ipolp2\at$ and let $\alpha: A^* \to M$ be the syntactic morphism of $L$. In view of Theorem~\ref{thm:tlcar}, it suffices to prove that for every $e \in E(M)$ and every $s,t \in M$ such that $(e,s)$ and $(e,t)$ are \at-pairs, we have $(esete)^\omega = (esete)^\omega ete (esete)^\omega$. Since $L \in \ipolp2\at$, Theorem~\ref{thm:capolbp} yields $(esete)^{\omega+1} = (esete)^\omega ete (esete)^\omega$. Moreover, $L$ is \emph{star-free} since $L \in \ipolp2\at$. Thus, Schützenberger's theorem~\cite{schutzsf} implies that its syntactic monoid is aperiodic: we have $q^{\omega+1} = q^{\omega}$ for every $q \in M$.  In particular, we have $(esete)^\omega = (esete)^{\omega+1}$. Altogether, we get $(esete)^\omega = (esete)^\omega ete (esete)^\omega$, as desired.
	
	\medskip
	
	The second inclusion is harder.  Let $L \in \tlc{\at}$, we prove that $L \in \ipolp{|A|+1}{\stzer}$. We proceed by induction on the size of~$A$. We write $A = \{a_1,\dots,a_{n}\}$ for the proof. The base case and the inductive step are handled simultaneously. Let $\Ds = \ipolp{n-1}\stzer$. One can check that,
	\[
	\ipolp{n}{\stzer} = \ipol{\bpol{\Ds}} = \ipolp{2}{\Ds}.
	\]
	It remains to prove that $L \in \ipolp{2}{\Ds}$. We use Theorem~\ref{thm:capolbp}. Let $\alpha: A^* \to M$ be the syntactic morphism of $L$. Given a \Ds-pair $(e,s)\in E(M) \times M$ is a \Ds-pair and $t\in M$, we have to show  $(esete)^{\omega+1} = (esete)^{\omega} ete (esete)^{\omega}$. Let $H = \bigcap_{a \in A} A^*aA^*$ (in other words $H =  \{w \in A^* \mid \etaat(w)= A\}$). Moreover, let $A_i = A \setminus \{a_i\}$ for every $i \leq n-1$. Clearly, we have,
	\[
	A^* = H \cup \bigcup_{i \leq n-1} A_i^*.
	\]
	By definition of \Ds-pairs, $\alpha\inv(e)$ is not \Ds-separable from $\alpha\inv(s)$. Hence, since $H \in \Ds$ and $A_i^* \in \Ds$ for all $i \leq n-1$ (this is because $\bpol{\stzer} \subseteq \Ds$ by definition), one can verify that one of the two following conditions holds:
	\begin{enumerate}
		\item $H \cap \alpha\inv(e)$ is not \Ds-separable from $H \cap \alpha\inv(s)$,
		\item there exists $i \leq n-1$ such that $A_i^* \cap \alpha\inv(e)$ is not \Ds-separable from $A_i^* \cap \alpha\inv(s)$.
	\end{enumerate}
	We now consider two cases depending on which property holds. Assume first that $H \cap \alpha\inv(e)$ is not \Ds-separable from $H \cap \alpha\inv(s)$. In particular, this means that both languages are nonempty. We get $u \in H \cap \alpha\inv(e)$ and $v \in H \cap \alpha\inv(s)$. By definition of $H$, we have $\etaat(u) = \etaat(v) = A$. Moreover, let $x \in \alpha\inv(t)$ and $w = uxu$. Clearly, $\etaat(w) = A$ and $\alpha(w) = ete$. It follows that $\etaat(u) = \etaat(v) = \etaat(w) = A$, so that $\{u\}$ is neither \at-separable from $\{v\}$ nor $\{w\}$. Therefore, $(\alpha(u),\alpha(v)) = (e,s)$ and $(\alpha(u),\alpha(w)) = (e,ete)$ are \at-pairs.  Since $L \in \tlc{\at}$, it then follows from Theorem~\ref{thm:tlcar} that $(esete)^\omega = (esete)^\omega ete (esete)^\omega$. Moreover, since  $L \in \tlc{\at}$, we also know that $L$ is \emph{star-free}. Thus, Schützenberger's theorem~\cite{schutzsf} implies that its syntactic monoid is aperiodic: we have $q^{\omega+1} = q^{\omega}$ for every $q \in M$.  In particular, $(esete)^\omega = (esete)^{\omega+1}$. Altogether, we get,
	\[
	(esete)^{\omega+1} = (esete)^\omega ete (esete)^\omega.
	\]
	This completes the first case.

	It remains to handle the case where there exists $i \leq n-1$ such that $A_i^* \cap \alpha\inv(e)$ is not \Ds-separable from $A_i^* \cap \alpha\inv(s)$.  Observe that we may assume without loss of generality that $i = n-1$ (otherwise, we rename the letters). In this case, we prove that $e = ese$, which implies the desired equality $(esete)^{\omega+1} = (esete)^{\omega} ete(esete)^{\omega}$. We distinguish the cases $|A|=2$ (\emph{i.e.}, $n=3$) and~$|A| > 2$ (\emph{i.e.}, $n>3$).
	
	Let us start with the case $|A| = 2$. By definition, $A_2 = \{a_1\}$ is a single letter alphabet in this case. Thus, $a_1^* \cap \alpha\inv(e)$ is not \Ds-separable from $a_1^* \cap \alpha\inv(s)$. In particular, $a_1^* \cap \alpha\inv(e) \neq \emptyset$ and $a_1^* \cap \alpha\inv(s) \neq \emptyset$. There are two sub-cases depending on whether $a_1^* \cap \alpha\inv(e) = \{\veps\}$ or $a_1^+ \cap \alpha\inv(e) \neq \emptyset$. In the former case, since $\{\veps\} \in \Ds$ (this is because $\bpol{\stzer} \subseteq \Ds$), the hypothesis that $a_1^* \cap \alpha\inv(e)$ is not \Ds-separable from $a_1^* \cap \alpha\inv(s)$ yields $\veps \in a_1^* \cap \alpha\inv(s)$ and we get $e = s = 1_M$, whence $ese = e$, as desired. Otherwise, we have  $a_1^+ \cap \alpha\inv(e) \neq \emptyset$. This yields $k \geq 1$ such that $\alpha(a_1^k) = e$. Moreover, since $a_1^* \cap \alpha\inv(s) \neq \emptyset$, we have $\ell \in \nat$ such that $\alpha(a_1^\ell) = s$. Let $r = \alpha(a_1)$. We have $e = r^k$ and since $e$ is idempotent, it follows that $e = r^{\omega}$. Moreover, $ese = r^{\omega+\ell}$. Thus, since $M$ is aperiodic by Schützenberger's theorem~\cite{schutzsf}, we obtain again~$e = ese$.
	
	We turn to the case where $|A| > 2$. This is where we use induction on the size $n-1$ of $A$. Since $L \in \tlc{\at}$ and \tlc\at is a \vari by
	Proposition~\ref{prop:tlvar}, it is standard that all languages recognized by its syntactic morphism belong to \tlc{\at}. In particular, $\alpha\inv(e) \in \tlc{\at}$ and $\alpha\inv(s) \in \tlc{\at}$. Since it is clear that $A_{n-1}^* \in \tlc{\at}$ as well, we get $A_{n-1}^* \cap \alpha\inv(e) \in \tlc{\at}$ and $A_{n-1}^* \cap \alpha\inv(s) \in \tlc{\at}$. These two languages are defined over the alphabet $A_{n-1}$ of size $|A|-1$ and induction yields that over this restricted alphabet, \tlc{\at} is included in $\Ds = \ipolp{n-1}\stzer$. Consequently, the hypothesis that $A_{n-1}^* \cap \alpha\inv(e)$ is not \Ds-separable from $A_{n-1}^* \cap \alpha\inv(s)$ implies that  $A_{n-1}^* \cap \alpha\inv(e)$ is not \tlc{\at}-separable from $A_{n-1}^* \cap \alpha\inv(s)$. Since both languages belong to \tlc{\at}, it follows that they intersect. In particular, $\alpha\inv(e) \cap \alpha\inv(s) \neq \emptyset$ which yields $s = e$, concluding the proof.
\end{proof}

We now turn to the statements leading up to Theorem~\ref{thm:gualt}. Let us first recall the objects involved in these statements.  For each $k \in \nat$, we have an alphabet $A_k = \{\ell_0,\dots,\ell_k\}$ of size $k+1$ and two languages $U_k,V_k\subseteq A_k^*$. We have $U_0 = \{\veps\}$ and $V_0 = A_0^+$.  For $k \geq 1$, we have,
\[
U_k = (\ell_kV_{k-1})^* \quad \text{and} \quad V_k = (\ell_kV_{k-1})^* \ell_kU_{k-1} (\ell_kV_{k-1})^*.
\]
Moreover, a binary alphabet $A = \{a,b\}$ is fixed and we encode the languages $U_k,V_k$ over $A$. For every $k \in \nat$, a morphism $\beta_k: A_k^* \to A^*$ is defined as follows:
\[
\begin{array}{llll}
	\beta_k: & A_k^* & \to     & A^*, \\
	& \ell_i & \mapsto & a^i b a^{k-i}.
\end{array}
\]
We first prove Proposition~\ref{prop:modproj}. It states that this encoding maps \tlhp2\stzer to \tlhp2\md. Combined with Lemma~\ref{lem:langtl2st}, it implies that $\beta_k(U_k) \in \tlhp{2}{\md}$ for all $k \in \nat$.

\newcommand{\thecons}[1]{\ensuremath{\left\langle #1 \right\rangle}\xspace}

\modproj*

\begin{proof}
	We first show that $\beta^{}_k(A_k^*) \in \tlhp{2}{\md}$. Observe that since $(A^{k+1})^* \in \md$, the following formula $\Gamma$ belongs to \tla{\tlc{\md}}:
	\[
	\Gamma = min \vee \finallymp{(A^{k+1})^*}{min}.
	\]
	Clearly, given a word $x \in A^*$ and $j \in \pos{x}$, we have $x,j \models \Gamma$ if and only if $j = 0$ or $j \equiv 1 \bmod k+1$. We now associate a formula $\lambda_n \in \tla{\tlc{\md}}$ to every number $n \leq k$. Let us first present a~preliminary notation. If $\varphi \in \tla{\tlc{\md}}$ and $i \in \nat$, we write $\nexi{\varphi}{i}$ and $\nexmi{\varphi}{i}$ for the following formula: when $i = 0$, $\nexi{\varphi}{i} = \nexmi{\varphi}{i} = \varphi$ and when $i \geq 1$, $\nexi{\varphi}{i} = \finallyp{\{\veps\}}{\left(\nexi{\varphi}{i-1}\right)}$ and $\nexmi{\varphi}{i} = \finallymp{\{\veps\}}{\left(\nexmi{\varphi}{i-1}\right)}$. Observe that $\nexi{\varphi}{i}$ and $\nexmi{\varphi}{i}$ remain \tla{\tlc{\md}} formulas since it~is clear that $\{\veps\} \in \tlc{\md}$. For every $n \leq k$, we let,
	\[
	\lambda_n= \Biggl(\bigwedge_{0 \leq i < n} \nexi{a}{i}\Biggr) \wedge \nexi{b}{n} \wedge   \Biggl(\bigwedge_{n < i \leq k} \nexi{a}{i}\Biggr).
	\]
	Given a word $x \in A^*$ and $j \in \pos{x}$, we have $x,j \models \lambda_n$ if and only the word $a^n b a^{k-n} \in A^{k+1}$ is a prefix of the suffix starting at position $j$ (included) inside $x$. One can now verify that $\beta^{}_k(A_k^* )$ is defined by the following formula $\Lambda \in \tla{\tlc{\md}}$:
	\[
	\Lambda = \left(\finallyp{(A^{k+1})^*}{max}\right) \wedge \neg \finally{\Biggl(\Gamma \wedge \neg max \wedge \neg \bigvee_{0 \leq n \leq k} \lambda_n\Biggr)}.
	\]
	We now turn to the general case. Since $\tlhp{2}{\stzer} = \tlc{\at}$ by Proposition~\ref{prop:leveltwo}, it suffices to show that for every formula $\varphi \in \tla{\at}$ over the alphabet $A_k$, there exists a formula of \tla{\tlc{\md}} over the alphabet $A$ defining $\beta_k(L(\varphi))$. We start with the preliminary terminology that we need to specify the construction. Observe that for every letter $\ell \in A_k$, the image $\beta_k(\ell) \in A^*$ has length $k +1$.  Hence, given a word $w \in A_k^*$, we can define a natural mapping $\mu: \pos{w} \to \pos{\beta_k(w)}$. We let $\mu(0) = 0$ and $\mu(|w|+1) = |\beta_k(w)|+1 = (k+1) \times |w|+1$. Moreover, if $i \in \pos{w} \setminus \{0,|w|+1\}$, we let $\mu(i)=(i-1) \times (k+1)+1$ (which is the leftmost position in the infix of $\beta_k(w)$  that ``replaces'' the letter at position $i$ in $w$). In particular, we say that a position $j \in \pos{\beta_k(w)}$ is \emph{distinguished} if there exists $i\in\pos{w}$ such that $j=\mu(i)$. Observe that the formula $\Gamma\in \tla{\tlc{\md}}$ defined above can be used to pinpoint the distinguished positions:  given $w \in A_k^*$ and $j \in \pos{\beta(w)}$, we have $\beta(w),j \models \Gamma$ if and only if $j$ is distinguished.
	
	We turn to the construction. For each formula $\varphi \in \tla{\at}$ over $A_k$, we build a new one $\thecons{\varphi} \in \tla{\tlc{\md}}$ over $A$ satisfying the following property:
	\begin{equation} \label{eq:modcons}
		\begin{array}{c}
				w,i \models \varphi \Leftrightarrow \beta_k(w),\mu(i) \models \thecons{\varphi} \\
				 \text{for every $w \in A_k^*$ and $i \in \pos{w}$}.
		\end{array}
	\end{equation}
	Let us first explain why this completes the proof. Consider a language $L \in \tlc{\at}$. By hypothesis, $L$ is defined by a formula $\varphi \in \tla{\at}$. It now follows from~\eqref{eq:modcons} that $\beta_k(L)$ is the language $L(\thecons{\varphi}) \cap \beta^{}_k(A_k^*)$. Hence, $\beta_k(L)$ is defined by the formula $\thecons{\varphi} \wedge \Lambda \in \tla{\tlc{\md}}$.
	
	We turn to the construction itself.  We build \thecons{\varphi} using structural induction on $\varphi$. If $\varphi = min$, then $\thecons{\varphi} = min$. Similarly, if $\varphi = max$, then $\thecons{\varphi} = max$. If $\varphi = \ell$ for $\ell \in A_k$, then $\ell = \ell_i$ for some $i \leq k$.  It therefore suffices to define $\thecons{\varphi} = \lambda_i$ since we have $\beta_k(\ell_i) = a^i b a^{k-i}$.
	
	It remains to handle the temporal operators. First, assume that $\varphi = \finallyp{K}{\psi}$. By hypothesis on $\varphi$, we know that $K \in \at$. Hence, $K$ is a Boolean combination of languages $A_k^* \ell A_k^*$ where $\ell\in A_k$. For each $i \leq k$, we associate the following language $P_i \in \tlc{\md}$:
	\[
	P_i = (A^{k+1})^*A^i b A^* \text{ defined by $\finallyp{(A^{k+1})^*A^i}{b} \in \tla{\md}$}.
	\]
	We let $Q_K \in \tlc{\md}$ be the language over $A^*$ obtained by replacing every language $A_k^* \ell^{}_i A_k^*$ for $i \leq k+1$ in the Boolean combination defining $K$ by $P_i \in \tlc{\md}$. One can now verify that we can define,
	\[
	\thecons{\varphi} = \left(min \wedge  \finallyp{Q_K}{\left(\Gamma \wedge \thecons{\psi}\right)}\right) \vee \nexi{\left(\finallyp{Q_K}{\left(\Gamma \wedge \thecons{\psi}\right)\right)}}{k}.
	\]
	Similarly, if $\varphi = \finallymp{K}{\varphi'}$, then by hypothesis on $\varphi$, we know that $K \in \at$. Hence, we can associate a language $Q_K \in \tlc{\md}$ to $K$ as explained above. We now define,
	\[
	\thecons{\varphi} = \finallymp{Q_K}{\left(\nexmi{\left(\Gamma \wedge \thecons{\varphi}\right)}{k}\right) \vee \left(min \wedge\thecons{\varphi} \right)}.
	\]
	One can verify that this construction $\varphi \mapsto \thecons{\varphi}$ does satisfy~\eqref{eq:modcons}, which completes the proof.
\end{proof}

We now prove Lemma~\ref{lem:tlmodunc}. Let us first recall the statement.

\tlmodunc

\begin{proof}
	We fix $k$ for the argument.
	We use induction on $n$ to prove that $\beta_k(U_n)$ is not \polp{n+1}{\Gs}-separable from $\beta_k(V_n)$ for all $n\leq k$. Since $\beta_k(U_n) \cap \beta_k(V_n) = \emptyset$, this implies that $\beta_k(U_n) \not\in \polp{n+1}{\Gs}$ for all $n \leq k$ and the case $n = k$ yields the desired result. We start with the case $n = 0$. By definition, $\beta_k(U_0) = \{\veps\}$ and $\beta_k(V_0) = (ba^k)^+$. Since \Gs is a \emph{group} \vari, one can check that $(ba^k)^+$ is not \Gs-separable from $\{\veps\}$. Hence, Lemma~\ref{lem:polinduc} implies that $ \{\veps\}$ is not \pol{\Gs}-separable from $(ba^k)^+$ as desired. Assume now that $n \geq 1$. By induction hypothesis, we know that $\beta_k(U_{n-1})$ is not \polp{n}{\Gs}-separable from $\beta_k(V_{n-1})$. Hence, it follows from Lemma~\ref{lem:nsepconcat} that $\beta_k(\ell_nU_{n-1})$ is not \polp{n}{\Gs}-separable from $\beta_k(\ell_nV_{n-1})$. We can now apply Lemma~\ref{lem:polinduc} and Lemma~\ref{lem:polcopol},  which imply that $(\beta_k(\ell_nV_{n-1}))^*$ is not  \polp{n}{\Gs}-separable from $(\beta_k(\ell_nV_{n-1}))^* \beta_k(\ell_nU_{n-1}) (\beta_k(\ell_nV_{n-1}))^*$. This exactly says that  $\beta_k(U_n)$ is not \polp{n+1}{\Gs}-separable from $\beta_k(V_n)$, which completes the proof.
\end{proof}

\smallskip

It remains to prove the statements involved in the proof of Theorem~\ref{thm:gpualt}. Again, the proof is based on the languages $U_k,V_k$ which are encoded over a \emph{binary alphabet} $A = \{a,b\}$. For every $k \in \nat$, we defined $B_k =  A_k \cup \{b\}$. Moreover, we introduced two morphisms $\gamma_k: B_k^* \to A_k^*$ and $\delta_k: B_k^* \to A^*$ defined as follows:
\[
\begin{array}{llll}
	\gamma_k: & B_k^* & \to     & A_k^* \\
	& b                  & \mapsto & \veps \\
	& \ell_i & \mapsto & \ell_i 
\end{array} \quad \text{and}\quad  \begin{array}{llll}
	\delta_k: & B_k^* & \to     & A^* \\
	& b                  & \mapsto & b \\
	& \ell_i & \mapsto & ba^{i+1} 
\end{array}
\]
We may now prove Proposition~\ref{prop:ddproj}. It shows that this encoding maps $\tlhp2{\stzer}$ to $\tlhp{2}{\dotzer}$. Combined with Lemma~\ref{lem:langtl2st}, it yields $\delta^{}_k(\gamma_k\inv(U_k)) \in \tlhp{2}{\dotzer}$ for all $k \in \nat$.

\ddproj

\begin{proof}
	We first show that $\delta^{}_k(\gamma_k\inv(A_k^*)) \in \tlhp{2}{\dotzer}$. One can verify that,
	\[
	\delta^{}_k(\gamma_k\inv(A_k^*))  = A^* \setminus \left(aA^* \cup A^*a^{k+2}A^*\right).
	\]
	Clearly, it belongs to $\bpol{\dotzer} \subseteq \tlc{\dotzer} \subseteq \tlhp{2}{\dotzer}$. We now turn to the general case. Since $\tlhp{2}{\stzer} = \tlc{\at}$ by Proposition~\ref{prop:leveltwo}, it suffices to show that for every formula $\varphi \in \tla{\at}$ over the alphabet $A_k$, we have to construct a formula of \tla{\tlc{\dotzer}} over the alphabet $A$ defining $\delta^{}_k(\gamma_k\inv (L(\varphi)))$. We start with some preliminary terminology that we require to specify the construction.
	
	Consider a word $x \in A^*$. We say that a position $j \in \pos{x}$ is \emph{distinguished} if and only if either we have $j \in \{0,|x|+1\}$ or we have $\wpos{x}{j} = b$ and $\wpos{x}{j+1} = a$. Consider the following formula $\Gamma \in \tla{\tlc{\dotzer}}$:
	\[
	\Gamma = min \vee max \vee b \wedge \finallyp{\{\veps\}}{a}.
	\]
	Clearly, given a word $x \in A^*$ and $j \in \pos{x}$, we have $x,j \models \Gamma$ if and only if $j$ is distinguished. Now, consider a word $w \in A_k^*$ and let $x \in \delta^{}_k(\gamma_k\inv(w))$. By definition $x$ is built from $w$ by adding arbitrarily many extra letters ``$b$'' and replacing ever letter $\ell_h \in A_k$ for $h \leq k$ by the word $ba^{h+1}$. Hence, there exists a natural mapping $\mu: \pos{w} \to \pos{x}$ sending every position of $w$ to a \emph{distinguished} position of $x$. We let $\mu(0) = 0$, $\mu(|w|+1) = |x|+1$ and, for $1 < i < k$, we let $\mu(i)$ as the leftmost position in the unique infix $ba^{h+1}$ of $x$ corresponding to the letter $\ell_h$ that labels position $i$ in $w$.
	
	We are ready to describe the construction. For every  formula $\varphi \in \tla{\at}$ over $A_k$, we construct a new formula $\thecons{\varphi} \in \tla{\tlc{\dotzer}}$ over  $A$ satisfying the following property:
	\begin{equation} \label{eq:ddcons}
		\begin{array}{c}
			w,i \models \varphi \Leftrightarrow x,\mu(i) \models \thecons{\varphi} \\
			 \text{for every $w \in A_k^*$, $i \in \pos{w}$ and $x \in \delta^{}_k(\gamma_k\inv(w))$}.
		\end{array}
	\end{equation}
	We first complete the main proof. Let $L \in \tlhp{2}{\stzer}$. By Proposition~\ref{prop:leveltwo}, $L$ is defined by a formula $\varphi \in \tla{\at}$. It follows from~\eqref{eq:ddcons} that $L(\thecons{\varphi}) \cap \delta^{}_k(\gamma_k\inv(A_k^*)) = \delta^{}_k(\gamma_k\inv(L))$. Hence, $\delta^{}_k(\gamma_k\inv(L)) \in \tlhp{2}{\dotzer}$ since we already proved above that  $\delta^{}_k(\gamma_k\inv(A_k^*)) \in  \tlhp{2}{\dotzer}$.
	
	We turn to the construction itself.  We build \thecons{\varphi} using structural induction on $\varphi$. If $\varphi = min$, then $\thecons{\varphi} = min$. Similarly, if $\varphi = max$, then $\thecons{\varphi} = max$. If $\varphi = \ell$ for $\ell \in A_k$, then $\ell = \ell_h$ for some $h \leq k$.  Hence, $\delta_k(\ell_h) = ba^{h+1}$. For all $n \in \nat$, we write $\nexi{a}{n}$ for the following formula: when $n = 0$, $\nexi{a}{0} = a$ and when $n \geq 1$, $\nexi{a}{n} = \finallyp{\{\veps\}}{\left(\nexi{a}{n-1}\right)}$. Observe that $\nexi{a}{n}$ is \tla{\tlc{\dotzer}} since$ \{\veps\} \in \dotzer$. Recall that $\varphi = \ell_h$. We let,
	\[
	\thecons{\varphi} = b \wedge \Biggl(\bigwedge_{1 \leq n \leq h+1}\nexi{a}{n}\Biggr) \wedge \neg \nexi{a}{h+2}.
	\]
	It remains to handle the temporal operators. First, assume that $\varphi = \finallyp{K}{\psi}$. By hypothesis on $\varphi$, we know that $K \in \at$. Hence, $K$ is a Boolean combination of languages $A_k^* \ell A_k^*$ where $\ell\in A_k$. For each $h \leq k$, we associate the following language $P_h \in \bpol{\dotzer} \subseteq \tlc{\dotzer}$ to the letter $\ell_h \in A_k$:
	\[
	P_h = A^* ba^{h+1}bA^* \cup A^* ba^{h+1}.
	\]
	We let $Q_K \in \tlc{\dotzer}$ be the language over $A^*$ obtained by replacing every language $A_k^* \ell_h A_k^*$ for $h \leq k$ in the Boolean combination combination defining $K$ by $P_h \in \tlc{\dotzer}$. One can now verify that we can define,
	\[
	\thecons{\varphi} = \finallyp{Q_K}{\left(\Gamma \wedge \thecons{\psi}\right)}.
	\]
	Similarly, if $\varphi = \finallymp{K}{\psi}$, then by hypothesis on $\varphi$, we know that $K \in \at$. Hence, we can associate a language $Q_K \in \tlc{\dotzer}$ to $K$ as explained above. We now define,
	\[
	\thecons{\varphi} = \finallymp{Q_K}{\left(\Gamma \wedge \thecons{\varphi}\right)}.
	\]
	One can verify that this construction $\varphi \mapsto \thecons{\varphi}$ does satisfy~\eqref{eq:ddcons}, which completes the proof.
\end{proof}

It remains to prove Lemma~\ref{lem:tlddunc}. We first recall its statement.

\tlddunc

\begin{proof}
	For every $k \in \nat$, we let $X_k = \delta^{}_k(\gamma_k\inv(U_k))$ and $Y_k = \delta^{}_k(\gamma_k\inv(V_k))$. We use induction on $k$ to prove that $X_k$ is not \polp{k+1}{\Gs^+}-separable from $Y_k$ for every $k \in \nat$. Since $X_k \cap Y_k = \emptyset$, this implies that $X_k \not\in \polp{k+1}{\Gs^+}$ as desired. We start with the case $k = 0$. By definition $X_0 = b^*$ and $Y_0 = (b^+a)^+b^*$. Since \Gs is a \emph{group} \vari, one can verify that $(b^+a)^+$ is not $\Gs^+$-separable from $b^*$. Hence, Lemma~\ref{lem:polinduc} implies that $X_0 = b^*$ is not \pol{\Gs^+}-separable from $Y_0 = b^* (b^+a)^+b^*$. Assume now that $k \geq 1$. By induction hypothesis, we know that $X_{k-1}$ is not \polp{k}{\Gs^+}-separable from $Y_{k-1}$. Hence, it follows from Lemma~\ref{lem:nsepconcat} that $ab^{k+1}X_{k-1}$ is not \polp{k}{\Gs^+}-separable from $ab^{k+1}Y_{k-1}$. This implies that $Y_{k-1}$ is not \cocl{\polp{k}{\Gs^+}}-separable from $X_{k-1}$. We may now apply Lemma~\ref{lem:polinduc} and Lemma~\ref{lem:polcopol}  to obtain that $(b^+a^{k+1}Y_{k-1})^*$ is not  \polp{k+1}{\Gs^+}-separable from $(b^+a^{k+1}Y_{k-1})^* b^+a^{k+1}X_{k-1} (b^+a^{k+1}Y_{k-1})^*$. This exactly says that  $X_k$ is not \polp{k+1}{\Gs^+}-separable from $Y_k$, which completes the proof.
\end{proof}

\section{Missing proofs in Section~\ref{sec:ratms}}
\label{app:ratms}
We provide the missing proofs for Section~\ref{sec:ratms}. We first state a useful lemma. When \Cs is a \vari, we are able to characterize the optimal \imprints using \Cs-morphisms. Given a morphism $\eta: A^* \to N$ and a word $w \in A^*$, we write $\etyp{w} \subseteq A^*$ for the set of all words $w' \in A^*$ such that~$\eta(w) = \eta(w')$.

\begin{restatable}{lemma}{optdmorph} \label{lem:optdmorph}
  Let \Cs be a \vari, $\rho: 2^{A^*} \to R$ be a \mratm and $L\subseteq A^*$. For every $r \in R$, the two following properties are equivalent:
  \begin{enumerate}
	\item We have $r \in \copti{L,\rho}$.
	\item For every \Cs-morphism $\eta: A^* \to N$, there exists $w \in L$ such that $r \leq \rho(\etyp{w})$.
  \end{enumerate}
\end{restatable}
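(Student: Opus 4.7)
\textbf{Proof plan for Lemma~\ref{lem:optdmorph}.}

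The plan is to unfold the definition of $\copti{L,\rho}$: by optimality, $r \in \copti{L,\rho}$ if and only if $r \in \prin{\rho}{\Kb}$ for \emph{every} \Cs-cover \Kb of $L$, i.e., for every such \Kb there exists $K \in \Kb$ with $r \leq \rho(K)$. The two implications then follow by exhibiting, on each side, an appropriate \Cs-cover built from a \Cs-morphism. The key observation used throughout is that since \Cs is a \vari, any \Cs-morphism $\eta:A^*\to N$ recognizes every language of the form $\etyp{w}=\eta\inv(\eta(w))$, so the classes $\etyp{w}$ all belong to~\Cs.

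For the direction $(1) \Rightarrow (2)$, I would fix a \Cs-morphism $\eta:A^*\to N$ and consider the finite family $\Kb=\{\etyp{w}\mid w\in L\}$. Every language in \Kb belongs to~\Cs by the preceding observation, and \Kb covers $L$ since $w \in \etyp{w}$ for each $w \in L$. Hence \Kb is a \Cs-cover of $L$. By the reformulation above, $r \in \copti{L,\rho}$ gives a member of \Kb whose $\rho$-image dominates $r$, which is exactly a word $w \in L$ with $r \leq \rho(\etyp{w})$.

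For the converse $(2) \Rightarrow (1)$, I would take an arbitrary \Cs-cover $\Kb=\{K_1,\dots,K_n\}$ of $L$ and use Proposition~\ref{prop:genocm} (together with Lemma~\ref{lem:cmorphbool}, which lets us forget the ordering since \Cs is a \vari) to build a \Cs-morphism $\eta: A^* \to N$ that recognizes all the $K_i$. Hypothesis~(2) provides $w \in L$ with $r \leq \rho(\etyp{w})$. Since \Kb covers $L$, some $K_i$ contains $w$, and because $\eta$ recognizes $K_i$, we get $\etyp{w}\subseteq K_i$. Monotonicity of $\rho$ (which follows from $\rho(X\cup Y)=\rho(X)+\rho(Y)$ in an idempotent semiring) then yields $\rho(\etyp{w}) \leq \rho(K_i)$, whence $r \leq \rho(K_i)$, as required to conclude $r \in \copti{L,\rho}$.

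No serious obstacle is anticipated; the only subtlety is to remember that optimality of covers translates into a universal quantification over all covers, and to invoke Proposition~\ref{prop:genocm} with Lemma~\ref{lem:cmorphbool} to pass freely between ordered and unordered \Cs-morphisms when \Cs is a \vari.
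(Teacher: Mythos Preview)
Your proposal is correct and follows essentially the same approach as the paper's proof: both directions use the partition $\{\etyp{w}\mid w\in L\}$ induced by a \Cs-morphism for $(1)\Rightarrow(2)$, and Proposition~\ref{prop:genocm} to manufacture a \Cs-morphism recognizing a given \Cs-cover for $(2)\Rightarrow(1)$. The only cosmetic difference is that for $(2)\Rightarrow(1)$ the paper picks directly an \emph{optimal} \Cs-cover (whose existence is guaranteed by Lemma~\ref{lem:opt}) rather than quantifying over all \Cs-covers via your reformulation; both routes are equivalent and equally short.
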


\begin{proof}
  We fix $r \in R$ for the proof. Assume first that $r \in \copti{L,\rho}$ and let $\eta: A^* \to N$ be a \Cs-morphism. Consider the finite set of languages $\Kb = \{\etyp{w} \mid w \in L\}$. By definition, \Kb is a \Cs-cover of $L$ since $\eta$ is a \Cs-morphism.  Hence, $\copti{L,\rho} \subseteq\prin{\rho}{\Kb}$ which yields $r \in \prin{\rho}{\Kb}$. By definition of \Kb, we get $w \in L$ such that $r \leq \rho(\etyp{w})$, as desired.

  Conversely, assume that the second assertion in the lemma holds. We prove that $r \in \copti{L,\rho}$. Let \Kb be an optimal \Cs-cover of $L$ for $\rho$. By definition, \Kb is a finite set of languages in \Cs. Hence, since \Cs is a \vari, Proposition~\ref{prop:genocm} yields a \Cs-morphism $\eta: A^* \to N$ such that every $K \in \Kb$ is recognized by $\eta$. The second assertion in the lemma yields $w \in L$ and $r \leq \rho(\etyp{w})$. Since $w \in L$ and \Kb is a cover of $L$, we get $K \in \Kb$ such that $w \in K$. Consequently, since $K$ is recognized by $\eta$, we have $\etyp{w} \subseteq K$. Finally, since $r \leq \rho(\etyp{w})$, we get $r \leq \rho(K)$, which implies that $r \in \prin{\rho}{\Kb}$. Since \Kb is an optimal \Cs-cover of $L$ for $\rho$, we obtain $r \in \copti{L,\rho}$, which completes the proof.
\end{proof}

We now prove Proposition~\ref{prop:covreduc}.

\covreduc*

\begin{proof}
	We prove that  \copti{L,\rho} is equal to the following set:
  \[
    \dclosr \left\{\sum_{q \in Q} q \mid \begin{array}{c}
    	\text{$Q \subseteq R$ such that } \left(L,\left\{\rho_*\inv(q) \mid q \in Q\right\}\right)\\
     \text{ is not \Cs-coverable}
    \end{array}\right\}.
  \]
  Clearly, this yields an algorithm for computing \copti{L,\rho}  using \Cs-covering as desired.

  We first prove the left to right inclusion. Let $r \in \copti{L,\rho}$. We exhibit a subset $Q \subseteq R$ such that $r \leq \sum_{q \in Q} q$ and $(L,\{\rho_*\inv(q) \mid q \in Q\})$ is not \Cs-coverable. Let $\tau: 2^{A^*} \to 2^R$ be the map defined by $\tau(K) = \{\rho(w)  \mid w \in K\}$. One can verify that $\tau$ is a \full \mratm. Let $\Kb_\tau$ be an optimal \Cs-cover of $L$ for $\tau$. Since $r \in \copti{L,\rho}$, we have $r \in \prin{\rho}{\Kb_\tau}$ and we get $K \in \Kb_\tau$ such that $r \leq \rho(K)$. Let $Q = \tau(K) \subseteq R$. Since $\rho$ is \full, one can verify that $\rho(K) = \sum_{q \in Q} q$. Thus, $r\leq \sum_{q \in Q} q$ and it remains to prove that $(L,\{\rho_*\inv(q) \mid q \in Q\})$ is not \Cs-coverable. We proceed by contradiction. Assume that there exists a \Cs-cover \Hb of $L$ which is separating for $\{\rho_*\inv(q) \mid q \in Q\}$. For every $H \in \Hb$, we know that there exists $q \in Q$ such that $H \cap \rho_*\inv(q) = \emptyset$. By definition of $\tau$, this implies that $Q \not\subseteq \tau(H)$ for every $H \in \Hb$. Consequently, $Q \not\in \prin{\tau}{\Hb}$ which yields $Q \not\in \copti{L,\tau}$. This is a contradiction since $Q = \tau(K)$ by definition and $K \in \Kb_\tau$ which is an \emph{optimal} \Cs-cover of $L$ for $\tau$.

  It remains to prove the right to left inclusion. Since \copti{L,\rho} is an \imprint, we have $\dclosr \copti{L,\rho}= \copti{L,\rho}$ by definition. Hence, it suffices to prove that for every $Q \subseteq R$ such that $(L,\{\rho_*\inv(q) \mid q \in Q\})$ is not \Cs-coverable, we have $\sum_{q \in Q} q \in \copti{L,\rho}$. Let $\Kb_\rho$ be an optimal \Cs-cover of $L$ for $\rho$. Since $(L,\{\rho_*\inv(q) \mid q \in Q\})$ is not \Cs-coverable, $\Kb_\rho$ cannot be separating for $\{\rho_*\inv(q) \mid q \in Q\}$ and we get $K \in \Kb_\rho$ such that $K \cap \rho_*\inv(q) \neq \emptyset$ for every $q \in Q$. It follows that $\sum_{q \in Q} q \leq \rho(K)$. We get $\sum_{q \in Q} q \in \prin{\rho}{\Kb_\rho} = \copti{L,\rho}$. This completes the~proof.
\end{proof}
 
\section{Missing proofs in Section~\ref{sec:carav}}
\label{app:carav}
This appendix is devoted to the missing arguments for the intermediary statements in the proof of Theorem~\ref{thm:main}. Recall that a \fratm $\rho: 2^{A^*} \to R$ is fixed. We start with the proof of Proposition~\ref{prop:sound}. Note that in its proof, we use the well-known algebraic characterization of \tlxs by Thérien and Wilke~\cite{twfo2} as a subresult. We first recall its statement.

\sound*

\begin{proof}
  As \tlc{\at} is a \vari by Proposition~\ref{prop:tlvar}, it is immediate from~\cite[Lemma~9.11]{pzcovering2} that the set \ptlatopti contains the trivial elements and is closed under both downset and multiplication (these are generic properties which hold for all \varis). Hence, we concentrate on \tlc{\at}-operation. Let $B \subseteq A$ and $Q \subseteq \ptlatopti(B) =\opti{\tlc{\at}}{\etaat\inv(B),\rho}$. We show that  $\opti{\tlxs}{Q^+,\auxrq} \subseteq \opti{\tlc{\at}}{\etaat\inv(B),\rho}$.

  We fix $r \in \opti{\tlxs}{Q^+,\auxrq}$ for the proof. We have to show that $r \in \opti{\tlc{\at}}{\etaat\inv(B),\rho}$. We use Lemma~\ref{lem:optdmorph}. Given a fixed \tlc{\at}-morphism $\alpha: A^* \to M$, we have to exhibit a word $w \in \etaat\inv(B)$ such that $r \leq \rho(\atyp{w})$. By hypothesis, $Q \subseteq \opti{\tlc{\at}}{\etaat\inv(B),\rho}$. Hence, Lemma~\ref{lem:optdmorph} implies that for every $q \in Q$, there exists $v_q \in \etaat\inv(B)$ such that $q \leq \rho(\atyp{v_q})$. We use these words to define a morphism $\beta: Q^* \to M$. For each $q \in Q$ (viewed as a letter), we define $\beta(q) = \alpha(v_q)$.

  We first prove that $\beta$ is a \tlxs-morphism using the algebraic characterization of \tlxs by Thérien and Wilke\footnote{Strictly speaking, Thérien and Wilke only consider the logic \fodws, the counterpart of \tlxs in two-variable first-order logic. The equality $\tlxs = \fodws$ is established in~\cite{evwutl}.}~\cite{twfo2}.  We define $T = \beta(Q^+) \subseteq M$. For every idempotent $e \in E(T)$ and $s,t \in T$, we prove that $(esete)^\omega = (esete)^\omega ete(esete)^\omega$. By the results of~\cite{twfo2}, this implies that $\beta$ is a \tlxs-morphism. Since $e,s,t \in T = \beta(Q^+)$, one can check that for every element $z \in \{e,s,t\}$ there exists $u_z \in A^*$ such that $z = \alpha(u_z)$ and $\etaat(u_z) = B$.  In particular, $\etaat(u_s) = \etaat(u_t) = \etaat(u_e)$ which implies that $\alpha\inv(e)$ is not \at-separable from either $\alpha\inv(e)$ or $\alpha\inv(e)$. In other words, the pairs $(e,s)$ and $(e,t)$ are \at-pairs for $\alpha$. Since $\alpha$ is an \at-morphism Proposition~\ref{prop:tlcmor} yields $(esete)^\omega = (esete)^\omega ete(esete)^\omega$ as desired. Hence, we conclude that $\beta$ is a \tlxs-morphism.

  Now, since $r \in \opti{\tlxs}{Q^+,\auxrq}$ and $\beta$ is \tlxs-morphism, Lemma~\ref{lem:optdmorph} yields a word $x \in Q^+$ such that $r \leq \auxrq(\btyp{x})$. We now use $x \in Q^+$ to construct a word $w \in \etaat\inv(B)$ such that  $\auxrq(\btyp{x}) \leq  \rho(\atyp{w})$. This will imply that $r \leq \rho(\atyp{w})$ and complete the proof.

  Let $n = |x|$. Moreover, let $q_1,\dots,q_n \in Q$ be the letters such that $x = (q_1) \cdots (q_n)$. We define $w =v_{q_1} \cdots v_{q_n}$. Since $v_q \in \etaat\inv(B)$ for every $q \in Q$, it is immediate that $w \in \etaat\inv(B)$. It remains to prove that $\auxrq(\btyp{x}) \leq \rho(\atyp{w})$. Since $\auxrq$ is \full, it suffices to prove that $\auxrq(y) \leq \rho(\atyp{w})$ for all $y\in\btyp{x}$. Let $m = |y|$ and $p_1,\dots,p_m \in Q$ the letters such that $y = (p_1) \cdots (p_m)$. Also, let $u = v_{p_1} \cdots v_{p_m}$. By definition, $\beta(q) = \alpha(v_q)$ for all $q \in Q$. Thus, 
  \[
  \alpha(u) = \beta((p_1) \cdots (p_m)) = \beta(y).
  \]
  Moreover, $\beta(y) = \beta(x)$ and,
  \[
  \beta(x) = \beta((q_1) \cdots (q_n)) = \alpha(w).
  \]
  Altogether, we get $\alpha(u) = \alpha(w)$ which yields $u \in\atyp{w}$. Since $u=v_{p_1}\cdots v_{p_m}$, this yields $\atyp{v_{p_1}}\cdots\atyp{v_{p_m}}\subseteq \atyp{w}$. We get $\rho(\atyp{v_{p_1}}\cdots\atyp{v_{p_m}}) \leq \rho(\atyp{w})$. Finally, we have $q \leq \rho(\atyp{v_q})$ for every $q \in Q$ by definition. Thus, we get $p_1 \cdots p_m  \leq \rho(\atyp{w})$. Since $y=(p_1)\cdots (p_m)$, this exactly says that $\auxrq(y) \leq \rho(\atyp{w})$ as desired.
\end{proof}

All statements that remain to be proved are subresults in the proof of Proposition~\ref{prop:comp}. Recall that a good triple $(\frA,\beta,\tau)$ is fixed: $\beta: \frA^* \to 2^A$ is a morphism and $\tau: 2^{\frA^*} \to R$ is a \fratm.  Moreover, $\beta\inv(\emptyset) = \{\veps\}$ and for every $\ell \in \frA$, we have $(\beta(\ell),\tau(\ell)) \in S$.

We start with the proof of Fact~\ref{fct:fctq}. Recall that a subalphabet \frAd is fixed. Moreover, we have a subset $Q \subseteq R$. We use the set $Q$ as an alphabet and a morphism $\delta: \frAd^* \to Q^*$ is defined. For each letter $\ell \in \frAd$, we have $\delta(\ell) = (\tau(\ell)) \in Q$.

\newcommand{\pangle}[1]{\ensuremath{\langle #1\rangle}\xspace}

\fctq*

\begin{proof}
	By hypothesis, we have a \tlxs formula $\varphi$ (over the alphabet $Q$) defining $H$. We modify $\varphi$ to build a new formula $\pangle{\varphi} \in \tla{\cbeta}$ (over the alphabet \frA). We replace every atomic formula $(q)$ for $q \in Q$ by $\bigvee_{\{\ell \in \frAd \mid \delta(\ell) = q\}} \ell$. We replace all modalities $\textup{F}$ and $\textup{P}$ by $\textup{F}_{\frA^*}$ and $\textup{P}_{\frA^*}$ (clearly, we have $\frA^* \in \cbeta$). Finally, we replace all modalities $\textup{X}$ and $\textup{Y}$ by $\textup{F}_{\{\veps\}}$ and $\textup{P}_{\{\veps\}}$ respectively. Note that we have $\{\veps\} \in \cbeta$ because \cbeta consists of all languages recognized by $\beta: \frA^* \to 2^A$ and $\beta\inv(\emptyset) = \{\veps\}$ by hypothesis (indeed, $(\frA,\beta,\tau)$ is a good triple).

	Since $\varphi$ defines $H$, one can verify that for all $w \in \frAd^*$, we have $w \in L(\pangle{\varphi}) \Leftrightarrow \delta(w) \in H$. In other words, $L(\pangle{\varphi})$ coincides with $\delta\inv(H) \subseteq \frAd^*$ over the words in $\frAd^*$. Hence, $\delta\inv(H)$ is defined by $\pangle{\varphi} \wedge \neg \finallyp{\frA^*}{\left(\neg \bigvee_{\ell \in \frAd} \ell \right)} \in \tla{\cbeta}$. This yields $\delta\inv(H) \in \tlc{\cbeta}$ as desired.
\end{proof}

We turn to the proof of Fact~\ref{fct:inducpar}. At this point, a new good triple $(\frB,\gamma,\zeta)$ has been built.

\inducpar*

\begin{proof}
	Given $w \in \frB^+$, we have to prove that $\gamma(w) \neq D$ and exhibit $w' \in \frA^+$ such that $\gamma(w) = \beta(w')$. Since $\frB = \frAnd \times \Ub_D$, we have $n \geq 1$, $\ell_1,\dots,\ell_n\in\frAnd$ and $U_1,\dots,U_n \in \Ub_D$ such that $w = (\ell_1,U_1) \cdots (\ell_n,U_n)$. For all $i \leq n$, we get $B_{U_i} \in 2^A$ such that $U_i \subseteq \beta\inv(B_{U_i})$. For each $i \leq n$, we fix a word $w_i \in U_i$ (we know that $w_i$ exists since $\Ub_D$ is a partition which means that the languages $U_i$ are nonempty). Clearly, we have $\beta(w_i) = B_{u_i}$. We define $w' = \ell_1w_1 \cdots \ell_nw_n \in \frA^+$. By definition of $\gamma$ it is immediate that $\beta(w') = \gamma(w)$.

	Moreover, since $\ell_1 \in \frAnd$, we know that $\beta(\ell_1) \neq D$ by definition of \frAnd. It is also clear that $\beta(\ell_1) \subseteq \beta(w')$. Hence, $\gamma(w) = \beta(w') \neq D$ since we chose $D$ as a \emph{minimal} alphabet among the set $\{\beta(\ell) \mid \ell \in \frA\}$ (\emph{i.e.}, there exists no  $\ell \in \frA$ such that $\beta(\ell) \subsetneq D$). This completes the proof.
\end{proof}

We turn to the proof of Fact~\ref{fct:constrain}. Recall that given a word $w\in(\frAnd\frAd^*)^*$, a position $i \in \pos{w}$ is  \emph{distinguished} if $\wpos{w}{i} \in \frAnd \cup \{min,max\}$.

\constrain*

\begin{proof}
	By definition, there exists no distinguished position $k \in \pos{w}$ such that $i < k < j$ if and only if $\infix{w}{i}{j} \in \frAd^*$. Thus, we have to prove that the latter conditions holds if and only if  $\beta(\infix{w}{i}{j}) \in \{D,\emptyset\}$.

	Assume first that $\infix{w}{i}{j} \in \frAd^*$. Since $\beta(\ell) = D$ for every $\ell \in \frAd$, it follows that either $\beta(\infix{w}{i}{j}) = \emptyset$ (if $\infix{w}{i}{j} = \veps$) or $\beta(\infix{w}{i}{j}) = D$ (if $\infix{w}{i}{j} \in \frAd^+$). Conversely, assume that $\beta(\infix{w}{i}{j}) \in \{D,\emptyset\}$. If $\beta(\infix{w}{i}{j}) = \emptyset$, then $\infix{w}{i}{j} = \veps$ since $\beta\inv(\emptyset) = \{\veps\}$ (this is because $(\frA,\beta,\tau)$ is a good triple). We now assume that $\beta(\infix{w}{i}{j}) = D$. We chose $D$ as a \emph{minimal} alphabet among the set $\{\beta(\ell) \mid \ell \in \frA\}$ (\emph{i.e.}, there exists no  $\ell \in \frA$ such that $\beta(\ell) \subsetneq D$). Hence, $\beta(\infix{w}{i}{j}) = D$ implies that $\beta(\infix{w}{i}{j}) \in \frAd^+$ which completes the proof.
\end{proof}

We now prove Lemma~\ref{lem:tlatbas}. Recall that we defined a mapping $\mu: (\frAnd\frAd^*)^*\to\frB^*$.

\newcommand{\crochi}[1]{\ensuremath{\langle #1\rangle_{min}}\xspace}
\newcommand{\crocha}[1]{\ensuremath{\langle #1\rangle_{max}}\xspace}

\tlatabs*

\begin{proof}
	We start with a preliminary construction. Consider a letter $(\ell,U) \in \frB = \frAnd \times \Ub_D$, we construct a formula $\psi_{\ell,U} \in \tlc{\cbeta}$ such that for every $w \in (\frAnd\frAd^*)^*$ and every distinguished position $i \in \pos{w}$, we have $w,i \models \psi_{\ell,U} \Leftrightarrow \mu(i) \text{ has label } (\ell,U)$. By hypothesis on $\Ub_D$, there exists a formula $\varphi \in \tlc{\cbeta}$ defining $U$. Using Fact~\ref{fct:constrain}, we modify $\varphi$ so that given $w \in (\frAnd\frAd^*)^*$, the evaluation of the resulting formula at a distinguished position $i$ is constrained to the next maximal infix in $\frAd^*$. More precisely, we use structural induction to build two formulas $\crochi{\varphi}$ and $\crocha{\varphi}$ such that given $w \in (\frAnd\frAd^*)^*$ and two distinguished positions $i,j \in \pos{w}$ such that $i < j$ and there is no distinguished positions in-between $i$ and $j$, the two following properties hold for every $k \in \pos{\infix{w}{i}{j}}$:
	\begin{itemize}
		\item If $k < |\infix{w}{i}{j}|+1$, then,
		\[
		w,i+k \models \crochi{\varphi} \Leftrightarrow \infix{w}{i}{j},k \models \varphi.
		\]
		\item If $1 < k$, then,
		\[
		w,i+k \models \crocha{\varphi} \Leftrightarrow \infix{w}{i}{j},k \models \varphi.
		\]
	\end{itemize}
	It will then suffice to define $\psi_{\ell,U} = \ell \wedge \crochi{\varphi}$. We only describe the construction. That it satisfies the above properties can be verified.  
	\begin{itemize}
		\item If $\varphi \in \frA \cup \{\top,\bot\}$, we let $\crochi{\varphi} = \crocha{\varphi} = \varphi$.
		\item If $\varphi = min$, we let $\crochi{\varphi} = \bigvee_{\ell \in \frAnd} \ell$ and $\crocha{\varphi} = \bot$.
		\item If $\varphi = max$, we let $\crochi{\varphi} = \bot$ and $\crocha{\varphi} = \bigvee_{\ell \in \frAnd} \ell$.
	\end{itemize}
	Assume now that $\varphi = \finallyl{\varphi'}$ for some $L \in \cbeta$. We let,
	\[
	\begin{array}{lll}
		\crochi{\varphi} & = &\finallyp{L \cap \beta\inv(\{D,\emptyset\})}{\crocha{\varphi'}} \\
		\crocha{\varphi} &= &\left(\bigvee_{\ell \in \frAd} \ell\right) \wedge \finallyp{L \cap \{D,\emptyset\}}{\crocha{\varphi'}}
	\end{array}
	\]
	Finally, assume that $\varphi = \finallymp{L}{\varphi'}$ for some $L \in \cbeta$. We let,
	\[
	\begin{array}{lll}
		\crochi{\varphi} & = &\left(\bigvee_{\ell \in \frAd} \ell\right) \wedge \finallyp{L \cap \beta\inv(\{D,\emptyset\})}{\crochi{\varphi'}} \\
		\crocha{\varphi} &= & \finallyp{L \cap \{D,\emptyset\}}{\crochi{\varphi'}}
	\end{array}
	\]
	This completes the construction.

	\medskip

	\newcommand{\cfinal}[1]{\ensuremath{\lfloor #1\rfloor}\xspace}

	We are ready to prove the lemma. Let $K \in \tlc{\cgamma}$. By hypothesis, there exists a formula $\Gamma\in\tla{\cgamma}$ defining $K$. Using induction on $\Gamma$, we define $\cfinal{\Gamma} \in \tla{\cbeta}$ such that for every $w \in (\frAnd\frAd^*)^*$ and distinguished position $i \in \pos{w}$, we have $w,i \models \cfinal{\Gamma} \Leftrightarrow \mu(w),\mu(i) \models \Gamma$. Since $(\frAnd\frAd^*)^* = \frAnd\frA^* \cup \{\veps\}$, it will follow that $\mu\inv(K) \subseteq \frA^*$ is defined by the formula,
	\[
	\finallyp{\{\veps\}}{\Biggl(max \vee \bigvee_{\ell \in \frAnd} \ell\Biggr)} \wedge \cfinal{\Gamma} \in \tla{\cbeta}
	\]
	This yields $\mu\inv(K) \in \tlc{\cbeta}$ which completes the proof (note that here $\{\veps\} \in \cbeta$ since $\{\veps\} = \beta\inv(\emptyset)$ because $(\frA,\beta,\tau)$ is a good triple).

	If $\Gamma \in \{min,max,\top,\bot\}$, we define $\cfinal{\Gamma} = \Gamma$. Assume now that $\Gamma = (\ell,U) \in \frB$. We let $\cfinal{\Gamma} = \psi_{\ell,U}$. It remains to handle the temporal modalities: there exists $L \in \cgamma$ such that $\Gamma =  \finallyl{\Gamma'}$ or $\Gamma =  \finallyml{\Gamma'}$. By definition $L \in \cgamma$ is recognized by $\gamma$: there exists $X \subseteq 2^A$ such that $L = \gamma\inv(X)$. We let $H = \beta\inv(X) \in \cbeta$. Abusing terminology, we also write ``$\frAd$'' and ``$\frAnd$'' for the formulas ``$\bigvee_{\ell \in \frAd} \ell$'' and ``$\bigvee_{\ell \in \frAnd} \ell$'' respectively. If $\Gamma = \finallyl{\Gamma'}$, we let $\cfinal{\finallyl{\Gamma'}}$ as the following formula:
	\[
	\begin{array}{ll}
		 &\left((\finallyp{\{\veps\}}{\frAnd}) \wedge \finallyp{H}{\cfinal{\Gamma'}}\right)\\
		 \vee& \finallyp{\beta\inv(\{D,\emptyset\})}{\left(\frAd \wedge (\finallyp{\{\veps\}}{\frAnd}) \wedge \finallyp{H}{\cfinal{\Gamma'}}\right)}.
	\end{array}
	\]
	Finally, when $\Gamma = \finallyml{\Gamma'}$, we let $	\cfinal{\finallyml{\Gamma'}}$ as the following formula:
		\[
	\begin{array}{ll}
	&	\finallymp{H}{\left((\finallyp{\{\veps\}}{\frAnd})\right)\\
		\wedge& \left(\left(\frAnd \wedge \cfinal{\Gamma'}\right) \vee \left(\frAd \wedge \finallyp{\beta\inv(\{D,\emptyset\})}{(\frAnd \wedge \cfinal{\Gamma'})}\right)\right)}.
	\end{array}
	\]
	
	One can check that for every word $w \in (\frAnd\frAd^*)^*$ and every distinguished position $i \in \pos{w}$, we have $w,i \models \cfinal{\Gamma} \Leftrightarrow \mu(w),\mu(i) \models \Gamma$ as desired. This completes the proof.
\end{proof}

Finally, we prove Lemma~\ref{lem:theutlconcat}. Recall that an alphabet \frA is fixed together with two subalphabets \frAd and \frAnd such that $\frAd \cap \frAnd = \emptyset$. Moreover, the \vari \cbeta is defined over the alphabet \frA.

\theutlconc*

\begin{proof}
	First, observe that we may assume without loss of generality that $\veps \not\in V$. Indeed, if we have $\veps \in V$, then $UV = U \cup U(V \cap \frA^+)$. Thus, since we already know that $U\in\tlc{\cbeta}$, it suffices to show that $U(V \cap \frA^+)\in\tlc{\cbeta}$. Since it is clear that $(V \cap \frA^+)\in\tlc{\cbeta}$, this boils down to the special case when $\veps \not\in V$.

	We assume from now on that $\veps \not\in V$. Hence, we have $V \subseteq (\frAnd\frAd^*)^+$ (in particular, every word in $V$ begins by a letter in $\frAnd$). Since $U \subseteq \frAd^*$, $V \subseteq (\frAnd\frAd^*)^+$ and $\frAd \cap \frAnd = \emptyset$, observe that a word $w \in \frA^*$ belongs to $UV$ if and only if there exists $b \in \frAnd$ such that it satisfies the following three properties:
	\begin{enumerate}
		\item $w$ admits a decomposition $w = ubv$ such that the prefix $u$ belongs to $(\frA \setminus \{b\})^*$ (\emph{i.e.}, this decomposition is unique as it showcases the leftmost occurrence of a letter in $\frAnd$ inside $w$).
		\item The prefix $u$ of $w$ belongs to $U$.
		\item The suffix $v$ of $w$ belongs to $b\inv V$.
	\end{enumerate}
	Thus, it suffices to prove that these three properties can be expressed with \tla{\cbeta} formulas. We start by presenting a preliminary formulas that we shall need. It pinpoints the leftmost occurrence of the letter $b \in \frAnd$ inside a word:
	\[
	\psi_b = b \wedge \neg \finallym{b}.
	\]
	The first property above is now expressed by the formula $\finally{\psi_b}$. It remains to prove that the other two properties can be defined as well. Let us start with the second one. By hypothesis, $U \in \tlc{\cbeta}$ is defined by a formula $\varphi_U \in \tla{\cbeta}$. We modify it to construct a formula $\varphi'_U$. It is obtained from $\varphi_U$ by restricting its evaluation to positions that are smaller than the leftmost one labeled by a ``$b$''. More precisely, we build $\varphi'_U$ by applying the following modifications to $\varphi_U$:
	\begin{enumerate}
		\item We recursively replace every subformula of the form ``$\finallyl{\zeta}$'' by ``$\finallyl{(\zeta \wedge (\psi_b \vee \finally{\psi_b}))}$''.
		\item We replace every atomic subformula of the form ``$max$'' by $\psi_b$.
		\item We replace every atomic subformula of the form ``$a$'' for $a \in \frA$ by $a \wedge \neg \psi_b$.
	\end{enumerate}
	We are ready to construct a formula for the third property. Since $V \in \tlc{\cbeta}$, it follows from Proposition~\ref{prop:tlvar} that $b\inv V \in \tlc{\cbeta}$. Hence, we get a formula $\varphi_{b,V} \in \tla{\cbeta}$ defining $b\inv V$. We now construct a formula $\varphi'_{b,V}$ from $\varphi_{b,V}$ by restricting its evaluation to positions that are larger than the leftmost one labeled by a ``$b$''. More precisely, we build $\varphi'_{b,V}$ by applying the following modifications to $\varphi_{b,V}$:
	\begin{enumerate}
		\item We recursively replace every subformula ``$\finallyml{\zeta}$'' by ``$\finallyml{(\zeta \wedge (\psi_b \vee \finallym{\psi_b}))}$''.
		\item We replace every atomic subformula of the form ``$min$'' by $\psi_b$.
		\item We replace every atomic subformula of the form ``$a$'' for $a \in \frA$ by $a \wedge \neg \psi_b$.
	\end{enumerate}
	The language $UV$ is now defined by the following formula:
	\[
	\bigvee_{b \in \frAnd} \left(\finally{\left(\psi_b \wedge \varphi'_{b,V}\right)} \wedge \varphi'_U\right).
	\]
	This concludes the proof of Lemma~\ref{lem:theutlconcat}.
\end{proof}

 \end{document}